\newtheorem{theorem}{Theorem}[section]
\newtheorem{lemma}[theorem]{Lemma}
\newtheorem{proposition}[theorem]{Proposition}
\newtheorem{definition}[theorem]{Definition}
\newtheorem{corollary}[theorem]{Corollary}
\newtheorem*{remark}{Remark}
\newtheorem{fact}[theorem]{Observation}
\def\moverlay{\mathpalette\mov@rlay}
\def\mov@rlay#1#2{\leavevmode\vtop{%
    \baselineskip\z@skip \lineskiplimit-\maxdimen
    \ialign{\hfil$\m@th#1##$\hfil\cr#2\crcr}}}
\newcommand{\charfusion}[3][\mathord]{
  #1{\ifx#1\mathop\vphantom{#2}\fi
    \mathpalette\mov@rlay{#2\cr#3}
  }
  \ifx#1\mathop\expandafter\displaylimits\fi}
\DeclareRobustCommand\bigop[1]{%
  \mathop{\vphantom{\sum}\mathpalette\bigop@{#1}}\slimits@
}
\newcommand{\bigop@}[2]{%
  \vcenter{%
    \sbox\z@{$#1\sum$}%
    \hbox{\resizebox{\ifx#1\displaystyle.9\fi\dimexpr\ht\z@+\dp\z@}{!}{$\m@th#2$
}}%
  }%
}
\newcommand{\cupdot}{\charfusion[\mathbin]{\cup}{\cdot}}
\DeclareMathOperator{\bigcupdot}{\charfusion[\mathop]{\bigcup}{\cdot}}
\DeclareMathOperator{\child}{child}
\DeclareMathOperator{\parent}{par}
\DeclareMathOperator{\nss}{nss}
\DeclareMathOperator{\lca}{lca}
\DeclareMathOperator{\LCA}{LCA}
\newcommand{\mscr}{\mathscr}
\newcommand{\mc}{\mathcal}
\newcommand{\relV}{\mathrm{rel}V}
\newcommand{\relN}{\mathrm{relNgbh}}
\newcommand{\partN}{\partial N}
\newcommand{\partT}{\partial T}
\newcommand{\Tcat}{\mathrm{T}_n^{cat}}
\newcommand{\Tstar}{\mathrm{T}^{star}_n}
\newcommand{\TstarTwo}{\mathrm{T}^{star}_2}
\newcommand{\NC}{\mathrm{N}^{cres}}
\newcommand{\Tmod}{\mathrm{T}_n^{MS}}
\newcommand{\Tpush}{\mathrm{T}_n^{PS}}
\newcommand{\TmodTWO}{\mathrm{T}_2^{MS}}
\newcommand{\TpushTWO}{\mathrm{T}_2^{PS}}
\newcommand{\Nmb}{\mscr{N}_n^{MB}}
\newcommand{\BinLevelOneN}{\mscr{N}_n^{\mathrm{L1,bin}}}
\newcommand{\BinLevelOneNvar}[1]{\mscr{N}_{#1}^{\mathrm{L1,bin}}}
\newcommand{\HybdidDegTwoN}{\mscr{N}_n^{\mathrm{L1,in}\leq 2}}
\newcommand{\HybdidDegTwoNvar}[1]{\mscr{N}_{#1}^{\mathrm{L1,in}\leq 2}}
\DeclareMathOperator{\indeg}{indeg}
\DeclareMathOperator{\outdeg}{outdeg}
\newenvironment{ctheorem}[1]
  {\innercustomthm}
  {\endinnercustomthm}
\newenvironment{clemma}[1]
  {\innercustomlem}
  {\endinnercustomlem}
\newenvironment{cproposition}[1]
  {\innercustomprop}
  {\endinnercustomprop}
\newenvironment{cdefinition}[1]
  {\innercustomdef}
  {\endinnercustomdef}
\newenvironment{ccorollary}[1]
  {\innercustomcor}
  {\endinnercustomcor}
\newenvironment{cfact}[1]
  {\innercustomfact}
  {\endinnercustomfact}
\begin{document}

\date{}

\title{The weighted total cophenetic index: A novel balance index for phylogenetic networks}
\author[1]{Linda Kn{\"u}ver}
\author[1]{Mareike Fischer}
\author[2,*]{Marc Hellmuth}
\author[3]{Kristina Wicke}

\affil[1]{Institute of Mathematics and Computer Science, University of Greifswald, Walther-Rathenau-Str. 47, 17487 Greifswald, Germany}
\affil[2]{Department of Mathematics, Faculty of Science, Stockholm University, SE - 106 91 Stockholm,   Sweden }
\affil[3]{Department of Mathematical Sciences, New Jersey Institute of Technology, Newark, NJ 07102, USA}

\affil[*]{corresponding author; \texttt{marc.hellmuth@math.su.se}}

\maketitle

\begin{abstract}
Phylogenetic networks play an important role in evolutionary biology as, other
than phylogenetic trees, they can be used to accommodate reticulate evolutionary
events such as horizontal gene transfer and hybridization. Recent research has
provided a lot of progress concerning the reconstruction of such networks from
data as well as insight into their graph theoretical properties. However, 
methods and tools to quantify structural properties of networks or differences 
between them are still very limited. For example, for phylogenetic trees, it is 
common to use balance indices to draw conclusions concerning the underlying evolutionary 
model, and more than twenty such indices have been proposed and are used for different
purposes. One of the most frequently used balance index for trees is the
so-called total cophenetic index, which has several mathematically and
biologically desirable properties. For networks, on the other hand, balance
indices are to-date still scarce.

In this contribution, we introduce the \textit{weighted} total cophenetic index as
a generalization of the total cophenetic index for trees to make it applicable
to general phylogenetic networks. As we shall see, this index can be determined
efficiently and behaves in a mathematical sound way, i.e., it satisfies
so-called locality and recursiveness conditions. In addition, we analyze its
extremal properties and, in particular, we investigate its maxima and minima as
well as the structure of networks that achieve these values within the space of
so-called level-$1$ networks. We finally briefly compare this novel index to the
two other network balance indices available so-far.
\end{abstract}

\sloppy

\section{Introduction}
\label{sec:intro}

Phylogenetic networks are used to represent evolutionary relationships that cannot be adequately
described by a single tree due to reticulation events such as recombination, horizontal gene transfer, or hybridization
\cite{nakleh:04,Gusfield2003,Gontier2015}. Hence, phylogenetic
networks play a crucial role in evolutionary research as well as in biomathematics \cite{Huson2010}, 
and understanding and investigating the structure of phylogenetic networks is 
essential in order to understand basic principles in evolution.

The study of evolutionary histories in terms of trees has a rather long tradition, see
\cite{sem-ste-03a} for an excellent overview. 
One way to investigate the structure of a tree is to measure its \emph{balance}.
Balance of trees plays an essential role in many different research areas, but
it is particularly relevant in theoretical computer science, e.g., to measure
the balance of search trees \cite{Knuth3,leiserson1994introduction}, and
evolutionary research, where it is used in a variety of contexts, including (but
not limited to) testing evolutionary models (cf.~\cite{Aldous_stochastic_2001,
Blum2005, Kirkpatrick1993, Mooers1997}), assessing the impact of fertility
inheritance and selection (cf.~\cite{Blum2006b, Maia2004, Verboom2020}), or to
study tumor evolution (cf.~\citep{DiNardo2019,Scott2019}).
For phylogenetic trees there are currently more than 20 
indices available that allow for measuring the extent to which a tree is
balanced. We refer to \cite{Fischer2021Book} for a comprehensive overview on tree balance indices available in the literature and their extremal properties. 
It is well
known that different balance indices have different merits and are useful for
different purposes: While some balance indices provide total enumerations of the
tree space, others have a more limited range but are better at determining underlying evolutionary
models, like for instance the so-called Yule model \cite{Yule1925}. Some balance
indices have unique extremal trees even in the binary case (in which every
vertex is only allowed to have two children), whereas other indices allow for ties in such
cases. 

For phylogenetic networks, on the other hand, there are currently only two
balance indices available: Zhang
\cite{Zhang2022} generalized the Sackin index \cite{Sackin1972, Fischer2021}
(one of the earliest balance indices for trees), while Bienvenu et al.\
\cite{Bienvenu2021} generalized a lesser known
tree balance index, namely the so-called $B_2$ index \cite{Shao1990}. However, one of the most
frequently used balance indices for phylogenetic trees, namely the total
cophenetic index \cite{Mir2013}, has not yet been made available for networks,
despite the fact that it has some quite intriguing
properties. For example, the most balanced binary tree according to the total
cophenetic index is unique for all leaf numbers $n$, which is not the case for
many other indices (including Sackin and $B_2$), and concerning trees, it has a
larger range and a lower probability of ties than, say, the Sackin index 
(see \cite{Fischer2021Book} for details).

We thus aim to generalize the concept of the (total) cophenetic index from
phylogenetic trees to networks and to analyze its extremal properties. In
particular, we introduce a weighted version of the total cophenetic index, which --
when restricted to trees -- coincides with the classic definition of the total
cophenetic index \cite{Mir2013} (up to an added constant).
 In particular, this weighted version involves a parameter $\epsilon\geq 0$. As this
parameter can be arbitrarily adjusted, this generalized
notion of the cophenetic index can actually be considered a class of indices
for networks. We use weights to quantify the balance and distribution of the
blocks (i.e., the maximal biconnected subgraphs) and the \enquote{tree-likeness}
of the given network. 
In particular, each block $B$ obtains a well-defined weight $\omega(B)$. In addition,
each vertex $v$ of the underlying network $N$ obtains a weight $\phi(v)$ which
may involve the weights of some blocks as well the parameter $\epsilon$. The
weights $\phi(v)$ allow us to distinguish between substructures that are trees
and substructures that are \enquote{very close} to trees but that contain cycles. A
well-defined subset of vertices in $N$ (so-called relevant vertices) is then
used to determine the balance of $N$.  We analyze here several properties of
this new index and characterize networks with extremal indices.

The weighted total cophenetic index is suitable for general networks. 
However, both the reconstruction of networks from data as well as their mathematical
analyses are challenging and often more intricate than for trees (see, \ e.g., 
\cite{TanLongLiao,gambette:17,FischerMPnetworks,JinNaklehSnirTuller}). To
analyze the properties of the weighted total cophenetic index, we therefore focus on so-called \emph{level-$1$ networks}, i.e., on networks in which
each non-trivial block has precisely one reticulation vertex which are those
vertices of in-degree larger than 1. Level-$1$ networks play an important
role in evolutionary biology, e.g.,  in the analysis of virus evolution
\cite{huber:11}. Although level-$1$ networks can be considered as one of the
combinatorially simplest types of phylogenetic networks, this simplicity has
proven to be deceptive, as the combinatorial structure of such networks has
turned out to be more complicated than originally thought
\cite{GH:11,Huber2013,gambette:17,Huber2017,huber:11}). Nevertheless, subclasses
of level-$1$ networks (e.g., so-called galled trees \cite{Gusfield2003}
or binary level-1 networks) are in many cases more easily accessible from biological data than other types of
networks (see e.g.,  \cite{Jansson2016,ROSSELLO200954,gambette:17}).  Moreover, restricting the level of a
network is currently a common way to obtain scalable network inference
software, see e.g.,  PhyloNetworks \cite{solis-lemus2016, solis-lemus2017},
PhyNEST \cite{kong2022inference}, or NANUQ \cite{allman2019nanuq}. We emphasize
that level-$1$ networks considered here are not restricted to be binary (i.e., we do not restrict to
those networks where each reticulation vertex has in-degree 2 and out-degree 1
and where every other non-leaf vertex has out-degree 2). In fact, we conduct our
extremal analyses both for the binary and non-binary cases. 

The remainder of this manuscript is organized as follows. We begin by defining
phylogenetic networks and related concepts in Section \ref{sec:prelim}. In Section \ref{sec:wtci}, we then formalize the concept of network balance, before introducing the central concept of this manuscript,
namely the weighted total cophenetic index. Subsequently, in Section
\ref{sec:otherproperties} we state some general properties of the weighted total
cophenetic index, such as its locality and recursiveness, before finally turning
to its extremal properties. Here, we begin by analyzing the structure of
extremal blocks in Section \ref{sec:structure-blocks}, before deriving the
extremal values of the weighted total cophenetic index and characterizing the
networks achieving them in Section \ref{sec:extrema}. We conclude our study with
a brief comparison of the weighted total cophenetic index to the above mentioned
two other indices available for phylogenetic networks and indicate some
directions for future research in Section \ref{sec:discussion}.

Since the material is extensive and very technical, we subdivide our
presentation into a main part
(Sections~\ref{sec:intro}--\ref{sec:discussion}), which can be viewed as more \enquote{narrative}, and a technical part
(Sections~\ref{APPX:prelim}--\ref{APPX:sec:structure-blocks}), which contains all
proofs and additional material in full detail. Together with the
definitions and preliminaries in Section~\ref{sec:prelim}, the technical part
is self-contained. Definitions and results appearing in the narrative part
are therefore restated. The order of the material in the two parts is
slightly different. 

\section{Preliminaries}
\label{sec:prelim}

We mainly follow here the terminology from \cite{HSS:22}.

\paragraph{\bf Graphs, paths, and connectedness}
We consider graphs $G=(V(G),E(G))$ with vertex set $V\coloneqq V(G)$ and edge set
$E\coloneqq E(G)$. A graph $G$ is \emph{undirected} if $E\subseteq
\{\{u,v\}\colon u,v\in V, \ u \neq v\}$, i.e., if $E$ is a subset of the set of two-element (unordered) subsets of $V$. $G$ is \emph{directed} (a
di-graph, for short) if $E\subseteq \{(u,v)\colon u,v\in V,u \neq v\}$, i.e., if $E$
is a subset of \emph{ordered} pairs of the elements of $V$. Thus, edges $e\in E$
in undirected graphs are of the form $e=\{x,y\}$ and in di-graphs of the
form $e=(x,y)$ with $x,y\in V$ being distinct. The \emph{degree} of a vertex
$v\in V$ in an undirected or directed graph $G$, denoted by $\deg_G(v)$, is the
number of edges that are incident with $v$. If $G$ is directed, we furthermore
distinguish the in-degree $\indeg_{G}(v) = \vert \{u \mid (u,v)\in E\}\vert$ and
the out-degree $\outdeg_{G}(v) = \vert \{u \mid (v,u)\in E\}\vert$. Whenever
there is no ambiguity, we omit the subscript and simply write $\deg(v),
\indeg(v)$, and $\outdeg(v)$. We write $H \subseteq G$ if $H$ is a subgraph of
$G$ and use $G[W]$ to denote the subgraph of $G$ that is induced by some subset
of vertices $W \subseteq V$. Two di-graphs $G$ and $H$ are \emph{isomorphic}, 
in symbols $G\simeq H$, if there is a bijective map $f\colon V(G) \to V(H)$
such that $(u,v)\in E(G)$ if and only if $(f(u),f(v))\in E(H)$.

A (directed) path $(v_1,\dots,v_n)$ in a di-graph $G$ consists of pairwise
distinct vertices $v_1,\dots,v_n$ and edges $(v_i,v_{i+1})$, $1\leq i\leq n-1$.
Paths in undirected graphs are defined analogously. We will also refer to paths
$(v_1,\dots,v_n)$ as $v_1v_n$-paths. A path is \emph{Hamiltonian} in $G$ if it
contains all vertices of $G$. An undirected graph is \emph{connected} if, for
every two vertices $u,v\in V$, there is a path connecting $u$ and $v$. A
di-graph is \emph{connected} if its underlying undirected graph is
connected. A connected component of $G$ is a maximal induced subgraph that is
connected. A vertex $v$ is a \emph{cut-vertex} in a graph $G$ if
$G[V(G)\setminus\{v\}]$ consists of more connected components than $G$.
Similarly, a directed or undirected edge $(u,v)$ is a \emph{cut edge} in $G$ if
the graph $G'$ with vertex set $V(G')=V(G)$ and edge set
$E(G')=E(G)\setminus\{(u,v)\}$ consists of more connected components than $G$.

\paragraph{\bf Directed acyclic graphs, rooted phylogenetic networks, and rooted phylogenetic trees}
A di-graph $G=(V,E)$ is \emph{acyclic} (a \emph{DAG} for short) if it does not contain a directed
cycle. Note, however, that the underlying undirected graph of a DAG may contain
(undirected) cycles.
In a DAG $G$, a vertex $u\in V$ is called an
\emph{ancestor} of $v\in V$ and $v$ a \emph{descendant} of $u$, in symbols $v
\preceq_G u$, if there is a directed path (possibly reduced to a single vertex)
in $G$ from $u$ to $v$. We write $v \prec_G u$ if $v \preceq_G u$ and $u\neq v$.
If $u \preceq_G v$ or $v \preceq_G u$, then $u$ and $v$ are
\emph{$\preceq_G$-comparable} and otherwise, they are
\emph{$\preceq_G$-incomparable}. Moreover, if $(u,v)\in E$, we say that $u$ is a
\emph{parent} of $v$, $u\in\parent_{G}(v)$, and $v$ is a \emph{child} of $u$,
$v\in \child_{G}(u)$.  
An edge $(u,w)$ in a DAG $G$ is a \emph{shortcut} if there is a vertex
$v\in\child(u)\setminus\{w\}$ such that $w\prec_G v$ (or, equivalently, if there
is a vertex $v'\in V(G)$ such that $w\prec_G v'\prec_G u$).

We define phylogenetic networks here as a slightly more general class of DAGs than what is customarily considered in most of the literature on the topic.
\begin{definition}\label{def:N}
A \emph{(rooted) network} is a connected directed acyclic graph $N=(V,E)$ such that
    \begin{itemize}
		\item[(N1)] There is a unique vertex $\rho_N$, called the \emph{root} of $N$, with $\indeg(\rho_N)=0$.
	\end{itemize}
\noindent A network is \emph{phylogenetic} if 		
	\begin{itemize}		
	\item[(N2)] There is no vertex $v\in V$ with $\outdeg(v)=1$ and $\indeg(v)\le 1$. 
	\end{itemize}
A vertex $v\in V$ is a \emph{leaf} if $\indeg(v)=1$ and $\outdeg(v)=0$, a \emph{hybrid} or \emph{reticulation vertex} if $\indeg(v)>1$, and a \emph{tree vertex} if $\indeg(v)\le 1$ and $\outdeg(v) > 0$.  
The set of leaves is denoted by $L(N)$ and the set of \emph{inner} vertices is defined as $\mathring{V}(N)\coloneqq V(N)\setminus L(N)$.
\end{definition}

Note that leaves of $N$ are \emph{not} tree vertices.
Moreover, by (N2), the root of a phylogenetic network is either the single leaf or has
$\outdeg(\rho_N)\ge 2$. In contrast to the even more general definition
\cite[Definition~3]{Huson:11}, we use the term \enquote{phylogenetic} here to mean that vertices
with in-degree $1$ and out-degree $1$ do not appear. Rooted phylogenetic networks thus generalize
rooted phylogenetic trees, i.e., phylogenetic networks without any hybrid vertices. We emphasize
that all networks considered here are rooted and thus, we always use the term \enquote{network}
instead of \enquote{rooted network}. We denote with $\mscr{T}_n$ and $\mscr{N}_n$ the set of all
(isomorphism classes of not necessarily phylogenetic) trees and networks with $n$ leaves,
respectively.  

Given a network $N=(V,E)$ with leaf set $L(N)$, we use $N(v)$ to denote the subnetwork rooted in
$v$, that is, the subgraph of $N$ that is induced by the vertices in $\{w\mid w\in V, w\preceq_N
v\}$. Note that $N(v)$ is not necessarily phylogenetic, even in case $N$ is phylogenetic. For better
readability, we often write $L_N(v)$ instead of $L(N(v))$. If $x\in L_N(v)$, then $x$ is called a
\emph{leaf-descendant} of $v$.

\begin{remark}
In all drawings that include networks, we omitted drawing the directions of edges as arcs, 
that is, edges are usually drawn as simple lines. However, in all cases, 
the directions are implicitly given by directing the edges \enquote{away} from the root.
\end{remark}

\paragraph{\bf Blocks and level-$\mathbf{1}$ networks}
An undirected or directed graph is \emph{biconnected} if it contains no vertex whose removal
disconnects the graph. A \emph{block} of an undirected or a directed graph is a maximal biconnected
component. A block $B$ is called \emph{non-trivial} if it contains an (underlying undirected) cycle.
Equivalently, a block is non-trivial if it is not a single vertex or a single edge. An edge that is
at the same time a trivial block is a cut edge. We denote with $size(B)$ the \emph{size} of a block
$B$, i.e., the number of vertices in $B$. A block of size three is called a \emph{triangle}. We need
to distinguish between tree vertices that are part of blocks and those that are not. Hence, we say a
\emph{true tree vertex} is a tree vertex that is not contained in a non-trivial block. 
As discussed in Section~\ref{APPX:prelim}, each block $B$ has a unique $\preceq_N$-maximal vertex
$\rho_B$ (the \emph{root of $B$}) and every $\preceq_{N}$-minimal vertex $v$ in $B$ must be a hybrid vertex. 
The notion of $\preceq_{N}$-minimal vertices also allows us to define yet another special type of networks, namely level-$k$ networks.

\begin{cdefinition}{\ref{def:level-k-N}}
A network $N$ is \emph{level-$k$} if each block $B$ of $N$ contains at most $k$ hybrid vertices distinct from its root $\rho_B$.
\end{cdefinition}

We remark that level-$k$ networks are sometimes alternatively defined as networks that can be converted into trees by deleting at most $k$ edges from each block (see, e.g., \cite[p. 247]{steel_phylogeny_2016}). In case the network is binary, this definition coincides with Definition\ \ref{def:level-k-N}.

In the following, we focus mainly on level-$1$ networks in which there is a unique
$\preceq_N$-minimal vertex $\eta_B$ in each block $B$ (cf.\ Lemma \ref{lem:unique-min-B}). Given a
block $B$, we denote with $V^-(B)$ the set of \emph{internal} vertices of $B$, i.e., vertices that
are distinct from $\rho_B$ and that have at least one child that is located in $B$. Hence, for
level-$1$ networks $N$ we have $V^-(B)\coloneqq V(B)\setminus \{\rho_B,\eta_B\}$ for each
non-trivial block $B$ of $N$. Two paths in $B$ are \emph{internal vertex-disjoint}, if they do not
have any internal vertices (of $B)$ in common. The set $\mathcal{B}(N)$ denotes the collection of
all non-trivial blocks in a network $N$ and $\mathcal{B}_m(N)$ is the subset of blocks in
$\mathcal{B}(N)$ that have precisely $m$ vertices. Note that $m\geq 3$ whenever
$\mathcal{B}_m(N)\neq \emptyset$. Moreover, for all $v\in V(N)$, let $\mathcal{B}^v(N)\subseteq
\mathcal{B}(N)$ be the set of non-trivial blocks $B$ in $N$ for which $\rho_B=v$, i.e., the set of
those non-trivial blocks \emph{rooted in} $v$. Whenever there is no ambiguity concerning $N$, we
refer to these sets as $\mathcal{B}$, $\mathcal{B}_m$, and $\mathcal{B}^v$, respectively.

A particular role in this contribution is played by restricted versions $\partN(v)$ of the subnetwork $N(v)$ of $N$ that is rooted in $v$.
\begin{cdefinition}{\ref{def:relevant-part}}
Let $N$ be a network. If (a) $v$ is not contained in any non-trivial block of $N$ or $v = \rho_B$ for every non-trivial block $B$ in $N$ that contains $v$,
then put $\partN(v) \coloneqq N(v)$ and, otherwise,  if (b) $v$ is part of some non-trivial block $B$ but not the root of $B$, then $\partN(v)$ is the subgraph of $N$ induced by all vertices $u\prec_N v$ for which $u\not\preceq_N w$ for all $w\in B$ with $w\prec_N v$. 
\end{cdefinition}

As outlined in Section~\ref{APPX:prelim}, $\partN(v)$ is well-defined and the operator \enquote{$\partial$} is idempotent.
Definition~\ref{def:relevant-part}(b) can sloppily  be rephrased to: $\partN(v)$ is the subnetwork rooted in $v$ without any descendants of vertices $w\prec_N v$ that are also located in the block $B$ that contains $v$ but for which $v$ is not the root; see Figure~\ref{fig:partN1} and \ref{fig:relevant_neighbors} for illustrative examples.

\paragraph{\bf Subsets of children, relevant vertices, and least common ancestors}

An important subset of children of a vertex $v$ in a network $N$ is provided by the set
$\child_N^*(v)\subseteq \child_N(v)$  of all children of $v$ that are not contained
in a non-trivial block $B\in \mathcal{B}(N)$ with $v\in V(B)\setminus \{\rho_B\}$, i.e., $B$
contains $v$ but $v$ is not the root of $B$. Hence, $\child_N^*(v) = \child_N(v)$ if $v$ is not
contained in any non-trivial block or, otherwise, if $v$ is the root of every non-trivial block it
is contained in. A vertex $v$ is \emph{relevant} in $N$ if $|\child_N^*(v)|>1$. For $N$, we
denote with $\relV_N$ the set of its relevant vertices. The set $V(N)\setminus \relV_N$ comprises
all \emph{irrelevant} vertices. As we shall see, relevant vertices play a crucial role to ensure that we can
express the difference in the subsequently defined balance index of two networks in terms of local
differences in the respective networks.

A \emph{least common ancestor} (LCA) of a subset $Y\subseteq V$ in a DAG $N$ is an
$\preceq_N$-minimal vertex of $V$ that is an ancestor of all vertices in $Y$. In general DAGs $N$, a
LCA does not necessarily exist for a given vertex set. Moreover, the LCA is not unique in general.
We write $\LCA(Y)$ for the (possibly) empty set of $\preceq_{N}$-minimal ancestors of the elements
in $Y$. In a (phylogenetic) network $N$, the root $\rho_N$ is an ancestor of all vertices in $V$,
and thus a least common ancestor exists for all $Y\subseteq V$. If $\LCA(Y) = \{u\}$ consists of a
single element $u$ only, we write $\lca(Y)=u$. In other words, $\lca(Y) = u$ always implies that the
$\preceq_{N}$-minimal ancestor of the elements in $Y$ exists and is uniquely determined. We leave
$\lca(Y)$ undefined for all $Y$ with $\vert\LCA(Y)\vert\ne1$.

\begin{clemma}{\ref{L7.9HSS}}
For all level-$1$ networks $N$ and $Y\subseteq V(N)$,  $\lca_N(Y)$ is well-defined.
\end{clemma}

\section{The weighted total cophenetic index} \label{sec:wtci}

Before we can introduce the weighted total cophenetic index as a measure of network balance, we need
to formalize the term balance. Indeed, 
Bienvenu et al. \cite{Bienvenu2021} and Zhang
\cite{Zhang2022} have considered balance indices for networks, however, a precise definition of the term
\enquote{network balance index} or \enquote{network imbalance index} seems to be missing in the
literature. A \emph{network shape statistic} $\nss\colon \mscr{N}_n \to \mathbb{R}$  is a function
assigning a value to a network $N$ that solely depends on the topology of $N$ and 
is invariant under vertex or leaf labelings, edge lengths or any 
hybridization probabilities associated with hybrid nodes. 
For trees, Fischer et al. \cite{Fischer2021Book} and, independently, Lemant et al.
\cite{Lemant2022} have recently developed criteria a network shape statistic has to
satisfy to be a tree (im)balance index. 
Before recalling these criteria we provide two types of trees that play an important role
in this context.

\begin{itemize}
 \item  A \emph{star tree} is a tree $T$ such that either (i) $T$ consists of
	     precisely one vertex, or (ii) $T$ has at least two leaves and all leaves
	     of $T$ are adjacent to the root $\rho_T$ of $T$. A star tree on $n$
	     leaves is denoted by $\Tstar$.

 \item The \emph{caterpillar tree} $\Tcat$, is a tree on $n$ leaves such that
       each inner vertex has exactly two children and the subgraph induced by
       the inner vertices is a path with the root $\rho_T$ at one end of this
       path. 
\end{itemize}

\begin{definition} \label{def-networkbal}
    A network shape statistic $\nss\colon \mscr{T}_n \to \mathbb{R}$, where the domain is 
    restricted to trees, is called a \emph{tree balance index} if and only if (i) the caterpillar
    tree $\Tcat$ is the unique tree minimizing $\nss$ across all trees in $\mscr{T}_n$ and the
    star tree $\Tstar$ is the unique tree maximizing $\nss$ across all trees in $\mscr{T}_n$.
    A tree \emph{imbalance index} is defined
    analogously with the extremal trees swapped. 
\end{definition}

The possibly most generic generalization of Definition\ \ref{def-networkbal} is as follows.

\begin{definition}\label{def:network-bal-index} \leavevmode
	 A network shape statistic $\nss$ is called a \emph{network balance index} if and only if 
    \begin{enumerate}[(i)]
        \item $\nss$ is a tree balance index when the domain of $\nss$ is restricted to $\mscr{T}_n$, i.e., when only trees are considered.
        
        \item  The tree $\Tstar$ is the network maximizing $\nss$ across all networks in $\mscr{N}_n$.
    \end{enumerate}

    A network \emph{imbalance index} is defined
    analogously assuming that $\Tstar$ is the network in $\mscr{N}_n$ minimizing $\nss$.
\end{definition}

We remark that we only state a condition for the most balanced network, but not for the most imbalanced one. 
The reason is as follows. For trees $T$, the number of vertices in $T$ is bounded by the number of its leaves. This, however, is not true for networks. That is, although a network may have a bounded number $n$ of leaves, 
it may have an unbounded number of vertices, see Figure\ \ref{fig:N_cres} and Section \ref{APPX:subsection:MaxN}
for more details. As a consequence, a network shape statistic and therefore any meaningful (im)balance
index on such networks cannot be bounded by the number of leaves.
The latter observation motivates the question as whether more restrictive networks
may have bounded (im)balance indices and, if so, if one can characterize the structure
of networks achieving it. As we shall see later, when restricting our attention to  
binary networks and all networks with the property that all hybrid nodes have in-degree two, 
then the network $\NC_n$ (defined in detail below) is the unique network maximizing the \emph{im}balance defined below. 
In simple words, $\NC_n$ is a reminiscence of the caterpillar tree $\Tcat$ to which an artificial root
and a shortcut have been added,  see Figure\ \ref{fig:N_cres} for a generic example.

We are now in a position to introduce the central concept of this manuscript, namely the weighted
total cophenetic index. This index was introduced by Mir et al. \cite{Mir2013} and is one of  
the most popular balance indices for phylogenetic trees and serves as the basis for the weighted version 
that we employ for networks later on.
For a rooted phylogenetic tree $T$, the \emph{total cophenetic index} $\Phi(T)$ is expressed as
\[\Phi(T) = \sum\limits_{v\in \mathring{V}(T)\setminus \{\rho_T\}} \binom{|L_T(v)|}{2},\]
where $\Phi(T) \coloneqq 0$ in case $\mathring{V}(T)\setminus  \{\rho_T\} = \emptyset$ (which is precisely the case when $T$ is a so-called star tree). 

Of course, we may directly use  \[\Phi(N)\coloneqq \sum\limits_{v\in  \mathring{V}(N)\setminus \{\rho_N\}}\binom{|L_N(v)|}{2}\] 
and obtain a network imbalance index according to 
Definition~\ref{def:network-bal-index}.
However, as shown in Figure~\ref{fig:phi-N}, several highly distinct networks then become indistinguishable
under the value $\Phi$. To overcome these issues,
we consider weighted versions $\Phi^*$ and $\Phi^{**}$ of $\Phi$. These are based on assigning
weights to the non-trivial blocks of the underlying network which, in turn, are used to assign
weights to all vertices of the network.

\paragraph{\bf Weights of blocks}

We first define the weight $\omega$ of a non-trivial block $B$ based on the size $\kappa_v = |K_v|$
of the set $K_v\coloneqq \{w\in V(B)\mid w\preceq_N v\}$ of vertices that are \enquote{below} $v$ in
$B$ (including $v$), for all vertices $v \in V(B)$.  
We now define the \emph{weight} of a non-trivial block $B$ as
\[\omega(B)\coloneqq \sum\limits_{v\in V(B)\setminus\{\rho_B\}} \binom{\kappa_v}{2} = \sum\limits_{v\in V^-(B)} \binom{\kappa_v}{2}\]
The equality $\sum_{v\in V(B)\setminus\{\rho_B\}} \binom{\kappa_v}{2} = \sum_{v\in V^-(B)} \binom{\kappa_v}{2}$ is due to the fact that all hybrids $\eta$ in $B$ are $\preceq_B$-minimal vertices and thus, satisfy 
	$\kappa_\eta = 1$, i.e.,  $\binom{\kappa_\eta}{2}= 0$.

Note that the choice of the weight $\omega(B)$ for non-trivial blocks $B$ is intentional. In fact,
as demonstrated in Section \ref{APPX:sec:wtci}, it always holds that 
\[\omega(B) = \Phi(B^*)  \coloneqq \sum\limits_{v\in \mathring{V}(B^*)\setminus\{\rho_{B^*}\}} \binom{|L_{B^*}(v)|}{2},\]
where $B^*$ is the network obtained from $B$ by adding, for all $v \in V(B)$, a new vertex $x_v$
along with the edge $(v, x_v)$. Thus, $B^*$ contains the single block $B$ to which a leaf is
attached to each vertex. 
Consequently, $\omega(B)$ is a measure of the balance of $B^*$ in terms of
$\Phi$ and it, thus,  distills the information about $B$ to its essential core structure, quantifying the
balance of $B$ without disturbing it with information about network structures \enquote{surrounding}
$B$ in $N$. In other words, any information that is dispensable to understand the structure of  $B$ is omitted, 
see Figure\ \ref{fig:omega-leaf-ext} for an illustrative example.

\begin{figure}[tbp]
\centering
\includegraphics[width = 0.65\textwidth]{./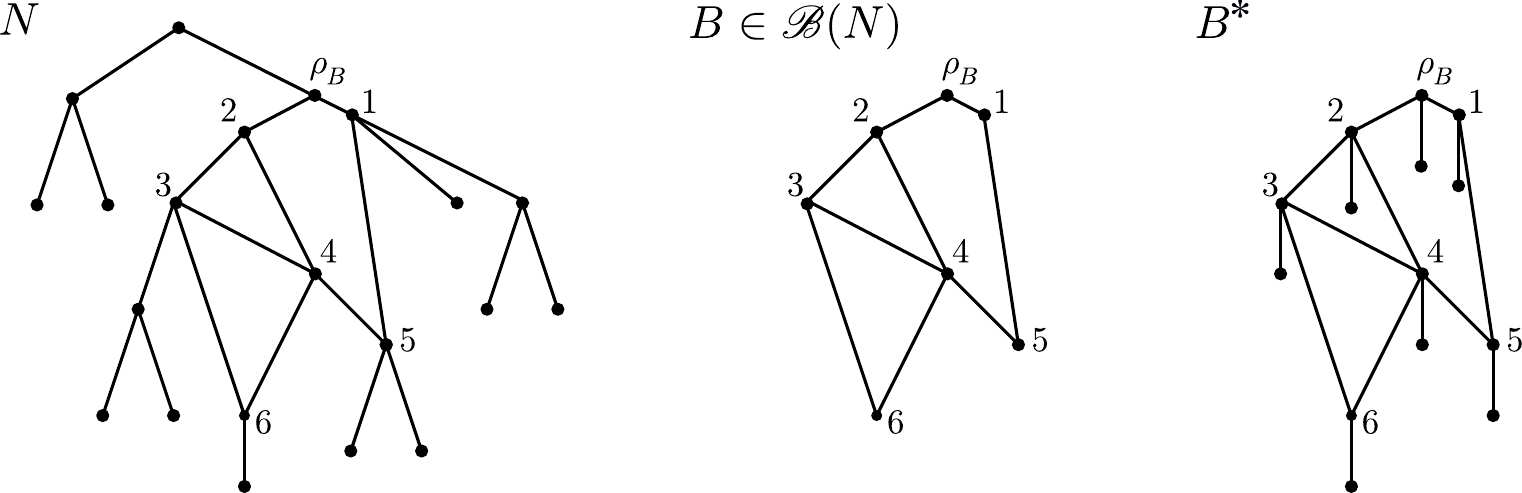}
\caption{Phylogenetic level-3 network $N$ (left) that contains the non-trivial block $B \in \mathcal{B}(N)$ (middle) whose 
				leaf extended version is $B^*$ (right).
				$V^-(B)$ consists of the vertices $1,2,3$ and $4$. Hence, $\omega(B) = 
				\sum_{i=1}^4 \binom{\kappa_i}{2}  = \binom{2}{2} + \binom{5}{2} + \binom{4}{2} + \binom{3}{2}$.
				One easily observes that $\kappa_i = |L_{B^*}(i)|$, $1\leq i \leq 6$ and that 
				$\omega(B) = \Phi(B^*)$.
				} 
				
\label{fig:omega-leaf-ext}
\end{figure}

\paragraph{\bf Weights of vertices}

We now proceed with defining weights $\phi$ for vertices $v\in V(N)$. Let $0\leq \epsilon\leq 2$. Then, we define:
\[
\phi(v) = \phi_N(v) \coloneqq 
\begin{cases}
	1  & \text{, if }  \mathcal{B}^v(N) = \emptyset  \\
	\epsilon + \sum_{B\in \mathcal{B}^v(N)} 	\omega(B)   &\text{, otherwise.}
\end{cases}
\]

\noindent If the context is clear we write $\phi(v)$ rather than $\phi_N(v)$.  Hence, if $v$ is not the root of any non-trivial block, then $\phi(v)=1$ and, otherwise $\phi(v)$ becomes the sum of $\epsilon$ and the sum of the weights $\omega(B)$ of non-trivial blocks $B$ for which $\rho_B=v$. Note, however, that by Observation~\ref{obs:omega}, $\omega(B)=1$ may be possible in case that $B$ is a triangle.
In particular, if $\mathcal{B}^v(N) = \{B\}$ and $B$ is a triangle, then $\sum_{B\in \mathcal{B}^v(N)}\omega(B)  =1$ which is the same value as assigned e.g.,  to true tree vertices $w$ (where trivially $\mathcal{B}^w(N) = \emptyset$ holds).
To distinguish  between the weights of roots of such triangle blocks and 
those vertices $w$ for which $\mathcal{B}^w(N) = \emptyset$, we employ the extra constant $\epsilon$. Note that by the previous argument, setting $\epsilon = 0$, will lead to the same weight for true tree vertices and tree vertices that are roots of triangles, whereas tree vertices that are roots of larger
or more than two non-trivial blocks  will receive a larger weight. As far as the upper bound, $\epsilon=2$, is concerned, we choose this number primarily for technical reasons (in particular, for the proof of Lemma \ref{lem:max_is_binary} and \ref{lem:exactly_one_block}) as well as to not over-penalize tree vertices contained in non-trivial blocks.

\paragraph{\bf The weighted total cophenetic index}

\begin{figure}[tbp]
\centering
\includegraphics[width = 1.\textwidth]{./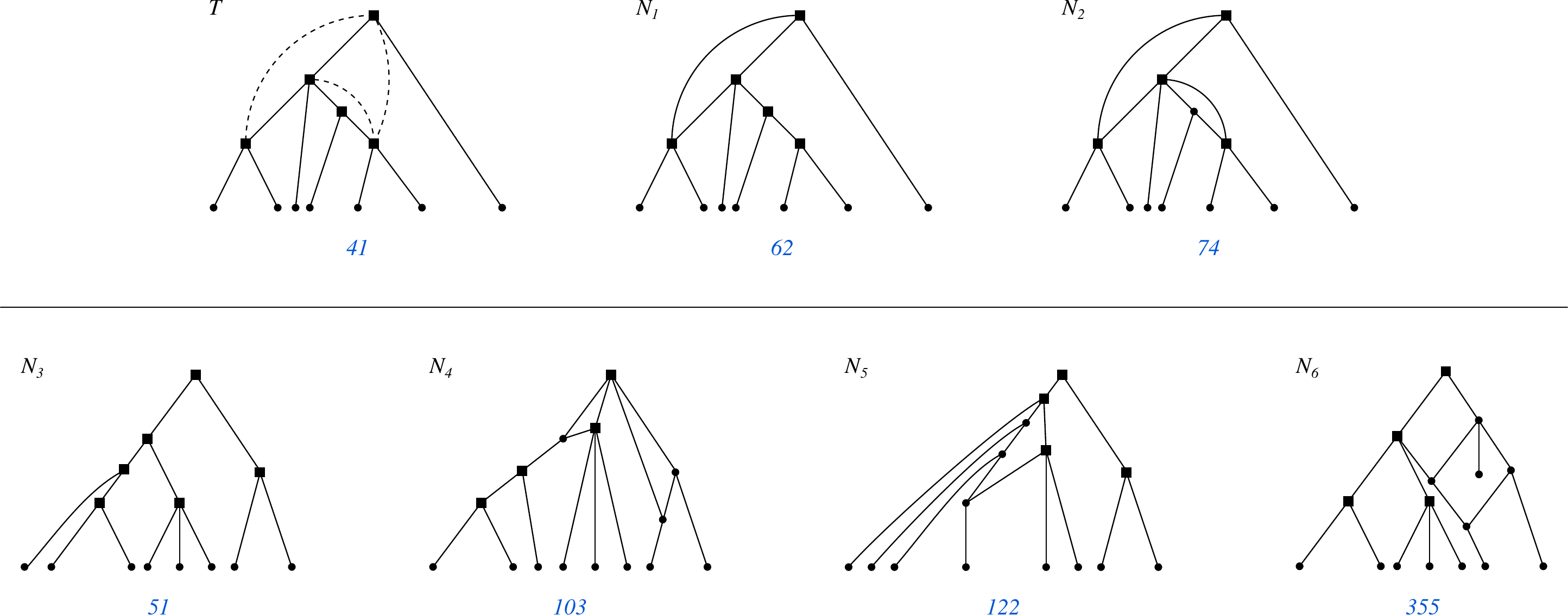}
\caption{Shown are several networks together with their respective values $\Phi^{**}(\cdot)$ written below the networks where we have chosen $\epsilon=1$.
							Relevant vertices are highlighted by $\blacksquare$.\newline
\emph{Upper Panel:}  Independent of which shortcuts (dashed edges) are added to $T$, the resulting networks $N_i$ satisfy $\Phi(N_i)=\Phi(T)=20$, $i\in \{1,2\}$. 
					  Moreover, if only shortcuts that are adjacent to the root are added to obtain $N$, then $\Phi^*(N)=\Phi^*(T)$ is possible. By way of example, for $T$ and $N_1$ we have  
					  	$\Phi^*(T)=20=\Phi^*(N_1)$.  \newline
\emph{Lower Panel:}   Four networks with 8 leaves for which we have $\Phi(N_i) = \binom{6}{2}+2\binom{3}{2}+2\binom{2}{2} = 23$ for all $i\in \{3,4,5,6\}$. 
							Furthermore, $\Phi^*(N_4)= \Phi^*(N_6) = \binom{6}{2}+\binom{3}{2}+\binom{2}{2} = 19$. 
							In the latter case, the structure of 
							non-trivial blocks rooted at the root of $N_4$, resp., $N_6$ is not taken into account. 
							To better distinguish between networks that may contain such blocks, we
							we use $\Phi^{**}(N_i) =  \Phi^{*}(N_i) + \phi_{N_i}(\rho_{N_i})\binom{|L(N_i)|}{2}$.
							}

\label{fig:phi-N}
\end{figure}

We are now in the position to generalize the total cophenetic index $\Phi$ from trees to networks by
defining a weighted version of it. To this end, we consider relevant vertices in $\relV_N$ and their
particular weights. Note that we only consider relevant vertices here as information about the
irrelevant vertices is already accounted for in the weights $\omega(B)$ of non-trivial blocks.

\begin{cdefinition}{\ref{def:wpci}}
The \emph{weighted partial (total) cophenetic index (wPCI) $\Phi^*(N)$} 
of a network $N\in \mscr{N}_n$ is defined as

\[ \Phi^*(N) \coloneqq 
\begin{cases}
 \sum\limits_{v\in \relV_N\setminus\{\rho_N\}} \phi(v) \binom{|L_N(v)|}{2} 
                    & \text{, if  $\relV_N\setminus\{\rho_N\}\neq \emptyset$}\\
 \hfill 0 & \text{, otherwise.}
\end{cases}
\]

\noindent
The \emph{weighted (total) cophenetic index (wCI) $\Phi^{**}(N)$} is then defined as
\[\Phi^{**}(N) := \Phi^*(N) + \phi(\rho_N) \binom{n}{2}.\]
\end{cdefinition}

By convention, we assume that $\binom{1}{2}\coloneqq 0$.
As we shall see later, the distinction between the parts of the sums $\Phi^{**}(N)$ into $\Phi^{*}(N)$ and $\phi(\rho_N) \binom{n}{2}$ will not only be handy in upcoming proofs but is crucial for rigorously stating certain properties of the wCI such as locality (see Theorem~\ref{thm:local-phi}).

\begin{figure}[t!]
\centering
\includegraphics[width = .95\textwidth]{./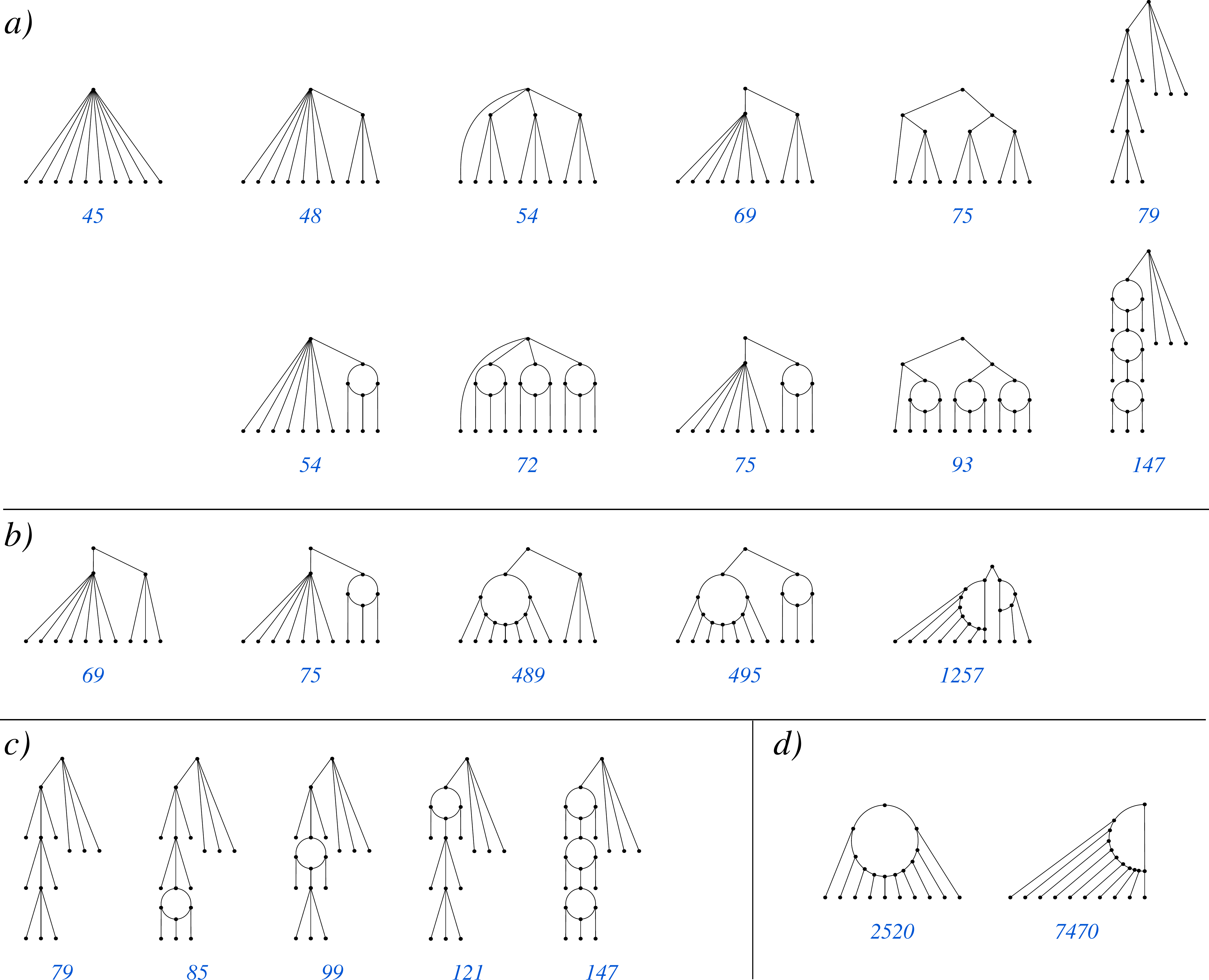}
\caption{Shown are several level-$1$ networks $N$ on $n=10$ leaves together with their respective values $\Phi^{**}(N)$ (written directly below the networks), where we chose $\epsilon=1$. 
\emph{Panel~a).} Several trees (top) together with networks (drawn below the respective trees) obtained from these trees by replacing certain vertices by a fixed block of size $4$.
							Here, the root $\rho_B$ of each block $B$ has the same weight $\phi(\rho_B) = \omega(B) + \epsilon = 2\binom{2}{2} +1 = 3$. 
							One easily observes that the order of $\Phi^{**}$-values of the networks in the 2nd row and the $\Phi$-values $\Phi(T) = \Phi^{**}(T) - \binom{10}{2} = \Phi^{**}(T) - 45$ of the trees $T$ in the 1st row coincide. 
							The tie for trees and networks with  $\Phi^{**}$-value $54$, resp., $75$ could be resolved by using a larger $\epsilon$-value (e.g.\ $\epsilon=1.5$). 
	\emph{Panel~b).} Several networks were obtained from the tree (left) by replacing certain vertices by blocks of size $4$ or $8$. First, one can observe that 
							the $\Phi^{**}$-value of each network increases with the inclusion of more blocks as well as with the inclusion of blocks $B$ with higher weight $\omega(B)$.
							The two blocks in the rightmost network are crescents (defined below) and, as we shall see later, thus have the maximum weight among all blocks of size $4$ and $8$, respectively. 
							Hence, $\Phi^{**}$ reaches its maximum among all networks that were created in this way with this network.
		\emph{Panel~c).} Several networks were obtained from the tree (left) by replacing certain vertices by a fixed block of size $4$. 
							Again,	$\phi(\rho_B) = \omega(B) + \epsilon =3$ for the root $\rho_B$ of every block $B$. Moreover, 
							 we again observe that the $\Phi^{**}$-value of each network increases with the inclusion of more blocks. At the same time one can 
							 observe that $\Phi^{**}$-values increase whenever a block $B$ is closer to the root, in which case more leaves are located
							 below $\rho_B$ which increases the influence of the value $\phi(\rho_B)$. 
	  \emph{Panel~d).} Shown are two networks with a single block $B$ where $B$ is a full-moon (left) and a crescent (right). Both are formally defined below. }
							
\label{fig:partN}
\end{figure}

Note, if $T$ is a phylogenetic tree, then $\child_T^*(v)= \child_T(v)$ for all $v\in V(T)$ and thus, $\relV_T = V(T)\setminus L(N)$, i.e., all inner vertices are relevant. Moreover, $\mc{B}^v = \emptyset$ for all $v\in V(T)$ and thus, all inner vertices $v$ of $T$ have weight $\phi(v)=1$. 
As discussed in more detail in Section \ref{APPX:sec:wtci}, the following conditions hold. 
\begin{itemize}
	\item For all phylogenetic trees $T\in \mscr{T}_n$, it holds that $\Phi^{*}(T) = \Phi(T)$ and $\Phi^{**}(T) = \Phi(T) + \binom{n}{2}$. 
	\item For all phylogenetic networks $N$, it holds that  $\phi(v)\geq 1$ for all non-leaf vertices of $N$ and, in particular, 
		\[\Phi^{**}(N) =\sum\limits_{v\in \relV_N} \phi(v) \binom{|L_N(v)|}{2} \geq   \binom{|L(N)|}{2}.\]

\end{itemize}

To gain some intuition into the index $\Phi^{**}$, we refer to Figure~\ref{fig:partN}, where several
networks $N$, accompanied by their respective $\Phi^{**}(N)$-values (for $\epsilon=1$), are shown; see the
caption for more details. Figure~\ref{fig:partN}(a) presents examples where the order of the
$\Phi^{**}$-values of the networks in the 2nd row matches the order of $\Phi$-values of their
corresponding trees in the 1st row. The networks in the 2nd row are obtained from the trees in the
1st row by replacing all vertices with exactly three children by a fixed block size of four. This
highlights, to some extent, the correspondence between $\Phi$-values in trees and $\Phi^{**}$-values
in networks. Figure~\ref{fig:partN}(b) demonstrates the effect of the number, size, and internal
structure of non-trivial blocks. In a nutshell, the $\Phi^{**}$-values increase with a higher number
of non-trivial blocks and also with increasing size as well as increasing $\omega$-values of
the respective blocks. The latter observation is also reflected by the examples in
Figure~\ref{fig:partN}(c). Here, the additional effect of the distance of blocks to the root of the
networks is highlighted.

An appealing property of the wCI is that it can be computed efficiently. 
\begin{cproposition}{\ref{prop:algo}}
	The weighted total cophenetic index can be computed in linear-time for
   level-$1$ networks and in  $O(|V|^2+|V||E|)$  time for general networks $N=(V,E)$.
\end{cproposition}

\section{Locality and recursiveness of the weighted total cophenetic index} \label{sec:otherproperties}
In order to investigate the structure of phylogenetic level-$1$ networks that maximize or minimize the weighted total cophenetic index $\Phi^{**}$, it will be helpful to have a mechanism that \enquote{controls} the replacement of local networks. In particular, as we shall see, $\Phi^{**}$ is local in the sense that if two phylogenetic level-$1$ networks $N_1$ and $N_2$ differ only in some well-defined local subnetworks $N'_1$ and $N'_2$, then the difference $\Phi^{**}(N_1)-\Phi^{**}(N_2)$ can be expressed in terms of the differences of $\Phi^*(N'_1)$ and $\Phi^*(N'_2)$, and the weights of the roots of $N'_1$ and $N'_2$ only. Moreover, $\Phi^{**}$ can always be expressed as a sum of $\Phi^*(N')$ for networks $N'\subsetneq N$ rooted at well-chosen relevant vertices in $N$. These two properties are often referred to as \enquote{locality} and \enquote{recursiveness} in the literature, and are indeed satisfied by the total cophenetic index on trees \cite[Lemma 4, Lemma 5]{Mir2013}. 
To provide the main results of this section, we first  introduce the notion of relevant neighbors and relevant-vertex-free paths.

\begin{cdefinition}{\ref{def:IRfree}}
A $uv$-path in $N$ whose internal vertices (i.e., vertices that are distinct from $u$ and $v$) are not relevant is \emph{internal relevant-vertex-free (IR-free)}.
\end{cdefinition}

Note that every edge $(u,v)$ is trivially an IR-free path.
Moreover, if $v$ is an inner vertex in $N$ that is not contained in a non-trivial block of $N$,
then its children are either leaves, true tree vertices, or roots of non-trivial blocks. 
Hence, all non-leaf children of $v$ are relevant. The latter can be rephrased to:
any relevant vertex $u\prec_N v$ for which there is no
relevant vertex $u'\prec_N v$ with $u\prec_N u'$ must be a child of $v$.

\begin{cdefinition}{\ref{def:rel-neighbor}}
Let $v,u\in V(N)$ be such that $u\prec_N v$ and let $\mathcal{B}$ be the set of non-trivial blocks of $N$ that contain $v$.
 Then, $u$ is a \emph{relevant neighbor of $v$} if $u$ is relevant and precisely one of the following conditions is satisfied:
\begin{enumerate}
    \item $v$ is not contained in a non-trivial block of $N$ and $u$ is child of $v$
    \item $v$ is contained in a non-trivial block of $N$ and precisely one condition holds 
        \begin{enumerate}
            \item $u$ is contained in some $B\in \mathcal{B}$ as well; or 
            \item $u$ is not contained in any $B \in \mathcal{B}$ and
            			there is an IR-free $wu$-path such that $w\in B$ for some $B\in \mathcal{B}$ and $w$ is not relevant in $N$ and $w\prec_N v$. 
            \item $u$ is not contained in any $B \in \mathcal{B}$ and $u$ is a child of $v$. 
        \end{enumerate}
    \end{enumerate}
The \emph{relevant neighborhood $\relN(v)$ of $v$} is the set of its relevant neighbors (see Figure \ref{fig:relevant_neighbors} for an example).
\end{cdefinition}
Note that $u$ being a relevant neighbor of $v$ does not necessarily imply that $u$ and $v$ are
adjacent. Moreover, observe that Conditions (a), (b) and (c) in Definition\ \ref{def:rel-neighbor}(2) are mutually exclusive, since Condition (a) 
does not cover any of the properties in Condition (b) and (c) while 
Condition (b) covers the case that $u$ and $v$ are not adjacent and (c) the case that $u$ and $v$ are adjacent.

We are now in the position to provide the two main results (Theorem\ \ref{thm:local-phi} and
\ref{thm:sum-up-part}) of this section. While Theorem\ \ref{thm:local-phi} ensures locality,
Theorem\ \ref{thm:sum-up-part} ensures recursiveness of the weighted total cophenetic index.

\begin{ctheorem}{\ref{thm:local-phi}}
Let $N \in \mathscr{N}_n$ be a phylogenetic level-$1$ network and let $v$ be a relevant vertex in $N$. Moreover, let $N'$ be the phylogenetic level-$1$ network obtained from $N$ by replacing $\partN(v)$ by a phylogenetic level-$1$ network $\tilde N$ with $L(\tilde N) = L(\partN(v))$. Then, \[ \Phi^{**}(N)-\Phi^{**}(N') =  \left(\Phi^{*}(\partN(v))-\Phi^{*}(\tilde N)\right) + \left((\phi_N(v)-\phi_{N'}(\rho_{\tilde N})) \binom{|L(v)|}{2}\right), \]
where $L(v) \coloneqq L_N(v)=L_{N'}(v)$.
\end{ctheorem}

For an illustrative example of Theorem\ \ref{thm:local-phi} consider the networks $N$  and $N'$
    as shown in Figure\ \ref{fig:exmpl-local}. 
   One easily observes that  
   $N'$ is obtained from $N$ by replacing $\partN(v) = N(v)$ with the network $\tilde N\coloneqq  N'(v)$.
    In this example, $\Phi^{**}(N) = 75$ and $\Phi^{**}(N') = 69$ for $\epsilon=1$.
    It holds that $\Phi^{*}(\partN(v)) = \Phi^{*}(\tilde N) = 0$, $\phi_N(v) = 2\cdot\binom{2}{2}+\epsilon =3$ and $\phi_{N'}(v) = 1$.
    According to Theorem\ \ref{thm:local-phi}, 
    \[\Phi^{**}(N)-\Phi^{**}(N') =  \left(\Phi^{*}(\partN(v))-\Phi^{*}(\tilde N)\right) + \left((\phi_N(v)-\phi_{N'}(v)) \binom{|L(v)|}{2}\right)
     = 0 + (3-1)\binom{3}{2} = 6  = 75-69.\]

\begin{ctheorem}{\ref{thm:sum-up-part}}
Let $N\in \mathscr{N}_n$ be a phylogenetic level-$1$ network and let $v_1,\dots,v_k$ be the relevant neighbors of the root $\rho_N$ in $N$. Let $\phi(\rho_N)\coloneqq \phi_N(\rho_N)$, $\phi(v_i) \coloneqq \phi_N(v_i)$, $n_i\coloneqq |L_N(v_i)|$, and $\partN_i\coloneqq \partN(v_i)$, $1\leq i\leq k$. Then, \[\Phi^{**}(N) = 
\sum_{i=1}^k \Phi^{**}(\partN_i) + \phi(\rho_N)\binom{n}{2} =\sum_{i=1}^k \left(\Phi^*(\partN_i) + \phi(v_i)\binom{n_i}{2}\right) + \phi(\rho_N)\binom{n}{2}. \]
\end{ctheorem}

To illustrate Theorem\ \ref{thm:sum-up-part}, consider the network $N$ in Figure\ \ref{fig:exmpl-local} with relevant vertices $v,w,\rho_N$ highlighted by $\blacksquare$.
	The relevant neighbors of $\rho_N$ are, thus, $v$ and $w$.  
	 In this example, it holds that  $\partN(w)=N(w)$,  $\partN(v) = N(v)$, $\phi_N(v) = 3$  and $\phi_N(\rho_N) = \phi_N(w)=1$.  
	 According to Theorem\ \ref{thm:sum-up-part}, we have
	 \[ \Phi^{**}(N) = \Phi^{**}(\partN(w)) +  \Phi^{**}(\partN(v)) + \phi(\rho_N)\binom{n}{2} = 1\cdot \binom{7}{2}+3\cdot \binom{3}{2} + 1\cdot \binom{10}{2} = 21 + 9 + 45 = 75.\]

\begin{figure}[t]
\centering
\includegraphics[width = .4\textwidth]{./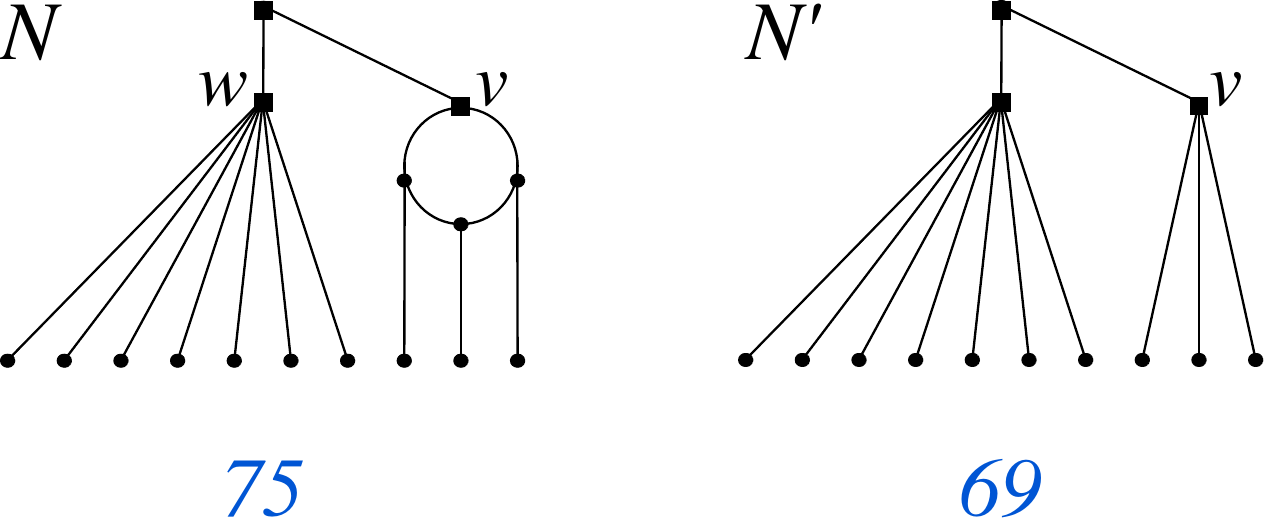}
\caption{Shown are two level-$1$ networks $N$ and $N'$ on $n=10$ leaves together with their respective $\Phi^{**}$-values written directly below the networks, where we chose $\epsilon=1$.
			All relevant vertices are highlighted by $\blacksquare$.}
\label{fig:exmpl-local}
\end{figure}

\section{Structure of blocks with extremal weights}
\label{sec:structure-blocks}

The index $\Phi^{**}$ of general level-$1$ phylogenetic networks involves the value $\omega(B)$ of non-trivial blocks $B$. Hence, to understand $\Phi^{**}$, we investigate here 
the structure of blocks that minimize and maximize $\omega$. These results are, in particular, required to determine the structure of 
extremal networks and their $\Phi^{**}$-values in Section \ref{sec:extrema}.

\begin{itemize}
 \item A block $B \in \mathcal{B}_m(N)$ is a \emph{lantern}, if $B$ consists of $m-2$ internally vertex-disjoint $\rho_B\eta_B$-paths  and every such path contains precisely one vertex of $V^{-}(B)$. The edge $(\rho_B, \eta_B)$ may or may not be part of $B$.

 \item  A block $B \in \mathcal{B}_m(N)$ with $V^-(B) = \{v_1,\dots,v_{m-2}\}$ is a \emph{crescent}, if it contains a Hamiltonian path $(\rho_B, v_1,\dots,v_{m-2},\eta_B)$, a shortcut $(\rho_B,\eta_B)$, and all other edges, if any, are of the form $(v_i,\eta_B)$. 

 \item 
A block $B \in \mathcal{B}_m(N)$ is a \emph{full-moon}, if $B$ consists of precisely two internal vertex disjoint paths one containing $\lceil \frac{m-2}{2} \rceil$ internal vertices and the other $\lfloor \frac{m-2}{2} \rfloor$ internal vertices.
\end{itemize}

An illustrative example of lanterns, crescents, and full-moons is provided in Figure~\ref{fig:lantern-crescent}.

\begin{figure}[t]
	\begin{center}
		\includegraphics[width=0.9\textwidth]{./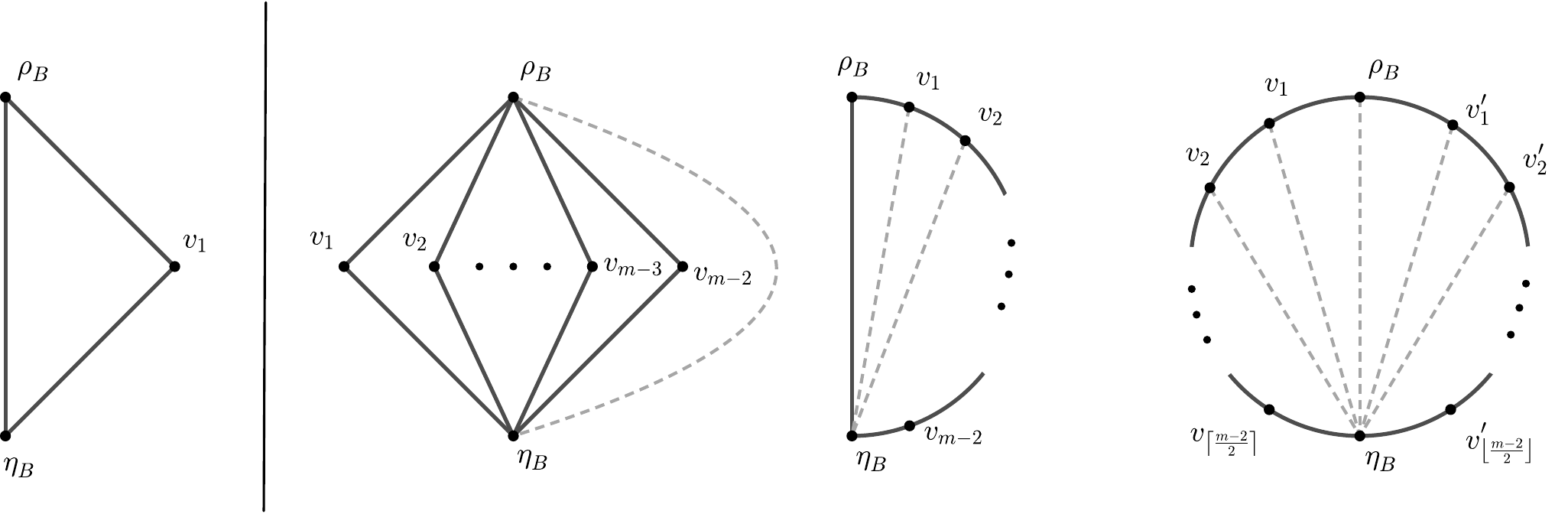}
	\end{center}
	\caption{Generic examples of lanterns (2nd-left), crescents (3rd-left), and full-moons (right).
 Solid-drawn edges must exist while dashed edges (shortcuts) may or may not exist.
  The triangle on the left is a lantern, crescent, and full-moon. 
  }
	\label{fig:lantern-crescent}
\end{figure}

\begin{cproposition}{\ref{prop:B_maxmin_arbitrary}}
Let $N$ be a phylogenetic level-$1$ network and let $B \in \mathcal{B}_m(N)$. Then, 
\begin{enumerate}[(1)]
	\item $\omega(B) \geq m-2$. This bound is tight and is, in particular, achieved if and only if $B$ is a lantern.
	\item $\omega(B) \leq \binom{m}{3}$. This bound is tight and is, in particular, achieved if and only if $B$ is a crescent.
\end{enumerate}
\end{cproposition}

We now consider non-trivial blocks of binary phylogenetic level-$1$ networks. Recall that any such block $B$ is an (undirected) cycle and contains precisely two paths from $\rho_B$ to $\eta_B$. Note that every crescent consisting only of the edges of the 
underlying Hamiltonian path and the shortcut $(\rho_B, \eta_B)$ satisfies the properties of being such an (undirected) cycle. Hence, for the maximum weight $\omega$, we can apply Proposition~\ref{prop:B_maxmin_arbitrary}(2) to obtain the 
following result.
\begin{ccorollary}{\ref{cor:crecent}}
Let $N$ be a binary phylogenetic level-$1$ network and let $B\in \mathcal{B}_m(N)$. Then, $\omega(B) \leq \binom{m}{3}$. Moreover, this bound is tight and is achieved if and only if $B$ is a crescent.
\end{ccorollary}

For the non-trivial blocks $B$ with minimum weight $\omega$ the situation becomes, however, more complicated. Observe that the number of children of $\rho_B$ in a lantern $B\in \mathcal{B}_m(N)$ is $m-2$ or $m-1$. This immediately implies that
in a binary phylogenetic level-$1$ network there cannot be a lantern except for the case $m=3$ and $m=4$. 

\begin{cproposition}{\ref{prop:B_min_binary}}
Let $N$ be a binary phylogenetic level-$1$ network and let $B\in \mathcal{B}_m(N)$. Then, 
$\omega(B) \geq \binom{\left \lceil \frac{m}{2} \right \rceil+1}{3} + \binom{\left \lfloor \frac{m}{2} \right \rfloor+1}{3}$. This bound is tight and is, in particular, achieved if and only if $B$ is a full-moon.
\end{cproposition}

\section{Phylogenetic level-1 networks with extremal indices} \label{sec:extrema}
Having characterized the structure and weights of extremal blocks in any phylogenetic level-$1$
network in the previous section, in this section we study the extremal values of the weighted total
cophenetic index across phylogenetic level-$1$ networks and characterize networks achieving these
values. Additionally, we will focus on the following sub classes of networks.

\begin{cdefinition}{\ref{def:LEVELoneDEGtwo}}
$\BinLevelOneN$ denotes the class of all binary phylogenetic level-$1$ networks on $n$ leaves
and $\HybdidDegTwoN$ the class of all phylogenetic level-$1$ networks on $n$ leaves where each vertex has in-degree at most $2$. 
\end{cdefinition}
\noindent
Note that  $\BinLevelOneN\subseteq \HybdidDegTwoN$. Considering networks within the classes $\BinLevelOneN$ and $\HybdidDegTwoN$
is, in particular, motivated from a biological point of view, where it is often assumed that hybrids have two but not more parental species. 
Networks $N\in \HybdidDegTwoN$ are also called \emph{galled-trees} \cite{HSS:22,Gusfield2003}.
Galled-trees do not only play an important role in phylogenetics but are useful as a framework to store structural information 
of so-called \textsc{GaTeX} graphs that allows to solve several computationally hard problems in linear-time \cite{HS-GatexLinTime:23,HS-GatexForbSubg:23,HS:22}. 
Galled-trees form a subclass of so-called tree-child phylogenetic network \cite{CRV07}.

A particular role is played by the following types of networks which are also illustrated in Figure\ \ref{fig:mod_and_pd_star} and \ref{fig:N_cres}.

\begin{itemize}	      

	\item A \emph{triangle network} $N_{\Delta}$ is the network that contains
	      precisely one block $B$ that is a triangle with $\rho_B = \rho_N$ and
	      with precisely two leaves and each of the two vertices distinct from
	      $\rho_B$ is adjacent to precisely one leaf.

	\item $\Tmod$ denotes the \emph{modified star} on $n$ leaves that is
	      obtained from the star tree $\mathrm{T}^{star}_{n-2}$ and $N_{\Delta}$
	      by identifying their roots.

	\item  $\Tpush$ denotes the \emph{push-down star} on $n$ leaves that is
	       obtained from the star tree $\mathrm{T}^{star}_{n-2}$ and $N_{\Delta}$
	       by adding the edge $(\rho_{\mathrm{T}^{star}_{n-2}}, \rho_N)$. If
	       $n=2$, then both $\Tmod$ and $\Tpush$ are isomorphic to $N_{\Delta}$.
	       
	\item $\NC_n$ is the  binary phylogenetic level-$1$ network with $n$ leaves that consists of one non-trivial block $B\in \mathcal{B}_{n+1}$ 
            with root $\rho_B = \rho_N$
             that is a crescent where $|\child^*_{\NC_n}(v)|=1$ for all $v\in V(B)\setminus\{\rho_N\}$ (see Figure\ \ref{fig:N_cres} for a generic example).       
\end{itemize}

\subsection{Minimum networks} 
\label{subsection:MinN}

We begin by considering the minimum value of the weighted total cophenetic index both across
arbitrary level-$1$ networks as well as binary level-$1$ networks. As we shall see, the minimum
value and the networks achieving it, depend on the choice of $\epsilon$ used when assigning weights
to vertices, reflecting the fact that different values of $\epsilon$ treat true tree vertices and
tree vertices contained in blocks differently. 

We first consider arbitrary level-$1$ networks.

\begin{ctheorem}{\ref{thm:minimum-L1}}
Let $N$ be a  phylogenetic level-$1$ network on $n\geq 2$ leaves. Then $N$ minimizes the weighted total cophenetic index within the class
of phylogenetic level-$1$ networks on $n$ leaves if and only if $\Phi^{**}(N)=\binom{n}{2}$. Moreover, 
if $\epsilon >0$, then $N\simeq \Tstar$ and, otherwise, if $\epsilon=0$, then $N\simeq \Tstar$ or $N\simeq \Tmod$. 
\end{ctheorem}

\begin{remark}
    Theorem~\ref{thm:minimum-L1} together with Observation~\ref{obs:cp} and Proposition~\ref{prop:cp-properties} (Section~\ref{APPX:sec:wtci}) imply that the weighted total cophenetic index is a network imbalance index according to Definition~\ref{def:network-bal-index}.
\end{remark}

Additionally, note that the star tree $\Tstar$ and the modified star $\Tmod$  with $n\geq2$ leaves are both contained in $\HybdidDegTwoN$.
This together with Theorem~\ref{thm:minimum-L1} implies 
\begin{ccorollary}{\ref{cor:min-hybrid-two}}
Let $N\in \HybdidDegTwoN$ be a phylogenetic level-$1$ network with $n\geq 2$  leaves.
Then, $N$ minimizes the weighted total cophenetic index within the class $\HybdidDegTwoN$
if and only if $\Phi^{**}(N)=\binom{n}{2}$. Moreover, 
if $\epsilon >0$, then $N\simeq \Tstar$ and, otherwise, if $\epsilon=0$, then $N\simeq \Tstar$ or $N\simeq \Tmod$.
\end{ccorollary}

To characterize level-$1$ networks with minimum weighted total cophenetic index
among those networks that contain at least one non-trivial block (and thus, level-$1$ networks that are not trees)
we first provide 

\begin{cproposition}{\ref{prop:min-withBlock-basic}}
		Let $\hat{\mathscr{N}}_n$ be the class of phylogenetic level-$1$ networks on $n \geq 2$ leaves  that contain at least one non-trivial block. 
	  If $N\in \hat{\mathscr{N}}_n$ minimizes the weighted total cophenetic index within the class $\hat{\mathscr{N}}_n$, 
		then $N\simeq\Tmod$ or $N\simeq\Tpush$. 
\end{cproposition}

\begin{ctheorem}{\ref{thm:minimum-L1-block}}
Let $\hat{\mathscr{N}}_n$ be the class of phylogenetic level-$1$ networks on  $n \geq 2$ leaves  that contain at least one non-trivial block. 
	  Moreover,  let $N\in \hat{\mathscr{N}}_n$ be a network that minimizes the weighted total cophenetic index within the class $\hat{\mathscr{N}}_n$.	
	  \begin{itemize}
	   \item If $n=2$, then $N\simeq N_{\Delta}$ and $\Phi^{**}(N) =1+\epsilon$ for all $\epsilon\geq 0$. 
	   \item If $n\geq 3$, 
	  then
       	 \begin{itemize}
            \item[$\circ$] $0\leq  \epsilon<\frac{1}{\binom{n}{2}-1}$ implies $N\simeq \Tmod$ and $\Phi^{**}(N)=\binom{n}{2}(1+\epsilon)$.  
            
            \item[$\circ$] $\epsilon>\frac{1}{\binom{n}{2}-1}$ implies $N\simeq \Tpush$ and $\Phi^{**}(N) =  \binom{n}{2}+1+\epsilon$. 
            
            \item[$\circ$]  $\epsilon=\frac{1}{\binom{n}{2}-1}$ implies $N\simeq \Tmod$ or $N\simeq \Tpush$ and $\Phi^{**}(N) = \frac{\binom{n}{2}^2}{\binom{n}{2}-1} = 
            \frac{n^2(n-1)^2}{2 (n-2)(n+1)}$.
        \end{itemize}
        \end{itemize}
\end{ctheorem}

Note that $N\simeq N_{\Delta}$ for all $N\in \HybdidDegTwoNvar{2}$ that contain at least one non-trivial block and that the modified star $\Tmod$ and the push-down star $\Tpush$ are both contained in $\HybdidDegTwoN$ for $n\geq3$. This together with Proposition~\ref{prop:min-withBlock-basic} and Theorem~\ref{thm:minimum-L1-block} implies

\begin{ccorollary}{\ref{cor:bin-withB}}
    Let $\hat{\mscr{N}}_n^{\mathrm{L1,in}\leq 2}$ be the class of phylogenetic level-$1$ networks on $n \geq 2$ leaves that contain at least one non-trivial block and whose vertices have at most in-degree 2. 
	  Moreover,  let $N\in \hat{\mscr{N}}_n^{\mathrm{L1,in}\leq 2}$ be a network that minimizes the weighted total cophenetic index within the class $\hat{\mscr{N}}_n^{\mathrm{L1,in}\leq 2}$.	
	  \begin{itemize}
	   \item If $n=2$, then $N\simeq N_{\Delta}$ and $\Phi^{**}(N) =1+\epsilon$ for all $\epsilon\geq 0$. 
	   \item If $n\geq 3$, 	  then
       	 \begin{itemize}
            \item[$\circ$] $0\leq  \epsilon<\frac{1}{\binom{n}{2}-1}$ implies $N\simeq \Tmod$ and $\Phi^{**}(N)=\binom{n}{2}(1+\epsilon)$.  
            
            \item[$\circ$] $\epsilon>\frac{1}{\binom{n}{2}-1}$ implies $N\simeq \Tpush$ and $\Phi^{**}(N) =  \binom{n}{2}+1+\epsilon$. 
            
            \item[$\circ$]  $\epsilon=\frac{1}{\binom{n}{2}-1}$ implies $N\simeq \Tmod$ or $N\simeq \Tpush$ and $\Phi^{**}(N) = \frac{\binom{n}{2}^2}{\binom{n}{2}-1} = 
            \frac{n^2(n-1)^2}{2 (n-2)(n+1)}$.
        \end{itemize}
      \end{itemize}
\end{ccorollary}

In the following, we will focus on \emph{binary} level-$1$ networks
with minimum weighted total cophenetic index. To this end, we will
consider an operation of associating  trees with certain  level-$1$
networks via expanding vertices to triangles and collapsing triangles into
vertices. More formally, let $T$ be a rooted  tree and let $u$ be one of
its inner vertices with children $u_1$ and $u_2$. 
Then, by \emph{expanding $u$ to a triangle (along $u_1,u_2$)}, we mean the process of subdividing
the edges $(u,u_1)$ and $(u,u_2)$ with vertices $v_1$ and $v_2$, respectively,
and introducing either the edge $(v_1,v_2)$ or the edge $(v_2,v_1)$. On the
other hand, if $N$ is a level-$1$ network containing a triangle, i.e., a
block $B$ of size three with root $\rho_B$, by \emph{collapsing the triangle
$B$}, we mean deleting the (unique) hybrid edge of $B$ that is not incident to
$\rho_B$ and suppressing the two resulting degree-2 vertices. An example of both
processes is depicted in Figure \ref{Fig_ExpandingCollapsing} for the case
that $T$ or $N$ is binary and phylogenetic. 
We emphasize that distinct triangles in $N$ created in the process of expanding
inner vertices in a binary tree $T$ must be vertex-disjoint. Moreover, the degrees of
existing vertices in $T$ remain the same as in the resulting network $N$, 
while all newly created vertices have either
in-degree 1 and out-degree 2 or in-degree 2 and out-degree 1 in $N$. Consequently, the
\enquote{expansion to triangles} process results in a binary level-$1$ network when
applied to binary trees. Moreover, given a binary network, distinct blocks must
be vertex-disjoint. By similar arguments, the process of collapsing all
triangles in a binary network yields a binary tree. Furthermore, observe that
for $\epsilon=0$, obviously $\phi(v)=\phi(x)=1$, where $v$ is a true tree vertex
and $x$ is the root of a triangle in a binary phylogenetic level-$1$ network. Hence, 
we obtain 

\begin{cfact}{\ref{fact:triangle-1}}
Let $N$ be a network obtained from a binary  tree $T$ by expanding an arbitrary number of inner vertices to triangles. 
Then, $N$ is a binary phylogenetic level-$1$ network. In particular, if $N$ does not contain non-trivial blocks or 
in case $\epsilon=0$, we have $\Phi^{**}(N) = \Phi^{**}(T)$. Otherwise, if $N$ contains non-trivial blocks
and $\epsilon>0$, we have $\Phi^{**}(N) > \Phi^{**}(T)$. 
Moreover, if $T'$ is the tree obtained from  $N$ by collapsing all triangles, then $T' \simeq T$.

In addition, if $N$ is a binary phylogenetic level-$1$ network where each non-trivial block (if there are any)
is a triangle and $T$ is the tree obtained from $N$ by collapsing all triangles, then $T$ is binary and
$\Phi^{**}(N) \geq \Phi^{**}(T)$.
\end{cfact}

A tree $T$ is called \emph{maximally balanced} if each inner vertex $v$ has precisely two
children $v_1$ and $v_2$ and these two children satisfy $\lvert L_T(v_1) -
L_T(v_2) \rvert \leq 1$.
\begin{cdefinition}{\ref{def:max_bal}}
We denote with $\Nmb$ the set of networks that
are obtained from a maximally balanced tree with $n$ leaves by 
expanding an arbitrary number of inner vertices to triangles.
\end{cdefinition}

Note that each maximally balanced tree with $n$ leaves is always contained in
$\Nmb$. 
\begin{cfact}{\ref{fact:triangle}}
It holds that 
$\Nmb\subseteq \BinLevelOneN$ and all non-trivial blocks in $N\in \Nmb$ are triangles
and pairwise vertex-disjoint.
 In particular, if $N \in \Nmb$, 
then the tree $T$ from which $N$ was obtained by expanding an arbitrary number of inner vertices to triangles is a maximally balanced tree on $n$ leaves. 
\end{cfact}

\begin{figure}[tbp]
    \centering
    \includegraphics[width=0.8\textwidth]{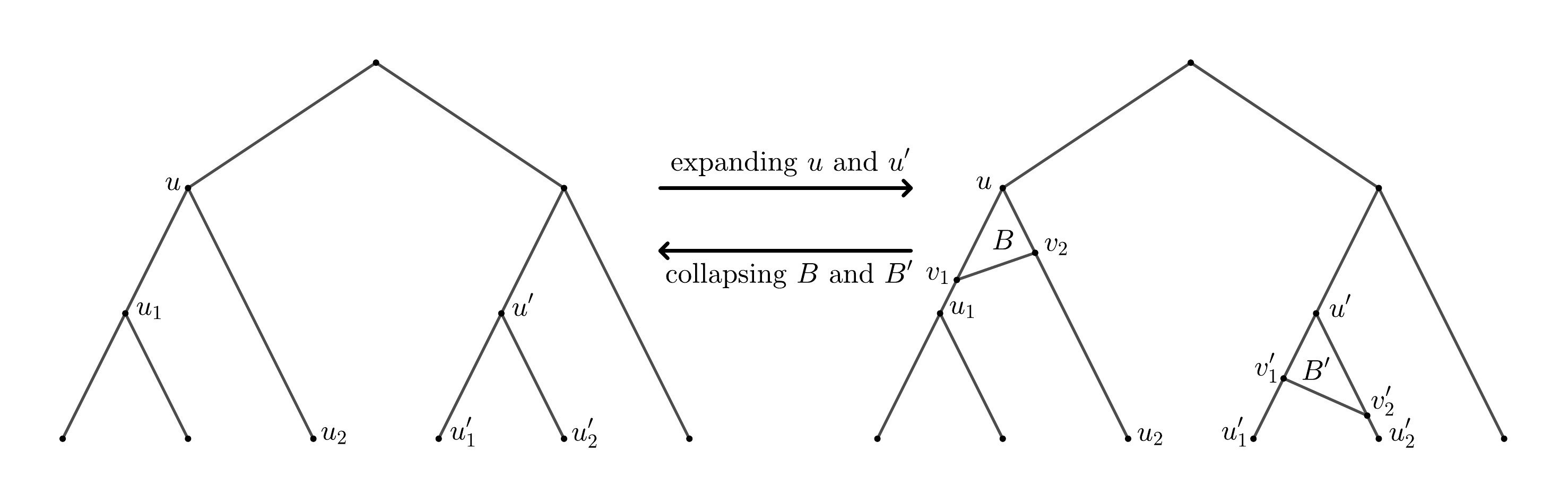}
    \caption{Tree $T$ and level-$1$ network $N$ that can be obtained from each other by expanding vertices to triangles, respectively collapsing triangles into vertices. Note that $T$ is a maximally balanced tree and, thus, the network obtained from $T$ is an element of $\Nmb$ with $n=6$.}
    \label{Fig_ExpandingCollapsing}
\end{figure}

As discussed in Section \ref{APPX:subsection:MinN}, for any $N\in \BinLevelOneN$ that 
minimizes the weighted total cophenetic index within the class $\BinLevelOneN$, 
each non-trivial block is a full-moon of size at most $4$.

Binary phylogenetic level-$1$ networks with minimum weighted total cophenetic index are characterized as follows.
\begin{ctheorem}{\ref{thm:binary_min}}
    Let $N\in \BinLevelOneN$ be a binary phylogenetic level-$1$ network with $n \geq 2$ leaves. Then, 
    \[\Phi^{**}(N)\geq \sum\limits_{k=1}^{n-1} a(k) + \binom{n}{2},\] where $a(k)$ is the highest power of 2 	that divides $k!$. 
    This bound is tight and is, in particular, achieved 
    precisely if $N\in \Nmb\subseteq  \BinLevelOneN$.
If $N\in \Nmb$ and $\epsilon>0$, then $N$ is a maximally balanced tree.
\end{ctheorem}

\subsection{Maximum networks}
\label{subsection:MaxN}
As discussed in 
Section \ref{APPX:subsection:MaxN},  the weighted total cophenetic index is unbounded on general phylogenetic level-$1$ networks
even if the number of leaves is restricted.  
Hence, the question arises whether more restrictive networks
may have a bounded weighted total cophenetic index and, if so, if one can characterize the structure
of networks achieving it. To shed some light into this question, we focus  on the  class $\BinLevelOneN$
of binary phylogenetic level-$1$ networks and the more general class $\HybdidDegTwoN$ of phylogenetic level-$1$ networks
where all hybrids have in-degree two. 
\begin{ctheorem}{\ref{thm:binary_galledtree_maximum}}
    Let $N\in \BinLevelOneN$ (resp.\ $N\in \HybdidDegTwoN$) be a  network with $n \geq 3$ leaves. Then, 
    \[\Phi^{**}(N)\leq \left(\binom{n+1}{3}+\epsilon\right)\binom{n}{2}.\] This bound is tight and is, in particular, achieved 
    for $N\in \BinLevelOneN$  (resp.\ $N\in \HybdidDegTwoN$) if and only if $N\simeq \NC_n$.
\end{ctheorem}

\section{Discussion and outlook} \label{sec:discussion}

In the recent literature, first attempts have been made to make balance indices,
which play an important role in the analysis of phylogenetic trees, available
for phylogenetic networks. On the one hand, this concerns the \emph{Sackin
index} \cite{Sackin1972}, which is one of the oldest and most widely used tree
balance indices and has recently been generalized to phylogenetic networks
\cite{Zhang2022}. The Sackin index also plays a role in theoretical computer
science, where it is known as \emph{total external path length} \cite{Knuth3}.
On the other hand, the \emph{$B_2$ index} \cite{Shao1990}, which is a lesser
known but in certain aspects very useful tree balance index
\cite{Fischer2021Book}, has also recently been analyzed in the context of
phylogenetic networks \cite{Bienvenu2021}. 
In this contribution, we generalized  one of the most common and widely used tree balance indices, 
namely the total cophenetic index \cite{Mir2013}, to networks.

Balance indices differ in many ways. In fact, for trees there are more than 20
such indices available \cite{Fischer2021Book}, all of which have their own
advantages: Some have a wider range, some lead to fewer ties, some are more
suitable to detect an underlying evolutionary process like the so-called Yule
model, and so forth. It therefore makes sense to have various indices at hand to
choose the best one for certain purposes. In case of the weighted total cophenetic index, the introduced 
real-valued parameter $\epsilon$,
that can take any value between $0$ and $2$, allows for such additional adjustments.

A thorough comparison of the three different network balance indices
available so-far is beyond the scope of the present work.
Nevertheless, to give at least some insights, we provide Figure \ref{fig:nwIndexComp1} which
shows a simple comparison of the rankings induced by the two mentioned indices from the literature as well as the
weighted total cophenetic index for $\epsilon\in (0,2)$ on all binary phylogenetic level-$1$ networks with three leaves.
Rankings along the $x$-axis of Figure \ref{fig:nwIndexComp1} are chosen such that 
a network $N$ is left from network $N'$  precisely if the respective index of $N$  is smaller than that of  $N'$. 
In case of ties, the respective networks are shown on top of each other and appear at the same $x$-position.

\begin{figure}[htbp]
	\begin{center}
	\includegraphics[width=0.95\textwidth]{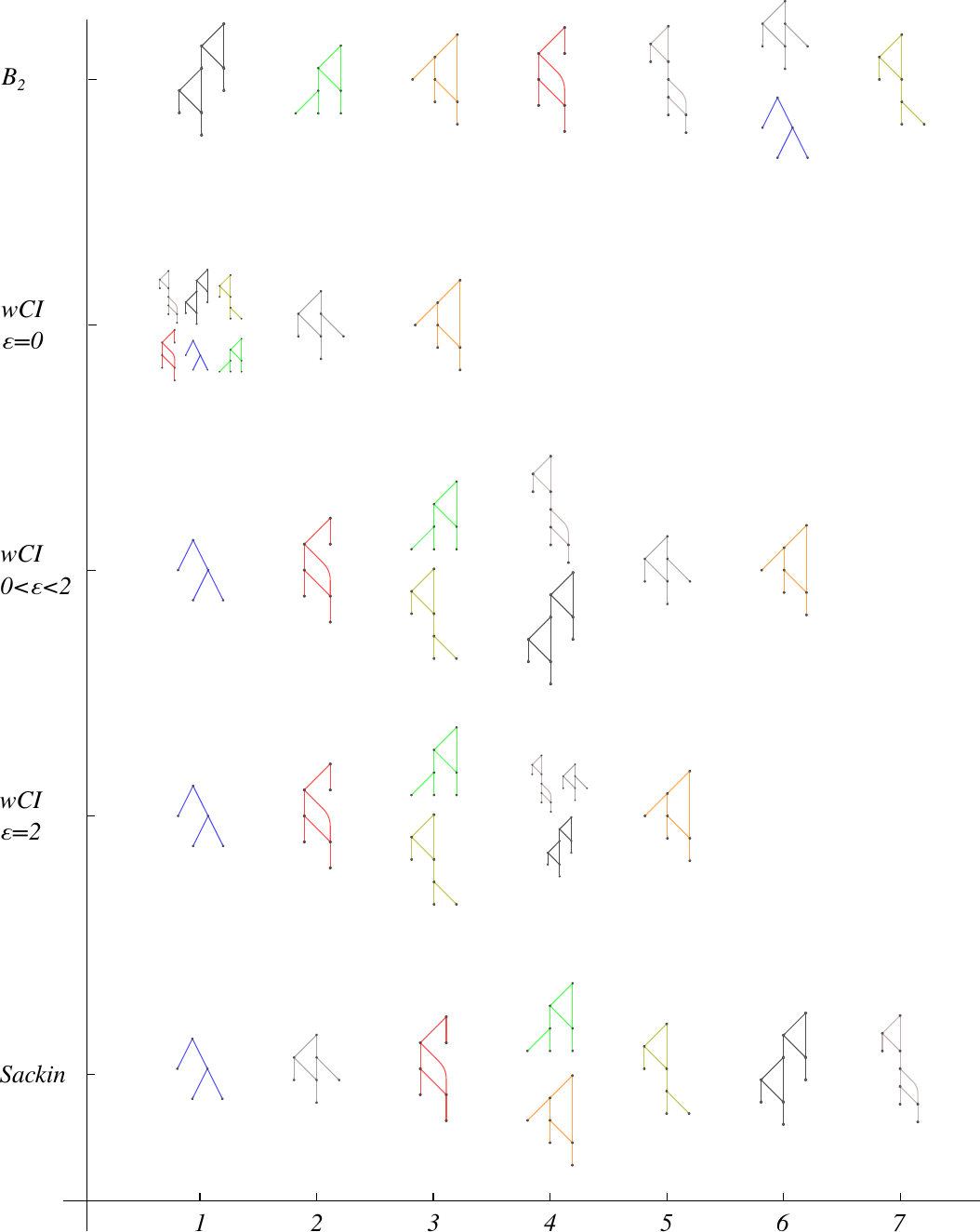}
	\end{center}
	\caption{Comparison of the rankings induced by the weighted total cophenetic index (wCI) and the Sackin index, both of which are \emph{imbalance} indices, as well as the $B_2$ index, a \emph{balance index,} for all binary phylogenetic level-$1$ networks with three leaves. For these small cases,  $\epsilon$ can be chosen arbitrarily in the open interval $(0,2)$ without changing the ranking. To resolve ties in the case  $\epsilon\in (0,2)$, one may consider an additional parameter $\epsilon'$ added for vertices whose parent is a hybrid. Note that for larger networks, the ranking may differ depending on the choice of $\epsilon\in (0,2)$; see e.g.,  caption of Figure \ref{fig:partN}.
     }
	\label{fig:nwIndexComp1}
\end{figure}

We emphasize that there is a clear difference between balance and imbalance
indices \cite{Fischer2021Book}: While the Sackin and total cophenetic indices
for phylogenetic trees increase with decreasing balance of the underlying tree
(i.e., they measure its imbalance and are thus considered as imbalance indices),
the $B_2$ index increases with increasing balance of the underlying tree, i.e.,
it is a balance index. In Figure \ref{fig:nwIndexComp1}, this is reflected by
the rankings: While for instance the only binary tree on three leaves is
considered to be the most balanced network under consideration both by the
Sackin index and our version of the total cophenetic index, the $B_2$ index considers 
this tree to be the second most balanced network, which is why it receives rank 6 
(together with another network considered equally balanced by $B_2$). In any case, it can easily be seen that
all analyzed indices are very different, and that in particular the weighted
total cophenetic index has different properties (e.g., different numbers of
ties). Which one of these indices is best
for certain biological applications is an interesting question for future
research.

Moreover, we want to point out that balance indices for networks may have other
useful properties. For instance, as Figure \ref{fig:nwIndexComp1} suggests, the
weighted total cophenetic index as well as the Sackin index seem to also take
tree-likeness into account: In the given set of networks, they both consider the
only tree as most balanced. Moreover, the Sackin index tends to rank networks
with fewer cycles lower than networks with more cycles, whereas the weighted
total cophenetic index tends to rank networks with smaller cycles lower than
those with larger ones. The $B_2$ index, on the other hand, does not appear to
have such a bias, which may or may not be desirable under certain circumstances.
However, the $B_2$ ranking assigns the network on rank 7
a similar rank as the only tree in the set (rank 6) (and even considers it
to be more balanced if -- as for trees -- a higher $B_2$-value induces a higher
degree of balance), which from an intuitive view on balance might not really be
plausible.

In summary, compared to trees, only very few balance indices are available
to-date. This work adds an index to this short list by making one of the
most famous and widely used tree balance indices available for general networks. By
allowing for different choices of $\epsilon$ ($0\leq \epsilon \leq 2$), it can
even be considered a family of indices. We have presented a thorough
investigation of the extremal properties of this novel index
for phylogenetic level-$1$ networks, both for the binary
and arbitrary cases. Moreover, we have shown that this index has other desirable
properties, such as locality and recursiveness, which also play an important
role in the area of tree balance indices \cite{Fischer2021Book}. 
It would be of interest to characterize the structure
of those level-$1$ networks that allow for ties in the weigted total cophenetic index, i.e.,
that have the same index value. We suppose that those ties can easily be resolved by adding 
an additional parameter $\epsilon'$ to hybrids or vertices whose parent is a hybrid.
Moreover, the understanding
of other types of networks (e.g.,\ level-$k$ or tree-child networks) that
maximize or minimize the weighted total cophenetic index will be of relevance. Hence, many questions concerning network balance indices are
still open and will surely inspire more research in this area.

\section*{TECHNICAL PART}
\setcounter{section}{0}
\renewcommand{\thesection}{\Alph{section}}

\section{Further Preliminaries}
\label{APPX:prelim}

\paragraph{\bf Blocks and level-$\mathbf{1}$ networks}
An undirected or directed graph is \emph{biconnected} if it contains no vertex whose removal disconnects the graph. A \emph{block} of an undirected or a directed graph is a maximal biconnected component. A block $B$ is called \emph{non-trivial} if it contains an (underlying undirected) cycle. Equivalently, a block is non-trivial if it is not a single vertex or a single edge. An edge that is at the same time a trivial block is a cut edge. We denote with $size(B)$ the \emph{size} of a block $B$, i.e., the number of vertices in $B$. A block of size three is called a \emph{triangle}. We need to distinguish between tree vertices that are part of blocks and those that are not. Hence, we say a \emph{true tree vertex} is a tree vertex that is not contained in a non-trivial block. For later reference, we state here the following observations:

\begin{fact} \label{obs:identical-block}
If biconnected components share two vertices, then their union is contained in a common block.
\end{fact}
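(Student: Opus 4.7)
The plan is to show that the union of two biconnected subgraphs sharing at least two vertices is itself biconnected; maximality then forces that union to lie inside one of the (maximal) blocks of the ambient graph.

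More concretely, let $B_1$ and $B_2$ be biconnected components that share two distinct vertices $u,v$. I would first argue that $H\coloneqq B_1\cup B_2$ (union of vertex and edge sets) is biconnected, i.e.\ that $H\setminus\{w\}$ is connected for every $w\in V(H)$. The key tool is the following basic fact: a biconnected graph minus any single vertex is still connected. I would split into cases according to where $w$ lies. If $w\in V(B_1)\setminus V(B_2)$, then $B_1\setminus\{w\}$ is connected (by biconnectedness of $B_1$), $B_2$ is connected, and the two still share both $u$ and $v$ (since $u,v\in V(B_2)$ and $w\notin V(B_2)$); hence their union is connected. The case $w\in V(B_2)\setminus V(B_1)$ is symmetric. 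If $w\in V(B_1)\cap V(B_2)$, then $B_1\setminus\{w\}$ and $B_2\setminus\{w\}$ are each connected, and since $B_1$ and $B_2$ share \emph{at least two} vertices $u,v$, at least one of $u,v$ is different from $w$ and therefore lies in both $V(B_1)\setminus\{w\}$ and $V(B_2)\setminus\{w\}$, so the union remains connected.

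Having established that $B_1\cup B_2$ is biconnected, I would finish by invoking the definition of a block as a \emph{maximal} biconnected subgraph: $B_1\cup B_2$ is contained in some maximal biconnected subgraph of the ambient graph, which is by definition a block. Hence $B_1\cup B_2$ lies inside a common block, as claimed.

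I do not expect any real obstacle here; the whole argument is a short case analysis and uses only the defining property of biconnectedness together with the observation that removing a single vertex cannot destroy the pair of common vertices. The only subtlety worth mentioning explicitly is that the hypothesis \enquote{at least two shared vertices} is precisely what is needed in the third case above; with only one shared vertex the argument (and the statement) would fail, which is consistent with the standard fact that two distinct blocks of a graph can share at most one vertex (a cut vertex).
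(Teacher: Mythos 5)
Your argument is correct. Note that the paper states Observation~\ref{obs:identical-block} without proof, so there is no in-text argument to compare against; your case analysis (removing a vertex outside the intersection, or a vertex inside it, and using that at least one of the two shared vertices survives) is the standard and complete way to show $B_1\cup B_2$ is biconnected, and the final appeal to maximality of blocks is exactly what is needed. The only definitional point worth keeping in mind is that the paper's notion of \enquote{biconnected} is \enquote{no cut vertex} (so single edges and single vertices qualify), under which your step \enquote{a biconnected graph minus any single vertex is still connected} holds verbatim and the degenerate cases cause no trouble since sharing two vertices already forces $|V(B_i)|\ge 2$.
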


\begin{fact}\label{obs:biConn-edge-disjoint}
If $B$ and $B'$ are distinct blocks of a di-graph, then $B$ and $B'$ are edge-disjoint.
\end{fact}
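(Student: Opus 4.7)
The plan is to argue by contradiction, using Observation~\ref{obs:identical-block} as the main lever. Suppose $B$ and $B'$ are distinct blocks of a di-graph $G$ that share an edge $e$. Since an edge is incident to two distinct endpoints $u,v$, and both endpoints are necessarily vertices of any block containing $e$, the blocks $B$ and $B'$ share the two vertices $u$ and $v$.

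Now I would invoke Observation~\ref{obs:identical-block}: since the biconnected components $B$ and $B'$ share at least two vertices (namely $u$ and $v$), there exists a block $B''$ with $B\cup B'\subseteq B''$, where the union is understood as the subgraph with vertex set $V(B)\cup V(B')$ and edge set $E(B)\cup E(B')$. Since $B''$ is itself biconnected (it is a block) and $B\subseteq B''$, the maximality of $B$ as a biconnected subgraph forces $B=B''$. The same argument applied to $B'$ gives $B'=B''$. Hence $B=B'$, contradicting the assumption that $B$ and $B'$ are distinct.

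I do not expect any real obstacle here: the whole argument rests on the two facts that (i) sharing an edge entails sharing the two endpoints of that edge, and (ii) a block is by definition a \emph{maximal} biconnected subgraph, so being contained in a larger biconnected subgraph is impossible. The only minor care point is to state precisely what is meant by \enquote{union of two subgraphs} and \enquote{containment in a block}, so that the application of Observation~\ref{obs:identical-block} is unambiguous. Note that the argument is purely structural and does not use any direction of the edges, so it applies verbatim to di-graphs by passing to the underlying undirected graph if desired.
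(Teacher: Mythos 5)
Your argument is correct: sharing an edge forces sharing its two endpoints, Observation~\ref{obs:identical-block} then places $B\cup B'$ inside a common block, and maximality of $B$ and $B'$ as biconnected subgraphs collapses both to that block, contradicting distinctness. The paper states this fact without proof, and your derivation from the immediately preceding observation is exactly the intended, standard justification.
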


We now focus on blocks induced by a network $N$. Here, we first note that when considering the partial order $\preceq_N$ defined above, every block $B$ has a unique $\preceq_N$-maximal vertex:

\begin{lemma}[{\cite[Lemma~8]{HSS:22}}]\label{lem:unique-max-B}
Every block $B$ in a network $N$ has a unique $\preceq_N$-maximal vertex denoted by $\rho_B$ and called the \emph{root of $B$}. In particular,  $v\preceq_N \rho_B$ for all $v\in V(B)$.
\end{lemma}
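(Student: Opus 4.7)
The plan is to pick a candidate maximal vertex via the block-cut structure of the underlying undirected graph and then use the fact that $\rho_N$ reaches every vertex by a directed path to show that this candidate dominates all of $V(B)$ under $\preceq_N$.

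First I would dispose of the trivial cases. If $B$ is a single vertex, it is trivially its own unique maximum. If $B$ is a single edge $(a,b)$, then $a$ is the unique $\preceq_N$-maximal vertex in $V(B)$. So assume $B$ is non-trivial. Next, if $\rho_N\in V(B)$, then $x\preceq_N \rho_N$ for every $x\in V$, hence for every $x\in V(B)$; I set $\rho_B:=\rho_N$ and we are done.

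The interesting case is $\rho_N\notin V(B)$. I would invoke the following standard undirected-graph fact (a consequence of the block-cut tree being a tree, itself justified by Observation~\ref{obs:identical-block}): there is a \emph{unique} vertex $g\in V(B)$ such that every undirected $\rho_N$-to-$V(B)$ path in $N$ passes through $g$; equivalently, removing $g$ disconnects $\rho_N$ from every vertex of $V(B)\setminus\{g\}$. Given this, take any $x\in V(B)$ and any directed path $P$ from $\rho_N$ to $x$ in $N$ (which exists because $\rho_N$ is the root). Viewed as undirected, $P$ joins $\rho_N\notin V(B)$ to $x\in V(B)$, so $P$ must traverse $g$. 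Since $P$ is directed, $g$ appears on $P$ no later than $x$, forcing $x\preceq_N g$. Hence $g$ is a $\preceq_N$-ancestor of every vertex of $V(B)$ and thus a $\preceq_N$-maximal element of $V(B)$. Uniqueness is then immediate from antisymmetry of $\preceq_N$: any other maximal $g'\in V(B)$ would satisfy $g'\preceq_N g$ by the argument above and $g\preceq_N g'$ by maximality of $g'$, forcing $g=g'$. Set $\rho_B:=g$.

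The main obstacle I expect is proving the undirected "gateway" fact used above in a self-contained way. The cleanest route is to build the block-cut tree of the underlying undirected graph, relying on Observation~\ref{obs:identical-block} (distinct blocks share at most one vertex, which is necessarily a cut vertex) to guarantee this tree is well-defined; then the unique path in that tree from $\rho_N$'s location to the node representing $B$ enters $B$ at a single cut vertex, which is the required $g$. Alternatively, one can argue directly: if two distinct vertices $g_1,g_2\in V(B)$ both admitted undirected $\rho_N$-to-$V(B)$ paths avoiding the other, one could splice those paths with a path inside $B$ between $g_1$ and $g_2$ to build a cycle through $\rho_N$, contradicting the maximality of $B$ via Observation~\ref{obs:identical-block}. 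Once this undirected-graph lemma is in hand, the directed-path argument above is routine.
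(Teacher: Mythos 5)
The paper itself gives no proof of this lemma: it is imported verbatim as Lemma~8 of \cite{HSS:22}, so there is no in-paper argument to compare against. Your proof is correct and self-contained, and it is the natural argument: reduce to the case $\rho_N\notin V(B)$, establish that $B$ has a unique undirected \enquote{gateway} $g\in V(B)$ through which every $\rho_N$-to-$V(B)$ path passes, and then note that any directed $\rho_N x$-path with $x\in V(B)$ meets $g$ at or before its endpoint $x$, so $x\preceq_N g$; uniqueness then follows from antisymmetry of $\preceq_N$ exactly as you say. The one step that needs slight care is your direct justification of the gateway fact: two $\rho_N$-to-$V(B)$ paths entering $B$ at distinct vertices $g_1\neq g_2$ may share internal vertices, so splicing them literally through $\rho_N$ yields a closed walk, not a cycle. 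The standard repair: truncate each path at its first vertex of $V(B)$, extract from the union of the two truncated paths a $g_1g_2$-path $R$ whose internal vertices all avoid $V(B)$ (it has at least one internal vertex, since neither truncated path contains an edge with both endpoints in $V(B)$), and close $R$ up with a $g_1g_2$-path inside $B$. The resulting cycle meets $B$ in at least two vertices and contains a vertex outside $B$, so Observation~\ref{obs:identical-block} contradicts the maximality of $B$; note the cycle need not pass through $\rho_N$, and the contradiction does not require it to. With that adjustment (or with the block-cut-tree route you mention) the argument is complete.
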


Moreover, a $\preceq_{N}$-minimal vertex $v$ in a block $B$ of $N$ must be a hybrid vertex \cite{HSS:22}. The notion of $\preceq_{N}$-minimal vertices also allows us to define yet another special type of networks, namely level-$k$ networks.

\begin{definition}
A network $N$ is \emph{level-$k$} if each block $B$ of $N$ contains at most $k$ hybrid vertices distinct from its root $\rho_B$.
\label{def:level-k-N}
\end{definition}

In the following, we focus mainly on level-$1$ networks. 
Blocks in level-$1$ networks satisfy the following condition.
\begin{lemma}[{\cite[Lemma 47]{HSS:22}}]\label{lem:unique-min-B}
Every block $B$ in a level-$1$ network $N$ has a unique $\preceq_N$-minimal vertex  denoted by $\eta_B$.
In particular, if $B$ is non-trivial, then $\eta_B$ is the unique hybrid of $B$.
\end{lemma}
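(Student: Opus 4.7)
The plan is to prove the lemma by splitting into the trivial and non-trivial block cases, with the non-trivial case being the heart of the argument and relying on (i) the structural fact that every $\preceq_N$-minimal vertex of a non-trivial block is a hybrid, and (ii) the level-$1$ hypothesis bounding the number of non-root hybrids per block.

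First I would dispatch the trivial case quickly. If $B$ consists of a single vertex $v$, then $v$ is both the unique $\preceq_N$-maximal and $\preceq_N$-minimal vertex of $B$, and no hybrid claim applies. If $B$ is a single edge $\{u,v\}$, then since $N$ is a DAG the edge is directed, say $(u,v)\in E(N)$, giving $v\prec_N u$; thus $u=\rho_B$ (by Lemma \ref{lem:unique-max-B}) and $v$ is the unique $\preceq_N$-minimal vertex. Again the non-triviality clause is vacuous.

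For a non-trivial block $B$, I would proceed in three steps. \textbf{Step 1:} Show that every $\preceq_N$-minimal vertex $w$ of $B$ is a hybrid of $N$. Since $B$ is biconnected and contains an undirected cycle, $|V(B)|\ge 3$, and biconnectivity forces every vertex of $B$ to have at least two neighbors in the underlying undirected graph of $B$ (indeed, every vertex lies on some cycle of $B$). Pick two such neighbors $u_1,u_2$ of $w$ in $B$. For each $i\in\{1,2\}$, the edge between $w$ and $u_i$ must be oriented as $(u_i,w)$; otherwise $(w,u_i)\in E(N)$ would give $u_i\prec_N w$, contradicting $\preceq_N$-minimality of $w$ within $B$. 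Hence $\indeg_N(w)\ge 2$, so $w$ is a hybrid. \textbf{Step 2:} Observe that $\rho_B$ itself is not $\preceq_N$-minimal in $B$: since $|V(B)|\ge 3$, there exists $v\in V(B)\setminus\{\rho_B\}$, and Lemma \ref{lem:unique-max-B} yields $v\prec_N\rho_B$. \textbf{Step 3:} Combine Steps 1 and 2: every $\preceq_N$-minimal vertex of $B$ is a hybrid distinct from $\rho_B$. The level-$1$ assumption (Definition \ref{def:level-k-N}) allows at most one such hybrid per block, while finiteness of $V(B)$ guarantees at least one $\preceq_N$-minimal vertex exists. Therefore $B$ has a unique $\preceq_N$-minimal vertex $\eta_B$, and by Step 1 it is a hybrid. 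Uniqueness of this hybrid among non-root vertices of $B$ then follows directly from the level-$1$ count, which proves the ``in particular'' clause.

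The main obstacle is Step 1, specifically the claim that a $\preceq_N$-minimal vertex of a non-trivial block has at least two in-neighbors \emph{within} $B$. The argument above relies on the elementary fact that in a biconnected graph with at least three vertices every vertex lies on a cycle, hence has undirected degree $\ge 2$ inside $B$; if one preferred a more self-contained derivation, one could instead argue by contradiction using the ear-decomposition or Menger-style $2$-connectivity and the DAG orientation to locate two parents of $w$ inside $B$. Everything else reduces to direct bookkeeping with Definition \ref{def:level-k-N} and Lemma \ref{lem:unique-max-B}.
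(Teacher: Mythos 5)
Your proof is correct, and it is worth pointing out that the paper itself offers no proof to compare against: Lemma~\ref{lem:unique-min-B} is imported verbatim from \cite[Lemma~47]{HSS:22}, so your write-up is a self-contained reconstruction rather than an alternative to an in-paper argument. Your route is the natural one and uses exactly the ingredients the surrounding text already records piecemeal: the fact that every $\preceq_N$-minimal vertex of a non-trivial block is a hybrid (which the paper also only cites from \cite{HSS:22} in Section~\ref{APPX:prelim}, and which you rederive correctly from the minimum-degree-two property of biconnected graphs on at least three vertices together with the DAG orientation of the edges at a $\preceq_N$-minimal vertex), the fact that $\rho_B$ is never $\preceq_N$-minimal by Lemma~\ref{lem:unique-max-B}, the existence of a minimal element by finiteness, and the level-$1$ cap of Definition~\ref{def:level-k-N} on non-root hybrids. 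One interpretive point deserves a sentence in a polished version: the clause that $\eta_B$ is \emph{the unique hybrid of} $B$ should be read, as you implicitly do, either as uniqueness among $V(B)\setminus\{\rho_B\}$ or as uniqueness of vertices with in-degree at least two \emph{within} $B$; the root $\rho_B$ may itself be a hybrid of $N$ (for instance when it coincides with the hybrid of a block lying above $B$), but all of its parents lie outside $B$ because $\rho_B$ is $\preceq_N$-maximal in $B$, so it contributes no in-edges inside $B$ and is excluded under either reading. With that clarification your Step~3 closes the argument completely, and no gap remains.
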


Given a block $B$, we denote with $V^-(B)$ the  set of \emph{internal} vertices of $B$, i.e., 
vertices that are distinct from $\rho_B$ and that have at least one child that is located in $B$. 
Hence, for  level-$1$ networks $N$ we have  $V^-(B)\coloneqq V(B)\setminus \{\rho_B,\eta_B\}$
for each non-trivial block $B$ of $N$.
Two paths in $B$ are \emph{internal vertex-disjoint}, if they do not have any internal vertices (of $B)$ in common.  The set $\mathcal{B}(N)$ denotes the collection of all non-trivial blocks in a network $N$ and $\mathcal{B}_m(N)$ is the subset of blocks in $\mathcal{B}(N)$ that have precisely $m$ vertices. Note that $m\geq 3$ whenever $\mathcal{B}_m(N)\neq \emptyset$.
Moreover, for all $v\in V(N)$, let $\mathcal{B}^v(N)\subseteq \mathcal{B}(N)$ be the set of	non-trivial blocks $B$ in $N$ for which $\rho_B=v$, i.e., the set of those non-trivial blocks \emph{rooted in} $v$. Whenever there is no ambiguity concerning $N$, we refer to these sets  as $\mathcal{B}$, $\mathcal{B}_m$, and $\mathcal{B}^v$, respectively. 

Throughout this manuscript, we will make frequent use of the results that are summarized in the following
\begin{lemma}\label{lem:results-B}
Let $N=(V,E)$ be a network. 
\begin{enumerate}
\item If $v \in V$ is contained in blocks $B$ and $B'$ of $N$ and $v\notin \{\rho_B, \rho_{B'}\}$, then $B = B'$ \cite[L.~9]{HSS:22}. \label{L3.19HSS}
\item  $w \in V(B) \setminus \{\rho_B\}$ if and only if $w$ and all of its parents are contained in $B$        \cite[L.~11]{HSS:22}. \label{L3.23HSS}
\item Suppose that $N$ is a phylogenetic level-$1$ network and let $u,v\in V$. Then, $u$ and $v$ are $\preceq_N$-comparable if and only if $L_N(u)\subseteq L_N(v)$ or $L_N(v)\subseteq L_N(u)$ \cite[Def.~12 and L.~44]{HSS:22}. \label{L7.1HSS} 

Moreover, if $v$ is a hybrid vertex in $N$, then $L_N(v)\subsetneq L_N(u)$ for all $u \in V (N )$ with $v \prec_N u$ \cite[L.~46]{HSS:22}. \label{L7.5HSS} 

In addition, if $u, v \in  V$ are $\prec_N$-incomparable, then
$u$ and $v$ are located in a common non-trivial block $B$ of $N$ if and only if $L_N(u) \cap L_N(v) \neq \emptyset$. 
In particular, if $u, v \in  V$ are $\prec_N$-incomparable and $u,v\in V(B)$, then  $L_N(u) \cap L_N(v) =L_N(\eta_B)$ \cite[L.~48]{HSS:22}. \label{L7.8HSS} 

\item  If $L_N (u)\cap L_N (v)\notin\{L_N(u),L_N(v),\emptyset\}$  for $u, v \in V$, then $u$ and $v$ are $\preceq_N$-incomparable and both contained in some non-trivial block $B\in \mathcal{B}(N)$ \cite[L.~19]{HSS:22}.\label{L3.35HSS}
\item Let $B\in \mathcal{B}(N)$ with unique hybrid vertex $\eta_B$ and $u,v\in V(B)$. Then, $L_N(v) \cap L_N(u) \in \{L_N(u), L_N(v), L_N(\eta_B)\}$ \cite[L.~20]{HSS:22}.\label{L3.37HSS}
\item  If $B\in \mathcal{B}(N)$ and $u,v\in V(B)$, then there is an undirected cycle in $N$ that contains $u$ and $v$ \cite[Thm.~4.2.4]{westGTbook}. \label{T4.2.4WestBook}
\end{enumerate}
\end{lemma}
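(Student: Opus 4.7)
The lemma is a compendium of six structural facts about networks, all of which appear in prior work: items 1, 2, 3, 4, 5 are taken from \cite{HSS:22} and item 6 is classical graph theory from West's textbook. My plan is to invoke the cited sources for the full proofs, but to first sketch each claim briefly so that the reader sees why it holds and, in particular, where the level-$1$ hypothesis enters. The six items split naturally into basic block-theoretic facts (items 1, 2, 6) and structural facts about leaf-descendant sets that require the level-$1$ hypothesis (items 3, 4, 5).

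For item 1, I would combine Observation~\ref{obs:identical-block} with maximality of blocks: if $v\in V(B)\cap V(B')$ with $v\notin\{\rho_B,\rho_{B'}\}$, then by Lemma~\ref{lem:unique-max-B} both $B$ and $B'$ contain at least one parent of $v$, producing a second vertex shared by $B$ and $B'$; Observation~\ref{obs:identical-block} then forces $B\cup B'$ to lie in a common block, so maximality gives $B=B'$. For item 2, the forward direction uses that every non-root vertex $w$ of $B$ is biconnected to $\rho_B$, hence every incoming edge of $w$ closes a cycle inside $B$ and all parents of $w$ must belong to $V(B)$; the converse is immediate from the definition of $V(B)\setminus\{\rho_B\}$. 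For item 6, I would simply invoke West's Theorem 4.2.4: in a biconnected graph, any two vertices lie on a common undirected cycle.

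For items 3--5, the approach is to describe $L_N(u)\cap L_N(v)$ explicitly in a level-$1$ network. The forward implication in item 3 is immediate from the definition of $\preceq_N$, so the real work is the converse. The key observation is that, in a level-$1$ network, if $u$ and $v$ are $\preceq_N$-incomparable but share some leaf $x$, then the two directed paths from $u$ and $v$ that witness $x\preceq_N u,v$ must first meet at a common descendant which, by Lemma~\ref{lem:unique-min-B}, is forced to be the unique hybrid $\eta_B$ of a common non-trivial block $B$ containing both $u$ and $v$. This identifies $L_N(u)\cap L_N(v)=L_N(\eta_B)$, which simultaneously establishes the third part of item 3, the converse half of the equivalence in item 3, and item 5. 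Item 4 then follows because the assumption $L_N(u)\cap L_N(v)\notin\{L_N(u),L_N(v),\emptyset\}$ rules out both $\preceq_N$-comparability (via item 3) and disjoint descendant sets, forcing $u$ and $v$ into a common non-trivial block. The additional hybrid claim in item 3 uses Lemma~\ref{lem:unique-min-B} together with the observation that every strict ancestor of a hybrid $v$ has at least one leaf-descendant reachable via a sibling path outside $L_N(v)$.

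The main obstacle I anticipate is item 3: showing that $\preceq_N$-incomparable $u,v$ with overlapping leaf sets must lie in a common block and meet there exactly at $\eta_B$ requires a careful analysis of how the $ux$- and $vx$-paths interact as one varies $x$ over $L_N(u)\cap L_N(v)$, and this is precisely where the level-$1$ assumption (each non-trivial block has a unique hybrid) is essential. Once item 3 is settled, items 4 and 5 are short corollaries, while items 1, 2, 6 are independent of the level assumption. Because the detailed arguments are already spelled out in \cite{HSS:22} and in West's textbook, the formal write-up reduces to pointing the reader to the specific statements cited in each item.
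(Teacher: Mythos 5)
Your proposal matches the paper exactly: the paper offers no proof of this lemma at all, treating each item as an imported result with the citations to \cite{HSS:22} (Lemmas 9, 11, 19, 20, 44, 46, 48) and to West's Theorem 4.2.4 serving as the entire justification. Your additional sketches are a reasonable bonus (with the minor caveat that your argument for item 1 implicitly relies on item 2 to place a parent of $v$ inside both blocks, rather than on Lemma~\ref{lem:unique-max-B} alone), but the operative proof in both cases is the deferral to the cited sources.
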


A particular role in this contribution is played by restricted versions $\partN(v)$ of the subnetwork $N(v)$ of $N$ that is rooted in $v$.
\begin{definition}\label{def:relevant-part}
Let $N$ be a network. If (a) $v$ is not contained in any non-trivial block of $N$ or $v = \rho_B$ for every non-trivial block $B$ in $N$ that contains $v$,
then put $\partN(v) \coloneqq N(v)$ and, otherwise,  if (b) $v$ is part of some non-trivial block $B$ but not the root of $B$, then $\partN(v)$ is the subgraph of $N$ induced by all vertices $u\prec_N v$ for which $u\not\preceq_N w$ for all $w\in B$ with $w\prec_N v$. 
\end{definition}

Note that every vertex $v$ of a network $N$ satisfies either 
Condition (a) or (b) in Definition\ \ref{def:relevant-part}. 
Moreover,  we emphasize   that $\partN(v)$ in Definition~\ref{def:relevant-part}(b) is well-defined since, by Lemma \ref{lem:results-B}\eqref{L3.19HSS}, such a non-trivial block $B$ is always uniquely determined for $v$. 
Definition~\ref{def:relevant-part}(b) can sloppily  be rephrased to: $\partN(v)$ is the subnetwork rooted in $v$ without any descendants of vertices $w\prec_N v$ that are also located in the block $B$ that contains $v$ but for which $v$ is not the root; see Figure~\ref{fig:partN1} and \ref{fig:relevant_neighbors} for illustrative examples. 
\begin{figure}[htbp]
\centering
\includegraphics[width = 0.9\textwidth]{./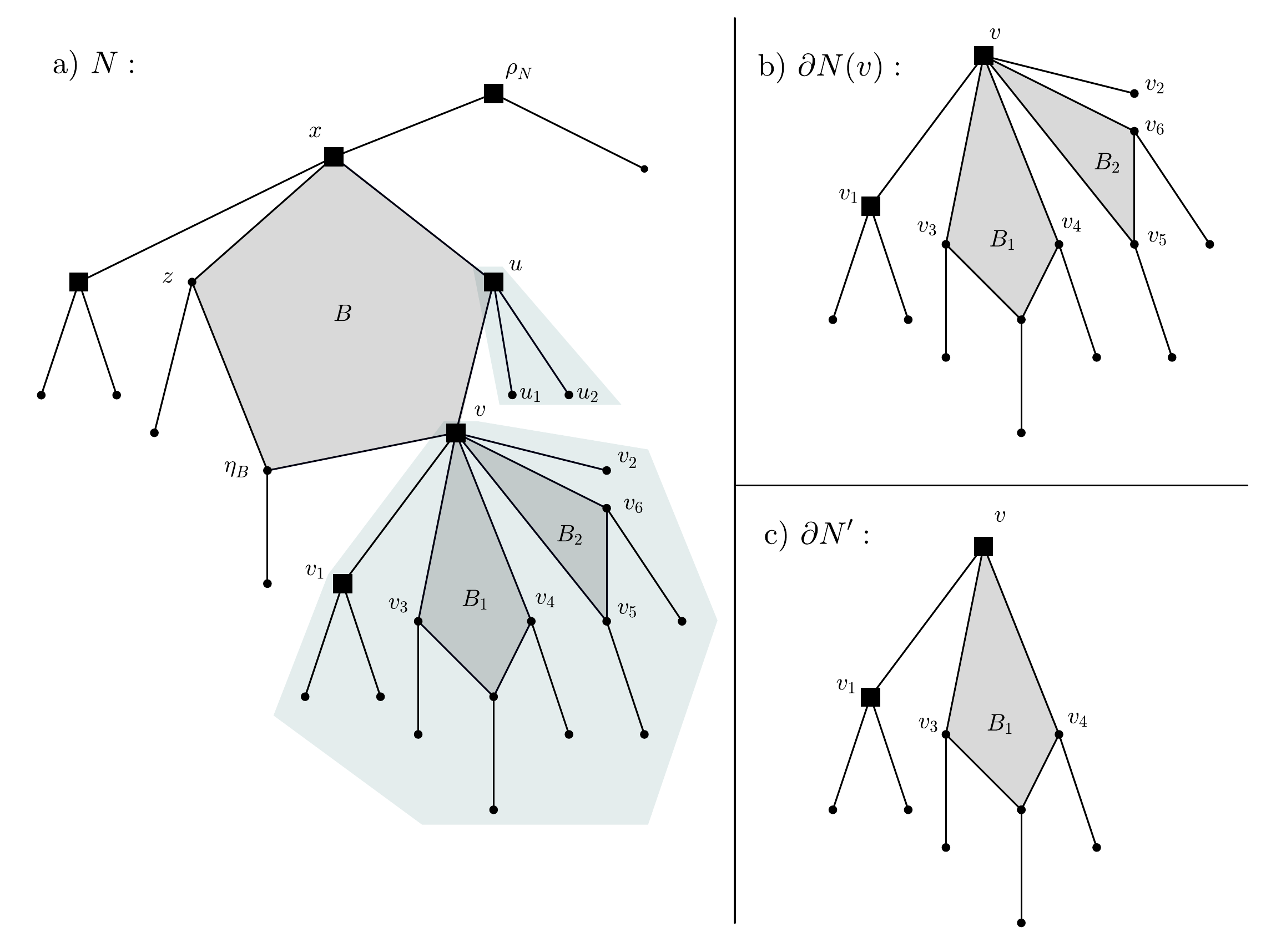}
\caption{
Examples to illustrate the notations established in Section~\ref{sec:prelim} using the vertices $\rho_N$, $u$, $v$, and $w$. 
Shown are three distinct networks in which all relevant vertices are highlighted by $\blacksquare$. Moreover, non-trivial blocks are highlighted by gray-shaded areas.
\newline
Panel a) shows a phylogenetic level-$1$ network $N$.  
Although $N$ is phylogenetic, neither $N(u)$ nor $N(v)$ are phylogenetic, since they both contain the vertex $\eta_B$ that has in- and out-degree $1$ in $N(u)$ as well as in $N(v)$. Furthermore, we have $\child_N^*(\rho_N) = \child_N(\rho_N)$, since $\rho_N$ is not contained in any non-trivial block. Moreover,  $\child_N^*(x) = \child_N(x)$ since, for all non-trivial blocks $B$ that contain $x$, it holds that $x=\rho_B$. In contrast, $\child_N^*(u) = \{u_1,u_2\}\neq \{u_1,u_2,v\} = \child_N(u)$, since $v$ is contained in a non-trivial block $B$ that also contains $u$ but for which $u$ is not the root. Similarly, $\child_N^*(v) = \{v_1,\ldots,v_6\}\neq \{v_1,\ldots,v_6,\eta_B\}= \child_N(v)$. Furthermore, vertex $\rho_N$ is relevant by Observation\ \ref{obs:relevant}(i) and $x$ and $v$ are relevant by Observation\ \ref{obs:relevant}(ii). Vertex $u$ is relevant since $|\child_N^*(u)|>1$, although $u$ does neither satisfy Observation~\ref{obs:relevant}(i) nor (ii). In contrast, vertex $z$ is not relevant since $|\child_N^*(z)|=1$. 
\newline
Vertices $\rho_N$ and $x$ satisfy Definition~\ref{def:relevant-part}(a) and, therefore, $\partN(\rho_N) = N(\rho_N)=N$ and $\partN(x)=N(x)$. Vertices $u$ and $v$ satisfy Definition~\ref{def:relevant-part}(b), since they are part of non-trivial block $B$ with $\rho_B = x\neq u,v$.  
The subnetworks $\partN(u)$ and $\partN(v)$ (highlighted by turquoise-shaded areas; see also Panel b) for $\partN(v)$) are vertex-disjoint and
distinct from $N(u)$ and $N(v)$, respectively.%
\newline 
In this example, we have $\mathcal B^v(N) = \{B_1, B_2\}$. In particular, $v$
is the root of both blocks $B_1$ and $B_2$ but not of block $B$ that also contains $v$. Hence, $\child^*_N(v)=\{v_1,\dots,v_6\}$, $\child^*_{B_1}(v)=\{v_3,v_4\}$, $\child^*_{B_2}(v)=\{v_5,v_6\}$,
and $\child^*_{\mathcal{B}^v(N)} =\{v_3,v_4,v_5,v_6\}$. Thus, 
$\child^*_{\overline{\mathcal{B}^v(N)}}$ consists exactly of $v_1$ and $v_2$. 
Panel c) shows the network 
$\partN':=\partN(v,\mathcal{B}',child')$ for $\mathcal{B}'=\{B_1\}$ and $\child'=\{v_1\}$.}
\label{fig:partN1}
\end{figure}

For later reference we provide the following result that shows that the operator \enquote{$\partial$} is idempotent. 
\begin{lemma}\label{lem:idempotent}
	Let $N$ be a network and $v\in V(N)$. Then, for $\widetilde N =\partN(v)$ we have 
	$\partial \widetilde N (v) = \partN(v)$.
\end{lemma}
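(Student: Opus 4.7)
The plan is to exploit the observation that $v$ is always the root of $\widetilde{N} = \partial N(v)$, which forces the second application of $\partial$ at $v$ to fall under case (a) of Definition~\ref{def:relevant-part} and hence to return $\widetilde{N}$ itself. So my overall strategy is: first pin down the root of $\widetilde{N}$, and then invoke Lemma~\ref{lem:unique-max-B} to collapse the second application of $\partial$ to the identity.

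I would handle the two cases of Definition~\ref{def:relevant-part} for the first application separately. Case (a) is immediate, since there $\widetilde{N} = N(v)$, in which $v$ is trivially the $\preceq_{N(v)}$-maximal vertex and has in-degree zero. The substance lies in case (b), where $v$ lies in a (unique, by Lemma~\ref{lem:results-B}\eqref{L3.19HSS}) non-trivial block $B$ of $N$ with $v \neq \rho_B$, and $\widetilde{N}$ is obtained from $N(v)$ by additionally deleting every descendant of each $w \in B$ with $w \prec_N v$. Here I would verify that (i) $v$ has in-degree zero in $\widetilde{N}$, because any parent $p$ of $v$ in $N$ satisfies $v \prec_N p$ and is therefore absent from $\widetilde{N}$; and (ii) every vertex $u \in V(\widetilde{N}) \setminus \{v\}$ is a descendant of $v$ in $\widetilde{N}$. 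For (ii), I pick a directed $vu$-path $(v=x_0, x_1,\dots, x_k = u)$ in $N$ and verify, using transitivity of $\preceq_N$, that no intermediate $x_i$ has been deleted: if $x_i \preceq_N w$ for some $w \in B$ with $w \prec_N v$, then $u \preceq_N x_i \preceq_N w$ would force $u$ to be removed as well, contradicting $u \in V(\widetilde{N})$. Hence the entire path survives in $\widetilde{N}$.

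From (i) and (ii) it follows that $v$ is the (unique) root of $\widetilde{N}$ in both cases. Applying Lemma~\ref{lem:unique-max-B} to $\widetilde{N}$, every block of $\widetilde{N}$ that contains $v$ must be rooted at $v$. Hence $v$ satisfies condition (a) of Definition~\ref{def:relevant-part} in $\widetilde{N}$, which gives $\partial \widetilde{N}(v) = \widetilde{N}(v) = \widetilde{N} = \partial N(v)$, as claimed. The main obstacle is the path-preservation step (ii): one has to make sure that removing descendants of the $w$'s does not inadvertently disconnect surviving descendants of $v$ from $v$; once this is settled, the rest is a short bookkeeping reduction to case (a).
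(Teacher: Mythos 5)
Your proof is correct and follows essentially the same route as the paper's: in both arguments the key step is to observe that $v$ is the root of $\widetilde N=\partial N(v)$, so that every non-trivial block of $\widetilde N$ containing $v$ must be rooted at $v$, whence case (a) of Definition~\ref{def:relevant-part} applies in $\widetilde N$ and the second application of $\partial$ is the identity. Your explicit path-preservation check in case (b) just spells out the connectivity claim that the paper leaves implicit, and your use of Lemma~\ref{lem:unique-max-B} in place of the paper's appeal to Lemma~\ref{lem:results-B}(2) is an inessential variation.
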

\begin{proof}
	Assume that $v$ satisfies  Definition\ \ref{def:relevant-part}(a) in $N$ in which case we have $\widetilde N =\partN(v) = N(v)$. By construction, 
	if  $v$ is not contained in any non-trivial block of $N$ or $v = \rho_B$ for every non-trivial block $B$ in $N$ that contains $v$,
	then $v$ is not contained in any non-trivial block of $N(v)$ or $v = \rho_B$ for every non-trivial block $B$ in $N(v)$ that contains $v$.
	Hence, $v$ satisfies Definition\ \ref{def:relevant-part}(a) in $N(v)$. Therefore and since $\rho_{\widetilde N}=v$, we have
	$\partial \widetilde N(v) = \widetilde N = \partN(v)$.
	
	Assume that $v$ satisfies  Definition\ \ref{def:relevant-part}(b) in $N$. In this case, 
	there is a non-trivial block $B$ such that $v\in V(B)\setminus \{\rho_B\}$. 
	By Lemma \ref{lem:results-B}\eqref{L3.19HSS}, this block $B$ must be 
	unique. 
	By definition, $\widetilde N =\partN(v)$ is the subgraph of $N$ induced by all vertices 
	$u\prec_N v$ for which $u\not\preceq_N w$ for all $w\in B$ with $w\prec_N v$. 
	This together with Lemma \ref{lem:results-B}\eqref{L3.23HSS} implies that 
	any non-trivial block $B'$ in $\widetilde N$ satisfies $\rho_{B'}\preceq_{\widetilde N} v$. 
	One easily verifies that, therefore, $v$ satisfies  Definition\ \ref{def:relevant-part}(a) in $\widetilde N$. 	
	Therefore and since $\rho_{\widetilde N}=v$, we have
	$\partial \widetilde N(v) = \widetilde N(v) = \widetilde N = \partN(v)$. 
\end{proof}

\paragraph{\bf Further special types of networks,  trees, and blocks}
We introduce here further types of networks and blocks that will play an important role in this contribution.
We start with the structure of particular blocks.

\begin{itemize}
 \item A block $B \in \mathcal{B}_m(N)$ is a \emph{lantern}, if $B$ consists of $m-2$ internally vertex-disjoint $\rho_B\eta_B$-paths  and every such path contains precisely one vertex of $V^{-}(B)$. The edge $(\rho_B, \eta_B)$ may or may not be part of $B$.

 \item  A block $B \in \mathcal{B}_m(N)$ with $V^-(B) = \{v_1,\dots,v_{m-2}\}$ is a \emph{crescent}, if it contains a Hamiltonian path $(\rho_B, v_1,\dots,v_{m-2},\eta_B)$, a shortcut $(\rho_B,\eta_B)$, and all other edges, if any, are of the form $(v_i,\eta_B)$. 

 \item 
A block $B \in \mathcal{B}_m(N)$ is a \emph{full-moon}, if $B$ consists of precisely two internal vertex disjoint paths one containing $\lceil \frac{m-2}{2} \rceil$ internal vertices and the other $\lfloor \frac{m-2}{2} \rfloor$ internal vertices.
\end{itemize}

An illustrative example of lanterns, crescents, and full-moons is provided in Figure~\ref{fig:lantern-crescent}.

\begin{itemize}
 \item A network $N$ is \emph{separated} if every hybrid vertex $v$ has  $\outdeg(v)=1$. 
  A \emph{binary} network is a separated network in which every tree vertex $v$ has $\outdeg(v)=2$ and and every hybrid vertex $v$ has $\indeg(v)=2$.
\end{itemize}

\begin{itemize}
 \item The \emph{caterpillar tree} $\Tcat$, is a tree on $n$ leaves such that
       each inner vertex has exactly two children and the subgraph induced by
       the inner vertices is a path with the root $\rho_T$ at one end of this
       path. 

	\item A tree $T$ is \emph{maximally balanced} if either (i) $T$ consists of precisely one vertex, 
        or (ii) $T$ has at least two leaves and each inner vertex $v$ of $T$ has
	      precisely two children $v_1$ and $v_2$ and these two children satisfy
	      $\lvert L_T(v_1) - L_T(v_2) \rvert \leq 1$. 
	      
	      Note that the quantity $\lvert L_T(v_1) - L_T(v_2) \rvert$ is also
	      referred to as the \emph{balance value} $\mathrm{bal}_T(v)$ of $v$ and
	      $v$ is called \emph{balanced} if $\mathrm{bal}_T(v) \leq 1$. Thus, all
	      vertices of a maximally balanced tree are balanced \cite{Mir2013}.
	      As argued in \cite{Mir2013}, for a fixed leaf number $n$, the
	      maximally balanced tree on $n$ leaves is uniquely determined (up to
	      isomorphism).
	      
	\item A \emph{star tree} is a tree $T$ such that either (i) $T$ consists of
	     precisely one vertex, or (ii) $T$ has at least two leaves and all leaves
	     of $T$ are adjacent to the root $\rho_T$ of $T$. A star tree on $n$
	     leaves is denoted by $\Tstar$.

	\item A \emph{triangle network} $N_{\Delta}$ is the network that contains
	      precisely one block $B$ that is a triangle with $\rho_B = \rho_N$ and
	      with precisely two leaves and each of the two vertices distinct from
	      $\rho_B$ is adjacent to precisely one leaf.   

\end{itemize}

Note that triangle networks, caterpillars, and maximally balanced trees are binary. 

\begin{itemize}

	\item $\Tmod$ denotes the \emph{modified star} on $n$ leaves that is
	      obtained from the star tree $\mathrm{T}^{star}_{n-2}$ and $N_{\Delta}$
	      by identifying their roots.

	\item  $\Tpush$ denotes the \emph{push-down star} on $n$ leaves that is
	       obtained from the star tree $\mathrm{T}^{star}_{n-2}$ and $N_{\Delta}$
	       by adding the edge $(\rho_{\mathrm{T}^{star}_{n-2}}, \rho_N)$. If
	       $n=2$, then both $\Tmod$ and $\Tpush$ are isomorphic to $N_{\Delta}$.

\item $\NC_n$ is the  binary phylogenetic level-$1$ network with $n$ leaves that consists of one non-trivial block $B\in \mathcal{B}_{n+1}$ 
            with root $\rho_B = \rho_N$
             that is a crescent where $|\child^*_{\NC_n}(v)|=1$ for all $v\in V(B)\setminus\{\rho_N\}$ (see Figure\ \ref{fig:N_cres} for a generic example
             and Definition~\ref{def:child*} for the definition of $\child^*$).
\end{itemize}

\begin{figure}[t]
    \centering
    \includegraphics[width=1.\textwidth]{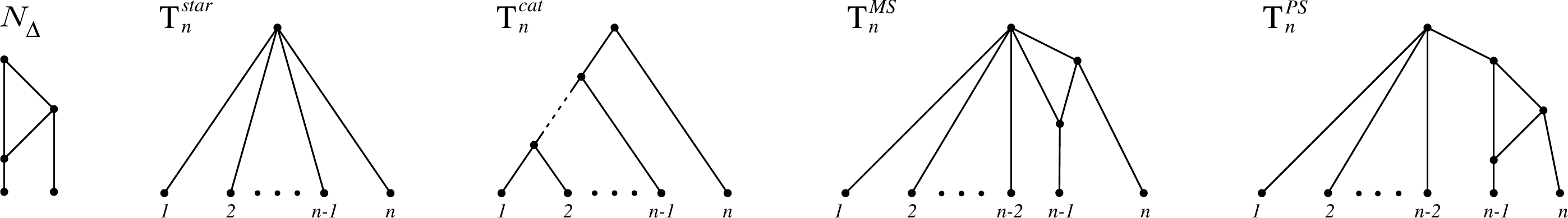}
    \caption{Shown are the triangle network $N_{\Delta}$ and generic examples of the networks 
    $\Tstar$, $\Tcat$, $\Tmod$, and $\Tpush$.
    }
    \label{fig:mod_and_pd_star}
\end{figure}

\begin{figure}[t]
    \centering
    \includegraphics[width=0.6\textwidth]{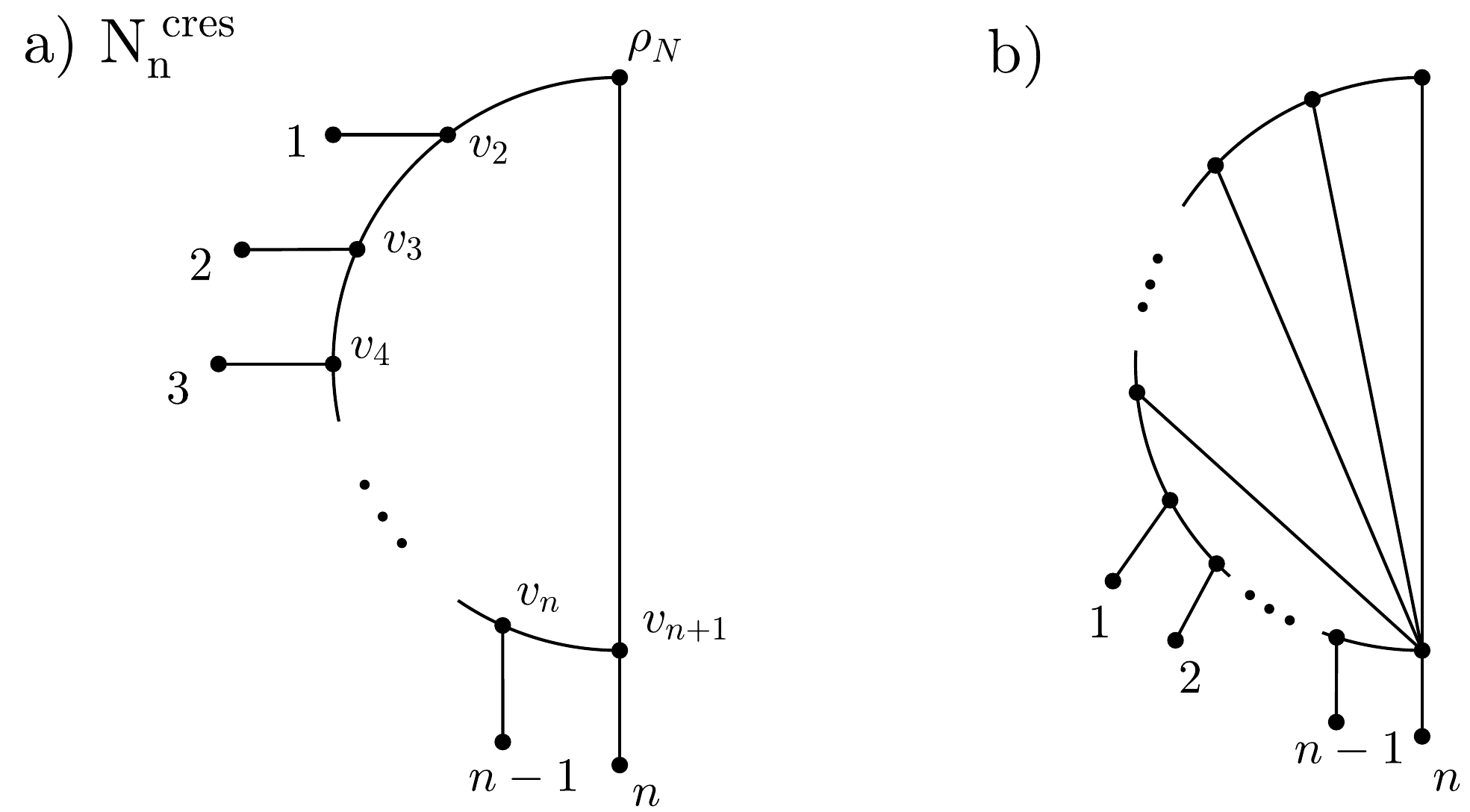}
    \caption{Panel a) shows the network $\NC_n$
     and panel b) shows a generic example for a class of infinitely many phylogenetic level-$1$ networks with $n$ leaves.}
    \label{fig:N_cres}
\end{figure}

\paragraph{\bf Subsets of children, relevant vertices, and least common ancestors}
We finally provide notation and terminology for certain subsets of vertices of a network and start with particular subsets of the children of a vertex.

\begin{definition}\label{def:child*}
Let $N$ be a phylogenetic network. For $v \in V(N)$, let $\child_N^*(v)\subseteq \child_N(v)$
be the subset of children of $v$ that are not contained in any non-trivial block $B\in
\mathcal{B}(N)$ with $v\in V(B)\setminus \{\rho_B\}$, i.e., $B$ contains $v$ but $v$ is not the root
of $B$.

Moreover, $\child^*_B(v)\subseteq child_N(v)$ denotes the set of children of $v$ that are located in
$B\in \mathcal B^v(N) $. For a subset $\mathcal{B}'\subseteq \mathcal{B}^v(N)$, we define
$\child^*_{\mathcal{B}'}\coloneqq \cupdot_{B\in \mathcal{B}'} \child^*_B(v)$. Furthermore,
$\child^*_{\overline{\mathcal{B}^v}} \coloneqq \child_N^*(v)\setminus \child^*_{\mathcal{B}^v}$
denotes the set of children of $v$ that are not located in any non-trivial block rooted in $v$.
\end{definition}

Hence, $\child_N^*(v) = \child_N(v)$ if $v$ is not contained in any non-trivial block or, otherwise, if $v$ is the root of every non-trivial block it is contained in. Note that Observation \ref{obs:identical-block} implies that  two different non-trivial blocks $B, B'\in \mathcal B^v(N)$ can share at  most one vertex and since both are rooted in $v$ it follows that $\child^*_{B}(v)\cap \child^*_{B'}(v) = \emptyset$.

\begin{fact}\label{obs:dictintBlocks-disjointChild}
Let $N$ be a phylogenetic network and $v\in V(N)$. For two different blocks $B, B'\in \mathcal B^v(N)$ it holds that $\child^*_{B}(v)\cap \child^*_{B'}(v) = \emptyset$.
\end{fact}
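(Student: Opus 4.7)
The plan is to derive this by a short contradiction argument, using Observation~\ref{obs:identical-block} exactly as the text hints. I would suppose, for contradiction, that there exists some $w \in \child^*_{B}(v)\cap \child^*_{B'}(v)$. By the very definition of $\child^*_B(v)$ and $\child^*_{B'}(v)$ (children of $v$ located in the respective block), this means $w$ is a child of $v$ with $w\in V(B)$ and $w\in V(B')$. Since $B,B'\in \mathcal B^v(N)$, we also have $v\in V(B)\cap V(B')$. Hence $B$ and $B'$ share (at least) the two distinct vertices $v$ and $w$.

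The second step is to invoke Observation~\ref{obs:identical-block}: two biconnected components that share two vertices are both contained in a common block $B''$. Since blocks are \emph{maximal} biconnected subgraphs and $B \subseteq B''$ with $B''$ itself a block, maximality of $B$ forces $B = B''$; the symmetric argument gives $B' = B''$. Therefore $B = B'$, contradicting the assumption that $B$ and $B'$ are different blocks in $\mathcal{B}^v(N)$.

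I do not anticipate any real obstacle: the only subtlety is making sure that $v\neq w$ (which is automatic since $w\in \child_N(v)$ and $N$ is a DAG, so $v$ cannot be its own child) and that $\child^*_B(v)\subseteq V(B)$ (which is immediate from Definition~\ref{def:child*}, as children in $\child^*_B(v)$ are by definition vertices of $B$). Everything else is a direct application of the maximality built into the definition of a block, so the proof should be essentially a two-line argument.
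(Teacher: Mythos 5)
Your argument is correct and matches the paper's reasoning: the text justifies this observation precisely by noting that Observation~\ref{obs:identical-block} together with maximality of blocks forces two distinct blocks to share at most one vertex, and since $B,B'\in\mathcal{B}^v(N)$ already share $v$, no child of $v$ can lie in both. Your explicit check that $v\neq w$ and that $\child^*_B(v)\subseteq V(B)$ is a fine (if routine) addition.
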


We now introduce the notion of relevant vertices.
\begin{definition}\label{def:relevant}
A vertex $v$ is \emph{relevant} in $N$ if $|\child_N^*(v)|>1$. For $N$, we denote with $\relV_N$ the set of its relevant vertices. The set 
$V(N)\setminus \relV_N$ comprises all \emph{irrelevant} vertices.
\end{definition}

\begin{fact}\label{obs:relevant}
Let $N\in \mscr{N}_n$ be phylogenetic and $n\geq 2$. 
If (i) $v=\rho_N$ or (ii) $v$ is the root of a non-trivial block in $N$ or (iii) $v$ is a true tree vertex of $\outdeg_N(v)>1$, then $v$ is relevant. 

In particular, if $v$ is relevant, then $v$ is the root of $\partN(v)$ with children $\child_{\partN}(v) = \child_N^*(v)$. Moreover, $v$ is irrelevant in $N$
if and only if $v$ is a leaf of $N$, $\outdeg_N(v)=1$ or all but possibly one child of $v$ is contained in a unique non-trivial block $B$ for which $\rho_B\neq v$. 
\end{fact}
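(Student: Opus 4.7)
The plan is to treat the three claims of the Observation in turn. For the first claim --- that each of (i)--(iii) implies relevance --- I consider each case. If $v=\rho_N$, then Lemma \ref{lem:unique-max-B} ensures $\rho_N$ (being $\preceq_N$-maximal in $N$) can belong to any block only as its root, so $\child_N^*(\rho_N)=\child_N(\rho_N)$; since $n\geq 2$ and $N$ is phylogenetic, condition (N2) combined with $\indeg(\rho_N)=0$ forces $\outdeg(\rho_N)\geq 2$, whence $\rho_N$ is relevant. If $v=\rho_B$ for a non-trivial block $B$, biconnectedness of $B$ rules out $v$ having only one neighbor in $B$ (otherwise removing that neighbor would disconnect $v$ from the rest of $B$), so $|\child^*_B(v)|\geq 2$; as $v=\rho_B$, these children are not excluded from $\child_N^*(v)$. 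Finally, a true tree vertex lies in no non-trivial block, so $\child_N^*(v)=\child_N(v)$ has size $\outdeg(v)>1$ by hypothesis.

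For the second claim, I split along the two cases of Definition \ref{def:relevant-part}. In case (a), $\partN(v)=N(v)$ is rooted at $v$, and $\child_N^*(v)=\child_N(v)=\child_{\partN}(v)$ follows immediately from the definitions. Case (b), in which $v$ lies in a non-trivial block $B$ with $v\neq\rho_B$ (unique by Lemma \ref{lem:results-B}(1)), requires showing that $\child_{\partN}(v)$ equals the set of children of $v$ outside $B$. Every child $c$ of $v$ that is in $B$ is excluded from $\partN(v)$ by taking the witness $w\coloneqq c$, which satisfies $w\in B$, $w\prec_N v$, and $c\preceq_N w$. For a child $c\notin B$, suppose for contradiction that $c\preceq_N w$ for some $w\in B$ with $w\prec_N v$. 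Since $w\in B$ and $c\notin B$, we have $c\neq w$ and hence $c\prec_N w\prec_N v$. Pick any simple directed paths $\pi_1\colon v\to w$ and $\pi_2\colon w\to c$ in $N$; these must be internally vertex-disjoint, since any shared internal vertex $z$ would simultaneously satisfy $z\succ_N w$ (from $\pi_1$) and $z\prec_N w$ (from $\pi_2$), violating acyclicity. Concatenating $\pi_1$ and $\pi_2$ produces a simple directed path from $v$ to $c$ through $w$, which together with the edge $(v,c)$ forms a simple undirected cycle containing $v$, $w$, and $c$. Since every simple undirected cycle lies in a single block $B''$, we obtain $v,w\in B\cap B''$, and Observation \ref{obs:identical-block} then forces $B=B''$. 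But $c\in B''$ and $c\notin B$, a contradiction.

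The third claim is a direct unpacking of $|\child_N^*(v)|\leq 1$. The \enquote{if} direction is immediate since each listed condition by itself bounds $|\child_N^*(v)|$ by one. Conversely, suppose $v$ is irrelevant and $\outdeg_N(v)\geq 2$; then at least $\outdeg_N(v)-1\geq 1$ of its children must be excluded from $\child_N^*(v)$, i.e., each such child lies in a non-trivial block containing $v$ as a non-root. By Lemma \ref{lem:results-B}(1), there is at most one such block, so all excluded children share a single block $B$ with $\rho_B\neq v$, and at most one child of $v$ remains outside $B$. The main obstacle is the cycle-block argument in Case (b) of the second claim: one must produce a genuine simple undirected cycle through $v$, $c$, and $w$ rather than a mere closed walk, which hinges on using acyclicity of $N$ to guarantee that the chosen $\pi_1$ and $\pi_2$ are internally vertex-disjoint. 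Once this simple cycle is assembled, Observation \ref{obs:identical-block} delivers the final contradiction by collapsing $B''$ onto $B$, and the remainder of the proof is routine bookkeeping on the definitions of $\partN$, $\child_N^*$, and relevance.
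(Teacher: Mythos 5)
The paper records Observation~\ref{obs:relevant} without any proof, so there is no in-text argument to measure your proposal against; what you have written is a correct, self-contained justification of all three claims. Your decomposition is the natural one, and the only genuinely non-trivial step --- showing that a child $c\notin V(B)$ of a non-root block vertex $v$ survives into $\partN(v)$ --- is handled correctly: the internal disjointness of the $vw$- and $wc$-paths does follow from acyclicity exactly as you say, the resulting simple undirected cycle forces $v,w,c$ into a common block, and Observation~\ref{obs:identical-block} collapses that block onto $B$, contradicting $c\notin V(B)$. This is precisely the device the authors themselves use in the proof of Lemma~\ref{lem:blocks-in-partN}(2), so your argument is fully in the spirit of the paper's toolkit. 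Two points are stated more tersely than they deserve. First, in case (ii) the assertion that the children of $\rho_B$ lying in $B$ are not excluded from $\child_N^*(\rho_B)$ needs one more line: if such a child $c$ also lay in a block $B'$ containing $v$ as a non-root, then $B$ and $B'$ would share the two vertices $v$ and $c$ (indeed the edge $(v,c)$), and Observation~\ref{obs:identical-block} (or edge-disjointness of distinct blocks, Observation~\ref{obs:biConn-edge-disjoint}) forces $B=B'$, contradicting $v=\rho_B\neq\rho_{B'}$. Second, the claim that $v$ is \emph{the root} of $\partN(v)$ --- i.e.\ the unique in-degree-zero vertex of the induced subgraph --- is asserted rather than argued; it follows because any $u\in V(\partN(v))\setminus\{v\}$ has, on a directed $vu$-path, a parent $p$ with $u\prec_N p\preceq_N v$ that again satisfies the defining condition of $\partN(v)$ (otherwise $u$ would violate it too). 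Both are one-line patches, not gaps in the approach.
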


Let $N$ be a binary phylogenetic level-$1$ network. Since $N$ is separated, 
only its tree vertices can be relevant. Note that all tree vertices $v$
of $N$ satisfy $|\child_N(v)| = 2$. Moreover, every hybrid vertex $v$
has $\indeg(v)=2$ and, by \cite[Obs.~16]{HSS:22}, every 
non-trivial block forms an \enquote{undirected cycle}. 
Hence,  if $v$ is a tree vertex in $N$ that is contained in some non-trivial block $B$ but different from its root, then only one child of $v$ is part of $B$ while the other child is not.
In this case, $|\child^*_N(v)|=1$ must hold, which, in turn, implies that $v$ is irrelevant. In particular, a tree vertex $v$ is relevant in $N$ if and only if $v$ is either the root of a non-trivial block or a true tree vertex in $N$ (in which case
 $\child^*_N(v) = \child_N(v)$ and thus, $|\child^*_N(v)|=2>1 $ must hold).
We summarize the latter discussion into the following 
\begin{fact}\label{fact:binary_rel_vert}
In binary phylogenetic level-$1$ networks the set of relevant vertices consists exactly of roots of non-trivial blocks and inner vertices that are not part of non-trivial blocks.
\end{fact}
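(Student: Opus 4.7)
The plan is to classify every vertex of a binary phylogenetic level-$1$ network $N$ according to its type (leaf, hybrid, tree vertex) and its block membership, and then read off $|\child^*_N(v)|$ in each case. Leaves are trivially irrelevant since they have no children. For hybrid vertices, binarity forces $\outdeg(v)=1$, so $|\child_N(v)|=1$ and a fortiori $|\child^*_N(v)|\le 1$, so hybrids are irrelevant as well. Thus only tree vertices can possibly be relevant, and they all have exactly two children.

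Next I would split the tree vertices into three exhaustive subcases. First, if $v$ is a true tree vertex, then $v$ lies in no non-trivial block, so by definition $\child^*_N(v)=\child_N(v)$, whence $|\child^*_N(v)|=2$ and $v$ is relevant. Second, if $v$ is a tree vertex contained in some non-trivial block $B$ but $v\neq \rho_B$, I would use the fact that in a binary level-$1$ network every non-trivial block is an undirected cycle (in-degree at most $2$ together with the level-$1$ condition leaves no room for chords). Hence $v$ has exactly two neighbors in $B$. Since $v$ is not $\rho_B$ and has in-degree $1$, its unique parent must be one of those two neighbors, and therefore exactly one of $v$'s two children lies in $B$. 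That child is excluded from $\child^*_N(v)$, leaving $|\child^*_N(v)|\le 1$, so $v$ is irrelevant.

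The third subcase is the only delicate one: a tree vertex $v$ that is the root $\rho_B$ of some non-trivial block $B$. Both children of $v$ lie in $B$ (the two cycle-neighbors of $\rho_B$ in $B$). To conclude $|\child^*_N(v)|=2$ it suffices to show that $v$ cannot simultaneously be in another non-trivial block $B'$ as a non-root. The plan is a short contradiction: were $v\in V(B')\setminus\{\rho_{B'}\}$, the cycle structure would again force one of $v$'s children, say $c$, to lie in $B'$; but both children of $v$ are already in $B$, so $c\in V(B)\cap V(B')$. Lemma~\ref{lem:results-B}\eqref{L3.19HSS} then forces $c\in\{\rho_B,\rho_{B'}\}$; as $c\neq v=\rho_B$, we must have $c=\rho_{B'}$. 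On the other hand $v$ has its unique parent $p\in V(B')$, and since $\rho_{B'}$ is $\preceq_N$-maximal in $B'$, we obtain $v\preceq_N p\preceq_N c$; combined with $c\prec_N v$ this is a contradiction. Hence no such $B'$ exists, so $\child^*_N(v)=\child_N(v)$ and $v$ is relevant.

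Putting the three subcases together gives exactly the claimed dichotomy: among inner vertices of a binary phylogenetic level-$1$ network, the relevant ones are precisely the true tree vertices together with the roots of non-trivial blocks. The only step that requires real care is the last one; the rest is bookkeeping with binarity and the cycle description of blocks. I expect the main obstacle to be cleanly ruling out $v$ being root of one block and non-root of another, which I handle via the interplay of Lemma~\ref{lem:results-B}\eqref{L3.19HSS} with the order-theoretic constraint that a child cannot be an ancestor of its parent's parent.
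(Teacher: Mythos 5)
Your proof is correct and follows essentially the same route as the paper's own argument preceding Observation~\ref{fact:binary_rel_vert}: separatedness rules out hybrids, the undirected-cycle structure of non-trivial blocks shows a non-root block vertex has exactly one child inside the block and hence $|\child^*_N(v)|\le 1$, and the remaining tree vertices (true tree vertices and block roots) satisfy $\child^*_N(v)=\child_N(v)$ and are therefore relevant. The only difference is that you explicitly rule out a vertex being the root of one non-trivial block while a non-root member of another (via Lemma~\ref{lem:results-B}\eqref{L3.19HSS} and acyclicity), a point the paper leaves implicit; this is welcome extra rigor but not a different approach.
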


Moreover, we will consider subsets of $\partN$.  

\begin{definition}
Let $N$ be a level-$1$ network, $v\in V(N)$, $\mathcal{B}'\subseteq \mathcal{B}^v(N)$, and $\child'\subseteq \child^*_{\overline{\mathcal{B}^v(N)}}$.  Then, $\partN(v,\mathcal{B}',\child')$ denotes the subgraph of $N$
that is induced by $v$ and all vertices $w\preceq u$ with $u\in \child^*_{\mathcal{B}'} \cup \child'$.
\end{definition}

\begin{fact}\label{obs:part-partN}
Let $N$ be a level-$1$ network and $v\in V(N)$. Then, $\partN(v,\mathcal{B}',\child')\subseteq \partN(v)$ and  $\partN(v) = \partN(v,\mathcal{B}^v(N),\child^*_{\overline{\mathcal{B}^v(N)}})$, cf. Figure \ref{fig:partN1}.
\end{fact}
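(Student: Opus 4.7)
The plan is to prove both claims by a case analysis following Definition~\ref{def:relevant-part}. In case (a) --- where $v$ is not contained in any non-trivial block of $N$, or is the root of every non-trivial block containing it --- we have $\partN(v)=N(v)$ and $\child_N^{*}(v)=\child_N(v)$, so that $\partN(v,\mathcal{B}',\child')$ is by construction a subgraph of $N(v)$, and conversely every vertex of $N(v)$ is either $v$ itself or lies below some child of $v$; both the inclusion and the equality then follow immediately. The interesting case is (b), in which Lemma~\ref{lem:results-B}\eqref{L3.19HSS} provides a unique non-trivial block $B_{*}$ with $v\in V(B_{*})\setminus\{\rho_{B_{*}}\}$, and $\child_N^{*}(v)$ consists precisely of those children of $v$ that are not in $B_{*}$.

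The heart of the argument in case (b) is the following \emph{key claim}: if $u\in\child_N^{*}(v)$ and $w\preceq_N u$, then $w\not\preceq_N w'$ for every $w'\in V(B_{*})$ with $w'\prec_N v$. I will prove this by contradiction, performing a trichotomy on the $\preceq_N$-relation between $u$ and $w'$. If $w'\preceq_N u$, then any directed path from $u$ to $w'$ must enter $B_{*}$ through its root by Lemma~\ref{lem:results-B}\eqref{L3.23HSS}, yielding $\rho_{B_{*}}\preceq_N u\prec_N v\prec_N\rho_{B_{*}}$, a contradiction. If $u\prec_N w'$, then the direct edge $(v,u)$ and the path $v\to\cdots\to w'\to\cdots\to u$ place $v,w',u$ in a common non-trivial block $B^{\mathrm{new}}$, which must be rooted at $v$ by Lemma~\ref{lem:results-B}\eqref{L3.19HSS} (using $v\in V(B^{\mathrm{new}})\cap V(B_{*})$ and $v\neq\rho_{B_{*}}$); but then $w'\in V(B^{\mathrm{new}})\cap V(B_{*})$ is the root of neither block (both candidates are ruled out by $w'\prec_N v$), again contradicting Lemma~\ref{lem:results-B}\eqref{L3.19HSS}. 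Finally, if $u$ and $w'$ are $\preceq_N$-incomparable, then $L_N(w)\subseteq L_N(u)\cap L_N(w')$ is nonempty, so Lemma~\ref{lem:results-B}\eqref{L7.8HSS} places $u$ and $w'$ in a common non-trivial block $B\neq B_{*}$; the incomparability of $u$ and $w'$ rules out $w'=\rho_B$, while $w'\prec_N v\prec_N\rho_{B_{*}}$ rules out $w'=\rho_{B_{*}}$, so $w'$ lies in $V(B)\cap V(B_{*})$ without being the root of either, once more contradicting Lemma~\ref{lem:results-B}\eqref{L3.19HSS}.

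Once the key claim is in hand, the first inclusion is immediate: every $w\in\partN(v,\mathcal{B}',\child')$ is either $v$ or satisfies $w\preceq_N u$ for some $u\in\child^{*}_{\mathcal{B}'}\cup\child'\subseteq\child_N^{*}(v)$, which gives $w\prec_N v$ and, via the key claim, $w\not\preceq_N w'$ for all $w'\in V(B_{*})$ with $w'\prec_N v$, hence $w\in\partN(v)$. The $\supseteq$ direction of the equality is the same inclusion specialized to $\mathcal{B}'=\mathcal{B}^v(N)$ and $\child'=\child^{*}_{\overline{\mathcal{B}^v(N)}}$. For the reverse inclusion, given $w\in\partN(v)\setminus\{v\}$, pick any directed $v$-to-$w$ path and let $p_{1}$ be its second vertex: then $p_{1}\in\child_N(v)$, and $p_{1}\notin B_{*}$, because otherwise $p_{1}$ would be a valid $w'\in V(B_{*})$ with $w'\prec_N v$ and $w\preceq_N p_{1}$ would violate the defining condition of $\partN(v)$. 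Hence $p_{1}\in\child_N^{*}(v)=\child^{*}_{\mathcal{B}^v(N)}\cupdot\child^{*}_{\overline{\mathcal{B}^v(N)}}$, so $w$ sits below $p_{1}$ and thus lies in $\partN(v,\mathcal{B}^v(N),\child^{*}_{\overline{\mathcal{B}^v(N)}})$. The main obstacle will be the incomparable sub-case of the key claim, where one has to conjure a new common non-trivial block of $u$ and $w'$ via Lemma~\ref{lem:results-B}\eqref{L7.8HSS} and then play it off against the block-uniqueness Lemma~\ref{lem:results-B}\eqref{L3.19HSS}; the comparable sub-cases are more direct once one uses that every path from outside $B_{*}$ into $B_{*}$ has to pass through $\rho_{B_{*}}$.
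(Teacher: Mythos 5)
The paper gives no proof of Observation~\ref{obs:part-partN}: it is stated as an unproven observation, regarded as immediate from Definitions~\ref{def:relevant-part} and~\ref{def:child*} (with a pointer to Figure~\ref{fig:partN1}). Your overall architecture is sound and is exactly what a careful proof should look like: case~(a) is routine, and in case~(b) both assertions do reduce to your key claim that no descendant of a vertex in $\child_N^*(v)$ can be a descendant of some $w'\in V(B_{*})$ with $w'\prec_N v$. The sub-case $w'\preceq_N u$ (walking a $uw'$-path backwards into $B_{*}$ via Lemma~\ref{lem:results-B}\eqref{L3.23HSS} until it must pass through $\rho_{B_{*}}$, forcing $\rho_{B_{*}}\preceq_N u\prec_N v\prec_N\rho_{B_{*}}$), the identification $\child_N^*(v)=\child^*_{\mathcal{B}^v(N)}\cupdot\child^*_{\overline{\mathcal{B}^v(N)}}$ with the children of $v$ outside $B_{*}$, and the reverse inclusion via the second vertex $p_1$ of a $vw$-path are all correct. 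Two smaller remarks: in the sub-case $u\prec_N w'$, Lemma~\ref{lem:results-B}\eqref{L3.19HSS} only yields the disjunction $v=\rho_{B^{\mathrm{new}}}$ or $B^{\mathrm{new}}=B_{*}$, and you should dispatch the second branch explicitly (it puts $u\in V(B_{*})$, contradicting $u\in\child_N^*(v)$); likewise the final ``contradicting Lemma~\ref{lem:results-B}\eqref{L3.19HSS}'' should be spelled out as: the lemma forces $B^{\mathrm{new}}=B_{*}$, which is incompatible with $\rho_{B^{\mathrm{new}}}=v\neq\rho_{B_{*}}$.

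The one genuine gap is in the incomparable sub-case. There you invoke Lemma~\ref{lem:results-B}\eqref{L7.8HSS}, but that statement is proved only for \emph{phylogenetic} level-$1$ networks and moreover presupposes $L_N(w)\neq\emptyset$, whereas Observation~\ref{obs:part-partN} is asserted for arbitrary level-$1$ networks (and is applied to subnetworks such as $\partN(u)$, which need not be phylogenetic even when $N$ is). As written, your proof therefore does not cover the stated generality. The repair is elementary and mirrors what the paper does in the proof of Lemma~\ref{lem:blocks-in-partN}(2): from $w\preceq_N u$ and $w\preceq_N w'$ build two $vw$-paths, one consisting of the edge $(v,u)$ followed by a $uw$-path (this path avoids $w'$, since $w'$ on it would force $w'\preceq_N u$), the other consisting of a $vw'$-path followed by a $w'w$-path (this one avoids $u$); truncating both at their first common vertex after $v$ yields an undirected cycle containing $v$, $u$ and $w'$, hence a common non-trivial block of these three vertices, and the argument then concludes exactly as in your $u\prec_N w'$ sub-case via Lemma~\ref{lem:results-B}\eqref{L3.19HSS}, with no appeal to leaf sets. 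With that substitution the proof is complete.
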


Finally, a \emph{least common ancestor} (LCA) of a subset $Y\subseteq V$ in a DAG $N$ is an $\preceq_N$-minimal vertex of $V$ that is an ancestor of all vertices in $Y$.  In general DAGs $N$, a LCA does not necessarily exist for a given vertex set.  Moreover, the LCA is not unique in general.  We write $\LCA(Y)$ for the (possibly) empty set of $\preceq_{N}$-minimal ancestors of the elements in $Y$. In a (phylogenetic) network $N$, the root $\rho_N$ is an ancestor of all vertices in $V$, and thus a least common ancestor exists for all $Y\subseteq V$. If $\LCA(Y) = \{u\}$ consists of a single element $u$ only, we write $\lca(Y)=u$. In other words, $\lca(Y) = u$ always implies that the $\preceq_{N}$-minimal ancestor of the elements in $Y$ exists and is uniquely determined. We leave $\lca(Y)$ undefined for all $Y$ with $\vert\LCA(Y)\vert\ne1$.

By Lemma~49 in \cite{HSS:22}, level-$1$ networks $N$ are \enquote{$\lca$-networks}, i.e., $\lca_N(Y)$ is well-defined for all $Y\subseteq L(N)$. 
	In addition, the network $N^*$ obtained from a level-1 network $N$ by adding for every non-leaf vertex $v \in V(N)$ a new vertex $x_v$
	and the edge $(v, x_v)$ to $N^*$ remains a level-1 network. The latter allows us to apply Proposition~3.11 in \cite{SCHS:24} to conclude that 
	$N$ has the global $\lca$-property, i.e., $\lca_N(Y)$ is well-defined for all $Y\subseteq V(N)$. 
	We summarize the latter discussion in 
\begin{lemma}\label{L7.9HSS}
For all level-$1$ networks $N$ and $Y\subseteq V(N)$,  $\lca_N(Y)$ is well-defined.

\end{lemma}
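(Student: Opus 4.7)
\medskip

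\noindent\textbf{Proof plan.}
The statement essentially says that in level-$1$ networks, the least common ancestor is always unique, even when we take it over arbitrary vertex subsets (not just leaf subsets). The plan is to bootstrap from the already-known leaf version of the statement using a simple graph modification that attaches a fresh leaf to every inner vertex, so that ``LCA of vertices'' becomes ``LCA of leaves'' in a slightly larger network.

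First, I would invoke the result from \cite{HSS:22} (Lemma~49 there) which guarantees that every level-$1$ network $N$ is an $\lca$-network in the leaf-restricted sense: $\lca_N(Y)$ is well-defined for every $Y\subseteq L(N)$. This handles the base case and is what we need as our black-box input.

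Next, given an arbitrary level-$1$ network $N$ and an arbitrary subset $Y\subseteq V(N)$, I would construct the auxiliary network $N^*$ obtained from $N$ by adding, for every non-leaf vertex $v\in \mathring V(N)$, a fresh vertex $x_v$ together with the edge $(v,x_v)$. I need two straightforward verifications here. First, $N^*$ is still a level-$1$ network: appending pendant leaves cannot create any new biconnected components nor change existing ones, so $\mathcal{B}(N^*)=\mathcal{B}(N)$ and every block still contains at most one hybrid distinct from its root. Second, the ancestor relation satisfies $u\preceq_{N^*} v \iff u\preceq_N v$ for all $u,v\in V(N)$, and the only ancestors of $x_v$ in $N^*$ are $v$ itself together with the $N$-ancestors of $v$. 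Consequently, the set of common ancestors of $\{x_v : v\in Y\}$ in $N^*$ coincides exactly with the set of common ancestors of $Y$ in $N$, which in turn equals the set of $N^*$-ancestors of $Y$ (since leaves of $Y$, if any, still lie in $N^*$ with unchanged ancestry).

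Finally, I would apply the leaf-restricted result to $N^*$ on the set $Y^*\coloneqq \{x_v : v\in Y, v\in \mathring V(N)\}\cup (Y\cap L(N))\subseteq L(N^*)$: since $N^*$ is level-$1$, $\lca_{N^*}(Y^*)$ is well-defined and unique. By the equality of common-ancestor sets just observed, this unique vertex is also the unique $\preceq_N$-minimal common ancestor of $Y$ in $N$, so $\lca_N(Y)$ is well-defined. The paper's sketch phrases this last step as an appeal to Proposition~3.11 of \cite{SCHS:24}, which packages exactly the ``local $\Rightarrow$ global $\lca$-property'' inference under such leaf-addition; invoking that result closes the argument.

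The only subtle point, and hence the main thing to be careful about, is the equality of common-ancestor sets between $N$ and $N^*$ and the verification that $N^*$ stays level-$1$. Both are essentially bookkeeping, but they must be stated cleanly because they carry the entire reduction: the ``new'' content beyond \cite{HSS:22} is exactly the observation that pendant leaves can extend the leaf-$\lca$ property to a global $\lca$-property without leaving the class of level-$1$ networks.
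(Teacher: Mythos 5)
Your proposal is correct and follows essentially the same route as the paper: cite Lemma~49 of \cite{HSS:22} for the leaf-restricted case, attach a pendant leaf to every non-leaf vertex to obtain a level-$1$ network $N^*$, and transfer the $\lca$-property back to arbitrary vertex sets (the paper packages this last transfer as Proposition~3.11 of \cite{SCHS:24}, which you also note). The only difference is that you unpack the reduction explicitly rather than citing it as a black box, which is fine.
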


\begin{table}[t]
    \centering
    \begin{tabular}{l|l}
        $\mathcal{B}(N)$  & collection of all non-trivial blocks in a network $N$  \\
        $\mathcal{B}_m(N)$ & subset of blocks in $\mathcal{B}(N)$ with precisely $m$ vertices \\
        $\mathcal{B}^v(N)$ & subset of blocks in $\mathcal{B}(N)$ rooted in $v$ \\
        $\partN(v)$ & partial network with root $v$ as specified in Definition\ \ref{def:relevant-part}\\
        $\child_N(v)$ & set of children of $v$ in $N$\\
        $\child_N^*(v)$ & set of children of $v$ in $N$ that are not contained in any \\                         &  non-trivial block $B\in \mathcal{B}(N)$ with  $v\in V(B)\setminus \{\rho_B\}$ \\
        $\child^*_B(v)$ & set of children of $v$ in $N$ that are contained in $B\in \mathcal B^v(N)$\\
        $\child^*_{\mathcal{B}'}$ & is the set $\cupdot_{B\in \mathcal{B}'} \child^*_B(v)$ for  $\mathcal{B}'\subseteq \mathcal{B}^v(N)$. \\
        $\child^*_{\overline{\mathcal{B}^v}}$ & is the set $\child_N^*(v)\setminus \child^*_{\mathcal{B}'}$
        with $\mathcal{B}' = \mathcal{B}^v(N)$.\\
        $v$ is relevant &  $|\child_N^*(v)|>1$\\
        $\relV_N$ & set of relevant vertices of $N$ \\
    \end{tabular}
    \caption{For the readers convenience, we here give an overview on our main notation, see also Figure\ \ref{fig:partN1}
    for an illustrative example.}
    \label{tab:sum-defs}
\end{table}

\section{The weighted total cophenetic index} \label{APPX:sec:wtci}
We are now in a position to introduce the central concept of this manuscript, namely the weighted
total cophenetic index. We begin with introducing the \enquote{classical} total cophenetic index
defined by Mir et al. \cite{Mir2013}. 

\begin{definition}\label{def:total_cophenetic_index}
Let $T$ be a rooted phylogenetic tree with leaf set $L(T)$. For any two distinct leaves $i,j\in
L(T)$, let $\varphi_T(i,j)$ be the depth of the least common ancestor of $i$ and $j$, i.e., the
number of vertices (excluding $v$) on a shortest $\rho_Nv$-path in $T$ where $v$ is the least common
ancestor of $i$ and $j$. Then, the total cophenetic index $\Phi(T)$ is defined as 
\[ \Phi(T) \coloneqq \sum\limits_{\substack{i,j \in L(T) \\ i \neq j}} \varphi_T(i,j). \]
\end{definition}

Lemma 2 in \cite{Mir2013} implies that the total cophenetic index can equivalently be expressed as
\[\Phi(T) = \sum\limits_{v\in \mathring{V}(T)\setminus  \{\rho_T\}} \binom{|L_T(v)|}{2},\] 
where $\Phi(T) \coloneqq 0$ in case $\mathring{V}(T)\setminus  \{\rho_T\} = \emptyset$ (which is precisely the case if $T$ is a star tree). 
We will primarily use this second representation in what follows. 

Of course, we may directly use \[\Phi(N) \coloneqq \sum\limits_{v\in  \mathring{V}(N)\setminus\{\rho_N\}} \binom{|L_N(v)|}{2}\] also for phylogenetic networks $N$.
However, this type of measure does not address the difference between trees $T$
and networks obtained from $T$ by adding shortcuts (i.e., by adding additional edges $(u,w)$ between existing vertices $u,w\in V(T)$ for which there is a vertex $v'\in V(G)$ such that $w\prec_G v'\prec_G u$). In the latter case, we obtain 
$\Phi(T) = \Phi(N)$ which is due to the fact that adding shortcuts does not change the set of descendant leaves of a given vertex, that is, $L_T(v)=L_N(v)$ for all $v\in V(T)=V(N)$; see Figure~\ref{fig:phi-N}(a) for an example. 
Moreover, as a side issue of the fact that  the total cophenetic index $\Phi$ was primarily constructed as an 
index for trees, the total cophenetic index $\Phi$ does not make direct use of additional information 
about non-trivial blocks.
Thus,
one can easily construct examples of even \enquote{highly diverse} networks with the same total cophenetic index $\Phi$; see Figure~\ref{fig:phi-N} for some examples. 
To overcome this issue, 
we consider weighted versions $\Phi^*$ and $\Phi^{**}$ of $\Phi$. These are based on assigning weights to the non-trivial blocks of the underlying network which, in turn, are used to assign weights to all vertices of the network.

\paragraph{\bf Weights of blocks}

We first define the weight $\omega$ of a non-trivial block $B$ based on the size $\kappa_v = |K_v|$
of the set $K_v\coloneqq \{w\in V(B)\mid w\preceq_N v\}$ of vertices that are \enquote{below} $v$ in
$B$ (including $v$), for all vertices $v \in V(B)$ of $B$.

We define the \emph{weight} of a non-trivial block $B$ as
\[\omega(B)\coloneqq \sum\limits_{v\in V(B)\setminus\{\rho_B\}} \binom{\kappa_v}{2} = \sum\limits_{v\in V^-(B)} \binom{\kappa_v}{2}\]
The equality $\sum_{v\in V(B)\setminus\{\rho_B\}} \binom{\kappa_v}{2} = \sum_{v\in V^-(B)} \binom{\kappa_v}{2}$ is due to the fact that all hybrids $\eta$ in $B$ are $\preceq_B$-minimal vertices and thus, satisfy 
	$\kappa_\eta = 1$, i.e.,  $\binom{\kappa_\eta}{2}= 0$.

Note that the choice of the weight $\omega(B)$ for non-trivial blocks $B$ is intentional. In fact,
by taking the block $B$ and adding, for all $v \in V(B)$, a new vertex $x_v$  along with the
edge $(v, x_v)$, we obtain the \emph{leaf-extended network} $B^*$ of $B$. As demonstrated below,
$\omega(B) = \Phi(B^*)$ holds, and therefore, $\omega(B)$ is a measure of the balance of $B^*$ in terms of
$\Phi$. 
Consequently, it distills the information about $B$ to its essential core structure, quantifying the
balance of $B$ without disturbing it with information about network structures \enquote{surrounding}
$B$ in $N$. In other words, any information that is dispensable to understand the structure of  $B$ is omitted.

\begin{lemma}\label{lem:omega=phi}
	Let $N$ be a network, $B\in  \mathcal{B}(N)$ be a non-trivial block and 
	$B^*$ the leaf-extended network of $B$. Then, 
	\[\omega(B) = \Phi(B^*)  \coloneqq \sum\limits_{v\in \mathring{V}(B^*)\setminus\{\rho_{B^*}\}} \binom{|L_{B^*}(v)|}{2}.\]
\end{lemma}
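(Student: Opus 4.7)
The plan is to unfold both sides of the asserted identity via the structure of the leaf-extended network $B^*$, and then match terms bijectively.

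First, I would make three structural observations about $B^*$. (a) Since $B^*$ is obtained from $B$ by attaching, for every $v\in V(B)$, a new leaf $x_v$ as a child of $v$, the vertex set of $B^*$ decomposes as $V(B^*) = V(B) \cupdot \{x_v : v\in V(B)\}$, the leaves of $B^*$ are precisely $L(B^*) = \{x_v : v\in V(B)\}$, and hence $\mathring{V}(B^*) = V(B)$. (b) The root is unchanged, i.e., $\rho_{B^*}=\rho_B$, since the only edges added are outgoing from existing vertices to fresh leaves. (c) Consequently, the index set on the right-hand side becomes $\mathring{V}(B^*)\setminus\{\rho_{B^*}\} = V(B)\setminus\{\rho_B\}$, which matches the index set in the first form of $\omega(B)$.

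Second, I would prove the pointwise identity $|L_{B^*}(v)| = \kappa_v$ for every $v\in V(B)$. The only parent of $x_w$ in $B^*$ is $w$, so $x_w\preceq_{B^*} v$ iff $w \preceq_{B^*} v$. Restricted to $V(B)$, the partial order in $B^*$ agrees with $\preceq_B$, and by biconnectedness of the block (every pair of $\preceq_N$-comparable vertices in $B$ is joined by a directed path staying in $B$) this agrees with $\preceq_N$ on $V(B)$. Therefore the map $w\mapsto x_w$ is a bijection between $K_v = \{w\in V(B):w\preceq_N v\}$ and $L_{B^*}(v)$, giving $|L_{B^*}(v)|=\kappa_v$.

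Third, I would assemble the two facts:
\[
\Phi(B^*) \;=\; \sum_{v\in \mathring{V}(B^*)\setminus\{\rho_{B^*}\}} \binom{|L_{B^*}(v)|}{2}
\;=\; \sum_{v\in V(B)\setminus\{\rho_B\}} \binom{\kappa_v}{2}
\;=\; \omega(B),
\]
where the last equality is the defining expression for $\omega(B)$. The only genuinely non-routine step is justifying that the ancestor relation $\preceq_N$ restricted to $V(B)$ coincides with $\preceq_B$; this is the point I would be most careful about, but it follows from standard properties of blocks in DAGs (e.g., Lemma \ref{lem:results-B}\eqref{T4.2.4WestBook} together with the fact that $B$ is a maximal biconnected subgraph, so any directed path between two vertices of $B$ can be assumed to lie inside $B$ after replacement by a block-internal path). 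Once that bijection is in hand, the equality is immediate.
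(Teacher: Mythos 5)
Your proof is correct and follows essentially the same route as the paper's: identify $\mathring{V}(B^*)\setminus\{\rho_{B^*}\}$ with $V(B)\setminus\{\rho_B\}$, establish $|L_{B^*}(v)|=\kappa_v$ via the bijection $w\mapsto x_w$, and sum. You are in fact slightly more careful than the paper, which asserts $\kappa_v=|L_{B^*}(v)|$ without addressing the point that $\preceq_N$ restricted to $V(B)$ coincides with reachability inside $B$ (justified, as you note, by maximality of blocks together with Observation~\ref{obs:identical-block}).
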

\begin{proof}
	Let $B\in  \mathcal{B}(N)$. Since $B$ is a non-trivial block, $B$ does not contain leaves. 
	By construction, every vertex $v\in V(B)$ is adjacent to exactly one leaf $x_v$ in $B^*$.
	Hence, we readily obtain $\kappa_v = |L_{B^*}(v)|$ for all $v\in V(B)$. 
	Moreover, it holds that $\mathring{V}(B^*) = V(B)$. 		
	Taken the latter arguments together, we obtain 
	\[\sum\limits_{v\in \mathring{V}(B^*)\setminus\{\rho_{B^*}\}} \binom{|L_{B^*}(v)|}{2}
	  = 
	 \sum\limits_{v\in V(B)\setminus\{\rho_{B}\}} \binom{\kappa_v}{2}. 
	\]		
	Now consider a vertex $v\in V(B)\setminus V^-(B)$ such that $v\neq \rho_B$. 
	By definition, 
	there is no child of $v$ that is located in $N$ and thus, $\kappa_v=1$ must hold. 
	Hence, $\binom{\kappa_v}{2} = 0$ for all  $v\in V(B) \setminus (V^-(B)\cup \{\rho_B\})$
	and, therefore, 
	\[\sum\limits_{v\in V(B)\setminus\{\rho_{B}\}} \binom{\kappa_v}{2} 
	  = 
	 \sum\limits_{v\in V^-(B)} \binom{\kappa_v}{2}, 
	\]		
	which completes the proof. 
\end{proof}
We provide for later reference
\begin{lemma} \label{lem:upperbound-kappa}
Let $N$ be a network and $B \in \mathcal{B}_m(N)$ be a non-trivial block with precisely $m$ vertices
and root $\rho_B$. Then, $\kappa_v \leq m-1$ for all $v \in V^-(B)$. 
\end{lemma}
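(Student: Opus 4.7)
The plan is to show that $K_v$ is a proper subset of $V(B)$ by exhibiting at least one vertex of $B$ which does not lie in $K_v$. The natural candidate is $\rho_B$ itself, and this will give the bound $\kappa_v = |K_v| \leq |V(B)| - 1 = m - 1$.

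More concretely, I would argue as follows. Fix $v \in V^-(B)$. By definition of $V^-(B)$, we have $v \neq \rho_B$. I want to show $\rho_B \notin K_v$, i.e., $\rho_B \not\preceq_N v$. Suppose for contradiction that $\rho_B \preceq_N v$. On the other hand, by Lemma~\ref{lem:unique-max-B}, $\rho_B$ is the unique $\preceq_N$-maximal vertex of $B$, so in particular $v \preceq_N \rho_B$ since $v \in V(B)$. Combining the two relations yields $v = \rho_B$, contradicting $v \in V^-(B)$. Hence $\rho_B \not\preceq_N v$, so $\rho_B \notin K_v$, and therefore $K_v \subseteq V(B) \setminus \{\rho_B\}$, which gives $\kappa_v = |K_v| \leq m - 1$.

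There is no real obstacle here: the statement follows directly from the definition of $V^-(B)$ together with the uniqueness of the $\preceq_N$-maximal vertex of a block (Lemma~\ref{lem:unique-max-B}). The result does not require the network to be level-$1$ and applies to any non-trivial block in a general network.
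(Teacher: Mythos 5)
Your proof is correct and follows essentially the same route as the paper's: both show $\rho_B \notin K_v$ by combining $v \neq \rho_B$ with Lemma~\ref{lem:unique-max-B}, the only cosmetic difference being that you invoke antisymmetry of $\preceq_N$ while the paper appeals directly to acyclicity of $N$ (which is what makes $\preceq_N$ antisymmetric in the first place). Your closing remark that the level-$1$ assumption is not needed is also consistent with the paper, which states the lemma for general networks.
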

\begin{proof}
Let $v\in V^-(B)$. In particular, $v\neq \rho_B$. Together with Lemma~\ref{lem:unique-max-B} this
implies $v\prec\rho_B$. Together with the fact that $N$ is acyclic this in turn implies that
$\rho_B\preceq v$ is not possible. Hence, $|K_v|\leq m-1$.
\end{proof}

\begin{fact}\label{obs:omega} 
For all non-trivial blocks $B$ we have $\omega(B)\geq 1$. In particular, $\omega(B) = 1$ if and only if $B$ is a triangle. 
\end{fact}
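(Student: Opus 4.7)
The approach is to exploit the explicit formula $\omega(B)=\sum_{v\in V^{-}(B)}\binom{\kappa_v}{2}$ together with Lemma~\ref{lem:unique-min-B}, which guarantees that in a non-trivial block $B$ of a level-$1$ network the hybrid $\eta_B$ is the unique $\preceq_{N}$-minimal vertex of $B$, so that $\eta_B\preceq_{N}v$ for every $v\in V(B)$. The central observation is that for every \emph{internal} vertex $v\in V^{-}(B)=V(B)\setminus\{\rho_B,\eta_B\}$, both $v$ and $\eta_B$ belong to the set $K_v=\{w\in V(B):w\preceq_{N}v\}$, and since $v\neq\eta_B$, this forces $\kappa_v\geq 2$, hence $\binom{\kappa_v}{2}\geq 1$.

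From here the lower bound is immediate: any non-trivial block has at least $m\geq 3$ vertices, so $|V^{-}(B)|=m-2\geq 1$, and summing the inequality $\binom{\kappa_v}{2}\geq 1$ over $V^{-}(B)$ yields $\omega(B)\geq m-2\geq 1$, which establishes the first part of the observation.

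For the characterization of equality I would argue in both directions. If $B$ is a triangle ($m=3$) then $V^{-}(B)=\{v\}$ for the unique internal vertex $v$, whose only $\preceq_{N}$-descendant in $B$ is $\eta_B$, so $K_v=\{v,\eta_B\}$, $\kappa_v=2$, and $\omega(B)=\binom{2}{2}=1$. Conversely, if $m\geq 4$ then $|V^{-}(B)|\geq 2$ and the above per-vertex bound immediately yields $\omega(B)\geq 2>1$, so $\omega(B)=1$ is only possible when $m=3$, i.e.\ when $B$ is a triangle.

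No real obstacle is expected: the entire argument is a one-line consequence of the definition of $\omega$ and the existence of the unique minimal hybrid $\eta_B$ (Lemma~\ref{lem:unique-min-B}). The only point that requires mild care is to make explicit that $v$ and $\eta_B$ are two distinct elements of $K_v$ for every $v\in V^{-}(B)$, which is built into the definition $V^{-}(B)=V(B)\setminus\{\rho_B,\eta_B\}$.
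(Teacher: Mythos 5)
Your proof is correct, and since the paper states Observation~\ref{obs:omega} without any proof, your per-vertex bound $\kappa_v\geq 2$ for $v\in V^-(B)$ combined with $|V^-(B)|=m-2\geq 1$ is exactly the reasoning the authors leave implicit. One caveat is worth flagging: the observation is phrased for \emph{all} non-trivial blocks, and $\omega$ is defined for blocks of arbitrary networks (cf.\ the level-$3$ example in Figure~\ref{fig:omega-leaf-ext}), whereas your appeal to Lemma~\ref{lem:unique-min-B} and to the identity $V^-(B)=V(B)\setminus\{\rho_B,\eta_B\}$ silently restricts the argument to level-$1$ networks, where a unique minimal hybrid $\eta_B$ exists. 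In the general setting the lower bound follows even more directly from the general definition of $V^-(B)$ as the set of non-root vertices having at least one child in $B$: any such $v$ has $\kappa_v\geq 2$ because $v$ and that child both lie in $K_v$, and $V^-(B)\neq\emptyset$ because any $\preceq_N$-minimal vertex of $B$ is a hybrid all of whose parents lie in $B$ (Lemma~\ref{lem:results-B}\eqref{L3.23HSS}), at least one of these parents being distinct from $\rho_B$ and hence in $V^-(B)$. For the ``only if'' direction your step ``$m\geq 4$ implies $|V^-(B)|\geq 2$'' also uses the level-$1$ identity; in general $|V^-(B)|$ can be smaller than $m-2$, and one instead observes that if $V^-(B)=\{v\}$ and $m\geq 4$, then every other non-root vertex is minimal with both parents in $\{\rho_B,v\}$, hence lies in $K_v$, forcing $\kappa_v\geq 3$ and $\omega(B)\geq 3$. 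Since every application of the observation in the paper concerns level-$1$ networks, this is a cosmetic rather than substantive gap, but it should be stated if the observation is to be read at the generality at which it is phrased.
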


\paragraph{\bf Weights of vertices}

We now proceed with defining weights $\phi$ for vertices $v\in V(N)$. Let $0\leq \epsilon\leq 2$. Then, we define:
\[
\phi(v) = \phi_N(v) \coloneqq 
\begin{cases}
	1  & \text{, if }  \mathcal{B}^v(N) = \emptyset  \\
	\epsilon + \sum_{B\in \mathcal{B}^v(N)} 	\omega(B)   &\text{, otherwise.}
\end{cases}
\]

\noindent If the context is clear we write $\phi(v)$ rather than $\phi_N(v)$.  Hence, if $v$ is not the root of any non-trivial block, then $\phi(v)=1$ and, otherwise $\phi(v)$ becomes the sum of $\epsilon$ and the sum of the weights $\omega(B)$ of non-trivial blocks $B$ for which $\rho_B=v$. Note, however, that by Observation~\ref{obs:omega}, $\omega(B)=1$ may be possible in case that $B$ is a triangle.
In particular, if $\mathcal{B}^v(N) = \{B\}$ and $B$ is a triangle, then $\sum_{B\in \mathcal{B}^v(N)}\omega(B)  =1$ which is the same value as assigned e.g.,  to true tree vertices $w$ (where trivially $\mathcal{B}^w(N) = \emptyset$ holds).
To distinguish  between the weights of roots of such triangle blocks and 
those vertices $w$ for which $\mathcal{B}^w(N) = \emptyset$, we employ the extra constant $\epsilon$. Note that by the previous argument, setting $\epsilon = 0$, will lead to the same weight for true tree vertices and tree vertices that are roots of triangles, whereas tree vertices that are roots of larger
or more than two non-trivial blocks  will receive a larger weight. As far as the upper bound, $\epsilon=2$, is concerned, we choose this number primarily for technical reasons (in particular, for the proof of Lemma \ref{lem:max_is_binary} and \ref{lem:exactly_one_block}) as well as to not over-penalize tree vertices contained in non-trivial blocks.

\paragraph{\bf The weighted total cophenetic index}

We are now in the position to generalize the total cophenetic index $\Phi$ from trees to networks by defining a weighted version of it. To this end, we consider relevant vertices in $\relV_N$ and their particular weights. Note that we only consider relevant vertices here as information about the irrelevant vertices is already accounted for in the weights $\omega(B)$ of non-trivial blocks.

\begin{definition}\label{def:wpci}
The \emph{weighted partial (total) cophenetic index (wPCI) $\Phi^*(N)$} 
of a network $N\in \mscr{N}_n$ is defined as

\[ \Phi^*(N) \coloneqq 
\begin{cases}
 \sum\limits_{v\in \relV_N\setminus\{\rho_N\}} \phi(v) \binom{|L_N(v)|}{2} 
                    & \text{, if  $\relV_N\setminus\{\rho_N\}\neq \emptyset$}\\
 \hfill 0 & \text{, otherwise.}
\end{cases}
\]

\noindent
The \emph{weighted (total) cophenetic index (wCI) $\Phi^{**}(N)$} is then defined as
\[\Phi^{**}(N) := \Phi^*(N) + \phi(\rho_N) \binom{n}{2}.\]
\end{definition}
By convention, we assume that $\binom{1}{2}\coloneqq 0$.
As we shall see later, the distinction between the parts of the sums $\Phi^{**}(N)$ into $\Phi^{*}(N)$ and $\phi(\rho_N) \binom{n}{2}$ will become quite handy in upcoming proofs and is also necessary to better distinguish between distinct networks as shown in Figure~\ref{fig:phi-N}(c). 
Further examples of networks and their wCI are provided in Figure\ \ref{fig:partN}.

Note, if $T$ is a phylogenetic tree, then $\child_T^*(v)= \child_T(v)$ for all $v\in V(T)$ and thus, $\relV_T = V(T)\setminus L(N)$, i.e., all inner vertices are relevant. Moreover, $\mc{B}^v = \emptyset$ for all $v\in V(T)$ and thus, all inner vertices $v$ of $T$ have weight $\phi(v)=1$. 
Hence, we obtain

\begin{fact}\label{obs:cp}
For all phylogenetic trees $T\in \mscr{T}_n$, it holds that $\Phi^{*}(T) = \Phi(T)$ and $\Phi^{**}(T) = \Phi(T) + \binom{n}{2}$.
\end{fact}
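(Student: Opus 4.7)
The proof proposal is essentially a straightforward unpacking of definitions, so my plan is to establish a chain of simple identifications and then reduce both sides to the same sum.

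First I would observe that since $T$ is a tree, it contains no (underlying undirected) cycle, and hence $\mathcal{B}(T) = \emptyset$. Consequently $\mathcal{B}^v(T) = \emptyset$ for every $v \in V(T)$, which by definition of $\phi$ yields $\phi_T(v) = 1$ for all $v \in V(T)$. For the same reason, no vertex is contained in any non-trivial block, so the side-condition defining $\child_T^*$ is vacuous and $\child_T^*(v) = \child_T(v)$ for every $v \in V(T)$.

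Next I would identify the relevant vertices. Since $T$ is phylogenetic, property (N2) ensures that every inner vertex has out-degree at least $2$ (treating the trivial one-vertex case separately, where $\mathring{V}(T) = \emptyset$ and both sides of the claimed identities equal $0$). Combined with $\child_T^*(v) = \child_T(v)$, this gives $|\child_T^*(v)| \geq 2$ for every inner vertex $v$, so every inner vertex is relevant; leaves are never relevant (they have no children). Hence $\relV_T = \mathring{V}(T)$, and therefore $\relV_T \setminus \{\rho_T\} = \mathring{V}(T) \setminus \{\rho_T\}$.

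From here the computation is immediate. If $\mathring{V}(T) \setminus \{\rho_T\} = \emptyset$ (i.e., $T$ is a star tree or a single leaf), then $\Phi^*(T) = 0 = \Phi(T)$ directly from the definitions. Otherwise, substituting $\phi_T(v) = 1$ into Definition~\ref{def:wpci} gives
\[
\Phi^*(T) = \sum_{v \in \relV_T \setminus \{\rho_T\}} \phi_T(v)\binom{|L_T(v)|}{2} = \sum_{v \in \mathring{V}(T)\setminus\{\rho_T\}} \binom{|L_T(v)|}{2} = \Phi(T),
\]
where the last equality is the alternative expression for $\Phi$ recalled just after Definition~\ref{def:total_cophenetic_index}. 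Finally, since $\phi_T(\rho_T) = 1$, we obtain
\[
\Phi^{**}(T) = \Phi^*(T) + \phi_T(\rho_T)\binom{n}{2} = \Phi(T) + \binom{n}{2}.
\]
There is no genuine obstacle here; the only mildly delicate point is making sure the degenerate cases ($T$ a single leaf, $T$ a star tree) are treated so that the convention $\binom{1}{2} = 0$ and the \enquote{otherwise} branch of Definition~\ref{def:wpci} are invoked correctly.
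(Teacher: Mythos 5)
Your proposal is correct and follows essentially the same route as the paper, which derives the observation from the preceding remarks that $\mathcal{B}^v(T)=\emptyset$ (hence $\phi_T(v)=1$) and $\child_T^*(v)=\child_T(v)$ (hence $\relV_T=\mathring{V}(T)$) for all $v$, and then substitutes into Definition~\ref{def:wpci}. Your explicit treatment of the degenerate star-tree/single-leaf cases is a harmless elaboration of what the paper leaves implicit.
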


By Observation~\ref{obs:relevant}, the root $\rho_N$ of a phylogenetic network is always relevant. Hence, we obtain

\begin{fact}\label{obs:root-rel}
For all phylogenetic networks $N$ it holds that $\Phi^{**}(N) =\sum\limits_{v\in \relV_N} \phi(v) \binom{|L_N(v)|}{2}$.
\end{fact}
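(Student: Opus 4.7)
The plan is to simply combine Definition~\ref{def:wpci} with the fact that $\rho_N \in \relV_N$, which is guaranteed by Observation~\ref{obs:relevant}(i) for any phylogenetic network on $n\geq 2$ leaves. So the statement is essentially a bookkeeping identity: the term $\phi(\rho_N)\binom{n}{2}$ that is singled out in the definition of $\Phi^{**}$ is exactly the summand corresponding to $v=\rho_N$ in the sum ranging over all of $\relV_N$.

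More concretely, first I would invoke Observation~\ref{obs:relevant} to conclude $\rho_N\in\relV_N$. Next I would recall from Definition~\ref{def:wpci} that
\[
\Phi^{**}(N) \;=\; \Phi^*(N) \;+\; \phi(\rho_N)\binom{n}{2},
\]
and that $\Phi^*(N)=\sum_{v\in\relV_N\setminus\{\rho_N\}}\phi(v)\binom{|L_N(v)|}{2}$ whenever $\relV_N\setminus\{\rho_N\}\neq\emptyset$. Since every leaf of $N$ is a leaf-descendant of the root, $|L_N(\rho_N)|=n$, so $\phi(\rho_N)\binom{n}{2}=\phi(\rho_N)\binom{|L_N(\rho_N)|}{2}$. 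Merging this singleton term with the sum over $\relV_N\setminus\{\rho_N\}$ yields the claimed expression over all of $\relV_N$.

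The only subtlety is the degenerate case $\relV_N\setminus\{\rho_N\}=\emptyset$: here $\Phi^*(N)=0$ by definition, while the sum over $\relV_N$ reduces to the single term $\phi(\rho_N)\binom{n}{2}$, so the identity still holds (and if $n=1$, both sides are $0$ by the convention $\binom{1}{2}=0$). There is no substantial obstacle; the statement is essentially a reformulation of the definition, made possible by the general fact that the root of a phylogenetic network is always relevant.
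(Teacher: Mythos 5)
Your proposal is correct and matches the paper's reasoning exactly: the paper justifies this observation solely by noting (via Observation~\ref{obs:relevant}) that $\rho_N$ is always relevant, so the term $\phi(\rho_N)\binom{n}{2}$ in Definition~\ref{def:wpci} is precisely the $v=\rho_N$ summand of the sum over $\relV_N$. Your handling of the degenerate cases $\relV_N\setminus\{\rho_N\}=\emptyset$ and $n=1$ is a welcome bit of extra care that the paper leaves implicit.
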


 \begin{fact}\label{obs:phi**-lower-bound}
 For every phylogenetic network $N$ it holds that  $\phi(v)\geq 1$ for all non-leaf vertices of $N$. 
 Moreover,  since by Observation~\ref{obs:relevant} the root $\rho_N$ is relevant, Observation\ \ref{obs:root-rel}
 implies that  $\Phi^{**}(N)\geq 1 \cdot \binom{|L(N)|}{2} = \binom{|L(N)|}{2}$.  
 \end{fact}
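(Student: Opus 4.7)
My plan is to prove the two claims in sequence, both by direct application of the definition of $\phi$ together with Observations~\ref{obs:omega}, \ref{obs:relevant}, and \ref{obs:root-rel}; no new machinery is needed.

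First, I would establish the bound $\phi(v) \geq 1$ for every non-leaf vertex $v$ by case analysis on whether $\mathcal{B}^v(N)$ is empty. If $\mathcal{B}^v(N) = \emptyset$, the defining formula yields $\phi(v) = 1$ at once. Otherwise, the definition gives $\phi(v) = \epsilon + \sum_{B \in \mathcal{B}^v(N)} \omega(B)$; since $\epsilon \geq 0$ by assumption, since Observation~\ref{obs:omega} guarantees $\omega(B) \geq 1$ for every non-trivial block $B$, and since $|\mathcal{B}^v(N)| \geq 1$ in this case, the sum alone already contributes at least $1$, so $\phi(v) \geq 1$.

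For the lower bound on $\Phi^{**}(N)$, I would apply Observation~\ref{obs:root-rel} to rewrite $\Phi^{**}(N) = \sum_{v \in \relV_N} \phi(v)\binom{|L_N(v)|}{2}$. Every summand is non-negative, because relevant vertices are non-leaves (they satisfy $|\child_N^*(v)| > 1$, which forces $\outdeg(v) \geq 1$) so $\phi(v) \geq 1$ by the first part, and $\binom{|L_N(v)|}{2} \geq 0$ under the convention $\binom{1}{2} = 0$. By Observation~\ref{obs:relevant}, $\rho_N$ is relevant and hence lies in $\relV_N$. Discarding the remaining non-negative terms and retaining only the one at $\rho_N$ then gives $\Phi^{**}(N) \geq \phi(\rho_N)\binom{n}{2} \geq 1\cdot\binom{n}{2} = \binom{n}{2}$, which is what we wanted.

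Since both claims reduce to unpacking the definition of $\phi$ and the already-established lower bound $\omega(B) \geq 1$, there is essentially no obstacle: the only work is locating the relevant prior observations and noting non-negativity of the summands. In particular, no structural arguments specific to level-$1$ networks, hybrid vertices, or paths enter the proof, which is consistent with the statement being formulated for arbitrary phylogenetic networks.
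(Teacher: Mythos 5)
Your proposal is correct and follows essentially the same route the paper intends: $\phi(v)\geq 1$ falls out of the definition of $\phi$ together with $\epsilon\geq 0$ and Observation~\ref{obs:omega}, and the bound on $\Phi^{**}(N)$ comes from Observation~\ref{obs:root-rel} plus relevance of the root and non-negativity of the remaining summands. The paper states this as an observation without a separate proof, and your unpacking supplies exactly the intended justification.
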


An appealing property of the wCI is that it can be computed efficiently. 
\begin{proposition}\label{prop:algo}
	The weighted total cophenetic index can be computed in linear-time for
   level-$1$ networks and in  $O(|V|^2+|V||E|)$  time for general networks $N=(V,E)$. 
\end{proposition}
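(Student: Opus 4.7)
The plan is to give a modular algorithm for $\Phi^{**}(N)$ and bound its cost separately in the two regimes. Since, by Observation~\ref{obs:root-rel}, $\Phi^{**}(N)=\sum_{v\in \relV_N}\phi(v)\binom{|L_N(v)|}{2}$, it suffices to compute (i) the non-trivial blocks $\mathcal{B}(N)$ together with their roots and, for level-$1$ networks, their hybrids; (ii) the block weights $\omega(B)$; (iii) the vertex weights $\phi(v)$ and the set $\relV_N$ of relevant vertices; and (iv) the descendant-leaf counts $|L_N(v)|$ for every vertex $v$. Only step~(iv) is genuinely sensitive to the structural restriction; the remaining steps each finish in $O(|V|+|E|)$ on any network.

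For step~(i) I would run Tarjan's biconnected-component algorithm on the underlying undirected graph of $N$ in $O(|V|+|E|)$ time; each block's root is its unique $\preceq_N$-maximum (Lemma~\ref{lem:unique-max-B}) and, in the level-$1$ case, its unique $\preceq_N$-minimum (Lemma~\ref{lem:unique-min-B}), both located with a single scan of $V(B)$. For step~(ii), in a level-$1$ block the induced sub-DAG already reflects all $\preceq_N$-relations between its vertices, so a reverse-topological sweep inside each $B$ yields all the $\kappa_v$ in $O(|V(B)|+|E(B)|)$ time; since blocks are edge-disjoint by Observation~\ref{obs:biConn-edge-disjoint}, this sums to $O(|V|+|E|)$ overall, and the $\omega(B)$ follow at once. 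For a general network the same per-block work can be absorbed into the BFS/DFS computations used below. Step~(iii) is then immediate: $\phi(v)=\epsilon+\sum_{B\in\mathcal{B}^v(N)}\omega(B)$ when $\mathcal{B}^v(N)\neq\emptyset$ and $\phi(v)=1$ otherwise, while $|\child_N^*(v)|$ is obtained by scanning the out-edges of $v$ and discarding those entering a block in which $v$ is not the root.

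The principal obstacle is step~(iv). For a general network I would simply launch one BFS/DFS from each $v\in V$ to find its leaf-descendants; at $O(|V|+|E|)$ per vertex this totals $O(|V|^2+|V||E|)$, matching the stated bound. For level-$1$ networks, to keep the whole procedure linear, I would process vertices in reverse topological order and exploit that two children $u_1,u_2$ of the current vertex $v$ can have overlapping leaf-descendant sets only if they lie in a common non-trivial block $B$, in which case Lemma~\ref{lem:results-B}\eqref{L7.8HSS} forces the overlap to equal $L_N(\eta_B)$. Inclusion-exclusion then collapses to the recurrence
\[
|L_N(v)| \;=\; \sum_{u \in \child_N(v)} |L_N(u)| \;-\; \sum_{B} (k_B(v)-1)\,|L_N(\eta_B)|,
\]
where the correction sum ranges over the non-trivial blocks $B$ in which $v$ has $k_B(v)\ge 2$ children (with a small extra term when $v$ has a child that is, via a shortcut edge, a $\preceq_N$-descendant of another child). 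Because each vertex lies in at most one block as a non-root by Lemma~\ref{lem:results-B}\eqref{L3.19HSS}, the total multiplicity of vertex/block incidences is $O(|V|+|E|)$, and the reverse-topological pass runs in linear time. A final sum of $\phi(v)\binom{|L_N(v)|}{2}$ over $\relV_N$ adds $O(|V|)$, delivering the asserted $O(|V|+|E|)$ bound for level-$1$ networks and $O(|V|^2+|V||E|)$ in general.
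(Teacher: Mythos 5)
Your proposal is correct and follows essentially the same route as the paper's proof: biconnected components via depth-first search, per-vertex BFS/DFS for the $|L_N(v)|$ in the general case (giving $O(|V|^2+|V||E|)$), and for level-$1$ networks a reverse-topological pass using exactly the correction term $-\sum_B (k_B(v)-1)\,|L_N(\eta_B)|$ that appears in the paper. Two small remarks: the within-block sweep for the $\kappa_v$ needs the same $\eta_B$-overcount correction (the paper writes it out as $\kappa_w = 1+\sum_{u\in\child_B(w)}\kappa_u-(\outdeg_B(w)-1)$), and your parenthetical \enquote{extra term} for shortcuts is unnecessary, since in a level-$1$ block a shortcut target below $v$ must be the hybrid $\eta_B$ and is therefore already handled by the stated recurrence.
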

\begin{proof}
	Let $N=(V,E)$ be network and thus, in particular, a DAG with a unique root $\rho_N$. Put
	$n\coloneqq |V|$ and $m\coloneqq |E|$. We first determine the non-trivial biconnected components
	of $N$ via depth-ﬁrst search in $O(n+m)$ time \cite{cormen2022introduction} and collect them in
	the set $\mathcal{B}$. Note that, by Observation\ \ref{obs:identical-block}, two distinct blocks
	in $\mathcal{B}$ have at most one vertex in common. Thus $\sum_{B\in \mathcal B} |V(B)| \in O(n)$
	and $\sum_{B\in \mathcal B} |E(B)| \in O(m)$. Each $B\in \mathcal B$ has a unique root $\rho_B$,
	i.e., a vertex of in-degree $0$ in $B$. By Observation\ \ref{obs:relevant}, the vertices $\rho_N$
	and $\rho_B$ for all $B\in \mathcal B$ and every vertex $v$ not contained in any $B\in \mathcal
	B$ with $\outdeg_N(v)>1$ are relevant. Pre-computation of the set $\mathcal B$ allows to
	determine the latter type of relevant vertices (i.e., $\rho_N$, $\rho_B$, and the true tree
	vertices $v$ with $\outdeg_N(v)>1$) in $O(n)$ time and we collect them in the set $\relV$. In
	particular, we can store for each $v\in V$ the set $\mathcal B^v$ of blocks in $B \in \mathcal B$
	for which $v=\rho_B$. 
 
   Put $W \coloneqq V\setminus \relV$. Each edge either belongs to a unique $B\in \mathcal B$ or to
   no non-trivial block of $N$ at all. Hence, we can mark all edges $e$ with a \enquote{$1$}
   precisely if it belongs to some $B\in \mathcal B$ by traversing over each edge in each $B$ in
   $O(n+m)$ time. For the remaining vertices $v\in W$ we must check if $|\child^*_N(v)|>1$. Note
   that none of these vertices are roots of a block $B\in \mathcal B$ since we already collected
   these vertices. One easily verifies that $|\child^*_N(v)|\leq 1$ if \ all out-edges of $v$,
   except for possibly one edge $(v,w)$, are marked with a \enquote{$1$}. Hence, as soon as we
   observe that there are two edges $(v,w)$ and $(v,w')$ not marked with a \enquote{$1$}, then we
   add $v$ to $\relV$. The latter task can be done for all vertices in $W$ in $\sum_{v\in W}
   \outdeg(v)\in O(m)$ time. In this way, we computed $\relV = \relV(N)$ in $O(n+m)$ time. 	

	We finally have to compute $\phi(v)$ for all $v\in \relV$. To this end, first compute, for each
	$B\in \mathcal B$, the weight $\omega(B)$. This requires, for each $w\in V(B)$, the computation
	of $\kappa_w = |K_w|$. To this end, we employ, for each $w\in V(B)$, a depth-first search in $B$
	with start vertex $w$ to obtain its descendants in $B$ and, therefore, $\kappa_w$ in
	$O(|V(B)|+|E(B)|)$ time. Since the latter approach must be repeated for each vertex in $B$, we
	end in an overall time-complexity of $O(|V(B)|(|V(B)|+|E(B)|))$ time to compute $\omega(B)$.
	Since different non-trivial blocks are edge-disjoint and share at most one vertex, computing
	$\kappa_w$ for each vertex $w$ in each block $B\in \mathcal B$ requires $O(n^2+nm)$ time. Based
	on the determined values $\kappa_w$, $\omega(B)$ can be computed in $O(|V(B)|)$ time for each
	$B\in \mathcal B$. Again, since distinct non-trivial blocks share at most one vertex, $\omega(B)$
	can be computed for all $B\in \mathcal B$ in $O(n)$ time. 
 
   We then put, for each $v\in \relV$, $\phi(v) = 1$ in case $\mathcal B^v = \emptyset$ and,
   otherwise, $\phi(v) = \epsilon + \sum_{B\in \mathcal B^v} \omega(B)$. The latter task can be done
   in $|\mathcal B^v|$ time for each $v\in \relV$. Note that $\mathcal{B}^v \cap \mathcal{B}^w =
   \emptyset$ for all distinct $v,w\in V$. Hence, $\mathcal B = \cupdot_{v\in \relV} \mathcal{B}^v$
   and thus, $ \sum_{v\in \relV} |\mathcal B^v| =\sum_{B\in \mathcal B} |B| = O(n)$. In summary,
   $\phi(v)$ can be computed for all $v\in \relV$ in $O(n^2+nm)$ time. By Observation\
   \ref{obs:root-rel}, the weighted total cophenetic index of $N$ can be computed via $\sum_{v\in
   \relV} \phi(v) \binom{|L_N(v)|}{2}$. Thus, we need to determine the values $|L_N(v)|$. Again,
   this can be done via depth-first search for each $v$ in $O(n^2+nm)$ time.

   In summary, the overall complexity to compute the weighted total cophenetic index in general
   networks is dominated by depth-first search for each of its vertices and is, thus, in
   $O(n^2+nm)$.  
 	
	Suppose now that $N$ is level-$1$ network. Computation of $\relV = \relV(N)$
    can be done in $O(n+m)$ as for general networks. However, we can improve the
    runtime to compute the respective values $\omega(B)$ and $|L_N(v)|$. 
    First observe that each $B\in \mathcal B$ has a unique hybrid $\eta_B$ and
    removing $\eta_B$ from $B$ results in a rooted tree. Thus,
    we can set, for each $B\in \mathcal B$, first  $\kappa_{\eta_B}=1$ and
	afterwards traverse $B$ in postorder which ensures that whenever we reach 
	a vertex $w\in V(B)$ that all its children in $B$ have been processed and we
	put $\kappa_{w} = 1+ (\sum_{u\in \child_B(w)} \kappa_u) - (\outdeg_B(w)-1)$. 
    Removing the term $(\outdeg_B(u)-1)$ is necessary to avoid 
    multiple counts of $\eta_B$ as descendant which is counted already once for 
    each of the $\outdeg_B(u)$ children of $u$ (cf. Lemma~\ref{lem:results-B}\eqref{L7.5HSS}).
    The postorder traversal takes $O(|V(B)| + |E(B)|)$ time. The computation of $\kappa_{w}$
	takes $O(\outdeg_B(w))$ time for each vertex $w$. Hence, the time needed to
	compute $\kappa_{w}$ for all $w\in V(B)$ is in $O(\sum_{w\in V(B)} \outdeg_B(w))
	= O(|E(B)|)\subseteq O(|V(B)| + |E(B)|)$ for each block $B\in \mathcal B$. 
    By similar arguments as in the previous case, computing
	$\kappa_w$ for each vertex $w$ in each block $B\in \mathcal B$ requires
	$O(n+m)$ time and, afterwards,  $\omega(B)$ can be
	computed for all $B\in \mathcal B$ in $O(n)$ time.

    To compute, for each $v\in \relV$, the value $\phi(v)$, we give now a more efficient
    way to compute $|L_N(v)|$ for $v\in \relV$.  To this end, we put $l(v)=1$ for each
	$v\in L(N)$. We then traverse $N$ in post-order and put, for each visited vertex 
    $v$, $l(v)=\sum_{u\in \child_N(v)} l(u)$ if $v$ is a true tree vertex or
    a hybrid but not the root of some non-trivial block,  and, otherwise, 
    if $v$ is the root
    or an internal vertex of a non-trivial block $B$, we 
    we collect all these blocks in the set $\mathcal{B}^*(v)$ and
    put 
    $l(v)=\sum_{u\in\child_N(v)} l(u) - \sum_{B\in \mathcal{B}^*(v)}(\outdeg_B(v)-1)\cdot l(\eta_B)$ for each visited vertex $v$. 
    Removing the term $(\outdeg_B(v)-1)\cdot l(\eta_B)$ for each $B$ in $\mathcal{B}^*(v)$ is necessary to avoid 
    multiple counts of $l(\eta_B)$ which is counted already once for 
    each of the $\outdeg_B(v)$ children of $v$ in $B\in \mathcal{B}^*(v)$ (cf. Lemma~\ref{lem:results-B}\eqref{L7.8HSS}).
    It is easy  to see that
	$l(v)=|L_N(v)|$ is correctly computed in this manner. While post-order
	traversal of $N$ can be done in $O(n+m)$ time, the computation of $|L_N(v)|$
	for all $v\in V$ can be done in $O(\sum_{v\in V} \deg(v)) = O(m)$ time. 
    This in turn, allows to compute $\phi(v)$ for all $v\in \relV$
    in $O(n+m)$ time. 
    
    In summary, the overall complexity to compute the weighted total cophenetic
	index of a level-$1$ network $N$ is dominated by the complexity of the post-order traversal of $N$ and
	is, thus, in $O(n+m)$.  
\end{proof}

As a simple consequence of Observation~\ref{obs:cp} and the results provided in \cite{Mir2013}, we obtain 

\begin{proposition}\label{prop:cp-properties}
The following properties are satisfied for  phylogenetic trees.
\begin{enumerate}  \setlength{\itemsep}{2pt}
\item The phylogenetic trees in $ \mscr{T}_n$ with maximum weighted (partial) cophenetic index are exactly the caterpillars, and the maximum is $\binom{n}{3}$ for $\Phi^*$ and $\binom{n+1}{3}$ for $\Phi^{**}$.
\item The phylogenetic trees in $ \mscr{T}_n$ with minimum weighted (partial) cophenetic index are exactly the star trees, and the minimum is $0$ for $\Phi^*$ and $\binom{n}{2}$ for $\Phi^{**}$.
\item If $T\in \mscr{T}_n$ is binary, then $\Phi^{*}(T)$ and $\Phi^{**}(T)$ are minimum among all binary phylogenetic trees in $\mscr{T}_n$ if and only if $T$ is maximally balanced. The minimum is $\sum\limits_{k=1}^{n-1} a(k)$ for $\Phi^{*}$ and $\sum\limits_{k=1}^{n-1} a(k) + \binom{n}{2}$ for $\Phi^{**}$, where $a(k)$ is the highest power of 2 that divides $k!$. 
\end{enumerate}
\end{proposition}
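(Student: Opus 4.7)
The plan is to reduce each of the three claims to known results about the classical total cophenetic index $\Phi$ by invoking Observation~\ref{obs:cp}, which asserts that $\Phi^{*}(T)=\Phi(T)$ and $\Phi^{**}(T)=\Phi(T)+\binom{n}{2}$ for every $T\in \mathscr{T}_n$. Since $\binom{n}{2}$ is a constant that depends only on $n$, the three extremal problems for $\Phi^{*}$ and $\Phi^{**}$ over $\mathscr{T}_n$ (or its binary subclass) are equivalent to the corresponding extremal problems for $\Phi$, with the extremal trees preserved and the extremal values shifted by $\binom{n}{2}$.

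For part (1), I would quote the result of Mir, Rossell\'o and Rotger (\cite{Mir2013}, Proposition~3) that the caterpillar $\mathrm{T}_n^{cat}$ is the unique tree in $\mathscr{T}_n$ maximizing $\Phi$, with $\Phi(\mathrm{T}_n^{cat})=\binom{n}{3}$. Then by Observation~\ref{obs:cp} the unique maximizer of $\Phi^{*}$ and of $\Phi^{**}$ over $\mathscr{T}_n$ is again $\mathrm{T}_n^{cat}$, with values $\Phi^{*}(\mathrm{T}_n^{cat})=\binom{n}{3}$ and $\Phi^{**}(\mathrm{T}_n^{cat})=\binom{n}{3}+\binom{n}{2}=\binom{n+1}{3}$, where the last equality is Pascal's identity. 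For part (2), I would similarly invoke (\cite{Mir2013}, Proposition~2) that star trees uniquely minimize $\Phi$ with value $0$, so $\Phi^{*}(\mathrm{T}^{star}_n)=0$ and $\Phi^{**}(\mathrm{T}^{star}_n)=\binom{n}{2}$, and these are the unique minima.

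For part (3), I would quote (\cite{Mir2013}, Corollary~5 / Proposition~7), which states that among binary trees in $\mathscr{T}_n$, the maximally balanced tree is the unique minimizer of $\Phi$ with value $\sum_{k=1}^{n-1} a(k)$, where $a(k)$ denotes the 2-adic valuation of $k!$. Applying Observation~\ref{obs:cp} one more time gives the stated minima $\sum_{k=1}^{n-1} a(k)$ for $\Phi^{*}$ and $\sum_{k=1}^{n-1} a(k) + \binom{n}{2}$ for $\Phi^{**}$, and uniqueness of the minimizer is transferred directly.

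Since this proposition is essentially a direct translation of Mir et al.'s theorems via Observation~\ref{obs:cp}, there is no substantial obstacle; the only non-routine step is verifying the Pascal identity $\binom{n+1}{3}=\binom{n}{3}+\binom{n}{2}$ in part (1). Care should also be taken with the degenerate case $n=1$ (where the star tree consists of a single vertex and the sums are empty, so both $\Phi^{*}$ and $\Phi^{**}$ vanish consistently with the formulas under the convention $\binom{1}{2}=0$).
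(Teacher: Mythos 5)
Your proposal is correct and follows essentially the same route as the paper: both reduce all three claims to the known extremal results for the classical total cophenetic index $\Phi$ from \cite{Mir2013} and then transfer them to $\Phi^{*}$ and $\Phi^{**}$ via Observation~\ref{obs:cp}, using that the two indices differ only by the constant $\binom{n}{2}$ on trees. Your explicit check of the identity $\binom{n+1}{3}=\binom{n}{3}+\binom{n}{2}$ and the remark on the degenerate case $n=1$ are welcome additions but do not change the argument.
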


\begin{proof}
To see Conditions (1)-(3), first note that $\Phi(T)$ is maximized (respectively, minimized) among phylogenetic trees in $\mscr{T}_n$ if $T$ is a caterpillar (respectively, star tree), in which case $\Phi(T) = \binom{n}{3}$ (respectively, $\Phi(T) = 0$) and that $\Phi(T)$ is minimum among all binary phylogenetic trees in $\mscr{T}_n$ if and only if $T$ is maximally balanced (cf. \cite{Mir2013}). 
Since $\Phi(T)=\Phi^*(T)$ differs from	$\Phi^{**}(T)$ only by a constant summand $\binom{n}{2}$, Conditions (1)-(3) readily follow. 
\end{proof}

For later reference, we provide 
\begin{lemma}\label{lem:non_separated} 
For every phylogenetic level-$1$ network $N$, there is a separated phylogenetic level-$1$ network $\tilde N$ with 
$L(N) = L(\tilde N)$ and $\Phi^{**}(N)=\Phi^{**}(\tilde N)$. 
\end{lemma}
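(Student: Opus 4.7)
The plan is to construct $\tilde N$ by iteratively \emph{splitting} each hybrid vertex $\eta$ of $N$ with $\outdeg_N(\eta) \geq 2$. Concretely, for such an $\eta$ with children $c_1,\dots,c_k$, introduce a fresh vertex $\eta'$, replace every edge $(\eta,c_i)$ by $(\eta',c_i)$, and add the edge $(\eta,\eta')$. In the resulting network, the new $\eta$ has $\indeg \geq 2$ and $\outdeg = 1$, while $\eta'$ has $\indeg = 1$ and $\outdeg = k$. Each such step strictly decreases the number of hybrids with $\outdeg > 1$, so after finitely many steps we reach a separated network. Hence it suffices to show that a single split preserves the level-$1$ phylogenetic structure, the leaf set, and the value of $\Phi^{**}$.

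For structural preservation, the new edge $(\eta,\eta')$ is a cut edge and hence a trivial block, so no new undirected cycle is created and the non-trivial blocks of $N$ transfer to $\tilde N$: the block $B$ in which $\eta$ was the unique hybrid still contains $\eta$ as its $\preceq$-minimal vertex (recall that $\eta$'s out-edges always left $B$), and every block $B'$ of $N$ rooted at $\eta$ becomes a block of $\tilde N$ rooted at $\eta'$, isomorphic to $B'$ via the relabeling $\eta \mapsto \eta'$. By Lemma~\ref{lem:results-B}(1), these are the only non-trivial blocks containing $\eta$. Condition (N2) is preserved since $\indeg_{\tilde N}(\eta)\geq 2$ and $\outdeg_{\tilde N}(\eta')=k\geq 2$, and clearly $L(\tilde N)=L(N)$.

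The main task is to show $\Phi^{**}(\tilde N)=\Phi^{**}(N)$. The key equalities are $L_{\tilde N}(\eta') = L_N(\eta)$ and $\mathcal{B}^{\eta'}(\tilde N) = \mathcal{B}^{\eta}(N)$, with identical $\omega$-values since each relevant $\kappa_v$ is computed from the edges within the block, which are preserved up to relabeling. On the other hand $\mathcal{B}^{\eta}(\tilde N) = \emptyset$, so $\phi_{\tilde N}(\eta') = \phi_N(\eta)$ while $\phi_{\tilde N}(\eta) = 1$. A short case check, using that $\eta$ is $\preceq$-minimal in its containing block (so none of $c_1,\dots,c_k$ lies in that block), gives $\child^*_N(\eta) = \{c_1,\dots,c_k\} = \child^*_{\tilde N}(\eta')$ and $\child^*_{\tilde N}(\eta) = \{\eta'\}$. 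Therefore $\eta'$ is relevant in $\tilde N$ iff $\eta$ is relevant in $N$ (both iff $k>1$), and $\eta$ is never relevant in $\tilde N$. For every other vertex $v$, the quantities $L(v)$, $\mathcal{B}^v$, $\child^*(v)$, $\phi(v)$ and relevance coincide in $N$ and $\tilde N$. Matching the summand of $\eta$ in $\Phi^{**}(N)$ with that of $\eta'$ in $\Phi^{**}(\tilde N)$ yields the equality.

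The main obstacle will be the bookkeeping around $\eta$ when it is simultaneously $\preceq$-minimal in one block and the root of one or more further blocks: one must verify that these extra blocks migrate cleanly to $\eta'$ without altering any $\omega$-value and without creating new relevant vertices elsewhere. Once this single-step claim is established, induction on the number of hybrids with out-degree exceeding one completes the proof.
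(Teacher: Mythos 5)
Your proposal is correct and follows essentially the same route as the paper: the identical splitting operation (insert $\eta'$ below each hybrid $\eta$ of out-degree at least two, rerouting $\eta$'s out-edges through $\eta'$), the same bookkeeping showing $\phi$, $L(\cdot)$, and relevance transfer from $\eta$ to $\eta'$ while everything else is untouched, and the same iteration over the (pairwise block-disjoint) hybrids to reach a separated network. The point you flag as the main obstacle — blocks rooted at $\eta$ migrating to $\eta'$ — is handled implicitly in the paper via the claim $\phi_N(\eta)=\phi_{N'}(\eta')$, so your treatment is, if anything, slightly more explicit.
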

\begin{proof}
Suppose that $N$ is a phylogenetic level-$1$  network. If $N$ is already separated, we are done. Suppose that $N$ is not separated. Hence, $N$ contains a hybrid vertex $\eta$ with 
$\child_N(\eta) = \{x_1,\dots, x_m\}$, $m\geq 2$ and thus, $\outdeg_N(\eta)=m\geq 2$. 
Construct now a network $N'$ by removing all edges $(\eta,x_1),\ldots,(\eta,x_m)$ from $N$ and, afterwards, 
adding a new vertex $\eta'$ and edges $(\eta,\eta')$ as well as the edges $(\eta',x_1),\ldots,(\eta',x_m)$.
Clearly, $L(N) = L(N')$ holds. Moreover, 
by \cite[Lemma~5 \& Theorem~8]{HSS:22}, $N'$ remains a  phylogenetic level-$1$ network.

Since $N$ is a level-$1$ network, the vertex $\eta$ cannot be an internal vertex  of any other non-trivial block of $N$ (cf.\ Lemma \ref{lem:results-B}\eqref{L3.23HSS}). 
By Definition\ \ref{def:child*}, $\child^*_N(\eta) = \child_N(\eta)$ and therefore, $\child^*_N(\eta) = m>1$. Hence, $\eta$ is relevant in $N$. 
Moreover, by construction, $|\child_{N'}(\eta)|=1$ and thus, $\eta$ is not relevant in $N'$. Nevertheless, we have 
$\child_{N'}(\eta') = \{x_1,\dots, x_m\}$ and, by similar arguments, $\eta'$ is relevant in $N'$. 
Moreover, \cite[Lemma~5]{HSS:22} implies that $L_N(\eta) = L_{N'}(\eta')$
and $L_N(v) = L_{N'}(v)$ for all $v\in V(N)$.  
One easily verifies also that $\phi_{N}(\eta)=\phi_{N'}(\eta')$ and 
$\phi_{N}(v)=\phi_{N'}(v)$ for all $v\in V(N)\setminus\{\eta\}$. 
In particular, since all $v\in V(N)\setminus\{\eta\}$ remain unaffected in the construction
of $N'$, it is a straightforward task to verify that $v\in V(N)\setminus\{\eta\}$ is
relevant in $N$ if and only if $v$ is relevant in $N'$. 
Taking the last arguments together, we obtain $\Phi^{**}(N)=\Phi^{**}(N')$. 

Note that in a level-$1$ network distinct hybrid vertices must be contained in different
non-trivial blocks. Hence, we can independently apply the latter arguments  to 
all hybrid vertices of $N'$ that have out-degree distinct from $1$ until
we end in a separated phylogenetic level-$1$ network $\tilde N$ with 
$L(N) = L(\tilde N)$ and $\Phi^{**}(N)=\Phi^{**}(\tilde N)$. 
\end{proof}

 It is clear that the latter arguments can be used to also show that for each 
 separated phylogenetic level-$1$ network $\tilde N$ there is a  phylogenetic level-$1$ network $N$
 for which each hybrid vertex has outdegree greater than one (except in case it is adjacent to a leaf)
 and such that $L(N) = L(\tilde N)$ and $\Phi^{**}(N)=\Phi^{**}(\tilde N)$. 
In other words, the weighted total cophenetic index cannot distinguish between
 phylogenetic level-$1$ networks that are isomorphic up to \enquote{contraction} of  single edges $(\eta_B,v)$ 
 as well as \enquote{expansion} of $\eta_B$ to a single edge $(\eta_B,\eta'_B)$ 
  where $\outdeg_N(\eta_B)>1$ and $\eta_B$ is a hybrid. 
  One may address this issue by considering an additional parameter $\epsilon'$ added for hybrids or for 
  vertices whose parent is a hybrid. 
  
Note that Lemma \ref{lem:non_separated} does not hold for arbitrary networks in general. To see this,  
suppose that $B$ is a non-trivial block in a network $N$ that contains 
two hybrid vertices $u,v$ such that $u\prec_B v$ and 
$\outdeg_B(v)>1$. Using the approach as proposed in 
the proof of Lemma \ref{lem:non_separated} on vertex $v$ would increase the number of internal vertices in $B$ and, eventually
change $\omega(B)$ yielding a network $N'$ for which $\Phi^{**}(N')\neq \Phi^{**}(N)$ may be possible.

\section{Locality and recursiveness of the weighted total cophenetic index} \label{APPX:sec:otherproperties}
In order to investigate the structure of phylogenetic level-$1$ networks that maximize or minimize the weighted total cophenetic index $\Phi^{**}$, it will be helpful to have a mechanism that \enquote{controls} the replacement of local networks. In particular, as we shall see, $\Phi^{**}$ is local in the sense that if two phylogenetic level-$1$ networks $N_1$ and $N_2$ differ only in some well-defined local subnetworks $N'_1$ and $N'_2$, then the difference $\Phi^{**}(N_1)-\Phi^{**}(N_2)$ can be expressed in terms of the differences of $\Phi^*(N'_1)$ and $\Phi^*(N'_2)$, and the weights of the roots of $N'_1$ and $N'_2$ only. Moreover, $\Phi^{**}$ can always be expressed as a sum of $\Phi^*(N')$ for networks $N'\subsetneq N$ rooted at well-chosen relevant vertices in $N$. These two properties are often referred to as \enquote{locality} and \enquote{recursiveness} in the literature, and are indeed satisfied by the total cophenetic index on trees \cite[Lemma 4, Lemma 5]{Mir2013}. 
To provide the main results of this section, we first  introduce the notion of relevant neighbors and relevant-vertex-free paths.

\begin{definition}\label{def:IRfree}
A $uv$-path in $N$ whose internal vertices (i.e., vertices that are distinct from $u$ and $v$) are not relevant is \emph{internal relevant-vertex-free (IR-free)}.
\end{definition}

Note that every edge $(u,v)$ is trivially an IR-free path.
Moreover, if $v$ is an inner vertex in $N$ that is not contained in a non-trivial block of $N$,
then its children are either leaves, true tree vertices, or roots of non-trivial blocks. 
Hence, all non-leaf children of $v$ are relevant. The latter can be rephrased to:
any relevant vertex $u\prec_N v$ for which there is no
relevant vertex $u'\prec_N v$ with $u\prec_N u'$ must be a child of $v$.

\begin{definition}\label{def:rel-neighbor}
Let $v,u\in V(N)$ be such that $u\prec_N v$ and let $\mathcal{B}$ be the set of non-trivial blocks of $N$ that contain $v$.
 Then, $u$ is a \emph{relevant neighbor of $v$} if $u$ is relevant and precisely one of the following conditions is satisfied:
\begin{enumerate}
    \item $v$ is not contained in a non-trivial block of $N$ and $u$ is child of $v$
    \item $v$ is contained in a non-trivial block of $N$ and precisely one condition holds 
        \begin{enumerate}
            \item $u$ is contained in some $B\in \mathcal{B}$ as well; or 
            \item $u$ is not contained in any $B \in \mathcal{B}$ and
            			there is an IR-free $wu$-path such that $w\in B$ for some $B\in \mathcal{B}$ and $w$ is not relevant in $N$ and $w\prec_N v$. 
            \item $u$ is not contained in any $B \in \mathcal{B}$ and $u$ is a child of $v$. 
        \end{enumerate}
    \end{enumerate}
The \emph{relevant neighborhood $\relN(v)$ of $v$} is the set of its relevant neighbors (see Figure \ref{fig:relevant_neighbors} for an example).
\end{definition}
Note that $u$ being a relevant neighbor of $v$ does not necessarily imply that $u$ and $v$ are
adjacent. Moreover, observe that Conditions (a), (b) and (c) in Definition\ \ref{def:rel-neighbor}(2) are mutually exclusive, since Condition (a) 
does not cover any of the properties in Condition (b) and (c) while 
Condition (b) covers the case that $u$ and $v$ are not adjacent and (c) the case that $u$ and $v$ are adjacent.

We are now in the position to provide the two main results (Theorem\ \ref{thm:local-phi} and \ref{thm:sum-up-part}) of this section. While Theorem\ \ref{thm:local-phi} ensures locality, Theorem\ \ref{thm:sum-up-part} ensures recursiveness of the  weighted total cophenetic index. The proofs of these results are provided after we have established some additional technical results.

\begin{theorem}\label{thm:local-phi}
Let $N \in \mathscr{N}_n$ be a phylogenetic level-$1$ network and let $v$ be a relevant vertex in $N$. Moreover, let $N'$ be the phylogenetic level-$1$ network obtained from $N$ by replacing $\partN(v)$ by a phylogenetic level-$1$ network $\tilde N$ with $L(\tilde N) = L(\partN(v))$. Then, \[ \Phi^{**}(N)-\Phi^{**}(N') =  \left(\Phi^{*}(\partN(v))-\Phi^{*}(\tilde N)\right) + \left((\phi_N(v)-\phi_{N'}(\rho_{\tilde N})) \binom{|L(v)|}{2}\right), \]
where $L(v) \coloneqq L_N(v)=L_{N'}(v)$.
\end{theorem}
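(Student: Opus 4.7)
The strategy is to expand both sides of the desired identity via Observation~\ref{obs:root-rel} and split each sum over relevant vertices according to its location relative to $\partN(v)$. I would dispose first of the trivial case $v = \rho_N$: then $\partN(v) = N$ by Definition~\ref{def:relevant-part}(a), hence $N' = \tilde N$ under the identification $v \leftrightarrow \rho_{\tilde N}$, and the identity reduces to the definitional splitting $\Phi^{**}(\cdot) = \Phi^*(\cdot) + \phi(\rho_\cdot)\binom{n}{2}$. For $v \neq \rho_N$, set $V_{\mathrm{out}} := V(N) \setminus V(\partN(v))$ (so $\rho_N \in V_{\mathrm{out}}$, and the same set equals $V(N') \setminus V(\tilde N)$ under the natural identification), $V_{\mathrm{in}} := V(\partN(v)) \setminus \{v\}$, and $V_{\mathrm{in}}' := V(\tilde N) \setminus \{\rho_{\tilde N}\}$. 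Both $\Phi^{**}(N)$ and $\Phi^{**}(N')$ then split as a sum over $V_{\mathrm{out}}$, a boundary term from $v$ (resp.\ $\rho_{\tilde N}$), and a sum over $V_{\mathrm{in}}$ (resp.\ $V_{\mathrm{in}}'$).

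\textbf{The three invariances to verify.} (i) For every $u \in V_{\mathrm{out}}$, $u \in \relV_N \iff u \in \relV_{N'}$, $|L_N(u)| = |L_{N'}(u)|$, and $\phi_N(u) = \phi_{N'}(u)$. The leaf-count is preserved since $L(\tilde N) = L(\partN(v))$ keeps all leaves of $N$ below $u$ intact; the weight $\phi$ is preserved because every non-trivial block rooted at $u$ is either entirely in $V_{\mathrm{out}}$ (literally unchanged) or --- only possible in case~(b) of Definition~\ref{def:relevant-part} and only for the unique block $B$ containing $v$ as a non-root vertex --- retains $v$ as an internal vertex in $N'$ so that $B$ is preserved verbatim. (ii) For every $u \in V_{\mathrm{in}}$, $u \in \relV_N \iff u \in \relV_{\partN(v)}$, with $\phi_N(u) = \phi_{\partN(v)}(u)$ and $|L_N(u)| = |L_{\partN(v)}(u)|$; this reduces to showing that every child of $u$ in $N$ already lies in $V(\partN(v))$, so that descendants, leaves, and blocks rooted at $u$ agree in $N$ and $\partN(v)$. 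The analogous statement for $u \in V_{\mathrm{in}}'$ in $N'$ vs.\ $\tilde N$ is symmetric. Consequently the $V_{\mathrm{in}}$-sum equals $\Phi^*(\partN(v))$ and the $V_{\mathrm{in}}'$-sum equals $\Phi^*(\tilde N)$. (iii) $|L_N(v)| = |L_{N'}(\rho_{\tilde N})| = |L(v)|$, immediate from $L(\tilde N) = L(\partN(v))$ and the definition of $\partN(v)$.

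\textbf{Assembly and main obstacle.} Putting things together: invariance~(i) cancels the $V_{\mathrm{out}}$-contributions pairwise, invariance~(ii) converts the inside sums into $\Phi^*(\partN(v))$ and $\Phi^*(\tilde N)$, and invariance~(iii) collapses the boundary term to $(\phi_N(v) - \phi_{N'}(\rho_{\tilde N}))\binom{|L(v)|}{2}$, yielding the claimed identity. The real work sits in invariance~(i), specifically establishing that a non-trivial block rooted at some $u \in V_{\mathrm{out}}$ can meet $V(\partN(v))$ only at the single vertex $v$ (and only in case~(b) of Definition~\ref{def:relevant-part}). This relies crucially on Lemma~\ref{lem:results-B}\eqref{L3.19HSS} (unique block-membership for non-root vertices), on the level-$1$ hypothesis, and on the explicit exclusion in Definition~\ref{def:relevant-part}(b) of all $\preceq_N$-descendants of $B$-vertices strictly below $v$. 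The parallel structural claim in invariance~(ii) --- that children of vertices in $V_{\mathrm{in}}$ stay in $V(\partN(v))$ --- has the same flavor and leans on the same tools, but is slightly less delicate once (i) is in hand.
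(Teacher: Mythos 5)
Your proposal follows essentially the same route as the paper: the paper also splits the sum over relevant vertices into those outside $V(\partN(v))\setminus\{v\}$, those inside $\partN(v)$, and the boundary vertex $v$, cancels the outside contributions via the invariance of relevance, leaf counts and $\phi$-weights (Lemma~\ref{lem:replacement-partN-1}), and identifies the inside sums with $\Phi^*(\partN(v))$ and $\Phi^*(\tilde N)$ via Corollary~\ref{cor:relevant-partN}. You have also correctly located the crux in the structural fact that a non-trivial block can meet $\partN(v)$ only entirely or in the single vertex $v$, which is precisely Lemma~\ref{lem:blocks-in-partN}(2) in the paper.
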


\begin{theorem}\label{thm:sum-up-part}
Let $N\in \mathscr{N}_n$ be a phylogenetic level-$1$ network and let $v_1,\dots,v_k$ be the relevant neighbors of the root $\rho_N$ in $N$. Let $\phi(\rho_N)\coloneqq \phi_N(\rho_N)$, $\phi(v_i) \coloneqq \phi_N(v_i)$, $n_i\coloneqq |L_N(v_i)|$, and $\partN_i\coloneqq \partN(v_i)$, $1\leq i\leq k$. Then, \[\Phi^{**}(N) = 
\sum_{i=1}^k \Phi^{**}(\partN_i) + \phi(\rho_N)\binom{n}{2} =\sum_{i=1}^k \left(\Phi^*(\partN_i) + \phi(v_i)\binom{n_i}{2}\right) + \phi(\rho_N)\binom{n}{2}. \]
\end{theorem}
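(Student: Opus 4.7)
The plan is to reduce the statement to the identity
\[\Phi^{*}(N) \;=\; \sum_{i=1}^{k}\!\left(\Phi^{*}(\partN_i) + \phi(v_i)\binom{n_i}{2}\right),\]
since $\Phi^{**}(N)=\Phi^{*}(N)+\phi(\rho_N)\binom{n}{2}$ by Definition~\ref{def:wpci}. The equality between the two right-hand expressions in the theorem then follows from applying the same definition to each $\partN_i$, combined with the observation that $\phi_{\partN_i}(v_i) = \phi_N(v_i)$: every block $B\in \mathcal{B}^{v_i}(N)$ lies entirely in $\partN_i$ because $V(B)\preceq_N v_i$ by Lemma~\ref{lem:unique-max-B}, so $\mathcal{B}^{v_i}(\partN_i) = \mathcal{B}^{v_i}(N)$ and the weights $\omega(B)$ coincide.

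The technical core is a partition lemma: the sets $\{v_1,\ldots,v_k\}$ together with $\relV_{\partN_i}\setminus\{v_i\}$ for $i=1,\ldots,k$ form a partition of $\relV_N\setminus\{\rho_N\}$, and every $u\in \relV_{\partN_i}\setminus\{v_i\}$ satisfies $\phi_N(u)=\phi_{\partN_i}(u)$ and $L_N(u)=L_{\partN_i}(u)$. Assuming this, the identity follows by expanding both sides as sums over relevant vertices and matching contributions: each root $v_i$ contributes $\phi(v_i)\binom{n_i}{2}$ with the original $N$-leaf count $n_i=|L_N(v_i)|$, while the remaining relevant vertices of $\partN_i$ contribute exactly $\Phi^{*}(\partN_i)$ because their weights and leaf-descendants do not change when passing from $N$ to $\partN_i$.

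To prove the partition lemma I would case-split on whether $\rho_N$ is contained in a non-trivial block, following Definition~\ref{def:rel-neighbor}. When $\rho_N$ is not in any non-trivial block, the $v_i$'s are exactly the relevant children of $\rho_N$, $\partN_i = N(v_i)$, and disjointness follows because a common descendant of two distinct $v_i,v_j$ would, by Lemma~\ref{lem:results-B}\eqref{L7.8HSS}, force $v_i$ and $v_j$ into a common non-trivial block, which would then have to contain their common parent $\rho_N$, contradicting the assumption. When $\rho_N$ roots some non-trivial blocks $\mathcal{B}$, the relevant neighbors come from the sub-cases (2)(a)--(c) of Definition~\ref{def:rel-neighbor}; for any $u\in \relV_N\setminus\{\rho_N\}$ I would identify the unique ``first relevant vertex'' $v_i$ reached by descending from $\rho_N$ towards $u$, using the level-$1$ structure together with Lemma~\ref{L7.9HSS} to rule out ambiguity, and Lemma~\ref{lem:idempotent} to confirm that $\partN(v_i)$ is the natural domain for $u$.

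The main obstacle is the bookkeeping in case (2)(b): one must show that each relevant vertex reached via an IR-free path from an irrelevant internal vertex of a block $B\in\mathcal{B}^{\rho_N}$ is uniquely assigned to some $v_i$, and that the contribution of $B$ (already encoded in $\phi(\rho_N)=\epsilon+\sum_{B\in\mathcal{B}^{\rho_N}}\omega(B)$) is not inadvertently duplicated inside any $\Phi^{*}(\partN_i)$. Here the preservation identity $L_N(u)=L_{\partN_i}(u)$ for $u\in\relV_{\partN_i}\setminus\{v_i\}$ must be checked carefully, using Lemma~\ref{lem:results-B}\eqref{L7.8HSS} to rule out that such a $u$ has leaf-descendants in $L_N(\eta_B)$ that would be pruned away when forming $\partN_i$. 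Once these points are settled, the theorem follows by a routine rearrangement of sums.
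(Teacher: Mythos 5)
Your proposal is correct and follows essentially the same route as the paper: reduce to the identity for $\Phi^{*}$, establish that the relevant-vertex sets of the $\partN(v_i)$ partition $\relV_N\setminus\{\rho_N\}$ via a case split on whether $\rho_N$ lies in a non-trivial block, and verify that $\phi$ and the leaf-descendant sets are preserved when passing from $N$ to $\partN_i$. The details you defer (pairwise disjointness of the $\partN(v_i)$, and tracing each relevant vertex to a unique relevant neighbor of $\rho_N$, including the (2)(b) bookkeeping) are exactly the content of the paper's Lemma~\ref{lem:disjoint-relN}, Corollary~\ref{cor:relevant-partN}, and Lemma~\ref{lem:blocks-in-partN}, and your sketched arguments for them match the paper's.
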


\begin{figure}[ht!]
\centering
\includegraphics[width=.8\textwidth]{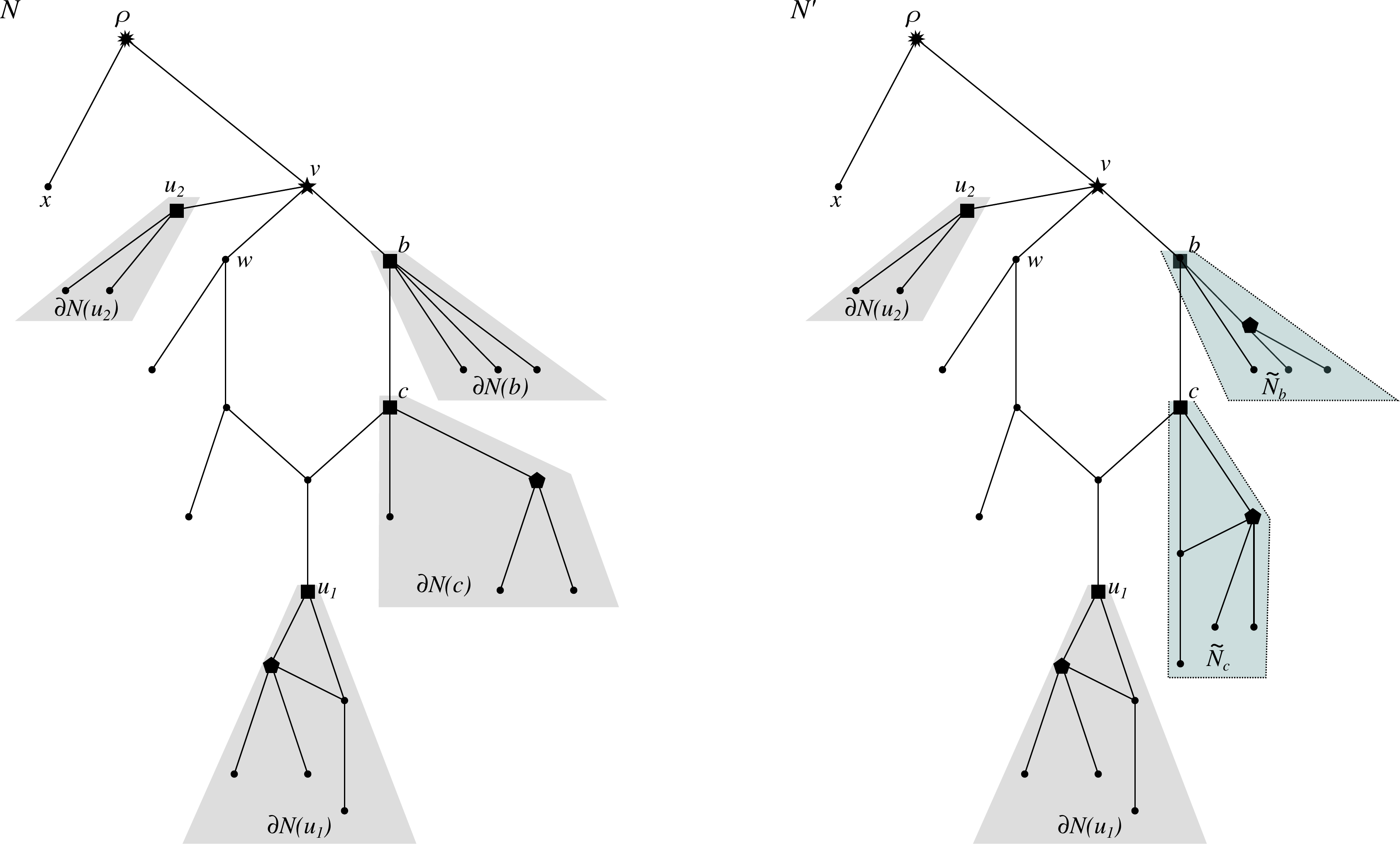}
\caption{
An example to illustrate Definition~\ref{def:rel-neighbor} of relevant neighbors
as well as Theorems~\ref{thm:local-phi} and \ref{thm:sum-up-part}. All
relevant vertices in $N$ and $N'$ are highlighted with a symbol different from
$\bullet$. 
\newline 
Consider first the network $N$ (left).  Since $\rho$ is not contained in a non-trivial
block and $v$ is relevant and a child of $\rho$, the vertices $\rho$ and $v$
satisfy Definition~\ref{def:rel-neighbor}(1). Moreover, $x$ is irrelevant in $N$ since it is a leaf. 
Hence, $\relN_N(\rho)=\{v\}$. Since $v$
is contained in a non-trivial block $B$, we need to employ Definition~\ref{def:rel-neighbor}(2) to detect the relevant neighbors
 of $v$.  To be more precise, the vertices $b$
and $c$ are relevant and contained in the same block $B$ as $v$ and thus satisfy
Definition~\ref{def:rel-neighbor}(2.a). The vertex $u_1$ satisfies  Definition~\ref{def:rel-neighbor}(2.b), since
there is an IR-free $wu_1$-path such that $w\in B$ is not
relevant in $N$ and $w\preceq_N v$. Moreover, $u_2$ is a relevant neighbor of
$\rho$, since it satisfies Definition~\ref{def:rel-neighbor}(2.c).
Since no further relevant vertices $z\prec_N v$ exist, we have $\relN_N(v)=\{b,c,u_1,u_2\}$.
The respective restricted subnetworks $\partN(z)$ of all relevant neighbors $z$ of $v$ are
highlighted in gray-shaded areas. Note that $N(v) = \partN(v)$. According to Corollary~\ref{cor:relevant-partN}, 
the relevant vertices in $N$ that are located in $\partN(v)$ are precisely
the relevant vertices in $\partN(v)$. 
Since $v$ is the only relevant neighbor of $\rho$ and by Theorem~\ref{thm:sum-up-part},
we obtain $\Phi^{**}(N) = (\Phi^{*}(\partN(v)) + \phi_N(v)\binom{13}{2}) + \phi_N(\rho)\binom{14}{2}$.
Moreover, by Theorem~\ref{thm:sum-up-part} (and, in particular, Corollary~\ref{fact:sum-up-part}), 
$\Phi^{*}(\partN(v)) = (\Phi^*(\partN(b)) + \phi_N(b)\binom{|L_N(b)|}{2})
		+(\Phi^*(\partN(c)) + \phi_N(c)\binom{|L_N(c)|}{2})
		+(\Phi^*(\partN(u_1)) + \phi_N(u_1)\binom{|L_N(u_1)|}{2})
		+(\Phi^*(\partN(u_2)) + \phi_N(u_2)\binom{|L_N(u_2)|}{2})
= (0+1\cdot 3) + (1+1\cdot 3) + (1 + (\epsilon + 1)\cdot 3)) + (0+1\cdot 1)		
= 12 + 3\epsilon$.
In addition, since $\phi_N(v)=\epsilon+\omega(B) = \epsilon + 8$
and  $\phi_N(\rho)=1$, we obtain 
$\Phi^{**}(N) = ((12+3\epsilon) + (\epsilon+8)\binom{13}{2}) + \binom{14}{2} = 727 + 81\epsilon$.
\newline
The network $N'$ (right) is obtained from $N$ by replacing  $\partN(b)$ and $\partN(c)$ by the network $\tilde N_b$ and $\tilde N_c$, 
respectively (highlighted by turquoise-shaded areas). According to Theorem~\ref{thm:local-phi}, 
$\Phi^{**}(N)-\Phi^{**}(N') =  (\Phi^{*}(\partN(b))-\Phi^{*}(\tilde N_b)) + ((\phi_N(b)-\phi_{N'}(b)) \binom{3}{2})
				+ (\Phi^{*}(\partN(c))-\Phi^{*}(\tilde N_c)) + ((\phi_N(c)-\phi_{N'}(c)) \binom{3}{2})
				=  (0-1) + ((1-1)\cdot 3) + (1-1)+((1-(\epsilon+1))\cdot 3) = -(1+3\epsilon)$.
}				
\label{fig:relevant_neighbors}
\end{figure}

To formally establish the corresponding statements, we first need some technical lemmas concerning relevant vertices, non-trivial blocks, partial neighborhoods, and their  mutual dependencies.

\begin{lemma}\label{lem:blocks-in-partN}
Let $N\in \mscr{N}_n$ be a level-$1$ network, $v\in V(N)$ be a relevant vertex, $\mathcal{B}'\subseteq \mathcal{B}^v(N)$ and $\child'\subseteq \child^*_{\overline{\mathcal{B}^v}(N)}$. Let $\partN' \coloneqq \partN(v,\mathcal{B}',\child')$ and $\partN \coloneqq \partN(v)$.
Then, the following statements hold:
\begin{enumerate}
    \item $\partN'$ and $\partN$ are level-$1$ networks with root $v$.
    \item All non-trivial blocks in $N$ are either entirely contained in $\partN'$ (respectively, $\partN$) or intersect $\partN'$ (respectively, $\partN$) at most in vertex $v$. 
    \item Every non-trivial block $B$ in $\partN'$ (respectively, $\partN$) satisfies $B\in \mathcal{B}(N)$.
    \item $\phi_N(w) = \phi_{\partN}(w)$ for all $w$ in $\partN$ and $\phi_N(w) = \phi_{\partN'}(w)$ for all $w\neq v$ in $\partN'$.
    \item Suppose that $N$ is phylogenetic. Then, $\partN$ is phylogenetic. Moreover, $\partN'$ is phylogenetic if $|\child'\cup \child^*_{\mathcal{B}'}|\geq 2$.
    \end{enumerate}
\end{lemma}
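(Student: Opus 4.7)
The plan is to prove each part in order, leveraging previous parts and the structural properties of level-$1$ networks collected in Lemma~\ref{lem:results-B}. I would handle $\partN$ and $\partN'$ in parallel, since by Observation~\ref{obs:part-partN} we have $\partN = \partN(v, \mathcal{B}^v(N), \child^*_{\overline{\mathcal{B}^v}})$ whenever $v$ falls under Definition~\ref{def:relevant-part}(a), while case (b) of Definition~\ref{def:relevant-part} requires a slightly separate but analogous argument.

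First I would establish (1). Acyclicity is immediate because both are induced subgraphs of the acyclic $N$. Connectedness follows from construction: every vertex of $\partN'$ is either $v$ or is reachable by a directed path from some $u\in \child^*_{\mathcal{B}'}\cup \child'$, and hence from $v$; an analogous argument works for $\partN$ using either clause of Definition~\ref{def:relevant-part}. The root property $\indeg(v)=0$ in the induced subgraph holds because any parent of $v$ in $N$ strictly dominates $v$ and hence cannot belong to the induced subgraph. For the level-$1$ property, I would invoke (3) (proved below): every non-trivial block of $\partN$ or $\partN'$ coincides with a non-trivial block of $N$ together with its unique hybrid, so the level-$1$ bound is inherited.

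For (2), I would take an arbitrary non-trivial block $B^*\in \mathcal{B}(N)$ and a vertex $w\in V(B^*)\cap V(\partN)$ with $w\neq v$, and show $V(B^*)\subseteq V(\partN)$; the argument for $\partN'$ is parallel. Since $w\in\partN$ and $w\neq v$, we have $w\prec_N v$. By Lemma~\ref{lem:results-B}\eqref{L3.23HSS}, every non-root vertex of $B^*$ has all its parents in $B^*$, which forces any path from $v$ to $w$ to enter $B^*$ at $\rho_{B^*}$; thus $\rho_{B^*}\preceq_N v$, so in case (a) of Definition~\ref{def:relevant-part} all of $B^*$ already sits in $\partN(v)=N(v)$. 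In case (b), where $v$ is contained in a block $B$ with $v\neq\rho_B$, I would additionally invoke Observation~\ref{obs:identical-block} (distinct blocks share at most one vertex) to show that no vertex of $B^*\neq B$ can be a descendant of some $w'\in V(B)\setminus\{v\}$ with $w'\prec_N v$ except in a way that forces all of $B^*$ to remain in $\partN$. For $\partN'$ the same argument applies, with Observation~\ref{obs:dictintBlocks-disjointChild} used to show that blocks in $\mathcal{B}^v(N)\setminus \mathcal{B}'$ intersect $\partN'$ only at $v$.

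Part (3) then follows quickly: given a non-trivial block $B$ of $\partN$ (resp.\ $\partN'$), its vertex and edge sets form a biconnected subgraph of $N$. If $B$ were not a maximal biconnected subgraph of $N$, it would be contained in a larger block $B^*\in\mathcal{B}(N)$, but $B^*$ would share at least two vertices with $\partN$, so by (2) we would have $V(B^*)\subseteq V(\partN)$, contradicting the maximality of $B$ in $\partN$. For (4), by (3) the set of non-trivial blocks of $\partN$ rooted at $w$ is a subset of $\mathcal{B}^w(N)$, and conversely any $B\in \mathcal{B}^w(N)$ lies entirely in $\partN$ by (2), so $\mathcal{B}^w(\partN)=\mathcal{B}^w(N)$; since $\omega(B)$ depends only on the intrinsic structure of $B$, the definition of $\phi$ yields $\phi_{\partN}(w)=\phi_N(w)$. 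The same argument works for $\partN'$ whenever $w\neq v$; the case $w=v$ in $\partN'$ is excluded because in general $\mathcal{B}^v(\partN')=\mathcal{B}'\subsetneq \mathcal{B}^v(N)$.

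Finally, for (5) I would check that no vertex of $\partN$ (resp.\ $\partN'$) violates (N2). For the root $v$, Observation~\ref{obs:relevant} together with case analysis on Definition~\ref{def:relevant-part} gives $\child_{\partN}(v)=\child^*_N(v)$, which has cardinality at least $2$ by relevance; for $\partN'$ the outdegree of $v$ is exactly $|\child'\cup\child^*_{\mathcal{B}'}|\geq 2$ by hypothesis. For any other vertex $w$, the children coincide with those in $N$ by part (2), and the parents coincide as well: if $p$ were a parent of $w$ in $N$ with $p\notin\partN$, then by Definition~\ref{def:relevant-part}(b) $p$ would be a descendant of some $w'\in V(B)$ with $w'\prec_N v$, forcing $w$ to be excluded too, a contradiction. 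Hence in-degree and out-degree are preserved and (N2) is inherited from $N$.

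The main obstacle I anticipate is the careful case analysis in part (2) under Definition~\ref{def:relevant-part}(b): one has to argue that the descendants of vertices $w'\in V(B)\setminus\{v\}$ with $w'\prec_N v$ which are excluded from $\partN$ do not leave any ``partial block'' behind, which hinges on the fact that distinct blocks in a level-$1$ network share at most one vertex.
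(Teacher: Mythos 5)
Your plan follows the same decomposition and leans on the same structural facts as the paper's proof (Lemma~\ref{lem:results-B}\eqref{L3.23HSS}, Observation~\ref{obs:identical-block}, and the level-$1$ block structure), and parts (1), (3) and (4) would go through essentially as you describe. One point where your route is genuinely cleaner than the paper's: by climbing from $w\in V(B^*)$ through parents that Lemma~\ref{lem:results-B}\eqref{L3.23HSS} forces to lie inside $B^*$, you conclude directly that either $v\in V(B^*)$ or $\rho_{B^*}\preceq_N v$; this makes the paper's separate ``$v$ and $\rho_{B^*}$ incomparable'' subcase (handled there via $\lca$ and an explicit cycle construction, using Lemma~\ref{L7.9HSS}) unnecessary.

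The gap is in part (2), which is the substance of the lemma. Knowing $\rho_{B^*}\preceq_N v$ (or $v\in V(B^*)$) only yields $V(B^*)\subseteq V(N(v))$; it does not yield $V(B^*)\subseteq V(\partN)$ when $v$ falls under Definition~\ref{def:relevant-part}(b), nor $V(B^*)\subseteq V(\partN')$, because in those cases some descendants of $v$ are excluded. Your sentence ``\ldots except in a way that forces all of $B^*$ to remain in $\partN$'' restates the claim to be proved rather than proving it. What is actually needed --- and what occupies most of the paper's proof --- is a concrete contradiction argument starting from a $\preceq_N$-maximal vertex $w\ne v$ of $B^*$ lying in $\partN'$ while $B^*\not\subseteq\partN'$: for instance, when $v\in V(B^*)$ and $w$ is not a child of $v$, one exhibits two $vw$-paths, one running inside $B^*$ and one running through $\partN'$ and avoiding the interior of $B^*$, whose union is an undirected cycle; this places the intermediate vertices in a block sharing two vertices with $B^*$, hence equal to $B^*$ by Observation~\ref{obs:identical-block}, contradicting the maximal choice of $w$. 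Without some such construction the assertion that blocks are never ``partially cut'' is unsupported. A smaller slip of the same kind occurs in your part (5): a parent $p\notin\partN$ of $w$ need not be a descendant of some $w'\in V(B)$ with $w'\prec_N v$ --- it may simply fail to satisfy $p\prec_N v$ at all, which happens exactly when $w$ is a hybrid; that case has to be closed via part (2) together with Lemma~\ref{lem:results-B}\eqref{L3.23HSS}, which is how the paper handles it.
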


\begin{proof}
We show first that $\partN'$ is a level-$1$ network. By definition, $\partN'$ does only contain descendants of $v$ and since $N$ is acyclic, $\partN'$ is a DAG with unique root $v$ and, thus, a network. Moreover, $\partN'$ is a subgraph of $N$ and thus, cannot have a higher level than $N$, that is, $\partN'$ is level-$1$. By Observation~\ref{obs:part-partN} and since $\partN'$ is chosen arbitrarily, $\partN$ is a level-$1$ network. Hence, Statement (1) is satisfied.

We continue with showing that all non-trivial blocks of $N$ are either entirely contained in $\partN'$ or intersect $\partN'$ at most in vertex $v$. Let $B$ be a non-trivial block of $N$. Assume, for the sake of a contradiction, that $B$ is not entirely contained in $\partN'$ but intersects $\partN'$ in a vertex $w$ different from $v$. Since $B$ is not entirely contained in $\partN'$, the root $\rho_{B}$ of $B$ cannot be contained in $\partN'$ since, otherwise, it would hold that $\rho_{B}\preceq_N v$ which, for $\rho_B\in\partN'$, implies that all descendants of $\rho_{B}$ and, therefore, the entire block $B$ would be contained in $\partN'$. Without loss of generality let $w\neq v$ be a $\preceq_N$-maximal vertex in $\partN'$ that is also contained in $B$, i.e., there is no vertex $w'\neq v$ in $\partN'$ with  $w\prec_N w'$ that is contained in $\partN'$ and $B$.

\begin{figure}[ht]
    \centering
    \includegraphics[width=0.9\textwidth]{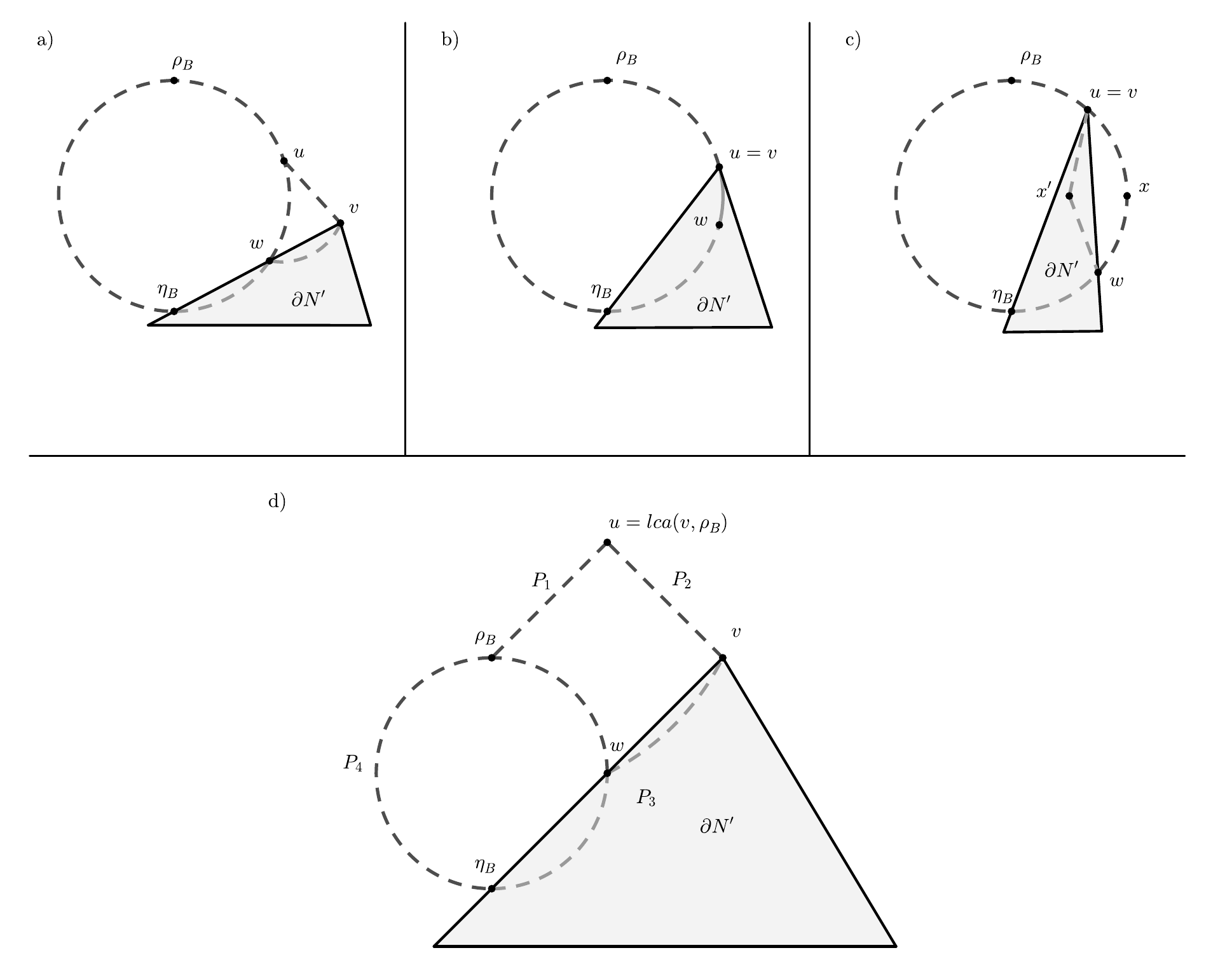}
    \caption{Depiction of the different cases used to create contradictions in the proof of Lemma \ref{lem:blocks-in-partN}(2). The dashed lines represent paths in the networks. In Panels a)-c), $v$ and $\rho_B$ are $\preceq_N$-comparable. Panel a) represents the subcase $v\neq V(B)$, Panel b) the case $w\in \child(v)$, and Panel c) the remaining subcase $v\in V(B)$ but $w\neq \child(v)$. In Panel d), $v$ and $\rho_B$ are $\preceq_N$-incomparable. }
    \label{fig:statement2}
\end{figure}

We now distinguish between the cases that $v$ and $\rho_B$ are $\preceq_N$-comparable or not (see Figure \ref{fig:statement2}). Assume first that $v$ and $\rho_B$ are $\preceq_N$-comparable. Hence, by the previous arguments, $v\prec_N \rho_B$. Since $w,v \prec_{N} \rho_B$ we can choose a $\preceq_N$-minimal vertex $u$ that satisfies $v,w \preceq_N u \preceq_N \rho_B$ and $u\in V(B)$. Either $v\in V(B)$ or not.  If $v$ is not contained in $B$, then $\vert \{u,v,w\}\vert =3$ and thus, there are pairwise internal vertex-disjoint $uv$-, $uw$-, and $vw$-paths that together form an undirected cycle in $N$. This means that $u,v,w$ are contained in a non-trivial block $B'$ that shares at least two vertices with $B$. This together with Observation~\ref{obs:identical-block} implies that $B'=B$; a contradiction to $v\notin V(B)$. If $v$ is contained in $B$, and $w$ is a child of $v$, then $w$ can not be contained in $\partN'$ as, since $v\neq \rho_B$, none of the children of $v$ in $B$ can be contained in $\child^*_N(v)$; a contradiction. If $v\in V(B)$ and $w$ is not a child of $v$, then there has to be a vertex $x$ with $w\prec_N x$ and $x\in V(B)$. Let $P$ the $vw$-path whose vertices are all in $B$. None of the inner vertices of $P$ can be contained in $\partN'$ as $w$ is maximal. Since $v,w\in \partN'$ and $w\prec_N v$ there is another $vw$-path $P'$ in $N$ containing some inner vertex $x'$, whose internal vertices are not in $B$. Hence, $P$ and $P'$ form an undirected cycle and $v,x'$ ,and $w$ are contained in a non-trivial block $B'$ that shares two vertices with $B$. Therefore, $B'=B$; a contradiction since $w$ is maximal but $w,x'\in \partN'$, $w,x\in V(B)$ and $w\prec_N x'$.
    
Assume that $v$ and $\rho_B$ are $\preceq_N$-incomparable. 
Let $u = \lca_N(v,\rho_{B})$, which is uniquely determined by Lemma \ref{L7.9HSS}. By definition of the lca, there are internal vertex disjoint paths $P_1$ and $P_2$ from $u$ to $\rho_{B}$ and $v$, respectively. Moreover, since $\eta_B\preceq_N w\prec_N v$, there is a path $P_3$ from $v$ to $\eta_B$ that contains $w$. In addition, there is a path $P_4$ from $\rho_B$ to $\eta_B$. Hence, the subgraph induced by the vertices in $P_1,\dots,P_4$ contains an undirected cycle that contains $v$ and $\rho_{B}$. Hence, $v$ and $\rho_{B}$ must lie in a common non-trivial block $B'$. Together with Observation~\ref{obs:identical-block} this implies that $B'=B$. But then, $v\preceq_N\rho_B$ as all vertices in $B$ are descendants of $\rho_B$; a contradiction.
    
In summary, all non-trivial blocks of $N$ are either entirely contained in $\partN'$ or intersect at most in vertex $v$ and, thus, Statement (2) is satisfied for $\partN'$. Moreover, since $\partN'$ is always an induced subgraph, no further non-trivial blocks can exist in $\partN'$. Hence, every non-trivial block $B$ in $\partN'$ satisfies $B\in \mathcal{B}(N)$ and, thus, Statement (3) is satisfied for $\partN'$. By Observation~\ref{obs:part-partN} and since $\partN'$ is chosen arbitrarily, all non-trivial blocks of $N$ are either entirely contained in $\partN$ or intersect $\partN$ at most in vertex $v$ and Statements (2) and (3) hold for $\partN$ as well.
    
Note that both $\phi_N(w)$ and $\phi_{\partN}(w)$ are defined in terms of the non-trivial blocks for which $w$ is the root. By the previous arguments, $\phi_N(w) = \phi_{\partN}(w)$ for all $w$ in $\partN$. The same is true for all $w\neq v$ in $\partN'$, since the only weight in $\partN'$ that may change is that of vertex $v$ as $\partN'$ does not necessarily contain all non-trivial blocks rooted in $v$. Therefore, Statement (4) is satisfied.

We proceed with showing that $\partN'$ is phylogenetic provided that $N$ is phylogenetic and $|\child'\cup \child^*_{\mathcal{B}'}|\geq 2$. Since $|\child'\cup \child^*_{\mathcal{B}'}|\geq 2$, there are at least two children of $v$ contained in $\partN'$ and so $\outdeg_{\partN'}(v) \geq 2$ which implies that (N2) is satisfied for $\partN'$ and $v$. Now, let $w$ be a vertex in $\partN'$ with $w\prec_N v$. Since $N$ is phylogenetic, we have $\indeg_N(w)\geq 2$ or  $\outdeg_N(w) \geq 2$. Assume first that $\indeg_{N}(w)\geq 2$. Hence, $w = \eta_B$ for some  non-trivial block $B$ in $N$. By  Lemma \ref{lem:results-B}\eqref{L3.23HSS}, all parents must be contained in $B$. As shown above, $B$ must be entirely contained in $\partN'$ and so the parents of $w$ are as well. Hence, $\indeg_{\partN'}(w)\geq 2$. Assume now that $\outdeg_{N}(w)\geq 2$. By definition of $\partN'$ and since $w\in \partN'$, the children of $w$ must also be contained in $\partN'$ and thus,  $\outdeg_{\partN'}(w)\geq 2$, which completes the proof for $\partN'$. To see that $\partN = \partN(v)$ is phylogenetic provided that $N$ is phylogenetic, observe first that $v$ is relevant. By Observation~\ref{obs:relevant}, there are at least two children of $v$ contained in $\partN$ and, thus, $\outdeg_{\partN}(v) \geq 2$ which implies that (N2) is satisfied for $\partN$ and $v$. Now, we can reuse exactly the same arguments as for $\partN'$ to conclude that $\partN$ is phylogenetic. Hence, Statement (5) is satisfied. 
\end{proof}

The next lemma establishes a correspondence of relevant vertices in $N$ and in $\partN'$.
\begin{lemma}\label{lem:relevant-partN}
Let $N\in \mathscr{N}_n$ be a level-$1$ network and $v$ be a relevant vertex of $N$. Moreover, let $\mathcal{B}'\subseteq \mathcal{B}^v(N)$, $\child'\subseteq \child^*_{\overline{\mathcal{B}^v}(N)}$, $\partN'\coloneqq \partN(v,\mathcal{B}',\child')$, and $w\in V(\partN')\setminus\{v\}$.  Then, $\child_N^*(w) = \child_{\partN'}^*(w)$ and $\child_{\partN'}(v) = \child_{\partN'}^*(v) = \child'\cup \child^*_{\mathcal{B}'}(v)$. In particular, $w$ is relevant in $N$  if and only if $w$ is relevant in $\partN'$. Moreover, if $|\child'\cup \child^*_{\mathcal{B}'}(v)|\geq 2$, then $v$ is relevant in $\partN'$.
\end{lemma}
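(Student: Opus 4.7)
The plan is to establish the three claimed equalities in order, with the two relevance statements falling out as immediate corollaries. Throughout, I will lean heavily on Lemma~\ref{lem:blocks-in-partN}, which already tells us that $\partN'$ is a level-$1$ network rooted at $v$ and pins down exactly which non-trivial blocks of $N$ appear inside $\partN'$.

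\textbf{Step 1: the equality at $v$.} By the very definition of $\partN(v,\mathcal{B}',\child')$, the out-edges from $v$ that survive in $\partN'$ are precisely those leading to vertices in $\child'\cup \child^*_{\mathcal{B}'}(v)$, since $\partN'$ is the subgraph induced by $v$ together with all descendants of the vertices in $\child'\cup \child^*_{\mathcal{B}'}(v)$. This immediately gives $\child_{\partN'}(v)=\child'\cup \child^*_{\mathcal{B}'}(v)$. To pass from $\child_{\partN'}(v)$ to $\child_{\partN'}^*(v)$, I will invoke Lemma~\ref{lem:blocks-in-partN}(1), which states that $v$ is the root of $\partN'$, together with Lemma~\ref{lem:unique-max-B}: any non-trivial block of $\partN'$ that contains $v$ must have $v$ as its $\preceq_{\partN'}$-maximal vertex, i.e.\ as its root. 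Hence the set of non-trivial blocks $B$ of $\partN'$ with $v\in V(B)\setminus\{\rho_B\}$ is empty, and so $\child_{\partN'}^*(v)=\child_{\partN'}(v)$.

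\textbf{Step 2: the equality at $w\neq v$.} I first argue $\child_N(w)=\child_{\partN'}(w)$. Pick $w\in V(\partN')\setminus\{v\}$; by construction, $w\preceq_N u$ for some $u\in \child'\cup \child^*_{\mathcal{B}'}(v)$. Any child $w'$ of $w$ in $N$ satisfies $w'\prec_N w\preceq_N u$, so $w'\in V(\partN')$, and since $\partN'$ is an induced subgraph, the edge $(w,w')$ lies in $E(\partN')$. Next I show that the set of non-trivial blocks $B$ with $w\in V(B)\setminus\{\rho_B\}$ is identical whether computed in $N$ or in $\partN'$. One inclusion is given directly by Lemma~\ref{lem:blocks-in-partN}(3). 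For the other, suppose $B$ is such a block in $N$; then $B\cap V(\partN')$ contains $w\neq v$, so by Lemma~\ref{lem:blocks-in-partN}(2) the block $B$ must be entirely contained in $\partN'$, hence it is also a non-trivial block of $\partN'$ containing $w$ as a non-root vertex. Combining these two facts with the definition of $\child^*$ yields $\child_N^*(w)=\child_{\partN'}^*(w)$.

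\textbf{Step 3: relevance and the final clause.} Since relevance of a vertex $z$ in any network is by Definition~\ref{def:relevant} equivalent to $|\child^*(z)|>1$, the equality $\child_N^*(w)=\child_{\partN'}^*(w)$ from Step~2 gives the ``relevant in $N$ iff relevant in $\partN'$'' statement for all $w\in V(\partN')\setminus\{v\}$. Finally, if $|\child'\cup \child^*_{\mathcal{B}'}(v)|\geq 2$, then by Step~1 we have $|\child_{\partN'}^*(v)|\geq 2$, which means $v$ is relevant in $\partN'$.

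The argument is essentially bookkeeping once Lemma~\ref{lem:blocks-in-partN} is in hand; the only place requiring care is Step~2, where one must be sure that a non-trivial block witnessing the removal of a child of $w$ from $\child^*$ is the same block in $N$ and in $\partN'$. That is where the dichotomy in Lemma~\ref{lem:blocks-in-partN}(2) (either fully inside $\partN'$ or meeting $\partN'$ only at $v$) does the crucial work, since $w\neq v$ rules out the ``meet only at $v$'' alternative.
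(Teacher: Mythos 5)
Your Steps 2 and 3 follow the paper's own argument: Lemma~\ref{lem:blocks-in-partN}(2)--(3) is exactly the tool the authors use to show that every non-trivial block of $N$ containing $w\neq v$ lies entirely inside $\partN'$ and conversely, whence $\child_N^*(w)=\child_{\partN'}^*(w)$ and the relevance equivalence; likewise your observation that $v=\rho_{\partN'}$ forces $\child_{\partN'}^*(v)=\child_{\partN'}(v)$ is the paper's. Those parts are fine.

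The gap is in Step 1, where you declare $\child_{\partN'}(v)=\child'\cup\child^*_{\mathcal{B}'}(v)$ to be immediate from the definition. The inclusion $\supseteq$ is immediate; $\subseteq$ is not. Since $\partN'$ is an \emph{induced} subgraph on $\{v\}\cup\{z\mid z\preceq_N u \text{ for some } u\in\child'\cup\child^*_{\mathcal{B}'}(v)\}$, an edge $(v,z)$ of $N$ survives into $\partN'$ whenever $z$ is \emph{any} descendant of a designated child, not only when $z$ is itself designated. Concretely, if $(v,u')$, $(u',z)$ and the shortcut $(v,z)$ are all edges of $N$ with $u'$ designated, then $z\in\child_{\partN'}(v)$ but $z$ is not obviously in $\child'\cup\child^*_{\mathcal{B}'}(v)$. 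Ruling this out takes an argument: such a configuration places $z$, $u'$ and $v$ on a common undirected cycle, hence in a common non-trivial block $B$ rooted at $v$ (if $v$ were a non-root of $B$, then $u'\notin\child_N^*(v)$ and $u'$ could not have been designated at all); since $u'\in V(B)$, we get $u'\notin\child'$, so $u'\in\child^*_{B''}(v)$ for some $B''\in\mathcal{B}'$, and Observation~\ref{obs:identical-block} (two blocks sharing $v$ and $u'$ coincide) forces $B=B''\in\mathcal{B}'$, whence $z\in\child^*_{\mathcal{B}'}(v)$ after all. This is precisely the case analysis to which the paper devotes the second half of its proof, distinguishing whether $u\in\child_{\partN'}(v)$ lies in a non-trivial block of $\partN'$ rooted at $v$ or not and invoking Observation~\ref{obs:dictintBlocks-disjointChild} and Lemma~\ref{lem:blocks-in-partN}. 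Your write-up needs this step; with it supplied, the rest goes through as you describe.
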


\begin{proof}
Let $N$ and $\partN'\coloneqq \partN(v,\mathcal{B}',\child')$ be chosen as in the statement of the lemma and let $\mathcal{B}^v\coloneqq \mathcal{B}^v(N)$. Let $v\in V(N)$ be relevant in $N$ and $w\in V(\partN')\setminus\{v\}$.  Hence, $w\prec_N v$.
    
To recall, $\child_N^*(w)\subseteq \child_N(v)$ is the set of children of $w$ that are not contained in any non-trivial block that contains $w$ and for which $w$ is not the root. Lemma \ref{lem:blocks-in-partN} implies that all non-trivial blocks are entirely contained in $\partN'$ or intersect $\partN'$ at most in vertex $v\neq w$. Hence, every non-trivial block in $N$ that contains $w$ is entirely contained in $\partN'$. Therefore, $\child_N^*(w) = \child_{\partN'}^*(w)$. Thus, $|\child_N^*(w)|>1$ if and only if $|\child_{\partN'}^*(w)|>1$ which implies that $w$ is relevant in $N$ if and only if $w$ is relevant in $\partN'$.
	
Consider now vertex $v$. Since $v$ is the root of $\partN'$, there cannot be any non-trivial block in $\partN'$ that contains $v$ but for which $v$ is not the root. Hence, $\child_{\partN'}^*(v)  = \child_{\partN'}(v)$. By definition and construction, $\child'\cup \child^*_{\mathcal{B}'}(v) \subseteq \child_{\partN'}(v)$. We show now that $\child_{\partN'}(v)$ is a subset of $\child'\cup \child^*_{\mathcal{B}'}(v)$. Note that $\child'$ and $\child^*_{\mathcal{B}'}$ are defined in terms of $N$ and not of $\partN'$. Let $u\in \child_{\partN'}^*(v)  = \child_{\partN'}(v)$. Hence, by the previous arguments, either $u$ is not contained in any non-trivial block of $\partN'$ or $u$ is contained in a non-trivial block $B$ of $\partN'$ with $\rho_B=v$.
    
Assume first that $u$ is contained in a non-trivial block $B$ of $\partN'$ with $\rho_B=v$. By Lemma \ref{lem:blocks-in-partN}, $B\in \mathcal{B}(N)$ and still $\rho_B=v$. Hence, $u\in \child^*_B(v)$. Note that Observation~\ref{obs:dictintBlocks-disjointChild} implies that $\child^*_B(v)\cap \child^*_{B'}(v)=\emptyset$ for two distinct non-trivial blocks $B$ and $B'$ that are rooted in $v$. Hence, there is precisely one set $\child^*_B(v)\subseteq \child^*_{\mathcal{B}^v}$ that contains $u$ and since $B$ is a non-trivial block in $\partN'$ it must hold that $B\in \mathcal{B}'\subseteq \mathcal{B}^v$. Consequently, $u\in \child^*_{\mathcal{B}'}$.
    
Assume now that $u$ is not contained in any non-trivial block $B$ of $\partN'$ with root $\rho_B=v$. Note that if there is a non-trivial block of $N$ that contains $u$, then $B$ must be entirely contained in $\partN'$ since, otherwise, $B$ would intersect $\partN'$ in vertex $u\neq v$; a contradiction to Lemma \ref{lem:blocks-in-partN}. Taken these arguments together, there is no non-trivial block in $N$ that contains $u$ and thus, $u\in \child^*_{\overline{\mathcal{B}^v}}$. In this case, $v$ must be the unique parent of $u$ and thus, since $u\in child_{\partN'(v)}$, $u\in \child'$. 
    
Hence, $\child_{\partN'}(v) = \child'\cup \child^*_{\mathcal{B}'}$. Therefore, if $| \child'\cup \child^*_{\mathcal{B}'}|\geq 2$, then $|\child_{\partN'}(v)|\geq 2$ and, by definition, $v$ is relevant in $\partN'$. 
\end{proof}

Observation~\ref{obs:part-partN} together with the fact that $\partN'$ is chosen arbitrarily in Lemma \ref{lem:relevant-partN} implies
\begin{corollary}\label{cor:relevant-partN}
Let $N\in \mathscr{N}_n$ be a level-$1$ network and $v,w\in V(N)$ such that $v$ is relevant and $w\in V(\partN(v))$. Then, $w$ is relevant in $N$  if and only if $w$ is relevant in $\partN(v)$. 
\end{corollary}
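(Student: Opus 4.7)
The plan is to derive the corollary as a direct specialization of Lemma~\ref{lem:relevant-partN}. The key observation is that Lemma~\ref{lem:relevant-partN} was formulated for the general object $\partN(v,\mathcal{B}',\child')$ with arbitrary $\mathcal{B}'\subseteq \mathcal{B}^v(N)$ and $\child'\subseteq \child^*_{\overline{\mathcal{B}^v(N)}}$, whereas Corollary~\ref{cor:relevant-partN} is a statement about $\partN(v)$ alone. Observation~\ref{obs:part-partN} provides the bridge, since it tells us that $\partN(v) = \partN(v,\mathcal{B}^v(N), \child^*_{\overline{\mathcal{B}^v(N)}})$. Hence the strategy is simply to instantiate Lemma~\ref{lem:relevant-partN} with the maximal choice $\mathcal{B}' = \mathcal{B}^v(N)$ and $\child' = \child^*_{\overline{\mathcal{B}^v(N)}}$.

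With this choice put $\partN' \coloneqq \partN(v,\mathcal{B}^v(N), \child^*_{\overline{\mathcal{B}^v(N)}})$, so that $\partN' = \partN(v)$. First I would handle the case $w\in V(\partN(v))\setminus\{v\}$: Lemma~\ref{lem:relevant-partN} yields directly that $w$ is relevant in $N$ if and only if $w$ is relevant in $\partN'=\partN(v)$, which is exactly the desired equivalence.

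It then remains to deal with the case $w=v$. Here $v$ is relevant in $N$ by hypothesis, so $|\child^*_N(v)|\geq 2$. By construction, $\child^*_{\mathcal{B}'}(v) \cup \child' = \child^*_{\overline{\mathcal{B}^v(N)}} \cup \bigcupdot_{B\in\mathcal{B}^v(N)}\child^*_B(v) = \child^*_N(v)$, so $|\child' \cup \child^*_{\mathcal{B}'}(v)|\geq 2$. The final clause of Lemma~\ref{lem:relevant-partN} therefore gives that $v$ is relevant in $\partN' = \partN(v)$, completing the equivalence also for $w=v$.

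There is essentially no obstacle here: the corollary is a specialization, and the only bookkeeping needed is verifying that the hypothesis $|\child'\cup \child^*_{\mathcal{B}'}(v)|\geq 2$ of the final clause of Lemma~\ref{lem:relevant-partN} is implied by the relevance of $v$ in $N$, which follows immediately from Definition~\ref{def:child*} via the disjoint-union identity noted above (Observation~\ref{obs:dictintBlocks-disjointChild} ensures the union defining $\child^*_{\mathcal{B}'}$ is genuinely disjoint).
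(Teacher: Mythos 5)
Your proposal is correct and matches the paper's own (one-line) proof, which likewise derives the corollary by instantiating Lemma~\ref{lem:relevant-partN} with the maximal choices $\mathcal{B}'=\mathcal{B}^v(N)$ and $\child'=\child^*_{\overline{\mathcal{B}^v(N)}}$ and invoking Observation~\ref{obs:part-partN}. Your explicit treatment of the case $w=v$ (using that relevance of $v$ in $N$ gives $|\child'\cup\child^*_{\mathcal{B}'}(v)|=|\child^*_N(v)|\geq 2$) is a detail the paper leaves implicit, but it is the same argument.
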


The next lemma establishes some properties of phylogenetic level-$1$ networks and its relevant vertices when replacing certain subnetworks.

\begin{lemma}\label{lem:replacement-partN-1}
Let $N$ be a phylogenetic level-$1$ network and let $v$ be a relevant vertex in $N$. Moreover, let $N'$ be the directed graph obtained from $N$ by replacing $\partN(v)$ by a phylogenetic level-$1$ network $\tilde N$ with $L(\tilde N) = L_N(v)$. Then, the following statements are satisfied.
\begin{enumerate}
	\item $N'$ is a phylogenetic level-$1$ network with leaf set $L(N)$.
	\item All relevant vertices of $N$ that are not contained in $V(\partN(v))\setminus \{v\}$ remain relevant vertices in $N'$, and the relevant vertices in $\tilde N$ become relevant vertices in $N'$. 
	\item $\phi_N(w) = \phi_{N'}(w)$ for all relevant vertices $w$ in $(V(N)\setminus V(\partN(v))$.
\end{enumerate}
\end{lemma}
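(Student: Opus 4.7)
The plan is to prove the three statements sequentially, leaning heavily on Lemma~\ref{lem:blocks-in-partN}, which tells us that non-trivial blocks of $N$ are either entirely contained in $\partN(v)$ or intersect $\partN(v)$ in at most the single vertex $v$. This localization of blocks is the cornerstone that makes the replacement well-behaved.

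For Statement~(1), I would first argue that $N'$ is a DAG with a unique source: it is obtained by gluing the two DAGs $N[V(N)\setminus(V(\partN(v))\setminus\{v\})]$ and $\tilde N$ along the single vertex $v=\rho_{\tilde N}$, and since both pieces are oriented away from their respective roots, the result is acyclic with unique root $\rho_N$. Level-$1$-ness follows because non-trivial blocks of $N'$ are either (a) blocks of $N$ outside $\partN(v)$, which remain intact by Lemma~\ref{lem:blocks-in-partN}(2) and Observation~\ref{obs:identical-block}, or (b) blocks of $\tilde N$, which are level-$1$ by hypothesis; no block can straddle $v$ since the two pieces share only the cut vertex $v$. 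For the phylogenetic property, the only vertex whose degrees change is $v$: since $|L_N(v)|\geq 2$ (as $v$ is relevant) we have $\outdeg_{\tilde N}(v)\geq 2$ by (N2) in $\tilde N$, and in case~(b) of Definition~\ref{def:relevant-part} the children of $v$ inside the block $B$ containing $v$ as a non-root remain in $N'$, so $\outdeg_{N'}(v)\geq 2$ in any case. Leaf preservation $L(N')=L(N)$ follows by set-bookkeeping from the equality of leaf sets of the replaced pieces.

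For Statement~(2), let $w\in V(N)\setminus(V(\partN(v))\setminus\{v\})$ be relevant in $N$. The key step is to show that $\child_N(w)$ contains no vertex of $V(\partN(v))\setminus\{v\}$. Suppose for contradiction that $u\in\child_N(w)\cap(V(\partN(v))\setminus\{v\})$. If $u$ is a tree vertex, $w$ is its unique parent, forcing $w\preceq_N v$ and hence $w\in V(\partN(v))$, a contradiction after a short case check. If $u$ is a hybrid, then by Lemma~\ref{lem:results-B}\eqref{L3.23HSS} all its parents lie in a common non-trivial block $B$, and then combining $w,v\in\mathrm{anc}(u)$ with Lemma~\ref{lem:results-B}\eqref{L7.8HSS} forces $w$ and $v$ to be $\preceq_N$-comparable and $w$ to lie in the block neighborhood, again contradicting $w\notin V(\partN(v))$. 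With $\child_N(w)$ (and thus all relevant blocks at $w$) preserved in $N'$, we get $\child_{N'}^*(w)=\child_N^*(w)$ and $w$ remains relevant. For vertices $w'\in V(\tilde N)\setminus\{v\}$ that are relevant in $\tilde N$, I would apply Lemma~\ref{lem:relevant-partN} to $N'$, using that $\tilde N$ coincides with $\partial N'(v,\mathcal{B}^v(N'),\chi^*_{\overline{\mathcal{B}^v(N')}})$ (since $\tilde N$ is precisely the $\partial$-piece of $N'$ below $v$ by construction), to transfer relevance from $\tilde N$ to $N'$.

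For Statement~(3), fix $w\in V(N)\setminus V(\partN(v))$ relevant. Any $B\in\mathcal{B}^w(N)$ has $\rho_B=w\notin V(\partN(v))$, so by Lemma~\ref{lem:blocks-in-partN}(2) $B$ intersects $\partN(v)$ in at most $\{v\}$, and in particular $B$ does not meet $V(\partN(v))\setminus\{v\}$; hence $B$ survives identically in $N'$. Conversely, every block of $N'$ rooted at $w$ arises this way since the only structural change from $N$ to $N'$ is inside $V(\partN(v))\setminus\{v\}$. Thus $\mathcal{B}^w(N)=\mathcal{B}^w(N')$, and for every such $B$ the descendant sets $K_u$ in $B$ depend only on the internal topology of $B$, so $\kappa_u$ and hence $\omega(B)$ are unchanged. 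Summing yields $\phi_N(w)=\phi_{N'}(w)$.

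I expect the main obstacle to be the careful bookkeeping around $v$ itself in case~(b) of Definition~\ref{def:relevant-part}: there $v$ is internal to a non-trivial block $B$, so some children of $v$ in $N$ lie in $B$ and persist in $N'$ while others, in $\child_N^*(v)$, get replaced by the root-children of $\tilde N$. Verifying that this mixed-structure vertex $v$ still satisfies (N2), that $B$ in $N'$ remains the same level-$1$ block, and that no spurious new block is created across $v$, all require the full force of Lemma~\ref{lem:blocks-in-partN} together with Lemma~\ref{lem:results-B}\eqref{L3.23HSS}; the other two statements then fall out relatively cleanly.
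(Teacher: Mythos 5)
Your proposal is correct and follows essentially the same route as the paper's proof: it rests on Lemma~\ref{lem:blocks-in-partN} to localize non-trivial blocks, treats $N'$ as the gluing of $N$ minus $\partN(v)$ with $\tilde N$ along the cut vertex $v$, and transfers relevance from $\tilde N$ to $N'$ via $\partial N'(v)=\tilde N$ and Corollary~\ref{cor:relevant-partN}. The only difference is that you explicitly verify that a relevant $w\notin V(\partN(v))\setminus\{v\}$ has no child inside $V(\partN(v))\setminus\{v\}$, a detail the paper leaves implicit; this is a welcome refinement rather than a new approach.
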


\begin{proof}
Let $N$, $\tilde N$, and $N'$ be as specified in the statement. Since $\partN(v)$ contains all descendants of $v$ with the properties as in Definition~\ref{def:relevant-part}, one easily sees that the digraph $N''$ obtained from removing $\partN(v)$, i.e., all descendants of $v$ (except for $v$ and those descendants located in non-trivial blocks for which $v$ is not the root)  and its incident edges, remains a connected directed graph and, in particular, a network. Since $N''$ is a subgraph of $N$, it remains level-$1$. Now, adding the network $\tilde N$ to $N''$ to obtain $N'$ by identifying its root with $v$ clearly results in a level-$1$ network since both $N''$ and $\tilde N$ are level-$1$ networks. Since $\tilde N$ is phylogenetic, the root $\rho_{\tilde N}$ must have out-degree greater than two, and thus $v$ has out-degree greater than two in $N'$. For all other vertices in $V(N)\setminus V(\partN(v))$ and $V(\tilde N)$ the degrees remain as in $N$ and $\tilde N$, respectively. Hence, $N'$ is phylogenetic.
    
Let $w$ be a relevant vertex of $N$ that is not contained in $V(\partN(v))\setminus \{v\}$. Assume first that $w=v$. By construction, and since $\tilde N$ is phylogenetic, $v$ has at least two children in $\tilde N$ that are also children of $v$ in $N'$. Hence, one easily verifies that Definition~\ref{def:relevant} is satisfied for $w=v$ in $N'$. Assume $w\neq v$ and thus, $w\notin V(\partN(v))$. Every non-trivial block in $N$ that may contain $w$ can, by Lemma \ref{lem:blocks-in-partN}, intersect $\partN(v)$ at most in one vertex and thus, remains a non-trivial block in $N'$. Hence, the property of being relevant in $N$ does not depend on the structure of $\partN(v)$ and, therefore, $w$ remains relevant in $N'$. This also implies that $\phi_N(w) = \phi_{N'}(w)$ for all relevant vertices $w$ in $(V(N)\setminus V(\partN(v))$. Now, let $w$ be a relevant vertex in $\tilde N$. By construction, $N''$ and $\tilde N$ intersect only in vertex $v$. Note that $\partN'(v)=\tilde N$. By Corollary~\ref{cor:relevant-partN}, $w$ is a relevant vertex in $N'$. This completes the proof.   
\end{proof}

We are now in a position to show the locality of the weighted total cophenetic index.

\begin{proof}[Proof of Theorem~\ref{thm:local-phi}]
Let $N, N',\tilde N$, and $v$ be as specified in the statement. By construction $n\coloneqq |L_N(\rho_N)| = |L_{N'}(\rho_N')|$. Moreover, let $\relV$ and $\relV'$ denote the set of relevant vertices in $N$ and $N'$, respectively. Let us first denote with $R$, respectively, $R'$ the set of relevant vertices in $N$, respectively $N'$ that are not located in $V(\partN(v))\setminus \{v\}$, respectively, $V(\tilde N)\setminus \{\rho_{\tilde N}\}$. By Lemma \ref{lem:replacement-partN-1}, $R=R'$. Let $R^v$ and $\tilde R$ be the set of relevant vertices in $\partN(v)$ and $\tilde N$, respectively. By Corollary~\ref{cor:relevant-partN} and Lemma~\ref{lem:replacement-partN-1}, $R^v$ and $\tilde R$ are relevant vertices in $N$, respectively, $\tilde N$. These arguments imply that $\relV = R\cup R^v$ and $\relV' = R\cup \tilde R$. By construction, $L(\tilde N) = L(\partN(v))$ and  $L_N(v)= L_{N'}(v)$. Hence, \[\Phi^{**}(N) = \Phi^*(N) + \phi_N(\rho_N) \binom{n}{2} = \sum_{w\in \relV(N)\setminus\{\rho_N\}} \phi(w) \binom{|L_N(w)|}{2} +  \phi_N(\rho_N) \binom{n}{2}\] can be rewritten as \[\Phi^{**}(N)= \sum_{w\in R} \phi_N(w) \binom{|L_N(w)|}{2} + \sum_{w\in R^v}\phi_N(w) \binom{|L_N(w)|}{2} =\sum_{w\in R} \phi_N(w) \binom{|L_N(w)|}{2}  + \Phi^*(\partN(v)) + \phi_N(v)\binom{|L_N(v)|}{2}.\] 
By Lemma \ref{lem:replacement-partN-1}, $\phi_N(w) = \phi_{N'}(w)$ for all $w\in R=R'$, and $ \phi_{\tilde N}(w) = \phi_{N'}(w)$ for all $w\in \tilde R = \relV'\setminus R$. Moreover, by construction, $L_N(w) = L_{N'}(w)$ for all $w\in R$. Hence, we can rewrite $\Phi^{**}(N')$ as \[\Phi^{**}(N') = \sum_{w\in R} \phi_N(w) \binom{|L_N(w)|}{2} + \sum_{w\in \tilde R}  \phi_{N'}(w) \binom{|L_{N'}(w)|}{2} =  \sum_{w\in R} \phi_N(w) \binom{|L_N(w)|}{2} + \Phi^*(\tilde N) + \phi_{N'}({\rho_{\tilde N}})\binom{|L_{N'}(\rho_{\tilde N})|}{2}.\] 
Observe that $\sum_{w\in R^v}\phi_N(w) \binom{|L_N(w)|}{2} = \Phi^*(\partN(v)) + \phi_N(v)\binom{|L_N(v)|}{2}$
and $\sum_{w\in \tilde R} \phi_{N'}(w) \binom{|L_N(w)|}{2}= \Phi^*(\tilde N) + \phi_{N'}({\rho_{\tilde N}})\binom{|L_{N'}(\rho_{\tilde N})|}{2}$. Moreover, since by assumption $L(\tilde N) = L(\partN(v))$ and $\rho_{\tilde N}$ is identified with $v$, we can observe that $L_N(v)=L_{N'}(\rho_{\tilde N})$. Thus,
\begin{align*}
\Phi^{**}(N)-\Phi^{**}(N') 
    =& \  \sum_{w\in R} \phi_N(w) \binom{|L_N(w)|}{2} + \Phi^*(\partN(v)) +                                          \phi_N(v)\binom{|L_N(v)|}{2} -  \sum_{w\in R} \phi_N(w) \binom{|L_N(w)|}{2} \\
    &- \Phi^*(\tilde N) - \phi_{N'}({\rho_{\tilde N}})\binom{|L_N(v)|}{2} \\
    =&  \ \Phi^*(\partN(v)) + \phi_N(v)\binom{|L_N(v)|}{2} - \Phi^*(\tilde N) - \phi_{N'}({\rho_{\tilde N}})\binom{|L_N(v)|}{2}\\
    =& \left(\Phi^{*}(\partN(v)-\Phi^{*}(\tilde N)\right) + \left((\phi_N(v)-\phi_{N'}({\rho_{\tilde N}})) \binom{|L(v)|}{2}\right).
\end{align*} 
This completes the proof.
\end{proof}

We can easily conclude a similar statement for phylogenetic trees, which coincides with the local property of the total cophenetic index $\Phi$ for phylogenetic trees as stated in \cite{Mir2013}.

\begin{corollary}[{\cite[Lemma 5]{Mir2013}}]
The weighted total cophenetic index $\Phi^{**}$ is local for trees, i.e., if $T'$ is a phylogenetic tree that is obtained from a phylogenetic tree $T$ by replacing the subtree $T(v)$ rooted in $v\in V(T)$ by a phylogenetic tree $\tilde T$ with $L(\tilde T) = L(T(v))$, then \[\Phi^{**}(T)-\Phi^{**}(T') = \Phi^{**}(T(v))-\Phi^{**}(\widetilde T)= \Phi^{*}(T(v))-\Phi^{*}(\widetilde T) = \Phi(T(v))-\Phi(\widetilde T). \]
\end{corollary}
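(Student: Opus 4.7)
The plan is to deduce this as a direct specialisation of Theorem~\ref{thm:local-phi} to the tree case, using the fact that all relevant data collapse nicely when there are no non-trivial blocks. First I would dispose of the trivial case where $v\in L(T)$: then $T(v)=\{v\}$ forces $\tilde T=T(v)$ up to isomorphism, so $T'\simeq T$ and every term in the claimed chain of equalities is zero. From here on assume $v$ is an inner vertex of $T$, and note that $T$ is a phylogenetic level-$1$ network (in fact level-$0$) and that, since $T$ contains no non-trivial block, $\partial T(v)=T(v)$ by Definition~\ref{def:relevant-part}(a), and $v$ is relevant in $T$ by Observation~\ref{obs:relevant}(iii) because $\outdeg_T(v)\geq 2$.

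Next I would apply Theorem~\ref{thm:local-phi} with $N=T$, $N'=T'$, and $\tilde N=\tilde T$. Since neither $T$ nor $\tilde T$ contains any non-trivial blocks we have $\mathcal{B}^v(T)=\emptyset$ and $\mathcal{B}^{\rho_{\tilde T}}(\tilde T)=\emptyset$, so by the definition of $\phi$ both $\phi_T(v)=1$ and $\phi_{T'}(\rho_{\tilde T})=1$. Therefore the second summand in Theorem~\ref{thm:local-phi} vanishes, giving
\[
\Phi^{**}(T)-\Phi^{**}(T') \;=\; \Phi^{*}(T(v))-\Phi^{*}(\tilde T).
\]

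To pass from $\Phi^{*}$ to $\Phi^{**}$ on the right-hand side, I would observe that by Definition~\ref{def:wpci},
\[
\Phi^{**}(T(v))-\Phi^{**}(\tilde T) \;=\; \Bigl(\Phi^{*}(T(v))+\phi_{T(v)}(v)\tbinom{|L_T(v)|}{2}\Bigr)-\Bigl(\Phi^{*}(\tilde T)+\phi_{\tilde T}(\rho_{\tilde T})\tbinom{|L(\tilde T)|}{2}\Bigr),
\]
and since $|L_T(v)|=|L(\tilde T)|$ (by hypothesis) and both root weights equal $1$ (again because there are no non-trivial blocks), the two additive correction terms cancel and we obtain $\Phi^{**}(T(v))-\Phi^{**}(\tilde T)=\Phi^{*}(T(v))-\Phi^{*}(\tilde T)$. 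Finally, the identity $\Phi^{*}(S)=\Phi(S)$ for any phylogenetic tree $S$ (Observation~\ref{obs:cp}) yields the last equality $\Phi^{*}(T(v))-\Phi^{*}(\tilde T)=\Phi(T(v))-\Phi(\tilde T)$.

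There is no genuine obstacle here: the proof is essentially a bookkeeping exercise verifying that for trees every block-related quantity degenerates and that the hypotheses of Theorem~\ref{thm:local-phi} are met. The only point requiring a moment's care is ensuring the leaf-set equality $|L_T(v)|=|L(\tilde T)|$ is used exactly where needed to make the root-weight correction terms cancel.
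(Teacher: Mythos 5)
Your proof is correct and follows essentially the same route as the paper's: specialize Theorem~\ref{thm:local-phi} using the fact that all $\phi$-weights equal $1$ and $\partial T(v)=T(v)$ for trees, then convert between $\Phi^{*}$, $\Phi^{**}$, and $\Phi$ via the constant offset $\binom{n}{2}$ and Observation~\ref{obs:cp}. Your explicit treatment of the case $v\in L(T)$ (where $v$ is not relevant and Theorem~\ref{thm:local-phi} does not directly apply) is a small point of care that the paper's own proof glosses over, but it does not change the argument.
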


\begin{proof}
By construction, $n\coloneqq |L(\tilde T)| = |L(T(v))|$. Since $T$ and $\tilde T$ are phylogenetic trees, it holds that $\phi_T(w)=1$ and $\phi_{\tilde T}(w)=1$ for all vertices $w$ in $T$ and $\tilde T$, respectively. In particular, $\phi_{\tilde T}({\rho_{\tilde N}}) = \phi_T(v)$. Note that by Definition~\ref{def:relevant-part}(a), we have $\partT(v) = T(v)$ since $T$ does not contain non-trivial blocks. This together with Theorem~\ref{thm:local-phi} implies
\begin{align*}
\Phi^{**}(T)-\Phi^{**}(T') 
 =&\  \left(\Phi^{*}(T(v))-\Phi^{*}(\tilde T)\right) + \left((1-1) \binom{|L(v)|}{2}\right) = \Phi^{*}(T(v))-\Phi^{*}(\tilde T)\\
\overset{\text{Obs.\ \ref{obs:cp}}}{=}&\  \Phi(T(v))  -\Phi(\tilde N)  = \Phi(T(v)) + \binom{n}{2}  -\Phi(\tilde N) -\binom{n}{2}   \\
\overset{\text{Obs.\ \ref{obs:cp}}}{=}&\ \Phi^{**}(T(v))-\Phi^{**}(\tilde T),
\end{align*} 
which completes the proof.
\end{proof}

We now provide a sufficient condition that ensures that the root of a network is the only
relevant vertex.
\begin{lemma}\label{lem:rel_neigh_empty}
If the relevant neighborhood $\relN(\rho_N)$ of the root $\rho_N$ of a phylogenetic network $N$ is empty, then $\rho_N$ is the only relevant vertex in $N$.
\end{lemma}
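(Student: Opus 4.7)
The plan is to prove the contrapositive: if $N$ contains some relevant vertex distinct from $\rho_N$, then $\relN(\rho_N)$ is non-empty. So I would assume $u\neq \rho_N$ is relevant in $N$ and then replace $u$ by a $\preceq_N$-maximal relevant vertex strictly below $\rho_N$; such a vertex exists by finiteness of $V(N)$. By maximality, every vertex $w$ with $u\prec_N w \prec_N \rho_N$ is non-relevant. The goal is then to exhibit a clause of Definition~\ref{def:rel-neighbor} that $u$ (or a vertex related to it) fulfils, thereby placing $u$ in $\relN(\rho_N)$ and yielding the desired contradiction.

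First I would split on whether $\rho_N$ is contained in a non-trivial block. If it is not, then the remark preceding Definition~\ref{def:rel-neighbor}---any relevant descendant of $v$ for which no relevant vertex lies strictly between it and $v$ must be a child of $v$---immediately forces $u\in\child_N(\rho_N)$, so Definition~\ref{def:rel-neighbor}(1) places $u$ in $\relN(\rho_N)$. Otherwise, let $\mathcal{B}$ denote the set of non-trivial blocks containing $\rho_N$; since $\rho_N$ is the global root of $N$, Lemma~\ref{lem:unique-max-B} implies $\rho_B=\rho_N$ for every $B\in\mathcal{B}$. I would then check the three clauses of Definition~\ref{def:rel-neighbor}(2) in turn: if $u\in V(B)$ for some $B\in\mathcal{B}$, clause~(a) applies; if $u$ lies in no such $B$ but is a child of $\rho_N$, clause~(c) applies.

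The main obstacle is the remaining sub-case where $u$ lies in no $B\in\mathcal{B}$ and is not a child of $\rho_N$, so any directed path $P=(\rho_N=v_0,v_1,\ldots,v_k=u)$ has length $k\geq 2$. Here I must verify clause~(b), i.e.\ produce a non-relevant $w\in V(B)$ for some $B\in\mathcal{B}$ with $w\prec_N \rho_N$ together with an IR-free $wu$-path. The natural candidate is $w=v_1$. By maximality of $u$, the vertices $v_1,\dots,v_{k-1}$ are non-relevant, so the sub-path $(v_1,\ldots,v_k)$ is IR-free. To pin $v_1$ into some $B\in\mathcal{B}$ I would rule out the alternatives: if $v_1$ lies in no non-trivial block, then it is not a hybrid, so $\indeg_N(v_1)=1$, and phylogenicity~(N2) together with $v_1$ being non-leaf forces $\outdeg_N(v_1)\geq 2$, giving $|\child^*_N(v_1)|\geq 2$---a contradiction to non-relevance; and if $v_1$ were the root of some non-trivial block $B'$, Observation~\ref{obs:relevant} would again make $v_1$ relevant. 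Hence $v_1\in V(B')\setminus\{\rho_{B'}\}$ for some non-trivial block $B'$, and Lemma~\ref{lem:results-B}\eqref{L3.23HSS} then forces every parent of $v_1$---in particular $\rho_N$---into $B'$, so $B'\in\mathcal{B}$. This supplies the witness $w=v_1$ required by clause~(b) and yields the desired contradiction, completing the proof.
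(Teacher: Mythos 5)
Your proposal is correct and follows essentially the same route as the paper's proof: both select a $\preceq_N$-maximal relevant vertex $u$ strictly below $\rho_N$, split on whether $\rho_N$ lies in a non-trivial block, and then verify that one clause of Definition~\ref{def:rel-neighbor} applies to $u$ (with clause~(2.b) witnessed by an irrelevant internal vertex of the IR-free path that lies in a block containing $\rho_N$ --- you pin this down to the child $v_1$ via Lemma~\ref{lem:results-B}(2), while the paper excludes true tree vertices as internal vertices, which amounts to the same thing). The only difference is the cosmetic choice of contrapositive versus contradiction.
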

\begin{proof}
Suppose that $\relN(\rho_N)=\emptyset$ for $\rho_N$ the root of a phylogenetic network $N$. 
By Observation\ \ref{obs:relevant}, $\rho_N$ is a relevant vertex in $N$.
Assume, for contradiction, that there is an additional relevant vertex in $N$. 
In this case, there is, in particular, a relevant vertex $u$ such that
no possibly further relevant vertex $u'$ satisfies $u\prec_N u'\prec_N \rho_N$. 
Hence, there is an IR-free $\rho_Nu$-path $P$. Since $\relN(\rho_N)=\emptyset$, 
none of the conditions in Definition~\ref{def:rel-neighbor} can be satisfied
for $\rho_N$ and $u$. Assume first that $\rho_N$ is not the root of a non-trivial
block $B$ in $N$. Since Definition~\ref{def:rel-neighbor}(1) is not satisfied for $\rho_N$ and $u$, 
$u$ cannot be a child of $\rho_N$. By the arguments preceding Definition~\ref{def:rel-neighbor}, 
$u$ must be a leaf and is, therefore, irrelevant; a contradiction. Hence, 
$\rho_N$ must be the root of a non-trivial block.
Since Definition~\ref{def:rel-neighbor}(2.a) is not satisfied for $\rho_N$ and $u$, 
vertex $u$ cannot be contained in any $B \in  \mathcal{B}_N^{\rho_N}$. This together with $\relN(\rho_N)=\emptyset$
implies that $u$ is not a child of $\rho_N$ as, otherwise,  Definition~\ref{def:rel-neighbor}(2.c)
would be satisfied. Note that along the IR-free $\rho_Nu$-path there cannot be
internal vertices that are true tree vertices $u'$ as they are relevant 
and would satisfy $u\prec_N u'\prec_N \rho_N$; a contradiction to 
the choice of $u$. The latter arguments and the choice of $u$ implies that
an IR-free $\rho_Nu$-path $P$ contains at least one internal vertex $w$ such that
$w$ is irrelevant and $w$ is contained in some non-trivial block  $B \in  \mathcal{B}_N^{\rho_N}$. But then the sub-path of $P$ connecting $w$ and $u$ is an IR-free $wu$-path  
and thus, Definition~\ref{def:rel-neighbor}(2.b) is satisfied; a contradiction. 
Consequently, $N$ does not contain any relevant vertex except the root $\rho_N$.
\end{proof}

We now provide sufficient conditions that ensure that distinct relevant neighborhoods are vertex-disjoint.
\begin{lemma}\label{lem:disjoint-relN} 
Let $N\in \mscr{N}_n$ be a phylogenetic level-$1$ network and $u,v\in V(N)$. Then, $\partN(u)$ and $\partN(v)$ are vertex disjoint, if $u$ and $v$ satisfy (at least) one of the following two conditions:
\begin{enumerate}
    \item $u$ and $v$ are $\preceq_N$-incomparable 
    \item $v$ is contained in a non-trivial block $B$ of $N$ but $v\neq \rho_B$ and either 
        \begin{enumerate}
            \item $u$ is contained in $B$ as well and $u\neq \rho_B$; or 
            \item there is an IR-free $wu$-path such that $w\in B$ is not relevant in $N$ and $w\prec v$.
        \end{enumerate}
\end{enumerate}
\end{lemma}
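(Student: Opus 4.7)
The plan is to argue by case analysis matching the three scenarios (1), (2a), (2b), and in each case either exhibit $V(\partN(u))$ and $V(\partN(v))$ as lying in disjoint regions of $N$, or assume a common vertex $z$ and contradict the exclusion clause of Definition~\ref{def:relevant-part}(b). The key uniform observation I will use throughout is that, for any vertex $x\in V(N)$, Definition~\ref{def:relevant-part} forces $V(\partN(x))\subseteq\{z:z\preceq_N x\}$; moreover, in case (b) of that definition with block $B$ and any $w\in B$ satisfying $w\prec_N x$, no descendant of $w$ (including $w$ itself) can lie in $V(\partN(x))$.

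For Case~(1) I pick a putative $z\in V(\partN(u))\cap V(\partN(v))$; then $z\preceq_N u$ and $z\preceq_N v$, so $\emptyset\neq L_N(z)\subseteq L_N(u)\cap L_N(v)$. Lemma~\ref{lem:results-B}\eqref{L7.8HSS} places the incomparable pair $u,v$ in a common non-trivial block $B$ with $L_N(u)\cap L_N(v)=L_N(\eta_B)$. Incomparability together with Lemma~\ref{lem:unique-max-B} and Lemma~\ref{lem:unique-min-B} rules out $u,v\in\{\rho_B,\eta_B\}$, so in particular $\eta_B\prec_N v$. Now $L_N(z)\subseteq L_N(\eta_B)$, and Lemma~\ref{lem:results-B}\eqref{L7.1HSS} gives comparability of $z$ and $\eta_B$, while Lemma~\ref{lem:results-B}\eqref{L7.5HSS} forbids $\eta_B\prec_N z$; hence $z\preceq_N\eta_B$. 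Since $\eta_B\in B$ and $\eta_B\prec_N v$, Definition~\ref{def:relevant-part}(b) excludes $z$ from $\partN(v)$, contradiction.

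For Case~(2a), if $u,v$ are incomparable then Case~(1) applies; otherwise I take without loss of generality $v\prec_N u$. Since $u\in B$ with $u\neq\rho_B$ and $v\in B$ with $v\prec_N u$, Definition~\ref{def:relevant-part}(b) applied to $u$ excludes every vertex that is $\preceq_N v$ from $V(\partN(u))$; but every vertex of $V(\partN(v))$ is $\preceq_N v$, so the two vertex sets are disjoint. For Case~(2b), the IR-free $wu$-path gives $u\preceq_N w$, whence $V(\partN(u))\subseteq\{z:z\preceq_N w\}$; Definition~\ref{def:relevant-part}(b) applied to $v$ with the block $B$ containing $v$ excludes all such vertices from $V(\partN(v))$, so again $V(\partN(u))\cap V(\partN(v))=\emptyset$.

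The only step requiring genuine care is the descent $L_N(z)\subseteq L_N(\eta_B)\Rightarrow z\preceq_N\eta_B$ in Case~(1): one must combine Lemma~\ref{lem:results-B}\eqref{L7.1HSS} to obtain comparability with the strict-inclusion property of Lemma~\ref{lem:results-B}\eqref{L7.5HSS} to pin down the direction, and one must first verify $v\neq\eta_B$ (and hence $\eta_B\prec_N v$) so that the exclusion clause in Definition~\ref{def:relevant-part}(b) actually has bite. Once this small verification is settled, each remaining subcase collapses to a single application of that exclusion clause, so no further computation is required.
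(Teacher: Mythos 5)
Your proof is correct and follows essentially the same route as the paper's: the same case analysis, the same block lemmas (in particular, that $\preceq_N$-incomparable vertices with intersecting leaf sets lie in a common non-trivial block $B$ with $L_N(u)\cap L_N(v)=L_N(\eta_B)$), and the same reliance on the exclusion clause of Definition~\ref{def:relevant-part}(b). The only deviations are cosmetic: in Case (1) you derive a contradiction by pushing a putative common vertex below $\eta_B$ (via the strict-inclusion property of hybrid leaf sets) where the paper instead argues that $L_{\partN(u)}$ and $L_{\partN(v)}$ are disjoint, and you handle (2b) directly rather than reducing it to (2a) through the $\preceq_N$-minimal block vertex on the $wu$-path.
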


\begin{proof}
Let $N\in \mscr{N}_n$ be a phylogenetic level-$1$ network with $u,v \in V(N)$. Consider first Condition (1) and assume that $u$ and $v$ are $\preceq_N$-incomparable. By Lemma \ref{lem:results-B}\eqref{L7.1HSS}, neither $L_N(u)\subseteq L_N(v)$ nor $L_N(v)\subseteq L_N(u)$ is satisfied. Hence, $L_N(v)$ and  $L_N(u)$ are either disjoint or have a non-empty overlap. If  $L_N(v) \cap L_N(u) = \emptyset$, then $L_{\partN}(v)\subseteq L_N(v)$ and $L_{\partN}(u)\subseteq L_N(u)$ imply that $L_{\partN}(v) \cap L_{\partN}(u) = \emptyset$. Hence, $\partN(v)$and $\partN(u)$ must be vertex disjoint, since $\partN(v)$ (respectively, $\partN(u)$) contains precisely all ancestors $w\preceq_N v$ (respectively, $w\preceq_N u$) of the leaves in $L_{\partN}(v)$ (respectively, $L_{\partN}(u)$). Assume that $L_N(v) \cap L_N(u) \neq \emptyset$. In this case, Lemma \ref{lem:results-B}\eqref{L3.35HSS} implies that $u,v$ are contained in some non-trivial block $B$ of $N$ and, by Lemma \ref{lem:results-B}\eqref{L3.37HSS}, $L_N(v) \cap L_N(u) = L_N(\eta_B)$. By construction, neither $L_{\partN}(v)$ nor $L_{\partN}(u)$ contains $L_N(\eta_B)$ and, since $L_{\partN}(v)\subseteq L_N(v)$ and $L_{\partN}(u)\subseteq L_N(u)$ it holds that $L_{\partN}(v) \cap L_{\partN}(u) = \emptyset$. By the same arguments as before, $\partN(v)$ and $\partN(u)$ must be vertex disjoint.
    
Consider now Condition (2.a) and assume $u$ and $v$ are contained in the same non-trivial block of $N$ and $u,v\neq \rho_B$. The case that $u$ and $v$ are $\preceq_N$-incomparable was already treated in Case (1). Thus, assume that $u$ and $v$ are $\preceq_N$-comparable. Without loss of generality let $u\prec_N v$. By construction,  $\partN(v)$ is the subnetwork rooted in $v$ without any proper descendants of vertices $w\prec_N v$ with $w\in B$. Moreover, $u \not\preceq_N w$ for any vertex $w$ of $\partN(v)$ since $N$ is acyclic and level-$1$. Hence, $\partN(v)$and $\partN(u)$ are vertex disjoint.
    
Consider now Condition (2.b) and assume $v$ is contained in a non-trivial block $B$ of $N$ such that $v\neq \rho_B$ and there is an IR-free $wu$-path connecting the non-relevant vertex $w\in B$ and $u$ with $w\prec v$. Let $w'$ be the $\preceq_N$-minimal vertex in $B$ that is also contained in this $wu$-path. Since $v$ and $w'$ are both contained in the same block $B$, we can apply Condition (2.a) to conclude that $\partN(v)$ and $\partN(w')$ are vertex disjoint. By definition of $\partN(w')$,  $\partN(u) $ is a subgraph of $\partN(w')$ and, therefore, $\partN(v)$ and $\partN(u)$ are vertex disjoint.
\end{proof}

We are now in the position to show that $\Phi^{**}(N)$ is recursive, i.e.,  it can be expressed as the sum of the weighted cophenetic indices of the  networks rooted in relevant neighbors of the root $\rho_N$.

\begin{proof}[Proof of Theorem~\ref{thm:sum-up-part}]
The proposition trivially holds if $\rho_N$ has no relevant neighbors. In this case, by Lemma \ref{lem:rel_neigh_empty}, there are no additional relevant vertices and $\Phi^{**}(N)=\phi_N \left(\rho_N \right)\binom{n}{2}$. 

Hence, we assume in the following that $\rho_N$ has  relevant neighbors
and thus $\relV'\coloneqq \relV_N\setminus \{\rho_N\}$ is not empty.
Let $v_1,\dots,v_k$ be all relevant neighbors of $\rho_N$. By definition, it suffices to show that $\sum_{i=1}^k \left(\Phi^*(\partN_i) +\phi(v_i)\binom{n_i}{2}\right) = \Phi^{*}(N) = \sum_{v\in \relV'} \phi(v) \binom{|L_N(v)|}{2}$. 
 Let $R_i$ be the sets of vertices in $\partN(v_i)$ that are relevant in $\partN(v_i)$. We show first that $R_1,\dots,R_k$ partition the set $\relV'$. Note that $n_i \coloneqq |L_N(v_i)| \neq |L(\partN(v_i))|$ may hold.

Relevant neighbors of $\rho_N$ are based first on the properties of $\rho_N$, i.e., either (1) $\rho_N$ is not contained in a non-trivial block or (2) $\rho_N$ is contained in a non-trivial block $B$.

\begin{figure}[ht!]
    \centering
    \includegraphics[width=0.9\textwidth]{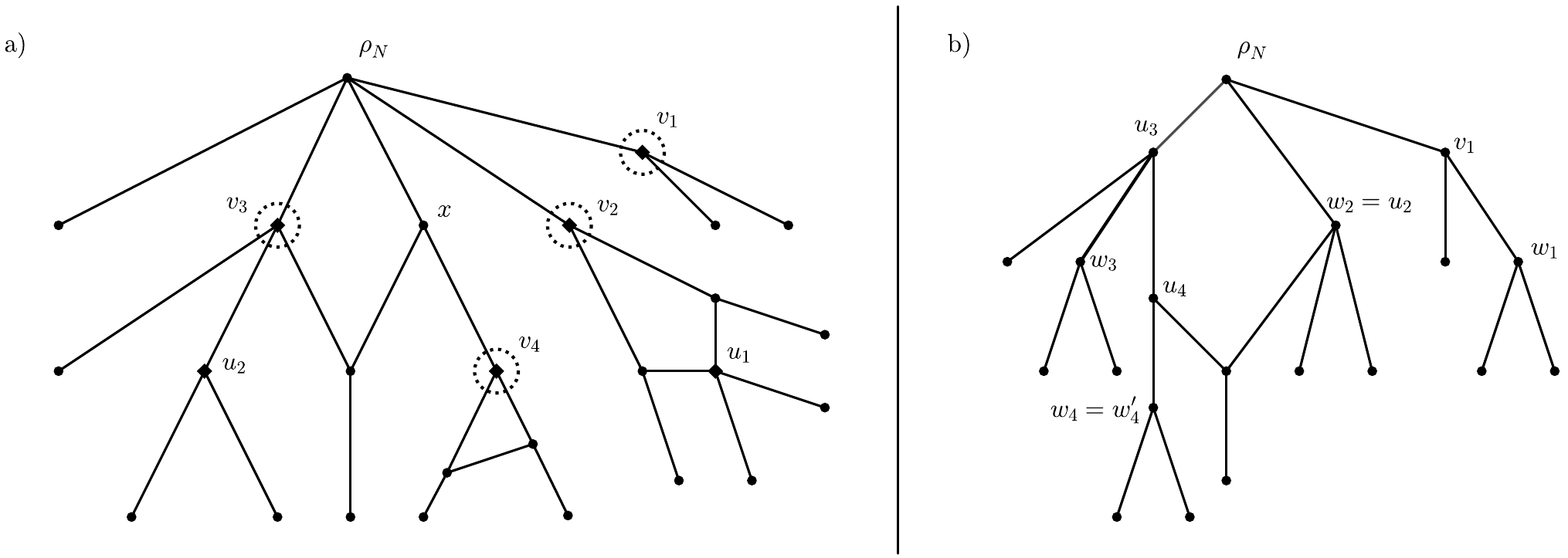}
    \caption{
    Panel a) shows an example network $N$ illustrating the notations for sets $C, C''$, and $R_i$ used in the proof of Theorem~\ref{thm:sum-up-part}. $C$ is comprised of vertices $v_3$ and $x$, whereas $C''$ consists of $v_1$ and $v_2$. The relevant neighbors (dotted circles) of $\rho_N$ are $v_1,v_2,v_3$, and $v_4$. The sets of relevant vertices in $\partN(v_1),\ldots,\partN(v_4)$ (diamond shaped vertices) are as follows: $R_1=\{v_1\}$, $R_2=\{v_2,u_1\}$, $R_3=\{v_3,u_2\}$, and $R_4=\{v_4\}$. In this example it is obvious that $relV\setminus{\{\rho_N\}}=relV'=\cupdot_{i=1}^4 R_i$.
    Panel b) shows an example network illustrating the different possible locations of relevant vertices used in the proof of Theorem~\ref{thm:sum-up-part} to show that $\relV' \subseteq \cupdot_{i=1}^k R_i$.
    Vertex $w_1$ is a descendant of $v_1$, that is, a descendant of a relevant neighbor of $\rho_N$ that is contained in $C'$. In this example, $w_1$ fulfills Condition~(2.c) of Definition~\ref{def:rel-neighbor}, but if $\rho_N$ was not the root of some non-trivial block, $w_1$ would fulfill Condition~(1) of that definition.
    Vertex $w_2$ is in the same block $B$ as $\rho_N$ and therefore fulfills Condition~(2.a) of Definition~\ref{def:rel-neighbor}. In this case, the $\preceq_N$-minimal vertex $u_2$ that is also in $B$ and on the path from $\rho_N$ to $w_2$ is $w_2$ itself. 
    Both $w_3$ and $w_4$ fulfill Condition~(2.b) of Definition~\ref{def:rel-neighbor}, and $u_3$ and $u_4$ are the $\preceq_N$-minimal vertices in $B$ on the path from $\rho_N$ to $w_3$ and $w_4$ respectively. Vertex $u_3$ is itself a relevant vertex, whereas $u_4$ is not. Here we define vertex $w_4'$ with $w_4'\prec_N u_4$ such that $w_4'$ is on the $\rho_Nw_4$ path and relevant.}
    \label{fig:proof_recursive}
\end{figure}

Consider Case (1). In this case, any child $w$ of $\rho_N$ must be either the root of some non-trivial block or not, where in both cases, $\child_N^*(w) = \child_N(w)$. Since $N$ is phylogenetic,  $|\child_N^*(w)| = |\child_N(w)|>1$ for all non-leaf vertices $w$. Together with Definition~\ref{def:rel-neighbor}, this implies that every child of $\rho_N$ that is not a leaf is relevant. In other words, all relevant neighbors of $\rho_N$ are children of $\rho_N$ as for any additional relevant vertices $u$ with $u\prec v$ for some child $v$ of $\rho_N$, there would be no IR-free paths. Let $R\subseteq \child_N(\rho_N)$ be the set of relevant neighbors of $\rho_N$. Note that $R=\emptyset$ is possible if all children of $\rho_N$ are leaves. 

Distinct children $v_i,v_j\in R$ of $\rho_N$ must be $\preceq_N$-incomparable, since otherwise, we would have, say $v_i\preceq_N v_j$, and so, $\rho_N$ would be contained in a non-trivial block; a contradiction. Lemma \ref{lem:disjoint-relN} implies that $\partN(v_i)$ and $\partN(v_j)$ are vertex-disjoint. In particular, we have $V(N) = \{\rho_N\}\cup \bigcupdot_{v\in R} V(\partN(v))$. By Corollary\ \ref{cor:relevant-partN}, all vertices in $\partN(v_i)$ are relevant in $\partN(v_i)$ if and only if they are relevant in $N$.

Consider Case (2). Let $B_1,\dots,B_r$, $r\geq 1$, be the non-trivial blocks that are rooted in $\rho_N$. We can bipartition $\child_N(\rho_N)$ into
 the set $C$ of children that are contained in some $B_\ell$ with $\ell\in \{1,\dots,r\}$ and the set $C'\coloneqq \child_N(\rho_N)\setminus C$. Moreover, let $C''\coloneqq C'\setminus L(N)$. 
See Figure~\ref{fig:proof_recursive}(a) for an illustrative example. Definition \ref{def:rel-neighbor}(2.c) applies for children in $C''$ (cf.\ e.g.,  $v_1$ and $v_2$ in Figure \ref{fig:proof_recursive}(a)) and these children satisfy the same conditions as the children in Case (1) and, thus, all vertices in $C''$ are relevant. Let $u$ be a relevant neighbor of $\rho_N$ that is not contained in $C''$ (as e.g.,  $v_3$ and $v_4$ in Figure~\ref{fig:proof_recursive}(a)). Hence, Definition\ \ref{def:rel-neighbor}(2.a) or (2.b) applies for $u$ and, thus, either (i) $u$ is contained in some non-trivial block $B_\ell$ or (b) there is an IR-free $wu$-path for some non-relevant vertex $w\in B_\ell$ and for which $u\prec_N w$. For each relevant neighbor $v_i$ of $\rho_N$ we have
$v_i\prec_N \rho_N$ and, since $N$ is acyclic, none of these relevant neighbors $v_i$ can be the root of any of the $B_\ell$, $1\leq \ell\leq r$. One easily verifies that all vertices in $B_\ell$, $1\leq \ell\leq r$, are $\preceq_N$-incomparable with the vertices in $C''$. Taken these arguments together, we can apply Lemma \ref{lem:disjoint-relN} to all relevant neighbors $v_1,\dots,v_k$ of $\rho_N$ and conclude that $\partN(v_i)$ and $\partN(v_j)$ are vertex-disjoint for $i\neq j$. Consequently, $R_i \cap R_j = \emptyset$, $1\leq i<j\leq k$. Therefore it remains to show that $\relV' = \cupdot_{i=1}^k R_i$. By Corollary~\ref{cor:relevant-partN}, $\cupdot_{i=1}^k R_i \subseteq \relV'$ holds.

We show now that $\relV' \subseteq \cupdot_{i=1}^k R_i$. To this end, let $w\in \relV'$. Thus, $w\prec_N \rho_N$. We now distinguish several cases (see Figure~\ref{fig:proof_recursive}(b) for an illustrative example). If $w\preceq v_i$ for some $v_i\in C'$, then by Definition, $w\in V(\partN(v_i))$ and, by Corollary~\ref{cor:relevant-partN}, $w\in R_i$ 
(in Figure~\ref{fig:proof_recursive}(b), this case is represented by the vertices $w_1\preceq v_1$).  Assume now that $w\not\preceq v_i$ for some $v_i\in C'$. Hence, $w\preceq_N u$ for some vertex $u \prec_N \rho_N$ that is located in some non-trivial block $B$ rooted in $\rho_N$. Let $P$ be a $\rho_Nw$-path and assume without loss of generality that $u$ is a $\preceq_N$-minimal vertex in $B$ that is located on $P$. If $u$ is relevant, then by Definition~\ref{def:rel-neighbor}(2i), $u$ is a relevant neighbor of $\rho_N$ and thus, $u = v_i$ for some $i\in \{1,\dots,k\}$. By Definition~\ref{def:relevant-part}, $w\in \partN(u)$, and Corollary~\ref{cor:relevant-partN} implies that $w\in R_i$
(in Figure~\ref{fig:proof_recursive}(b) this case applies to $w_2$ and $w_3$ with $u_2$ and $u_3$ respectively). Assume that $u$ is not relevant in $N$. Then $w\prec_N u$ and there is a vertex $w'\prec_N u$ located on $P$ that is relevant and such that the $uw'$-path $P'$ that is a sub-path of $P$ is an IR-free path. In other words, there is a relevant neighbor $v_j = w'$ of $\rho_N$ with $v_j\prec_N u$. Note that there cannot be a non-trivial block $B'\neq B$ for which $v_j$ is not the root (cf.~Lemma \ref{lem:results-B}\eqref{L3.19HSS}). Hence, only Definition~\ref{def:relevant-part}(a) can apply to $v_j$, and so $\partN(w') = N(w')$ and thus, $w$ is a vertex in $\partN(w')$. Corollary~\ref{cor:relevant-partN} implies that $w\in R_j$
(in Figure~\ref{fig:proof_recursive}(b), this last case is represented by $w_4=w_4'$).
In summary, $\relV' \subseteq \cupdot_{i=1}^k R_i$ and, therefore, 
$R_1,\dots,R_k$ partition the set $\relV'$.

Lemma~\ref{lem:blocks-in-partN} now implies $\phi_N(w) = \phi_{\partN(v)}(w)$ for all $w$ in $\partN(v)$. Taken the previous arguments together, we can write $\sum_{v\in \relV'} \phi_N(v)\binom{|L_N(v)|}{2} =\sum_{i=1}^k \sum_{w\in R_i}\phi_{\partN(w)}(w)\binom{|L_N(w)|}{2} =\sum_{i=1}^k (\Phi^*(\partN_i) +\phi_N(v_i)\binom{n_i}{2})$. This together with
$\Phi^{**}(\partN_i) = \Phi^*(\partN_i) + \phi(v_i)\binom{n_i}{2}$ completes the proof.
\end{proof}

When studying maximum and minimum phylogenetic level-$1$ networks in Section~\ref{sec:extrema}, we will often make use of a combination of Theorem~\ref{thm:local-phi} and the following corollary of Theorem~\ref{thm:sum-up-part}
\begin{corollary}\label{fact:sum-up-part}
   Let $N\in \mathscr{N}_n$ be a phylogenetic level-$1$ network and let $v_1,\dots,v_k$ be the relevant neighbors of the root $\rho_N$. Let $\phi(v_i) \coloneqq \phi_N(v_i)$, $n_i\coloneqq |L_N(v_i)|$, and $\partN_i\coloneqq \partN(v_i)$, $1\leq i\leq k$. Then, \[\Phi^{*}(N) = \sum_{i=1}^k \left(\Phi^*(\partN_i) + \phi(v_i)\binom{n_i}{2}\right). \]
\end{corollary}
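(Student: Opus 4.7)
The plan is to derive this identity as an immediate consequence of Theorem~\ref{thm:sum-up-part} combined with Definition~\ref{def:wpci}. Specifically, I will use the fact that $\Phi^{**}(N) = \Phi^*(N) + \phi(\rho_N)\binom{n}{2}$ holds by definition of the weighted (total) cophenetic index, compare this with the expression for $\Phi^{**}(N)$ supplied by Theorem~\ref{thm:sum-up-part}, and then cancel the common summand $\phi(\rho_N)\binom{n}{2}$ to isolate $\Phi^*(N)$ on the left.

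Concretely, Theorem~\ref{thm:sum-up-part} yields
\[\Phi^{**}(N) \;=\; \sum_{i=1}^k \left(\Phi^*(\partN_i) + \phi(v_i)\binom{n_i}{2}\right) + \phi(\rho_N)\binom{n}{2}.\]
Substituting $\Phi^*(N) + \phi(\rho_N)\binom{n}{2}$ for the left-hand side via Definition~\ref{def:wpci} and subtracting $\phi(\rho_N)\binom{n}{2}$ from both sides produces precisely the claimed recursion for $\Phi^*(N)$.

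There is essentially no obstacle in this argument, since the corollary is a restatement of Theorem~\ref{thm:sum-up-part} after stripping off the root's contribution $\phi(\rho_N)\binom{n}{2}$. The only mildly delicate point is to ensure that the identity is correct in the edge case where $\rho_N$ has no relevant neighbors (i.e., $k=0$): in that situation Lemma~\ref{lem:rel_neigh_empty} gives $\relV_N \setminus \{\rho_N\} = \emptyset$, so by Definition~\ref{def:wpci} we have $\Phi^{*}(N) = 0$, which agrees with the empty sum on the right-hand side, so the formula remains valid.
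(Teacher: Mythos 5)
Your derivation is correct and is exactly how the paper obtains this statement: it is presented as an immediate corollary of Theorem~\ref{thm:sum-up-part}, obtained by cancelling the common summand $\phi(\rho_N)\binom{n}{2}$ against Definition~\ref{def:wpci}. Your additional check of the $k=0$ case via Lemma~\ref{lem:rel_neigh_empty} is a sound (if not strictly required) verification that matches the paper's treatment of the empty-neighborhood situation in the proof of the theorem itself.
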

When applied to phylogenetic trees, Theorem~\ref{thm:sum-up-part} yields the following well-known result for the total cophenetic index $\Phi$ on trees.

\begin{corollary}[{\cite[Lemma 4]{Mir2013}}]
Let $T\in \mscr{T}_n$ be a phylogenetic tree and $T_1,\dots,T_k$ be the subtrees rooted in the children of $\rho_T$. Then, \[\Phi^{**}(T) = \sum_{i=1}^k \Phi^{**}(T_i) + \binom{n}{2}.\]
\end{corollary}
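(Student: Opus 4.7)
The plan is to derive this corollary directly by specializing Theorem~\ref{thm:sum-up-part} to the case $N = T \in \mscr{T}_n$. Because every phylogenetic tree is (trivially) a phylogenetic level-$1$ network, all hypotheses of Theorem~\ref{thm:sum-up-part} are met; I only need to simplify the ingredients on the right-hand side using the fact that trees contain no non-trivial blocks.

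First, I would record two immediate consequences of $\mathcal{B}(T) = \emptyset$. On the one hand, $\mathcal{B}^v(T) = \emptyset$ for every $v \in V(T)$, so by definition of $\phi$ we have $\phi_T(v) = 1$ for every non-leaf vertex $v$, in particular $\phi_T(\rho_T) = 1$. On the other hand, every $v \in V(T)$ satisfies Definition~\ref{def:relevant-part}(a), hence $\partial T(v) = T(v)$. Next, I would identify the relevant neighbors of $\rho_T$. Since $\rho_T$ lies in no non-trivial block, only Condition~(1) of Definition~\ref{def:rel-neighbor} can apply, so a relevant neighbor of $\rho_T$ must be a child of $\rho_T$ that is relevant. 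For any non-leaf child $v_i$ of $\rho_T$, we have $\child_T^*(v_i) = \child_T(v_i)$ (again because $T$ contains no non-trivial blocks), and since $T$ is phylogenetic, $|\child_T(v_i)| \geq 2$; therefore every non-leaf child of $\rho_T$ is relevant, while leaf children are not. So the relevant neighbors of $\rho_T$ are precisely the non-leaf children of $\rho_T$.

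Writing $T_1, \ldots, T_k$ for the subtrees rooted at the children of $\rho_T$ as in the statement, let $I \subseteq \{1,\dots,k\}$ index the non-leaf children, so that $\{T_i : i \in I\}$ is exactly the collection of $\partial T(v)$ for $v$ ranging over the relevant neighbors of $\rho_T$. Applying Theorem~\ref{thm:sum-up-part} to $T$ with $\phi_T(\rho_T) = 1$ yields
\[
\Phi^{**}(T) \;=\; \sum_{i \in I} \Phi^{**}(T_i) + \binom{n}{2}.
\]
To obtain the sum over all $i \in \{1,\dots,k\}$ as stated, I would finally observe that for any leaf child $v_j$ of $\rho_T$, the subtree $T_j$ consists of a single vertex; its relevant-vertex set is empty, so $\Phi^*(T_j) = 0$, and by the convention $\binom{1}{2} = 0$ we also have $\phi_{T_j}(\rho_{T_j})\binom{|L(T_j)|}{2} = 0$. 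Hence $\Phi^{**}(T_j) = 0$ for every leaf child, and adjoining these vanishing terms to the sum gives exactly $\sum_{i=1}^k \Phi^{**}(T_i) + \binom{n}{2}$.

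There is no real mathematical obstacle here: the content of the corollary is essentially already contained in Theorem~\ref{thm:sum-up-part}. The only point that requires a small amount of care is the bookkeeping distinction between the \emph{relevant} neighbors of $\rho_T$ (which excludes leaf children) and the \emph{all} children convention used in the statement of the corollary, which is resolved by the trivial observation that single-leaf subtrees contribute $0$ to $\Phi^{**}$.
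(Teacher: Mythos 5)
Your proposal is correct and follows essentially the same route as the paper: specialize Theorem~\ref{thm:sum-up-part} to a tree, using that $\phi\equiv 1$ and $\partial T(v)=T(v)$ in the absence of non-trivial blocks. You are in fact slightly more careful than the paper's one-line proof, which asserts that the relevant neighbors of $\rho_T$ are ``precisely the children'' of $\rho_T$; your explicit observation that leaf children are not relevant but contribute $\Phi^{**}(T_j)=0$ is exactly the point the paper only addresses in the remark following the corollary.
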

\begin{proof}
In a phylogenetic tree $T$, the relevant neighbors of $\rho_T$ are precisely the children of $\rho_T$. 
By Lemma~\ref{lem:blocks-in-partN} and Theorem~\ref{thm:sum-up-part}, we have  
$\Phi^{**}(T) = \sum_{i=1}^k \left(\Phi^*(T_i) + \phi(v_i)\binom{n_i}{2}\right) + \phi(\rho_N)\binom{n}{2} =\sum_{i=1}^k \Phi^{**}(T_i) + \binom{n}{2}.$
\end{proof}

Note that, if $v$ is a leaf in $N$, then $V(\partN(v)) = L_N(v) = \{v\}$ and, therefore, 
    $\Phi^*(\partN(v))+\phi_N(v)\binom{|L_N(v)|}{2}=0+1\cdot \binom{1}{2} = 0$.

\section{Structure of blocks with extremal weights}
\label{APPX:sec:structure-blocks}

The index $\Phi^{**}$ of general level-$1$ phylogenetic networks involves the value $\omega(B)$ of non-trivial blocks $B$. Hence, to understand $\Phi^{**}$, we first investigate the structure of blocks that minimize and maximize $\omega$. As we shall see, lanterns, crescents, and full-moons defined earlier will play crucial role in this regard.
We start with 
\begin{fact} \label{obs:Bm3-isom}
For all phylogenetic level-$1$ networks $N$, every block in $\mathcal{B}_3(N)$, i.e., every block of size three, is simultaneously a lantern, a crescent, and a full-moon as there is only one such block (up to isomorphism), cf. Figure~\ref{fig:lantern-crescent}.
\end{fact}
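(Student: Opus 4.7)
The plan is a short uniqueness argument followed by case-by-case verification. By Lemma~\ref{lem:unique-max-B} and Lemma~\ref{lem:unique-min-B}, the block $B \in \mathcal{B}_3(N)$ has a unique $\preceq_N$-maximal vertex $\rho_B$ and, since $N$ is level-$1$, a unique $\preceq_N$-minimal vertex $\eta_B$, which is the only hybrid of $B$. Because $|V(B)| = 3$, the remaining vertex $v$ is the unique element of $V^-(B)$.

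I would then force the edge set of $B$. Since $\eta_B$ is a hybrid, $\indeg(\eta_B) \geq 2$, and Lemma~\ref{lem:results-B}\eqref{L3.23HSS} tells us that all parents of $\eta_B$ lie in $V(B)$. The only candidates are $\rho_B$ and $v$, so both edges $(\rho_B,\eta_B)$ and $(v,\eta_B)$ must be present. The same lemma applied to $v \in V(B)\setminus\{\rho_B\}$ forces the only admissible parent $\rho_B$ to contribute the edge $(\rho_B, v)$. No further directed edges among $\{\rho_B, v, \eta_B\}$ can be added without creating a directed cycle, so $E(B) = \{(\rho_B,v),(v,\eta_B),(\rho_B,\eta_B)\}$ and $B$ is unique up to isomorphism.

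Finally, I would verify each of the three shape definitions directly on this unique block. For the lantern property: there is precisely $m-2 = 1$ internally vertex-disjoint $\rho_B\eta_B$-path, namely $\rho_B \to v \to \eta_B$, containing the single internal vertex $v \in V^-(B)$, and the optionally allowed shortcut $(\rho_B,\eta_B)$ is present. For the crescent property: $(\rho_B, v, \eta_B)$ is a Hamiltonian path, $(\rho_B,\eta_B)$ is the required shortcut, and since $V^-(B) = \{v\}$ there are no further edges of the form $(v_i,\eta_B)$ to scrutinize. For the full-moon property: $B$ decomposes into the two internally vertex-disjoint paths $\rho_B \to v \to \eta_B$ and $\rho_B \to \eta_B$, with $\lceil (m-2)/2 \rceil = 1$ and $\lfloor (m-2)/2 \rfloor = 0$ internal vertices, respectively.

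There is essentially no obstacle here; the argument is pure bookkeeping. The only minor subtlety is confirming that the shortcut $(\rho_B,\eta_B)$ is genuinely forced rather than merely permitted (as the lantern definition would allow either), but this is immediate from the hybrid condition $\indeg(\eta_B)\geq 2$ used above, or equivalently from the observation that without it $B$ would reduce to a $3$-vertex directed path whose underlying graph is acyclic, contradicting non-triviality of $B$.
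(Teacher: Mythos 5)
Your argument is correct and matches what the paper intends: the statement is left as an unproved Observation justified by pointing to Figure~\ref{fig:lantern-crescent}, and your derivation of the unique edge set $\{(\rho_B,v),(v,\eta_B),(\rho_B,\eta_B)\}$ from Lemmas~\ref{lem:unique-max-B}, \ref{lem:unique-min-B} and Lemma~\ref{lem:results-B}\eqref{L3.23HSS}, followed by checking the three definitions for $m=3$, is exactly the bookkeeping the authors omit. No gaps.
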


To characterize blocks that maximize and minimize $\omega$ we first need
\begin{lemma}\label{lem:not-crescent}
Let $N$ be a phylogenetic level-$1$ network and
let $B \in \mathcal{B}_m(N)$ such that $B$ is not a crescent. Then,  $m>3$  and there is a vertex $u\in V^-(B)\cup \{\rho_B\}$ that has at least two children that are contained in $B$ and that are both distinct from $\eta_B$. Moreover, there are two internal vertex-disjoint $u\eta_B$-paths in $B$ that both contain at least one internal vertex of $B$ that is distinct from $u$.
\end{lemma}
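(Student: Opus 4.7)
First, $m>3$ follows immediately from Observation~\ref{obs:Bm3-isom}: every block of size $3$ is simultaneously a lantern, crescent, and full-moon, so if $m=3$ then $B$ would be a crescent, contradicting the hypothesis. For the existence of the claimed vertex $u$, my plan is to argue by contrapositive: I will show that if every $w \in V^-(B) \cup \{\rho_B\}$ has at most one child in $B$ distinct from $\eta_B$, then $B$ must already be a crescent.

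The central auxiliary object is the induced subgraph $B' \coloneqq B - \eta_B$. Using Lemma~\ref{lem:results-B}\eqref{L3.23HSS} together with the facts that $\eta_B$ is the unique hybrid of $B$ (so $\indeg_B(w) = 1$ for every $w \in V(B) \setminus \{\rho_B, \eta_B\}$) and the unique $\preceq_B$-minimal vertex of $B$ (so $\eta_B$ cannot be the parent of any vertex in $B$), every $w \in V(B') \setminus \{\rho_B\}$ satisfies $\indeg_{B'}(w) = 1$. Combined with the contrapositive hypothesis $\outdeg_{B'}(w) \leq 1$ for every $w \in V^-(B) \cup \{\rho_B\}$, and with $\rho_B$ being the unique source of $B'$, a straightforward counting argument forces $B'$ to be a single directed path $(\rho_B, v_1, \ldots, v_{m-2})$. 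The endpoint $v_{m-2} \in V^-(B)$ still requires a child in $B$, which can only be $\eta_B$, so $(v_{m-2}, \eta_B) \in E(B)$ and we obtain a Hamiltonian $\rho_B\eta_B$-path in $B$. Biconnectivity of $B$ then forces the shortcut $(\rho_B, \eta_B)$ to be present: otherwise $\rho_B$ would have only one neighbor (namely $v_1$) in the underlying undirected graph of $B$, so its removal would disconnect $\rho_B$, contradicting that $B$ is a non-trivial block. All remaining edges of $B$ must, by the out-degree hypothesis, be of the form $(v_i, \eta_B)$, so $B$ is exactly a crescent~—~contradicting the hypothesis.

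For the two internally vertex-disjoint $u\eta_B$-paths, let $u$ and its children $w_1, w_2 \in V(B) \setminus \{\eta_B\}$ be as guaranteed above. The same in-degree analysis, \emph{without} the out-degree restriction, still shows that $B' = B - \eta_B$ is an out-arborescence rooted at $\rho_B$; hence the subtrees $B'(w_1)$ and $B'(w_2)$ are vertex-disjoint. Since $\eta_B$ is the unique $\preceq_N$-minimum of $B$, there exist directed paths $Q_i$ from $w_i$ to $\eta_B$ in $B$ for $i = 1, 2$, and all vertices of $Q_i$ other than $\eta_B$ lie in $V(B'(w_i))$. Prepending the edge $(u, w_i)$ to $Q_i$ yields two $u\eta_B$-paths whose internal vertices lie in the disjoint sets $V(B'(w_1))$ and $V(B'(w_2))$ and which each contain $w_i \neq u$ as an internal vertex, completing the construction. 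The main technical hurdle in this plan is the biconnectivity step that rules out the ``Hamiltonian path without shortcut'' configuration in the contrapositive; it requires briefly leaving the directed setting to exploit $2$-vertex-connectivity of the underlying undirected graph of $B$.
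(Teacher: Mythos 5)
Your proof is correct, and although it reaches the same two milestones as the paper's argument (a directed Hamiltonian path plus the shortcut $(\rho_B,\eta_B)$ in the degenerate case, and an explicit pair of paths through two children of $u$ in the general case), the machinery is genuinely different. The paper first shows that if all internal vertices of $B$ were pairwise $\prec_N$-comparable then $B$ would be a crescent, extracts $u$ as the branching point of two root-paths to an incomparable pair, and proves internal disjointness of the two $u\eta_B$-paths by noting that a shared internal vertex would be a second hybrid of $B$, contradicting the level-$1$ property. You instead take the contrapositive of the degree statement head-on and analyze $B-\eta_B$: the unique-hybrid property together with Lemma~\ref{lem:results-B}\eqref{L3.23HSS} and the $\preceq_N$-minimality of $\eta_B$ make $B-\eta_B$ an out-arborescence rooted at $\rho_B$, the out-degree hypothesis collapses it to a directed path, and biconnectivity forces the shortcut. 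Your disjointness step is correspondingly cleaner: the subtrees of the arborescence below distinct children of $u$ are vertex-disjoint by construction, so the two paths are internally disjoint without invoking a hypothetical second hybrid. Both proofs ultimately rest on the same level-$1$ facts (unique root, unique hybrid, unique $\preceq_N$-minimal vertex of $B$), so neither is more general; your version packages them once, as the arborescence structure of $B-\eta_B$, and reuses that package for both halves of the lemma, which makes it somewhat tighter. The only step I would spell out more carefully in a final write-up is that the terminal vertex $v_{m-2}$ of the path $B-\eta_B$ must have a child in $B$ (so that the Hamiltonian path actually closes at $\eta_B$); this follows from the definition of $V^-(B)$ or from Lemma~\ref{lem:unique-min-B}, and you gesture at it, but it deserves an explicit citation.
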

\begin{proof} 
Assume that $B \in \mathcal{B}_m(N)$ is not a crescent. By Observation~\ref{obs:Bm3-isom}, we have $m>3$. We now show that there are internal vertices in $B$ that are $\prec_N$-incomparable. Assume, for contradiction, that all internal vertices of $B$ are $\prec_N$-comparable. Since $B$ is acyclic, $\prec_N$ imposes a total order among the $m-2$ vertices in $V^-(B)$, say $v_{m-2} \prec_N v_{m-1}\prec_N\cdots \prec_N v_1$. This implies that $B$ contains a directed path $(v_1,\dots,v_{m-2})$. At least one of these vertices is incident to $\eta_B$ and thus, all internal vertices and $\eta_B$ can be reached from $v_1$ by some directed path. Since, however, $B$ contains the unique root $\rho_B$, it must hold that $(\rho_B,v_1)\in E(B)$. This implies that $B$ contains a directed path $(\rho_B,v_1,\dots,v_{m-2})$. Since $\eta_B\neq v_{m-2}$ and $N$ is level-$1$, $v_{m-2}$ must have a unique parent $p$ in $B$. Since $B$ is acyclic, there cannot be an edge $(v_{m-2},w)$ for all $w\in V^-(B)\cup \{\rho_B\}$. But since $B$ is biconnected, $v_{m-2}$ must have some child, since otherwise $p$ would be a cut-vertex. There is, however, only one remaining choice for children of $v_{m-2}$, namely $\eta_B$. Hence, $B$ contains a  directed Hamiltonian path $(\rho_B,v_1,\dots,v_{m-2},\eta_B)$. This together with the fact that $N$ is level-$1$ and $B$ is biconnected implies that there is a shortcut $(\rho_B,\eta_B)$. It is now easy to verify that $B$ is a crescent; a contradiction.
	
Hence, $B$ contains two internal $\prec_N$-incomparable vertices $v$ and $w$. Now consider the two distinct paths from $\rho_B$ to $v$, respectively $w$. These paths must contain a $\prec_N$-minimal vertex $u$, i.e., there is no vertex $u'$ contained in these paths such that $u'\prec_N u$. Note that $u=\rho_B$ is possible. Hence, $u$ has children $v$ and $w$. Note, $v,w\neq \eta_B$ since otherwise they would be $\prec_N$-comparable. Note that all internal vertices must have unique parents, since $N$ is level-$1$ and $\eta_B$ is the unique hybrid. Since $u$ is already a parent of $v$ and $w$, neither $(v,w)$ nor $(w,v)$ can be an edge in $N$. Consequently, we can choose a $u\eta_B$-path $P_1$ that contains $v$ but not $w$ and a $u\eta_B$-path $P_2$ that contains $w$ but not $v$. Hence, $P_1$ and $P_2$ are distinct and contain inner vertices. Assume, for contradiction, that $P_1$ and $P_2$ are not internal vertex disjoint. Let $z$ be the first internal vertex that is contained in both $P_1$ and $P_2$ when following these paths from $u$ to $\eta_B$. By construction, $z\neq u,v$. Since $z$ is the first common internal vertex of $P_1$ and $P_2$, it follows that $z$ must have a parent $p_1\preceq_N v$ that is located in $P_1$ but not in $P_2$ and a parent $p_2\preceq_N w$ that is located in $P_2$ but not in $P_1$. Hence, $w$ is a hybrid. Since $w$ is internal, $z\neq \eta_B$. Taken the latter arguments together, $B$ and, thus, $N$ is not level-$1$; a contradiction. Hence, there are two internal vertex-disjoint $u\eta_B$-paths in $B$ that both contain at least one internal vertex of $B$ distinct from $u$.
\end{proof}

\begin{proposition} \label{prop:B_maxmin_arbitrary}
Let $N$ be a phylogenetic level-$1$ network and let $B \in \mathcal{B}_m(N)$. Then, 
\begin{enumerate}[(1)]
	\item $\omega(B) \geq m-2$. This bound is tight and is, in particular, achieved if and only if $B$ is a lantern.
	\item $\omega(B) \leq \binom{m}{3}$. This bound is tight and is, in particular, achieved if and only if $B$ is a crescent.
\end{enumerate}
\end{proposition}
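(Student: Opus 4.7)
For part (1), the lower bound follows by noting that for every $v \in V^-(B)$ the set $K_v$ contains at least $v$ itself and the unique $\preceq_N$-minimal vertex $\eta_B$ of $B$ (Lemma~\ref{lem:unique-min-B}; indeed every internal vertex of $B$ admits a directed path in $B$ ending at $\eta_B$). Hence $\kappa_v \geq 2$ and $\binom{\kappa_v}{2} \geq 1$, so $\omega(B) \geq |V^-(B)| = m-2$. For equality, $\kappa_v = 2$ forces $K_v = \{v, \eta_B\}$ for every internal $v$, so $\eta_B$ is the unique child of $v$ in $B$ and no internal vertex is a proper descendant of another in $B$. Biconnectedness then forces each $v \in V^-(B)$ to have $\rho_B$ as its unique parent in $B$---any other parent $u$ would itself be internal and would satisfy $v \in K_u$, violating $\kappa_u = 2$---yielding exactly the lantern structure. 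The converse direction is a direct computation on the lantern.

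For part (2), my plan is to reduce the problem to an inequality about rooted trees. Since $N$ is level-$1$, $\eta_B$ is the unique hybrid of $B$, so every other vertex has in-degree $1$ in $B$; removing $\eta_B$ from the biconnected graph $B$ therefore leaves a connected DAG in which only $\rho_B$ has in-degree $0$, so that $T_B \coloneqq B - \eta_B$ is a rooted tree on $m-1$ vertices with root $\rho_B$. Writing $s_v$ for the size of the subtree of $v$ in $T_B$, I would show that $\kappa_v \leq s_v + 1$ for every $v \in V^-(B)$: a descending path in $B$ starting at $v$ cannot pass through $\eta_B$ (as $\eta_B$ is a sink in $B$), so the non-$\eta_B$ descendants of $v$ in $B$ coincide with its descendants in $T_B$, while $\eta_B$ contributes at most one additional vertex to $K_v$. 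Consequently
\[
  \omega(B) \;\leq\; \sum_{v \in V(T_B) \setminus \{\rho_B\}} \binom{s_v + 1}{2}.
\]

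The main obstacle is therefore the purely tree-theoretic inequality
\[
  \sum_{v \in V(T) \setminus \{r\}} \binom{s_v + 1}{2} \;\leq\; \binom{n+1}{3},
\]
which I would establish by induction on $n$ for every rooted tree $T$ on $n$ vertices with root $r$, with equality iff $T$ is a directed path. Letting $c_1,\dots,c_t$ be the children of $r$ with subtree sizes $n_1,\dots,n_t$ (so $\sum_i n_i = n-1$), the identity $\binom{a}{2}+\binom{a}{3}=\binom{a+1}{3}$ together with the inductive hypothesis applied to each subtree yields the upper bound $\sum_{i=1}^t \binom{n_i+2}{3}$. The elementary identity
\[
  \binom{a+b+2}{3} - \binom{a+2}{3} - \binom{b+2}{3} \;=\; \frac{ab(a+b+2)}{2}
\]
is strictly positive for $a,b \geq 1$, and iterated pairwise merging of the summands then gives $\sum_i \binom{n_i+2}{3} \leq \binom{n+1}{3}$ with equality iff $t = 1$, i.e.\ iff $T$ is a directed path. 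Applied with $n = m-1$, this yields $\omega(B) \leq \binom{m}{3}$. For the equality case, tracking the reductions forces $T_B$ to be a directed path $\rho_B = u_0, u_1, \dots, u_{m-2}$ and every internal vertex to be an ancestor in $T_B$ of some parent of $\eta_B$ in $B$; biconnectedness of $B$ then forces $(\rho_B,\eta_B)$ to be an edge (otherwise $u_1$ would be a cut vertex), which together with the edge $(u_{m-2},\eta_B)$ yields exactly the crescent structure.
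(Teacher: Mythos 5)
Your proposal is correct. Part (1) coincides in substance with the paper's argument: both rest on $\kappa_v\geq 2$ for $v\in V^-(B)$, and your direct structural derivation of the lantern from $\kappa_v=2$ (unique child $\eta_B$, no internal vertex below another, hence unique parent $\rho_B$) is equivalent to the paper's contradiction via a vertex with $\kappa_u\geq 3$. Part (2), however, takes a genuinely different route. The paper proves the upper bound by induction on $m$, contracting an internal vertex adjacent to $\rho_B$ into the root and using $\kappa_u\leq m-1$, and it handles the equality case by a separate exchange argument that relies on the structural Lemma~\ref{lem:not-crescent} (rerouting a path to strictly increase $\omega$). You instead delete $\eta_B$ to obtain a rooted tree $T_B$ on $m-1$ vertices, observe $\kappa_v=s_v+1$ (your ``$\leq$'' is in fact always an equality, since every vertex of $B$ is an ancestor of $\eta_B$), and reduce everything to the tree inequality $\sum_{v\neq r}\binom{s_v+1}{2}\leq\binom{n+1}{3}$ with equality iff $T$ is a path, proved by the merging identity $\binom{a+b+2}{3}-\binom{a+2}{3}-\binom{b+2}{3}=\tfrac{ab(a+b+2)}{2}>0$. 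Both the identity and the hockey-stick computation for the path check out, and your biconnectedness argument correctly forces the edges $(\rho_B,\eta_B)$ and $(u_{m-2},\eta_B)$ in the equality case. Your approach buys a uniform treatment of bound and equality, dispenses with Lemma~\ref{lem:not-crescent} entirely, and makes transparent the analogy with the caterpillar being the $\Phi$-maximal tree; the paper's approach stays closer to the block itself and reuses its contraction idea elsewhere (e.g., in the maximality proofs). The one step you should make explicit is that $K_v$, defined via $\preceq_N$, coincides with the set of descendants of $v$ reachable \emph{within} $B$; this follows from Lemma~\ref{lem:results-B}(2) (every non-root vertex of $B$ has all its parents in $B$, so a descending $N$-path between two block vertices cannot leave $B$), and the same fact justifies the connectedness of $B-\eta_B$ that you assert.
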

\begin{proof}
Let $N$ be a phylogenetic level-$1$ network and let $B \in \mathcal{B}_m(N)$ be a non-trivial block. Hence, $B$ contains $m \geq 3$ vertices.  
We start with (1). We will make frequent use of the fact that, for every vertex $v \in  V^-(B)$, it holds that $\kappa_v \geq 2$ since  $v \preceq_N v$ and $\eta_B\prec_N v$. Thus,
	\begin{align}
	\omega(B) &= \sum\limits_{v \in  V^-(B)} \binom{\kappa_v}{2} \geq \sum\limits_{v \in  V^-(B)} \binom{2}{2} = |V^{-}(B)| = |V(B)|-2 = m-2. \label{LowerBound_nonbinary_block}
	\end{align}
We continue with showing that $\omega(B) = m-2$ precisely if $B$ is a lantern. Assume first that $B\in \mathcal{B}_m(N)$ is a lantern. Since $B$ consists of $m-2$ internal vertex disjoint paths from $\rho_B$ to $\eta_B$ containing precisely one vertex of $V^{-}(B)$ each, then, irrespective of whether $B$ also contains the edge $(\rho_B, \eta_B)$, we must have $\kappa_v = 2$ for all $v \in V^{-}(B)$. Hence, the inequality in Equation~\eqref{LowerBound_nonbinary_block} becomes an equality and we obtain $\omega(B) = m-2$. 
		
For the converse, assume that $\omega(B) = m-2$. Assume, for contradiction, that $B$ is not a lantern. By Observation~\ref{obs:Bm3-isom}, it must hold that $m>3$. As $B$ is not a lantern, $B$ contains at least one path from $\rho_B$ to $\eta_B$ with at least two internal vertices. Thus, there exists, in particular, a vertex $u \in V^{-}(B)$ with $\kappa_u \geq 3$. Recall that	$\kappa_v \geq 2$ for all $v \in V^-(B)$. Taken the latter arguments together, we obtain
	\begin{align*}
		\omega(B) &= \sum\limits_{v \in V^{-}(B)}  \binom{\kappa_v}{2}  
        =  \binom{\kappa_u}{2}  + \sum\limits_{v \in V^{-}(B)\setminus  \{u\}}  \binom{\kappa_v}{2}  \geq \binom{3}{2} + \sum\limits_{v \in V^{-}(B)\setminus \{u\}} \binom{2}{2} \\
		&= 3 + |V^{-}(B)|-1
		= 3 + (m-2) -1 
		= m 
		> m-2,
	\end{align*}
contradicting $\omega(B) = m-2$. Thus, $B$ must be a lantern. This completes the proof of Part (1).
		
We now proceed with Part (2). We first show that $\omega(B) \leq \binom{m}{3}$ by induction on $m$ for all non-trivial blocks $B$ with $m$ vertices. As base case, let $m=3$. Hence, $B$ is a crescent with one internal vertex $u$ (cf.~Observation~\ref{obs:Bm3-isom}) and we obtain $\omega(B) = \binom{\kappa_u}{2} = \binom{2}{2}=1 = \binom{3}{3} = \binom{m}{3}$. Now, let $m \geq 4$ and suppose that the statement is true for all non-trivial blocks with up to $m-1$ vertices in any phylogenetic level-$1$ network $N$. Let $B \in \mathcal{B}_m$ be a non-trivial block with $m$ vertices. Let $u \in V^{-}(B)$ be a vertex adjacent to $\rho_B$. Note that $u$ must have in-degree 1 since $\eta_B$ is the unique hybrid in $B$. Let $u_1, \ldots, u_l$ denote the children of $u$ in $B$. We now delete $u$ and its incident edges, add the edges $(\rho_B,u_1), \ldots, (\rho_B,u_l)$, except possibly $(\rho_B,\eta_B)$ in case it already exists. Note that if $l=1$, the process described here is simply the suppression of $u$, and also note that $u$ and $\rho_B$ cannot have any child in common except possibly $\eta_B$ (as otherwise, $B$ would contain two hybrids, a contradiction).  It can easily be seen that this yields a block $B'$ with $m-1$ vertices, unique root $\rho_{B'}$, and unique hybrid $\eta_{B'}$. This is due to the fact that all $\geq 2$ paths from $\rho_B$ to $\eta_B$ are also present in $B'$, except that the ones going through $u$ have been shortened in length by 1. In particular, $B'$ is biconnected. We can now apply the inductive hypothesis and conclude that $\omega(B') \leq \binom{m-1}{3}$. Moreover, by the choice of $u$ as being adjacent to $\rho_B$ in $B$, the \enquote{removal} of $u$ to obtain $B'$ does not affect $\kappa_v$ for any vertex $v \in V^{-}(B) \setminus \{u\} = V^{-}(B')$.  Moreover, we have $\kappa_u \leq m-1$ for all $u\in V(B)\setminus\{\rho_B\}$, since in a block $B$ with $m$ vertices, there can be at most $m-1$ vertices $w\preceq_ B u$ as $u \neq \rho_B$ (Lemma \ref{lem:upperbound-kappa}). Taking these facts together, we obtain
	\begin{align*}
	\omega(B) &= \sum\limits_{v \in V^{-}(B)} \binom{\kappa_v}{2} 
		= \binom{\kappa_u}{2} + \sum\limits_{v \in V^{-}(B) \setminus{u}} \binom{\kappa_v}{2} 
		= \binom{\kappa_u}{2} + \underbrace{\sum\limits_{v \in V^{-}(B')} \binom{\kappa_v}{2}}_{= \omega(B')} \leq \binom{m-1}{2} + \binom{m-1}{3} = \binom{m}{3}.
	\end{align*}
It remains to show that this bound is tight and is achieved if and only if $B$ is a crescent.
		
To see this, first suppose that $B$ is a crescent. Let  $P=(\rho_B, v_{m-2}, v_{m-3}, \ldots,  v_2, v_1, \eta_B)$ be the Hamiltonian path in $B$ containing all vertices in $V^{-}(B)=\{v_1,\dots, v_{m-2}\} $. Since $B$ contains $(\rho_B,\eta_B)$ and all other possibly existing  edges that are not located on $P$ are of the form $(v_i,\eta_B)$ one easily verifies that $\kappa_{v_1}=2, \kappa_{v_2}=3, \ldots, \kappa_{v_{m-2}}=m-1$. More precisely, $\kappa_{v_i}=i+1$ for $i=1, \ldots, m-2$. Therefore,
	\begin{align*}
		\omega(B) &= \sum\limits_{v \in V^{-}(B)}  \binom{\kappa_v}{2}  
		= \sum\limits_{v \in \{v_1, \ldots, v_{m-2}\}}  \binom{\kappa_v}{2}  
		= \sum\limits_{i=1}^{m-2} \binom{\kappa_{v_i}}{2}  
		= \sum\limits_{i=1}^{m-2} \binom{i+1}{2} = \binom{m}{3}.
	\end{align*}
Hence, if $B$ is a crescent, we have $\omega(B) =\binom{m}{3}$. Thus, the bound is tight.
		
For the converse, suppose that $B$ is a non-trivial block with $m$ vertices of any phylogenetic level-$1$ network and such that that $\omega(B) =\binom{m}{3}$. Assume, for contradiction, that $B$ is not a crescent. Lemma \ref{lem:not-crescent} implies that $m>3$ and, moreover, $B$ contains at least one vertex $u$ such that there are internally vertex-disjoint $u\eta_B$-paths $P_1$ and $P_2$ in $B$ that both contain at least one internal vertex of $B$ (note that $u=\rho_B$ is possible). We may assume without loss of generality that $|P_1|\leq |P_2|$. Let $a$ denote the  child of $u$ that is located in $P_1$ and let $b$ denote the vertex of $P_2$ that is incident to $\eta_B$. We now modify $B$ to get a block $B'$ as follows: In case $B$ does \emph{not} contain the edge $(\rho_B,\eta_B)$, we add this edge. We then delete the edges $(u,a)$ and $(b,\eta_B)$ and introduce a  new edge $(b,a)$. By construction, this yields a block as the edge $(\rho_B,\eta_B)$ guarantees biconnectedness. For all $v$ on $P_2$, this strictly increases $\kappa_v$, namely at least by $|P_1|-2\geq 1$ (where $-2$ refers to $u$ and $\eta_B$), but possibly even by more (if there is more than one path from $a$ to $\eta_B$). However, by construction all other $\kappa$-values remain unchanged, which shows that overall, $\omega(B')>\omega(B) = \binom{m}{3}$, which is a contradiction as, by the above considerations, $\binom{m}{3}$ is an upper bound for $\omega$. Hence, $B$ must be a crescent, which completes the proof.
\end{proof}

We now consider non-trivial blocks of binary phylogenetic level-$1$ networks. Recall that any such
block $B$ is an (undirected) cycle and contains precisely two paths from $\rho_B$ to $\eta_B$. Note
that every crescent consisting only of the edges of the underlying Hamiltonian path and the shortcut
$(\rho_B, \eta_B)$ satisfies the properties of being such an (undirected) cycle. Hence, for the
maximum weight $\omega$, we can apply Proposition~\ref{prop:B_maxmin_arbitrary}(2) to obtain
the following result.
\begin{corollary}\label{cor:crecent}
Let $N$ be a binary phylogenetic level-$1$ network and let $B\in \mathcal{B}_m(N)$. Then, $\omega(B)
\leq \binom{m}{3}$. Moreover, this bound is tight and is achieved if and only if $B$ is a crescent.
\end{corollary}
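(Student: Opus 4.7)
The plan is to obtain the corollary essentially for free from Proposition~\ref{prop:B_maxmin_arbitrary}(2), since the class of binary phylogenetic level-$1$ networks is a subclass of the phylogenetic level-$1$ networks treated there. Thus the upper bound $\omega(B) \le \binom{m}{3}$ transfers verbatim, and so does the characterization of equality: if $\omega(B) = \binom{m}{3}$ then Proposition~\ref{prop:B_maxmin_arbitrary}(2) already forces $B$ to be a crescent, and conversely every crescent $B \in \mathcal{B}_m$ satisfies $\omega(B) = \binom{m}{3}$ by the direct $\kappa$-computation carried out in that proof.

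The only remaining check is that the characterization is not vacuous in the binary setting, i.e., that crescents can actually occur as blocks of binary phylogenetic level-$1$ networks, so that the bound is tight. I would argue this as follows. In a binary network, every tree vertex has out-degree $2$ and every hybrid vertex has in-degree $2$ and out-degree $1$, so any non-trivial block $B$ is an undirected cycle, consisting of precisely two internally vertex-disjoint $\rho_B \eta_B$-paths (one of which may be the shortcut $(\rho_B, \eta_B)$ itself). A crescent with $V^-(B) = \{v_1, \dots, v_{m-2}\}$ containing the Hamiltonian path $(\rho_B, v_1, \dots, v_{m-2}, \eta_B)$ together with the shortcut $(\rho_B, \eta_B)$, and no further edges of the form $(v_i, \eta_B)$, is exactly such an undirected cycle (any additional $(v_i, \eta_B)$ edge would raise $\indeg(\eta_B)$ above $2$, contradicting binariness). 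Hence crescents exist as blocks of binary phylogenetic level-$1$ networks for every $m \ge 3$, and they realize the bound.

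I do not expect any genuine obstacle here; the statement is a direct specialization of Proposition~\ref{prop:B_maxmin_arbitrary}(2). The only mild care is in noting that in the binary case the family of admissible crescents is strictly smaller (no optional $(v_i, \eta_B)$ edges), but this does not affect either the inequality or its characterization of equality.
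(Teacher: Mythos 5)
Your proposal is correct and takes essentially the same route as the paper: the paper likewise derives the corollary directly from Proposition~\ref{prop:B_maxmin_arbitrary}(2), observing that a non-trivial block of a binary level-$1$ network is an undirected cycle and that the crescent consisting of the Hamiltonian path plus the shortcut $(\rho_B,\eta_B)$ is exactly such a cycle, so the bound remains attainable in the binary class. Your extra remark that additional $(v_i,\eta_B)$ edges are excluded by binariness is the same tightness check the paper makes implicitly.
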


For the non-trivial blocks $B$ with minimum weight $\omega$ the situation becomes, however, more
complicated. Observe that the number of children of $\rho_B$ in a lantern $B\in \mathcal{B}_m(N)$ is
$m-2$ or $m-1$. This immediately implies that in a binary phylogenetic level-$1$ network there
cannot be a lantern except for the case $m=3$ or $m=4$.

\begin{proposition}\label{prop:B_min_binary}
Let $N$ be a binary phylogenetic level-$1$ network and let $B\in \mathcal{B}_m(N)$. Then, 
$\omega(B) \geq \binom{\left \lceil \frac{m}{2} \right \rceil+1}{3} + \binom{\left \lfloor \frac{m}{2} \right \rfloor+1}{3}$. This bound is tight and is, in particular, achieved if and only if $B$ is a full-moon.
\end{proposition}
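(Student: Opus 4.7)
The plan is to exploit the very restrictive structure of non-trivial blocks in a binary level-$1$ network: each such block $B$ is an (undirected) cycle, so (as noted just before the statement) it decomposes into exactly two internally vertex-disjoint directed paths from $\rho_B$ to $\eta_B$. Let these paths be $P_1$ and $P_2$ with $a$, respectively $b$, internal vertices, so that $a+b = m-2$ and $a,b \geq 0$. I would not need to consider shortcuts, since in the binary setting the only such edge would be $(\rho_B,\eta_B)$, and this would count as a third internally vertex-disjoint $\rho_B\eta_B$-path, contradicting binarity of $\rho_B$ or $\eta_B$.

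Next I would compute $\omega(B)$ explicitly as a sum over the two paths. On $P_1$, label the internal vertices $u_1,\dots,u_a$ in the order $\eta_B \prec_B u_1 \prec_B u_2 \prec_B \cdots \prec_B u_a$. Because $N$ is level-$1$ and binary, each $u_i$ has a unique child in $B$, namely $u_{i-1}$ (or $\eta_B$ if $i=1$), so the set $K_{u_i} = \{w \in V(B) : w \preceq_N u_i\}$ equals $\{\eta_B, u_1,\dots,u_i\}$, giving $\kappa_{u_i} = i+1$. By the hockey stick identity,
\[
\sum_{i=1}^{a} \binom{\kappa_{u_i}}{2} = \sum_{i=1}^{a}\binom{i+1}{2} = \binom{a+2}{3}.
\]
The internal vertices on $P_2$ contribute $\binom{b+2}{3}$ by the same argument, and since $V^-(B)$ is the disjoint union of the internal vertices of $P_1$ and $P_2$, I obtain the clean formula
\[
\omega(B) \;=\; \binom{a+2}{3} + \binom{b+2}{3}, \qquad a+b = m-2.
\]

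The remaining step is a one-variable minimization of $f(a) = \binom{a+2}{3} + \binom{m-a}{3}$ subject to $0 \leq a \leq m-2$. Since $\binom{x}{3}$ is convex in $x$ (strictly so for $x \geq 2$), $f$ is minimized when the arguments are as balanced as possible, i.e., when $\{a+2, b+2\} = \{\lceil m/2\rceil+1,\, \lfloor m/2\rfloor+1\}$; this gives exactly the stated lower bound. Convexity also gives uniqueness up to swapping $a$ and $b$: the minimum is attained iff $|a-b|\leq 1$, which is precisely the defining condition of a full-moon. This handles both directions of the "iff".

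The only subtlety I anticipate is the verification that in the binary case the block is honestly a union of two internally vertex-disjoint $\rho_B\eta_B$-paths with no extra chords. This follows from the fact (used implicitly in Fact~\ref{fact:binary_rel_vert} and cited as \cite[Obs.~16]{HSS:22}) that in a binary phylogenetic level-$1$ network every non-trivial block is an undirected cycle, which forces every non-root, non-hybrid vertex of $B$ to have in-degree and out-degree exactly one within $B$. Once this is in hand, the computation of $\kappa_{u_i}=i+1$ is immediate, and the rest is a short convexity argument.
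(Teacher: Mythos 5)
Your proof is correct and follows essentially the same route as the paper's: decompose $B$ into the two internally vertex-disjoint $\rho_B\eta_B$-paths, compute $\kappa_{u_i}=i+1$ along each path to get $\omega(B)=\binom{a+2}{3}+\binom{b+2}{3}$ with $a+b=m-2$, and then minimize over the split. Your convexity argument for the final balancing step is a cleaner packaging of the inequality manipulation the paper carries out explicitly, and it correctly delivers both the bound and the ``if and only if'' characterization of full-moons.
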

\begin{proof}
Let $N$ be a binary phylogenetic level-$1$ network and let $B\in \mathcal{B}_m(N)$ for some fixed $m$. Note that $m\geq 3$. Since $N$ is binary, $B$  consists of two internal vertex-disjoint $\rho_B\eta_B$-paths $P_1$ and $P_2$. Let $U_1,U_2\subseteq V^-(B)$ be the internal vertices that are contained in $P_1$ and $P_2$, respectively. Moreover let $m_1\coloneqq \vert U_1 \vert$ and $m_2\coloneqq \vert U_2 \vert$. Without loss of generality assume that $m_1 \geq m_2$. Note that $m_1,m_2\geq 0$ and $m_1+m_2=m-2$. The latter two arguments imply that $m_1 \in M_1\coloneqq \{\lceil \frac{m-2}{2} \rceil, \lceil \frac{m-2}{2} \rceil+1, \ldots, m-2\}$ and $m_2 \in \{0, 1, \ldots, \lfloor \frac{m-2}{2}\rfloor\} $.
Let $P_1 = (\rho_B, u_{m_1}, \ldots, u_2, u_1, \eta_B)$ and $P_2 = (\rho_B, v_{m_2}, \ldots, v_2, v_1, \eta_B)$.
Then, $\kappa_{u_i} = i+1$ for all $i = 1, \ldots, m_1$, and $\kappa_{v_i} = i+1$ for all $i = 1, \ldots, m_2$. Thus, the weight of $B$ is given by 
	\begin{align*}
		\omega(B) = \sum\limits_{v \in V^{-}(B)} \binom{\kappa_v}{2} 
		&=   \sum\limits_{v \in \{u_1, \ldots, u_{m_1}\}} \binom{\kappa_v}{2} +  
		    \sum\limits_{v \in \{v_1, \ldots, v_{m_2}\}} \binom{\kappa_v}{2} 
		\ = \  \sum\limits_{i=1}^{m_1} \binom{i+1}{2} + 
		    \sum\limits_{i=1}^{m_2} \binom{i+1}{2} \\
	    &=  \sum\limits_{i=0}^{m_1-1} \binom{i+2}{2} + 
		    \sum\limits_{i=0}^{m_2-1} \binom{i+2}{2} 
		\ = \  \binom{2+(m_1-1)+1}{3} +  \binom{2+(m_2-1)+1}{3}  \\
		 &=   \binom{2+m_1}{3} +  \binom{2+m_2}{3} 
		\label{eq:wB}
	\end{align*} 
Now observe that $m_1 \geq \lceil \frac{m-2}{2} \rceil$ implies that $2+m_1\geq 2+\lceil \frac{m-2}{2} \rceil = 2+ \lceil \frac{m}{2} -1 \rceil = 2+\lceil \frac{m}{2} \rceil  -1 =\lceil \frac{m}{2} \rceil +1$. 
In the latter, equality is achieved if and only if $m_1 = \lceil \frac{m-2}{2} \rceil$. 
Moreover, $m\geq 3$ implies that $m_1\geq 1$. This together with $m_2=m-m_1 - 2$ implies that $2+m_2 = 2+m-m_1 - 2 \geq m-1 \geq \lfloor \frac{m}{2} \rfloor +1$. In the latter, equality is achieved if and only if $m_2 = \lfloor \frac{m}{2} \rfloor +1-2 =\lfloor \frac{m}{2} \rfloor -1 = \lfloor \frac{m-2}{2} \rfloor $. Taken the latter arguments together, we obtain
	\[
	\binom{2+m_1}{3} \geq \binom{\lceil \frac{m}{2} \rceil +1}{3}
	\text{\quad and \quad }
	\binom{2+m_2}{3} \geq \binom{\lceil \frac{m}{2} \rceil +1}{3}
	\label{eq:2}
	\]
and, therefore,
	\[\omega(B) = \binom{2+m_1}{3} +  \binom{2+m_2}{3}  \geq 
		\binom{\left \lceil \frac{m}{2} \right \rceil+1}{3} + \binom{\left \lfloor 
			\frac{m}{2} \right \rfloor+1}{3}.\]
Using the aforementioned statements concerning \enquote{equality} for $m_1$ and $m_2$, it is now straightforward to verify that $\omega(B) =\binom{\left \lceil \frac{m}{2} \right \rceil+1}{3} + \binom{\left \lfloor \frac{m}{2} \right \rfloor+1}{3} $ if and only if $m_1 = \lceil \frac{m-2}{2} \rceil$ and thus, $m_2=m-m_1-2 = \lfloor \frac{m-2}{2}\rfloor$ which is, by definition, precisely if $B$ is a full-moon.
\end{proof}

\section{Phylogenetic level-1 networks with extremal indices} \label{APPX:sec:extrema}
Having characterized the structure and weights of extremal blocks in any phylogenetic level-$1$ network in the previous section, in this section we study the extremal values of the weighted total cophenetic index across phylogenetic level-$1$ networks and characterize networks achieving these values.
Additionally, we will focus on the following sub classes of networks.

\begin{definition}\label{def:LEVELoneDEGtwo}
$\BinLevelOneN$ denotes the class of all binary phylogenetic level-$1$ networks on $n$ leaves
and $\HybdidDegTwoN$ the class of all phylogenetic level-$1$ networks on $n$ leaves where each vertex has in-degree at most $2$. 
\end{definition}
\noindent
Note that  $\BinLevelOneN\subseteq \HybdidDegTwoN$. Considering networks within the classes $\BinLevelOneN$ and $\HybdidDegTwoN$
is, in particular, motivated from a biological point of view, where it is often assumed that hybrids have two but not more parental species. 
Networks $N\in \HybdidDegTwoN$ are also called \emph{galled-trees} \cite{HSS:22,Gusfield2003}.
Galled-trees do not only play an important role in phylogenetics but are useful as a framework to store structural information 
of so-called \textsc{GaTeX} graphs that allows to solve several computationally hard problems in linear-time \cite{HS-GatexLinTime:23,HS-GatexForbSubg:23,HS:22}. 
Galled-trees form a subclass of so-called tree-child phylogenetic network \cite{CRV07}. As shown in \cite[Prop.~1]{CRV07}, the number of vertices in galled-trees
on $n$ leaves is bounded by  $4(n-1)+1\in O(n)$.  For later reference, we summarize this in
\begin{lemma}\label{lem:vertices-bounded}
For each network in $\HybdidDegTwoN$ and thus, for each network in $\BinLevelOneN$, the number of vertices is bounded and, in particular,  in $O(n)$.
\end{lemma}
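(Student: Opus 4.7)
The plan is to reduce the statement to the known linear bound for galled-trees proved in [CRV07, Prop.~1]. The paragraph immediately preceding the lemma already identifies $\HybdidDegTwoN$ with the class of galled-trees, so once this identification is invoked, applying the cited proposition gives $|V(N)| \leq 4(n-1)+1$ for every $N \in \HybdidDegTwoN$, which is $O(n)$. The containment $\BinLevelOneN \subseteq \HybdidDegTwoN$ noted in Definition~\ref{def:LEVELoneDEGtwo} then transfers the same bound to $\BinLevelOneN$, completing the argument. In short, the whole proof is a one-line appeal to [CRV07].

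For readers who prefer a self-contained derivation, the plan is to give a direct counting argument as a sanity check. I would partition $V(N) = \{\rho_N\} \cupdot T \cupdot H \cupdot L(N)$ into the root, non-root tree vertices, hybrids, and leaves, with cardinalities $1$, $t$, $h$, and $n$, and exploit the double-counting identity $\sum_v \indeg(v) = \sum_v \outdeg(v) = |E|$. Using (N2), every vertex of $T \cup \{\rho_N\}$ has $\outdeg \geq 2$; the assumption defining $\HybdidDegTwoN$ gives $\indeg(\eta) = 2$ for every $\eta \in H$; and the level-$1$ condition together with Lemma~\ref{lem:unique-min-B} implies that non-trivial blocks are in bijection with the hybrids. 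From here, I would bound $h$ by $n-1$ by considering the tree obtained from $N$ after removing one incoming hybrid edge from each $\eta \in H$ (which yields a phylogenetic tree with the same leaf set), and then bound $t$ by $n + h - 1$ using the elementary identity $\sum_{v \in T \cup \{\rho_N\}}(\outdeg(v)-1) = n-1$ for that underlying tree. Combining these yields $|V(N)| \leq 1 + (2n-2) + (n-1) + n = 4n - 2$, consistent with [CRV07].

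The only subtlety is that $\HybdidDegTwoN$ allows non-binary tree vertices of arbitrary out-degree, so one cannot simply use the binary identity $|V| = 2n - 1 + 2h$. The bound on $t$ above, however, does not require binarity: it follows from the telescoping identity on any rooted tree with prescribed leaf count, applied to the tree extracted by cutting one hybrid edge per non-trivial block. This is the step to handle with care, but it is standard and already implicit in the arguments of [CRV07, HSS:22]. Given that the statement is presented as an immediate consequence of a cited result, I would keep the proof to a single sentence invoking [CRV07, Prop.~1] and leave the counting sketch as an optional remark.
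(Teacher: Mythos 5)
Your proposal matches the paper exactly: the paper also proves this lemma by observing that networks in $\HybdidDegTwoN$ are galled-trees and then citing \cite[Prop.~1]{CRV07} for the bound $4(n-1)+1\in O(n)$, with the statement for $\BinLevelOneN$ following from the containment $\BinLevelOneN\subseteq\HybdidDegTwoN$. Your optional counting sketch is a reasonable supplement (yielding the marginally weaker but still linear bound $4n-2$), but the one-line citation is all the paper does and all that is needed.
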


\subsection{Minimum networks} 
\label{APPX:subsection:MinN}

We begin by considering the minimum value of the weighted total cophenetic index both across
arbitrary level-$1$ networks as well as binary level-$1$ networks. As we shall see, the minimum
value and the networks achieving it, depend on the choice of $\epsilon$ used when assigning weights
to vertices, reflecting the fact that different values of $\epsilon$ treat true tree vertices and
tree vertices contained in blocks differently. We are now characterize the level-$1$ networks with
minimum weighted total cophenetic index. We first consider arbitrary level-$1$ networks.

\begin{theorem}\label{thm:minimum-L1}

Let $N$ be a  phylogenetic level-$1$ network on $n\geq 2$ leaves. Then $N$ minimizes the weighted total cophenetic index within the class
of phylogenetic level-$1$ networks on $n$ leaves if and only if $\Phi^{**}(N)=\binom{n}{2}$. Moreover, 
if $\epsilon >0$, then $N\simeq \Tstar$ and, otherwise, if $\epsilon=0$, then $N\simeq \Tstar$ or $N\simeq \Tmod$. 
\end{theorem}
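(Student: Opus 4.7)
The plan is to combine the lower bound $\Phi^{**}(N) \geq \binom{n}{2}$ from Observation~\ref{obs:phi**-lower-bound} with a structural analysis forcing equality to pin down the network. The easy direction, namely verifying $\Phi^{**}(\Tstar) = \binom{n}{2}$ for any $\epsilon$ and $\Phi^{**}(\Tmod) = (\epsilon+1)\binom{n}{2}$ (so equal to $\binom{n}{2}$ iff $\epsilon=0$), is a direct computation: $\rho_N$ is the only relevant vertex in both networks, and the unique triangle block in $\Tmod$ contributes $\omega(B)=1$.

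The hard direction requires showing that $\Phi^{**}(N)=\binom{n}{2}$ forces $\relV_N=\{\rho_N\}$ and $\phi(\rho_N)=1$. The crux is the following auxiliary claim: for every relevant vertex $v$ in a phylogenetic network, $|L_N(v)|\geq 2$. To prove this I pick distinct $u_1,u_2\in\child^*_N(v)$ and case-split on their comparability. If $u_1,u_2$ are $\preceq_N$-comparable, one of them must be a hybrid (having two parents among $v$ and some vertex on the path between them), so Lemma~\ref{lem:results-B}\eqref{L7.5HSS} gives a strict leaf-set inclusion and hence $|L_N(v)|\geq 2$. If they are incomparable and not in a common non-trivial block, Lemma~\ref{lem:results-B}\eqref{L7.8HSS} yields $L_N(u_1)\cap L_N(u_2)=\emptyset$ and the same bound. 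If they are incomparable and in a common non-trivial block $B$, the definition of $\child^*$ forces $v=\rho_B$, and in a level-$1$ network each $u_i$ is a tree vertex with one child in $B$ (toward $\eta_B$) and, by (N2), at least one child outside $B$; combining $|L_N(u_i)|\geq|L_N(\eta_B)|+1$ with Lemma~\ref{lem:results-B}\eqref{L3.37HSS} gives $|L_N(v)|\geq|L_N(\eta_B)|+2\geq 2$. Applying this claim to Observation~\ref{obs:root-rel} and using $\phi(w)\geq 1$ for every non-leaf, the equality $\Phi^{**}(N)=\binom{n}{2}$ forces $\phi(\rho_N)=1$ and $\relV_N=\{\rho_N\}$; from the definition of $\phi$ together with Observation~\ref{obs:omega}, $\phi(\rho_N)=1$ means either $\mathcal B^{\rho_N}(N)=\emptyset$, or $\epsilon=0$ and $\mathcal B^{\rho_N}(N)=\{B\}$ with $B$ a triangle.

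It remains to translate these constraints into the structural conclusion. Observation~\ref{obs:relevant}(ii)--(iii) together with $\relV_N=\{\rho_N\}$ rules out any non-trivial block not rooted at $\rho_N$ and any true tree vertex other than $\rho_N$. In the no-block case $N$ is therefore a tree with $\rho_N$ as its only inner vertex, giving $N\simeq\Tstar$. In the triangle case the internal vertex $u\in V^-(B)$ is a tree vertex, so by (N2) has $\outdeg_N(u)\geq 2$; since $|\child^*_N(u)|\leq 1$, exactly one child of $u$ lies outside $B$, and irrelevance of all descendants forces this child to be a leaf (otherwise a new relevant vertex would appear below $u$). A symmetric analysis of $\eta_B$ and of the children of $\rho_N$ outside $B$ then yields exactly the structure of $\Tmod$. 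The main obstacle is the incomparable-with-common-block subcase of the key claim, where the level-$1$ unique-hybrid structure together with (N2) are needed to guarantee that internal block vertices have additional children outside $B$; after this, the remaining case analysis is a routine combinatorial check.
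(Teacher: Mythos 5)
Your proposal is correct and follows essentially the same route as the paper: both derive $\Phi^{**}(N)=\binom{n}{2}$ by combining the lower bound of Observation~\ref{obs:phi**-lower-bound} with the value at $\Tstar$, and then show that any relevant vertex other than $\rho_N$ (in particular any block root below $\rho_N$) would contribute a strictly positive term, forcing $\phi(\rho_N)=1$ and hence the dichotomy $\Tstar$ versus $\Tmod$ according to whether $\epsilon>0$ or $\epsilon=0$. The only substantive difference is that you make explicit the auxiliary fact that every relevant vertex has at least two leaf descendants (which the paper uses only implicitly for block roots); be aware that in your third subcase a child $u_i$ of $\rho_B$ may have \emph{two} children inside $B$ (so (N2) does not directly yield a child of $u_i$ outside $B$), but the needed inequality $|L_N(u_i)|\geq|L_N(\eta_B)|+1$ still follows by descending to a $\preceq_N$-minimal internal vertex below $u_i$ in $B$, whose only $B$-child is $\eta_B$ and which therefore must have a child outside $B$.
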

\begin{proof} 
Let $N$ be a phylogenetic level-$1$ network that minimizes the weighted total cophenetic index within the class
of phylogenetics level-$1$ networks on $n\geq 2$ leaves. The weighted total cophenetic index of the star tree is
$\Phi^{**}(\Tstar)=\binom{n}{2}$ since $\Phi^*(\Tstar)=0$ and
$\phi({\rho_{\Tstar}})=1$.
Since $\Tstar$ is a phylogenetic level-$1$ network on $n$ leaves and 
by choice of $N$,   $\Phi^{**}(N)\leq \Phi^{**}(\Tstar) = \binom{n}{2}$ must hold. 
By Observation\ \ref{obs:phi**-lower-bound}, $\Phi^{**}(N)\geq \binom{n}{2}$
and therefore, $\Phi^{**}(N) =  \binom{n}{2}$. Using the latter arguments, one easily
derives the \emph{if} direction.
        
Assume, for contradiction, that $N$  contains a non-trivial block $B$ that is not rooted in the root $\rho_N$. 
Note that $\omega(B)\geq 1$. By Observation\ \ref{obs:relevant}, $\rho_B\neq \rho_N$ is another relevant vertex. In particular,  $\phi(\rho_B)\geq 1+\epsilon \geq 1$
and $|L_N(\rho_B))|>1$. Taken the latter arguments together, we obtain $\Phi^{**}(N)> \binom{n}{2}$; a contradiction. 
Hence, if $N$ contains a non-trivial block $B$, then $\rho_B = \rho_N$ must hold.

Suppose now that $\epsilon>0$. In this case, $N$ cannot contain any non-trivial block, since otherwise
such a block $B$ satisfies $\rho_B = \rho_N$ and thus, $\phi(\rho_N)>1$ which implies 
 $\Phi^{**}(N)> \binom{n}{2}$; a contradiction. Hence, $N$ must be a tree. 
 By Proposition \ref{prop:cp-properties}, $N \simeq   \Tstar$ must hold.
 
Suppose now that $\epsilon = 0$. 
If $N$ contains more than one non-trivial block, then, as argued above, all of them must be rooted in $\rho_N$ and we
obtain,  $\phi(\rho_N)\geq \epsilon + 2 = 2 > 1$ in which case $\Phi^{**}(N)>\binom{n}{2}$; a contradiction. 
Hence, $N$ contains at most one non-trivial block. If $N$ does not contain
such a block, then $N$ is a tree and, by similar arguments as above,  we have $N \simeq   \Tstar$. 
Suppose that  $N$ contains one non-trivial block $B$. Again $\rho_B=\rho_N$ must hold. 
By Observation~\ref{obs:omega}, $\omega(B)>1$, if $B$ consists of more than three vertices
and, therefore, $\phi(\rho_N)>1$ and, thus, $\Phi^{**}(N)>\binom{n}{2}$; a contradiction. 
Hence, $B$ contains precisely three vertices. In fact, we have in this case, 
$\omega(B)=1$ and thus,  $\phi(\rho_N)=1$ and $\Phi^{**}(N) =  \binom{n}{2}$. 
Thus, $N\simeq \Tmod$. 
\end{proof}

Note that the star tree $\Tstar$ and the modified star $\Tmod$  with $n\geq2$ leaves are both contained in $\HybdidDegTwoN$.
This together with Theorem~\ref{thm:minimum-L1} implies 
\begin{corollary}\label{cor:min-hybrid-two}
Let $N\in \HybdidDegTwoN$ be a phylogenetic level-$1$ network with $n\geq 2$  leaves.
Then, $N$ minimizes the weighted total cophenetic index within the class $\HybdidDegTwoN$
if and only if $\Phi^{**}(N)=\binom{n}{2}$. Moreover, 
if $\epsilon >0$, then $N\simeq \Tstar$ and, otherwise, if $\epsilon=0$, then $N\simeq \Tstar$ or $N\simeq \Tmod$.
\end{corollary}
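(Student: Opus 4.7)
The plan is to deduce the corollary directly from Theorem~\ref{thm:minimum-L1} by observing that $\HybdidDegTwoN$ is a subclass of the class of all phylogenetic level-$1$ networks on $n$ leaves, and that the minimizers identified in Theorem~\ref{thm:minimum-L1} already lie in $\HybdidDegTwoN$.

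First, I would verify the containment $\{\Tstar,\Tmod\}\subseteq \HybdidDegTwoN$ for $n\geq 2$, as noted in the excerpt: $\Tstar$ is a tree, hence trivially level-$1$ and without any hybrid vertex, so all in-degrees are at most $1\leq 2$. The network $\Tmod$ consists of the star tree $\mathrm{T}^{star}_{n-2}$ with a triangle $N_{\Delta}$ glued at the root, so its only hybrid vertex is $\eta_{N_{\Delta}}$ with in-degree exactly $2$; hence it is level-$1$ and all vertices have in-degree at most $2$.

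Next, I would use Theorem~\ref{thm:minimum-L1}, which states that any phylogenetic level-$1$ network $N$ on $n\geq 2$ leaves satisfies $\Phi^{**}(N)\geq \binom{n}{2}$, and characterizes the equality case. Since $\HybdidDegTwoN$ is contained in the class of all phylogenetic level-$1$ networks on $n$ leaves, the inequality $\Phi^{**}(N)\geq \binom{n}{2}$ holds for all $N\in \HybdidDegTwoN$ as well. Because the bound is attained within $\HybdidDegTwoN$ (by $\Tstar$, and also by $\Tmod$ when $\epsilon=0$), the minimum of $\Phi^{**}$ on $\HybdidDegTwoN$ equals $\binom{n}{2}$, which proves the equivalence stated in the first sentence of the corollary.

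Finally, for the characterization of the minimizers, I would argue that if $N\in \HybdidDegTwoN$ achieves $\Phi^{**}(N)=\binom{n}{2}$, then $N$ in particular minimizes $\Phi^{**}$ within the larger class of all phylogenetic level-$1$ networks on $n$ leaves. Applying Theorem~\ref{thm:minimum-L1} then immediately yields: $N\simeq\Tstar$ if $\epsilon>0$, and $N\simeq\Tstar$ or $N\simeq\Tmod$ if $\epsilon=0$. Since no step of this argument requires any new combinatorial work beyond invoking Theorem~\ref{thm:minimum-L1} and checking membership in $\HybdidDegTwoN$, there is no substantive obstacle; the proof is essentially a one-paragraph deduction.
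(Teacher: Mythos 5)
Your proposal is correct and follows exactly the paper's own (very short) argument: since $\Tstar$ and $\Tmod$ both lie in $\HybdidDegTwoN$, the minimizers over the full class of phylogenetic level-$1$ networks identified in Theorem~\ref{thm:minimum-L1} are already contained in the subclass, so the minimum value and the characterization of minimizers carry over verbatim. No gaps.
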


To characterize level-$1$ networks with minimum weighted total cophenetic index
among those networks that contain at least one non-trivial block (and thus, level-$1$ networks that are not trees)
we first provide 

\begin{proposition}\label{prop:min-withBlock-basic}
		Let $\hat{\mathscr{N}}_n$ be the class of phylogenetic level-$1$ networks on $n \geq 2$ leaves  that contain at least one non-trivial block. 
	  If $N\in \hat{\mathscr{N}}_n$ minimizes the weighted total cophenetic index within the class $\hat{\mathscr{N}}_n$, 
		then $N\simeq\Tmod$ or $N\simeq\Tpush$. 
\end{proposition}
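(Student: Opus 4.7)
The plan is to combine the recursiveness of $\Phi^{**}$ (Theorem~\ref{thm:sum-up-part}) with the tight lower bounds $\omega(B) \geq 1$ (Observation~\ref{obs:omega}) and $\Phi^{**}(N') \geq \binom{|L(N')|}{2}$ (Observation~\ref{obs:phi**-lower-bound}) in order to squeeze any minimizer into either the $\Tmod$ or the $\Tpush$ shape. First I would compute the two candidates: $\Phi^{**}(\Tmod) = (1+\epsilon)\binom{n}{2}$ since $\rho_{\Tmod}$ is the unique relevant vertex, with $\phi(\rho_{\Tmod}) = \epsilon + \omega(B) = \epsilon + 1$; and $\Phi^{**}(\Tpush) = \binom{n}{2} + (1+\epsilon)$ since $\rho_{\Tpush}$ is a true tree vertex with $\phi = 1$ whose unique relevant neighbor is the triangle root contributing $(1+\epsilon)\binom{2}{2}$. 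Any minimizer $N \in \hat{\mathscr{N}}_n$ must therefore satisfy $\Phi^{**}(N) \leq \min\{(1+\epsilon)\binom{n}{2},\, \binom{n}{2}+1+\epsilon\}$. The case $n=2$ is trivial since $\Tmod \simeq \Tpush \simeq N_\Delta$ is the only element of $\hat{\mathscr{N}}_2$, so assume $n\geq 3$ and split on whether $\mathcal{B}^{\rho_N}(N)$ is empty.

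If $\mathcal{B}^{\rho_N}(N) \neq \emptyset$, Theorem~\ref{thm:sum-up-part} yields $\Phi^{**}(N) = \sum_i \Phi^{**}(\partN_i) + \phi(\rho_N)\binom{n}{2}$ with $\phi(\rho_N) \geq \epsilon + |\mathcal{B}^{\rho_N}|$ and equality iff each such block is a triangle. Each summand $\Phi^{**}(\partN_i) \geq 1$. Comparing with the upper bound $(1+\epsilon)\binom{n}{2}$ forces the sum to be empty, $|\mathcal{B}^{\rho_N}| = 1$, and the unique block $B$ at $\rho_N$ to be a triangle. By Lemma~\ref{lem:rel_neigh_empty}, $\rho_N$ is then the only relevant vertex of $N$. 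Using Observation~\ref{obs:relevant} together with the phylogenetic axiom (N2), I would argue that no true tree vertex with $\outdeg > 1$ and no further block root can exist, hence the only non-leaf vertices of $N$ are $\rho_N$, $u$, and $\eta_B$; degree counting forces $u$ and $\eta_B$ to each carry exactly one pendant leaf and $\rho_N$ to carry $n-2$ further pendant leaves, so $N \simeq \Tmod$.

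If $\mathcal{B}^{\rho_N}(N) = \emptyset$ then $\phi(\rho_N) = 1$. Since $N \in \hat{\mathscr{N}}_n$ contains some non-trivial block, its root is a relevant vertex distinct from $\rho_N$, so by Lemma~\ref{lem:rel_neigh_empty} the set of relevant neighbors of $\rho_N$ is non-empty. Theorem~\ref{thm:sum-up-part} then gives $\sum_i \Phi^{**}(\partN_i) = \Phi^{**}(N) - \binom{n}{2} \leq 1+\epsilon$, and since each summand is $\geq 1$ there is exactly one relevant neighbor $v_1$. Were $v_1$ a true tree vertex, $\partN(v_1)$ would still contain a non-trivial block rooted at some proper descendant $u$ of $v_1$, which would contribute $\phi(u)\binom{|L_N(u)|}{2} \geq 1+\epsilon$ to $\Phi^*(\partN(v_1))$ while $\phi(v_1)\binom{n_1}{2} \geq 1$, yielding $\Phi^{**}(\partN(v_1)) \geq 2+\epsilon > 1+\epsilon$, a contradiction. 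Hence $v_1$ is the root of some non-trivial block; applying the first-case analysis inside $\partN(v_1)$ forces $|L_N(v_1)| = 2$, a single triangle at $v_1$, and $\partN(v_1) \simeq N_\Delta$. The remaining children of $\rho_N$ must be leaves (else, by the same structural reasoning as before, they produce further relevant vertices and violate $k=1$), yielding $N \simeq \Tpush$.

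The main obstacle is the final structural deduction in each case from the relevant-vertex count to the rigid shapes of $\Tmod$ and $\Tpush$: I need to rule out chains of degree-one vertices, non-leaf pendants outside the triangle, and hybrid children of $\rho_N$ lying in unseen blocks. Each of these rules out boils down to a careful application of Observation~\ref{obs:relevant} and axiom (N2), but several sub-cases need to be handled (notably what the non-root vertices $u$ and $\eta_B$ of the triangle are allowed to have as children, given that both must be irrelevant yet phylogenetic). A cleaner alternative would be to invoke Theorem~\ref{thm:local-phi} as local surgery — replacing any deviation from the canonical form by the corresponding piece of $\Tmod$ or $\Tpush$ and showing a strict decrease in $\Phi^{**}$ — but the direct argument from the recursive formula together with the lower bounds $\omega(B)\geq 1$ and $\Phi^{**}(N')\geq \binom{|L(N')|}{2}$ appears most economical.
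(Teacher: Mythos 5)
Your overall strategy---computing $\Phi^{**}(\Tmod)=(1+\epsilon)\binom{n}{2}$ and $\Phi^{**}(\Tpush)=\binom{n}{2}+1+\epsilon$ and then using the lower bounds $\phi(v)\ge 1$, $\phi(\rho_B)\ge 1+\epsilon$, and $\omega(B)\ge 1$ (with equality iff $B$ is a triangle) to squeeze any minimizer into one of the two shapes---is in substance the same as the paper's. The only packaging difference is that you route the bookkeeping through the recursive formula of Theorem~\ref{thm:sum-up-part}, whereas the paper lower-bounds $\Phi^{**}(N)=\sum_{v\in\relV_N}\phi(v)\binom{|L_N(v)|}{2}$ directly by singling out specific relevant vertices; the paper also organizes the cases as ``exactly one block'' first and then ``$\rho_B=\rho_N$ vs.\ $\rho_B\neq\rho_N$'', which matches your split on $\mathcal{B}^{\rho_N}(N)$ being empty or not. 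Your first case is correct as written.

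There is, however, one inference in your second case that fails as stated: from $\sum_{i=1}^k\Phi^{**}(\partN_i)\le 1+\epsilon$ and ``each summand is $\ge 1$'' you conclude $k=1$. Since $\epsilon$ may be as large as $2$, the budget $1+\epsilon$ can reach $3$, so this does not rule out $k=2$ or $k=3$. The repair is to reorder observations you already make afterwards: the block root $\rho_B$ (relevant and $\neq\rho_N$) lies in $\partN(v_j)$ for exactly one relevant neighbor $v_j$, and that single summand is already $\ge 1+\epsilon$ if $v_j=\rho_B$ (since $\phi(v_j)\ge1+\epsilon$ and $\binom{n_j}{2}\ge1$), and $\ge 2+\epsilon$ if $\rho_B\prec_N v_j$ (the term $\phi(\rho_B)\binom{|L_{\partN_j}(\rho_B)|}{2}\ge 1+\epsilon$ inside $\Phi^*(\partN_j)$ plus $\phi(v_j)\binom{n_j}{2}\ge1$). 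The second alternative exceeds the budget, so $v_j=\rho_B$, and the remaining $k-1$ summands, each $\ge 1$, must then vanish, giving $k=1$; from there your deductions ($n_1=2$, a single triangle at $v_1$, all other children of $\rho_N$ leaves) go through. A minor shared caveat: in both cases the final identification with $\Tmod$ or $\Tpush$ tacitly assumes the hybrid of the triangle has a pendant leaf, i.e., positive out-degree, which axiom (N2) does not literally force; the paper's own proof makes the same tacit assumption, so I do not count it against you.
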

\begin{proof}
Let $N\in \hat{\mathscr{N}}_n$ be a network that minimizes the weighted total
cophenetic index within the class $\hat{\mathscr{N}}_n$. Suppose first that
$n=2$. One easily verifies that (up to isomorphism) there is only one phylogenetic level-$1$
network that contains at least one non-trivial block, namely the triangle
network $N_{\Delta}$. Hence, $N\simeq N_{\Delta}$ must hold. By definition,
$N_\Delta \simeq \TmodTWO \simeq \TpushTWO$ which verifies the statement for the
case $n=2$. 
 
Suppose now that $n\geq 3$. Observe that $\Phi^{**}( \Tmod) =
(1+\epsilon)\binom{n}{2}$ and $\Phi^{**}(\Tpush) = \binom{n}{2} +
(1+\epsilon)\binom{2}{2} = \binom{n}{2} +1+\epsilon$. Since $N$ minimizes the weighted total cophenetic index
within the class $\hat{\mathscr{N}}_n$ and $\Tmod,  \Tpush\in
\hat{\mathscr{N}}_n$, we have $\Phi^{**}(N) \leq
(1+\epsilon)\binom{n}{2}$ and $\Phi^{**}(N)\leq \binom{n}{2} +1+\epsilon$. Assume, for contradiction, that $N$ contains at
least two non-trivial blocks.
Suppose first, that all these non-trivial blocks are rooted in the root $\rho_N$
of $N$. By Observation\ \ref{obs:omega}, each non-trivial block $B$ has weight
$\omega(B)\geq 1$. Hence, $\phi(\rho_N) \geq 2+\epsilon$ and thus,
\[\Phi^{**}(N)\geq \phi(\rho_N)\binom{n}{2} \geq (2+\epsilon)\binom{n}{2} >(1+\epsilon)\binom{n}{2};\] a contradiction. 
Hence, there is a vertex $w\neq \rho_N$ with $w = \rho_B$ for some non-trivial block $B$ of $N$. By Observation\
\ref{obs:relevant}, $w$ is relevant. Moreover, $\phi(w)\geq 1+\epsilon$ and
$|L_N(w)|\geq 2$ must hold. If $\rho_N$ is the root of another non-trivial block
of $N$ we have $\phi(\rho_N) \geq 1+\epsilon$. Therefore, \[\Phi^{**}(N)\geq
\phi(\rho_N)\binom{n}{2} + \phi(w)\binom{|L_N(w)|}{2} \geq
(1+\epsilon)\binom{n}{2}+(1+\epsilon)\binom{|L_N(w)|}{2}>(1+\epsilon)\binom{n}{2};\]
a contradiction. Hence, none of the non-trivial blocks of $N$ that are not rooted in $w$ can be rooted in
$\rho_N$ and thus, $\phi(\rho_N) =1$. In this case, there is either (i) a vertex
$v\neq w$ such that $v$ is the root of some non-trivial block $B'\neq B$ or (ii) all
non-trivial blocks are rooted in $w$. 
In Case (i), $v$ is relevant (cf.\ Observation\ \ref{obs:relevant}) and it holds that $\phi(v)\geq 1+\epsilon$ and $|L_N(v)|\geq 2$. 
Therefore, 
\begin{align*}
\Phi^{**}(N) &\geq \phi(\rho_N)\binom{n}{2} + \phi(w)\binom{|L_N(w)|}{2} +\phi(v)\binom{|L_N(v)|}{2} \\
						&\geq \binom{n}{2}+(1+\epsilon)\binom{|L_N(w)|}{2} + (1+\epsilon)\binom{|L_N(v)|}{2} >\binom{n}{2} +1+\epsilon.
\end{align*}
In Case (ii), $\phi(w)\geq 2+\epsilon$ must hold and we obtain
\[\Phi^{**}(N)\geq \phi(\rho_N)\binom{n}{2} + \phi(w)\binom{|L_N(w)|}{2} \geq \binom{n}{2}+(2+\epsilon)\binom{|L_N(w)|}{2}>\binom{n}{2} +1+\epsilon.\]
Thus, both Cases (i) and (ii) yield a contradiction. 
In summary, $N$ contains precisely one non-trivial block. 

Assume now, for contradiction, that neither $N\simeq \Tmod$ nor $N\simeq \Tpush$ holds. 
Let $B$ be the unique non-trivial block in $N$. 

Assume first that $\rho_N=\rho_B$. Since $N\not\simeq \Tmod$, there is a
relevant vertex $w\neq \rho_N$ in $N$ or $B$ is not a triangle. 
If there is a relevant vertex $w\neq \rho_N$, then $|L_N(w)|\geq 2$ and thus, 
\[\Phi^{**}(N) \geq \phi(\rho_N)\binom{n}{2} + \phi(w)\binom{|L_N(w)|}{2}  
						\geq (1+\epsilon) \binom{n}{2}+\binom{|L_N(w)|}{2} >(1+\epsilon)\binom{n}{2};
\]  a contradiction. Hence, there is no further relevant vertex than the root $\rho_N$
in $N$. Therefore, $B$ is not a triangle, which implies that $\omega(B)>1$. 
Hence, 
\[\Phi^{**}(N) = \phi(\rho_N)\binom{n}{2} = ( \omega(B)+\epsilon )\binom{n}{2}	> (1+\epsilon) \binom{n}{2};\]
a contradiction. Thus, in case $\rho_N=\rho_B$, the root $\rho_N$ is the
only relevant vertex of $N$ and the single non-trivial block $B$ rooted
in $\rho_N$ must be a triangle. This, in particular, implies that 
$N\simeq \Tmod$ in case $\rho_N=\rho_B$.

Assume now that $\rho_N\neq \rho_B$. Since $B$ is the unique non-trivial block in $N$, 
we have $\phi(\rho_N) = 1$. Let $w = \rho_B$. By Observation\
\ref{obs:relevant}, $\rho_N$ and $w$ are relevant. Since $N\not\simeq \Tpush$, there is 
third relevant vertex $v\neq w,\rho_N$, or $B$ is not a triangle. 
If there is a third relevant vertex $v$, then $|L_N(v)|\geq 2$ and thus, 
\begin{align*}
\Phi^{**}(N) &\geq \phi(\rho_N)\binom{n}{2} + \phi(w)\binom{|L_N(w)|}{2} +\phi(v)\binom{|L_N(v)|}{2} \\
						&\geq \binom{n}{2}+(1+\epsilon)\binom{|L_N(w)|}{2} + \binom{|L_N(v)|}{2} >\binom{n}{2} +1+\epsilon;
\end{align*}
a contradiction. Hence, $\rho_N$ and $w$ are the only relevant vertices in $N$. 
Thus, $B$ is not a triangle, which implies that $\omega(B)>1$. 
Therefore, 
\begin{align*}
\Phi^{**}(N) = \phi(\rho_N)\binom{n}{2} + \phi(w)\binom{|L_N(w)|}{2} = 
								\binom{n}{2} + (\omega(B)+\epsilon)\binom{|L_N(w)|}{2} > \binom{n}{2} +1+\epsilon;
\end{align*}
a contradiction. Hence, in case  $\rho_N\neq \rho_B$, there are only two relevant vertices,
$\rho_N$ and $w$, and the unique block in $N$ must be a triangle that is rooted in $w$. 
Note that $w$ and $\rho_N$ must be adjacent, since otherwise there is a vertex $w'$
that is adjacent to $\rho_N$ and satisfies $w\prec_N w'\prec_N	\rho_N$. 
In this case, $w'$ is not a leaf and either a true tree vertex or the root of a non-trivial
block. In either case, Observation\ \ref{obs:relevant} implies that $w'$ is relevant; a contradiction. 
Taken the latter arguments together, $N\simeq \Tpush$ in case $\rho_N\neq \rho_B$.
\end{proof}

\begin{theorem}\label{thm:minimum-L1-block}
Let $\hat{\mathscr{N}}_n$ be the class of phylogenetic level-$1$ networks on  $n \geq 2$ leaves  that contain at least one non-trivial block. 
	  Moreover,  let $N\in \hat{\mathscr{N}}_n$ be a network that minimizes the weighted total cophenetic index within the class $\hat{\mathscr{N}}_n$.	
	  If $n=2$, then $N\simeq N_{\Delta}$ and $\Phi^{**}(N) =1+\epsilon$ for all $\epsilon\geq 0$. If $n\geq 3$, 
	  then
       	 \begin{itemize}
            \item $0\leq  \epsilon<\frac{1}{\binom{n}{2}-1}$ implies $N\simeq \Tmod$ and $\Phi^{**}(N)=\binom{n}{2}(1+\epsilon)$.  \vspace{-0.1in}
            
            \item $\epsilon>\frac{1}{\binom{n}{2}-1}$ implies $N\simeq \Tpush$ and $\Phi^{**}(N) =  \binom{n}{2}+1+\epsilon$. \vspace{-0.1in}
            
            \item  $\epsilon=\frac{1}{\binom{n}{2}-1}$ implies $N\simeq \Tmod$ or $N\simeq \Tpush$ and $\Phi^{**}(N) = \frac{\binom{n}{2}^2}{\binom{n}{2}-1} = 
            \frac{n^2(n-1)^2}{2 (n-2)(n+1)}$.
        \end{itemize}
\end{theorem}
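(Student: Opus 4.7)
The plan is to use Proposition \ref{prop:min-withBlock-basic} to reduce the problem to a direct comparison of $\Phi^{**}(\Tmod)$ and $\Phi^{**}(\Tpush)$ as functions of the parameter $\epsilon$. Since Proposition \ref{prop:min-withBlock-basic} already asserts that any minimizer of $\Phi^{**}$ over $\hat{\mathscr{N}}_n$ is isomorphic either to $\Tmod$ or to $\Tpush$, the remaining work is purely computational: determine the two values explicitly and locate the values of $\epsilon$ for which each one is strictly smaller.

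First I would dispose of the case $n=2$. By inspection (and as noted in the definitions of $\Tmod$ and $\Tpush$), for $n=2$ both networks coincide with $N_{\Delta}$, which up to isomorphism is the only element of $\hat{\mathscr{N}}_2$. The unique relevant vertex of $N_{\Delta}$ is its root $\rho$, which is the root of a triangle block $B$ with $\omega(B)=1$ by Observation \ref{obs:omega}. Hence $\phi(\rho)=1+\epsilon$ and $\Phi^{**}(N_{\Delta})=(1+\epsilon)\binom{2}{2}=1+\epsilon$, as claimed.

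For $n\geq 3$ I would compute the two values via a careful identification of relevant vertices, using Observation \ref{obs:relevant} and Fact \ref{fact:binary_rel_vert}. In $\Tmod$, the root $\rho$ is the root of the triangle block $B$; the other two vertices of $B$ each have exactly one child that is not contained in $B$ (a leaf), so they are irrelevant. Thus $\rho$ is the only relevant vertex of $\Tmod$, and since $\omega(B)=1$, we get
\[
\Phi^{**}(\Tmod)=\phi(\rho)\binom{n}{2}=(1+\epsilon)\binom{n}{2}.
\]
In $\Tpush$, the root $\rho$ is a true tree vertex (so $\phi(\rho)=1$) whose children are $n-2$ leaves and the root $\rho_B$ of the triangle block $B$; the vertex $\rho_B$ is relevant because it is the root of $B$, while the two remaining vertices of $B$ are irrelevant for the same reason as before. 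With $|L_N(\rho_B)|=2$ and $\phi(\rho_B)=1+\epsilon$,
\[
\Phi^{**}(\Tpush)=\phi(\rho)\binom{n}{2}+\phi(\rho_B)\binom{2}{2}=\binom{n}{2}+1+\epsilon.
\]

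Finally I would compare the two quantities. The difference satisfies
\[
\Phi^{**}(\Tmod)-\Phi^{**}(\Tpush)=\epsilon\!\left(\binom{n}{2}-1\right)-1,
\]
and since $n\geq 3$ implies $\binom{n}{2}-1>0$, this sign is controlled by whether $\epsilon$ is less than, equal to, or greater than $\frac{1}{\binom{n}{2}-1}$, yielding the three cases of the theorem. A brief algebraic simplification shows that at the threshold value $\epsilon=\frac{1}{\binom{n}{2}-1}$ both indices equal $\frac{\binom{n}{2}^2}{\binom{n}{2}-1}=\frac{n^2(n-1)^2}{2(n-2)(n+1)}$. There is no substantial obstacle here; the only mild care needed is to verify correctly which vertices of $\Tmod$ and $\Tpush$ are relevant, since this is what separates the weighted index from a naive summation and what makes the two values differ in the way reflected in the theorem.
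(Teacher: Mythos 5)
Your proposal is correct and follows essentially the same route as the paper: invoke Proposition \ref{prop:min-withBlock-basic} to reduce to the two candidates, compute $\Phi^{**}(\Tmod)=(1+\epsilon)\binom{n}{2}$ and $\Phi^{**}(\Tpush)=\binom{n}{2}+1+\epsilon$, and compare them via the threshold $\epsilon=\frac{1}{\binom{n}{2}-1}$. The extra detail you give on identifying the relevant vertices of $\Tmod$ and $\Tpush$ is sound and matches the paper's implicit computation.
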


\begin{proof} 
Let $N\in \hat{\mathscr{N}}_n$ be a network that minimizes the weighted total cophenetic index within the class $\hat{\mathscr{N}}_n$.
By Proposition\ \ref{prop:min-withBlock-basic}, $N\simeq \Tmod$ or $N\simeq \Tpush$ holds. 
Observe that  $\Phi^{**}( \Tmod) =
(1+\epsilon)\binom{n}{2}$ and $\Phi^{**}(\Tpush) = \binom{n}{2} +
(1+\epsilon)\binom{2}{2} = \binom{n}{2} +1+\epsilon$. 
Suppose first that $\epsilon = 1 /(\binom{n}{2}-1)$. One easily verifies that 
\[\Phi^{**}( \Tmod) = \left(1+ \frac{1}{\binom{n}{2}-1}\right)\binom{n}{2} =  \frac{\binom{n}{2}^2}{\binom{n}{2}-1} =
 \binom{n}{2} +1+ \frac{1}{\binom{n}{2}-1} = \Phi^{**}(\Tpush).\] 
In this case, in particular,  $N\simeq \Tmod$ or $N\simeq \Tpush$ holds. 
Suppose now that $\epsilon\geq 0$. It holds that 
\begin{equation*}
        \begin{aligned}
            \Phi^{**}(\Tmod)\leq\Phi^{**}(\Tpush)
            &\iff (1+\epsilon)\binom{n}{2}\leq \binom{n}{2} +1+\epsilon\\
            &\iff \binom{n}{2}\epsilon\leq 1+\epsilon\\
            &\iff  \epsilon\leq \frac{1}{\binom{n}{2}-1}.
        \end{aligned}
    \end{equation*}
Hence, in case $\epsilon<\frac{1}{\binom{n}{2}-1}$ we have  $\Phi^{**}(\Tmod)<\Phi^{**}(\Tpush)$ and, 
therefore, $N\simeq \Tmod$ and  $\Phi^{**}(N) =(1+\epsilon)\binom{n}{2}$. 
In case, $\epsilon>\frac{1}{\binom{n}{2}-1}$ we have  $\Phi^{**}(\Tmod)>\Phi^{**}(\Tpush)$ and, 
therefore, $N\simeq \Tpush$ and $\Phi^{**}(N) =\binom{n}{2} +1+\epsilon$. 
\end{proof} 

Note that $N\simeq N_{\Delta}$ for all $N\in \HybdidDegTwoNvar{2}$ that contain at least one non-trivial block and that the modified star $\Tmod$ and the push-down star $\Tpush$ are both contained in $\HybdidDegTwoN$ for $n\geq3$. This together with Proposition~\ref{prop:min-withBlock-basic} and Theorem~\ref{thm:minimum-L1-block} implies

\begin{corollary}\label{cor:bin-withB}
    Let $\hat{\mscr{N}}_n^{\mathrm{L1,in}\leq 2}$ be the class of phylogenetic level-$1$ networks on $n \geq 2$ leaves that contain at least one non-trivial block and whose vertices have at most in-degree 2. 
	  Moreover,  let $N\in \hat{\mscr{N}}_n^{\mathrm{L1,in}\leq 2}$ be a network that minimizes the weighted total cophenetic index within the class $\hat{\mscr{N}}_n^{\mathrm{L1,in}\leq 2}$.	
	  If $n=2$, then $N\simeq N_{\Delta}$ and $\Phi^{**}(N) =1+\epsilon$ for all $\epsilon\geq 0$. If $n\geq 3$, 
	  then
       	 \begin{itemize}
            \item $0\leq  \epsilon<\frac{1}{\binom{n}{2}-1}$ implies $N\simeq \Tmod$ and $\Phi^{**}(N)=\binom{n}{2}(1+\epsilon)$.  \vspace{-0.1in}
            
            \item $\epsilon>\frac{1}{\binom{n}{2}-1}$ implies $N\simeq \Tpush$ and $\Phi^{**}(N) =  \binom{n}{2}+1+\epsilon$. \vspace{-0.1in}
            
            \item  $\epsilon=\frac{1}{\binom{n}{2}-1}$ implies $N\simeq \Tmod$ or $N\simeq \Tpush$ and $\Phi^{**}(N) = \frac{\binom{n}{2}^2}{\binom{n}{2}-1} = 
            \frac{n^2(n-1)^2}{2 (n-2)(n+1)}$.
        \end{itemize}
\end{corollary}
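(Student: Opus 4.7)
The plan is to reduce the corollary directly to Theorem~\ref{thm:minimum-L1-block} by exploiting the inclusion $\hat{\mscr{N}}_n^{\mathrm{L1,in}\leq 2}\subseteq \hat{\mathscr{N}}_n$ together with the observation that the minimizing networks identified in Theorem~\ref{thm:minimum-L1-block} already satisfy the in-degree-$\leq 2$ constraint. The main conceptual point is the following monotonicity principle: if $\mathcal{A}\subseteq \mathcal{B}$ are two classes of networks and every $\Phi^{**}$-minimizer of $\mathcal{B}$ happens to lie in $\mathcal{A}$, then the $\Phi^{**}$-minimizers of $\mathcal{A}$ coincide with those of $\mathcal{B}$, and in particular achieve the same extremal value.

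First I would verify the membership claims. For $n=2$, the triangle network $N_\Delta$ is, up to isomorphism, the unique phylogenetic level-$1$ network on two leaves containing a non-trivial block; its unique hybrid vertex has in-degree $2$, so $N_\Delta\in \hat{\mscr{N}}_2^{\mathrm{L1,in}\leq 2}$. For $n\geq 3$, I would check directly from the definitions of $\Tmod$ and $\Tpush$ (in Section~\ref{APPX:prelim}) that both networks are phylogenetic level-$1$, contain a non-trivial block (namely the triangle of $N_\Delta$), and have every hybrid vertex of in-degree exactly $2$; hence $\Tmod, \Tpush \in \hat{\mscr{N}}_n^{\mathrm{L1,in}\leq 2}$.

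Next I would invoke the two established results on the superclass $\hat{\mathscr{N}}_n$. By Proposition~\ref{prop:min-withBlock-basic}, any minimizer of $\Phi^{**}$ in $\hat{\mathscr{N}}_n$ is isomorphic to $\Tmod$ or $\Tpush$ (or, for $n=2$, to $N_\Delta\simeq\TmodTWO\simeq\TpushTWO$). By Theorem~\ref{thm:minimum-L1-block}, the choice between $\Tmod$ and $\Tpush$, as well as the corresponding extremal value of $\Phi^{**}$, is governed by the explicit cases based on the position of $\epsilon$ relative to $\tfrac{1}{\binom{n}{2}-1}$.

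Finally, I would combine these facts. Since $\hat{\mscr{N}}_n^{\mathrm{L1,in}\leq 2}\subseteq \hat{\mathscr{N}}_n$, we have
\[
\min_{N\in \hat{\mscr{N}}_n^{\mathrm{L1,in}\leq 2}} \Phi^{**}(N) \;\geq\; \min_{N\in \hat{\mathscr{N}}_n} \Phi^{**}(N).
\]
By the membership check above, $\Tmod, \Tpush$ (resp.\ $N_\Delta$ for $n=2$) lie in the smaller class, and they realize the minimum of the larger class as determined by Theorem~\ref{thm:minimum-L1-block}. Hence the inequality is in fact an equality, and the minimizers in $\hat{\mscr{N}}_n^{\mathrm{L1,in}\leq 2}$ are exactly the networks listed in Theorem~\ref{thm:minimum-L1-block}, with the same case-distinction on $\epsilon$ and the same numerical values of $\Phi^{**}(N)$. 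There is no genuine obstacle: the proof is a short membership check followed by an application of the monotonicity principle to the already-proved Theorem~\ref{thm:minimum-L1-block}, and the only item one must be careful about is to confirm that $\Tmod$ and $\Tpush$ really do have all hybrids of in-degree~$2$, which is immediate from their construction via $N_\Delta$.
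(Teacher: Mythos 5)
Your proposal is correct and matches the paper's own argument: the paper likewise derives Corollary~\ref{cor:bin-withB} by noting that $N_\Delta$, $\Tmod$, and $\Tpush$ all lie in $\hat{\mscr{N}}_n^{\mathrm{L1,in}\leq 2}$ and then combining Proposition~\ref{prop:min-withBlock-basic} with Theorem~\ref{thm:minimum-L1-block} via exactly the subclass-monotonicity observation you describe. No gaps.
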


In the following, we will focus on \emph{binary} level-$1$ networks
with minimum weighted total cophenetic index. To this end, we will
consider an operation of associating  trees with certain  level-$1$
networks via expanding vertices to triangles and collapsing triangles into
vertices. More formally, let $T$ be a rooted  tree and let $u$ be one of
its inner vertices with children $u_1$ and $u_2$. 
Then, by \emph{expanding $u$ to a triangle (along $u_1,u_2$)}, we mean the process of subdividing
the edges $(u,u_1)$ and $(u,u_2)$ with vertices $v_1$ and $v_2$, respectively,
and introducing either the edge $(v_1,v_2)$ or the edge $(v_2,v_1)$. On the
other hand, if $N$ is a level-$1$ network containing a triangle, i.e., a
block $B$ of size three with root $\rho_B$, by \emph{collapsing the triangle
$B$}, we mean deleting the (unique) hybrid edge of $B$ that is not incident to
$\rho_B$ and suppressing the two resulting degree-2 vertices. An example of both
processes is depicted in Figure \ref{Fig_ExpandingCollapsing} for the case
that $T$ or $N$ is binary and phylogenetic. 
We emphasize that distinct triangles in $N$ created in the process of expanding
inner vertices in a binary tree $T$ must be vertex-disjoint. Moreover, the degrees of
existing vertices in $T$ remain the same as in the resulting network $N$, 
while all newly created vertices have either
in-degree 1 and out-degree 2 or in-degree 2 and out-degree 1 in $N$. Consequently, the
\enquote{expansion to triangles} process results in a binary level-$1$ network when
applied to binary trees. Moreover, given a binary network, distinct blocks must
be vertex-disjoint. By similar arguments, the process of collapsing all
triangles in a binary network yields a binary tree. Furthermore, observe that
for $\epsilon=0$, obviously $\phi(v)=\phi(x)=1$, where $v$ is a true tree vertex
and $x$ is the root of a triangle in a binary phylogenetic level-$1$ network. Hence, 
we obtain 

\begin{fact}\label{fact:triangle-1}
Let $N$ be a network obtained from a binary  tree $T$ by expanding an arbitrary number of inner vertices to triangles. 
Then, $N$ is a binary phylogenetic level-$1$ network. In particular, if $N$ does not contain non-trivial blocks or 
in case $\epsilon=0$, we have $\Phi^{**}(N) = \Phi^{**}(T)$. Otherwise, if $N$ contains non-trivial blocks
and $\epsilon>0$, we have $\Phi^{**}(N) > \Phi^{**}(T)$. 
Moreover, if $T'$ is the tree obtained from  $N$ by collapsing all triangles, then $T' \simeq T$.

In addition, if $N$ is a binary phylogenetic level-$1$ network where each non-trivial block (if there are any)
is a triangle and $T$ is the tree obtained from $N$ by collapsing all triangles, then $T$ is binary and
$\Phi^{**}(N) \geq \Phi^{**}(T)$.
\end{fact}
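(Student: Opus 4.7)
The plan is to dispatch the structural claims first, since they follow from a local inspection of each expansion. Expanding an inner vertex $u$ of $T$ along its children $u_1, u_2$ introduces two new vertices $v_1, v_2$: one of them (say $v_1$) becomes a tree vertex of in-degree $1$ and out-degree $2$, the other ($v_2$) becomes a hybrid of in-degree $2$ and out-degree $1$, while $u$ retains out-degree $2$ (its children are now $v_1, v_2$) and existing vertices keep their degrees. So $N$ is binary phylogenetic, and the triangle $\{u, v_1, v_2\}$ is a non-trivial block containing a single hybrid, which makes $N$ level-$1$. Distinct expansions introduce fresh $v_1, v_2$, so the triangles are pairwise vertex-disjoint. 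For (4), collapsing $\{u, v_1, v_2\}$ deletes the non-root hybrid edge and suppresses $v_1, v_2$ (both now of in- and out-degree $1$), exactly reversing the expansion; applying this to every triangle restores $T$.

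Next I would reduce the index comparison to a clean bijection. Using Fact~\ref{fact:binary_rel_vert}, the relevant vertices of $N$ are precisely the roots of non-trivial blocks together with the inner vertices that are not part of any non-trivial block. In our $N$, these are exactly the expanded vertices of $T$ (block roots) and the unexpanded inner vertices of $T$; the internal vertices $v_1, v_2$ of each triangle lie in the block but are not its root, so they are irrelevant. Hence $\relV_N$ is in bijection with $\mathring{V}(T)$. A direct check shows $L_N(v) = L_T(v)$ for every $v \in \mathring{V}(T)$, since the triangle expansion routes both children of $u$ through $v_1, v_2$ without changing which leaves lie below $v$. For the weights, $\phi_N(v) = 1 = \phi_T(v)$ whenever $v$ is unexpanded (as $\mathcal{B}^v(N) = \emptyset$), and for an expanded $v$ with corresponding triangle $B$, Observation~\ref{obs:omega} gives $\omega(B) = 1$, so $\phi_N(v) = 1 + \epsilon$.

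Applying Observation~\ref{obs:root-rel} to both $N$ and $T$ and subtracting yields
\[
\Phi^{**}(N) - \Phi^{**}(T) \;=\; \epsilon \sum_{\substack{v \in \mathring{V}(T) \\ v \text{ expanded}}} \binom{|L_T(v)|}{2}.
\]
This single identity handles every case of the first half of the statement: if no inner vertex is expanded (so $N$ has no non-trivial blocks) or if $\epsilon = 0$, the right-hand side is zero; if $\epsilon > 0$ and some vertex is expanded, then $|L_T(v)| \geq 2$ for every inner vertex of the binary tree $T$ forces $\binom{|L_T(v)|}{2} \geq 1$ and the difference is strictly positive. For the final clause, any binary phylogenetic level-$1$ network $N$ whose non-trivial blocks are all triangles is obtained by expanding the roots of its triangles in the collapsed tree $T$, so $T$ remains binary (collapsing preserves the out-degree $2$ at each triangle root) and the same identity gives $\Phi^{**}(N) - \Phi^{**}(T) \geq 0$ for all $\epsilon \geq 0$. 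The main thing to be careful about is the correct matching $\relV_N \leftrightarrow \mathring{V}(T)$ together with $L_N(v) = L_T(v)$; once these are in place, the rest is routine bookkeeping and no genuine combinatorial obstacle arises.
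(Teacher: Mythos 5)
Your argument is correct and follows essentially the same route as the paper, which treats this statement as an observation justified by the preceding prose: the degree bookkeeping under expansion/collapse for the structural claims, and the comparison of $\phi$-values at block roots versus true tree vertices for the index claims. Your explicit identity $\Phi^{**}(N)-\Phi^{**}(T)=\epsilon\sum_{v\ \text{expanded}}\binom{|L_T(v)|}{2}$, obtained via the bijection between $\relV_N$ and $\mathring{V}(T)$, is just a clean consolidation of what the paper leaves implicit, and it correctly dispatches all cases at once.
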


To recall, a tree
$T$ is called maximally balanced if each inner vertex $v$ has precisely two
children $v_1$ and $v_2$ and these two children satisfy $\lvert L_T(v_1) -
L_T(v_2) \rvert \leq 1$.
\begin{definition}\label{def:max_bal}
We denote with $\Nmb$ the set of networks that
are obtained from a maximally balanced tree with $n$ leaves by 
expanding an arbitrary number of inner vertices to triangles.
\end{definition}

Note that each maximally balanced tree with $n$ leaves is always contained in
$\Nmb$. 
\begin{fact}\label{fact:triangle}
It holds that 
$\Nmb\subseteq \BinLevelOneN$ and all non-trivial blocks in $N\in \Nmb$ are triangles
and pairwise vertex-disjoint.
 In particular, if $N \in \Nmb$, 
then the tree $T$ from which $N$ was obtained by expanding an arbitrary number of inner vertices to triangles is a maximally balanced tree on $n$ leaves. 
\end{fact}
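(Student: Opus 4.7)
The plan is to establish all three assertions by induction on the number $k$ of inner vertices of the underlying maximally balanced tree $T$ that are expanded in the construction of $N\in\Nmb$. The base case is $k=0$, where $N=T$ is a binary phylogenetic tree, which lies in $\BinLevelOneN$ vacuously (with no non-trivial blocks to consider); the final claim is also immediate in this case because the maximally balanced tree on $n$ leaves is unique up to isomorphism. For the inductive step, I would assume the statement holds for networks $N_k$ obtained from $T$ via $k$ expansions and pick an inner vertex $u$ of $T$ that has not yet been expanded, with children $u_1,u_2$ in $N_k$ (which equal its children in $T$, since $u$ was not touched earlier).

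Next I would analyze the single expansion at $u$: subdividing the edges $(u,u_1)$ and $(u,u_2)$ with fresh vertices $v_1,v_2$ and adding the directed edge $(v_1,v_2)$ (or $(v_2,v_1)$) turns $\{u,v_1,v_2\}$ into a triangle block $B_u$ with $\rho_{B_u}=u$. The resulting network $N_{k+1}$ is binary and phylogenetic by inspection of the degrees (the only new vertices have in-degree/out-degree sums $3$ compatible with binary networks, and $u$'s out-degree remains $2$), and it is level-$1$ because $B_u$ has a unique hybrid $\eta_{B_u}\in\{v_1,v_2\}$ and all pre-existing non-trivial blocks are triangles by the inductive hypothesis. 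To see that the triangles are pairwise vertex-disjoint, I would observe that by the inductive hypothesis each earlier triangle $B'$ has vertex set $\{u', v'_1, v'_2\}$ for some distinct expanded inner vertex $u'\neq u$ of $T$ together with its own two auxiliary subdivision vertices, so $V(B_u)\cap V(B')=\emptyset$ since $u$ is unexpanded in $N_k$ and the auxiliary vertices $v_1,v_2$ are fresh. Finally, Observation~\ref{obs:identical-block} rules out any further non-trivial block arising as a union, since two vertex-disjoint blocks cannot combine into a biconnected component.

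For the last sentence of the statement, I would apply Observation~\ref{fact:triangle-1}: since every non-trivial block of $N$ is a triangle, collapsing all triangles of $N$ recovers the binary tree $T$ (up to isomorphism). Because $T$ is by definition maximally balanced on $n$ leaves and the maximally balanced tree on $n$ leaves is unique up to isomorphism (recalled in Section~\ref{APPX:prelim}), this uniquely identifies the underlying tree of any $N\in\Nmb$. The main obstacle I anticipate is the bookkeeping in the inductive step — specifically, carefully ruling out that the fresh edge between $v_1$ and $v_2$ could interact with edges created in a previous expansion to form a biconnected component larger than $B_u$. This risk vanishes because the only edges incident to $v_1,v_2$ in $N_{k+1}$ are those introduced in the current expansion together with the two subdivided edges to $u_1,u_2$, and by the inductive hypothesis no previous triangle contains $u$, $u_1$, or $u_2$.
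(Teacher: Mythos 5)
Your proposal is correct and follows essentially the same route as the paper, which states this as an observation justified by the surrounding discussion of the expansion operation (the degree bookkeeping showing the result is binary phylogenetic level-$1$, the vertex-disjointness of the newly created triangles, and collapsing via Observation~\ref{fact:triangle-1} to recover the maximally balanced tree). Your induction on the number of expansions merely makes that same bookkeeping explicit.
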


For the characterization of binary phylogenetic level-$1$ networks  $N$ with minimum weighted total cophenetic we first start with showing 
that any non-trivial blocks in any such $N$ (if there are any) must be full-moons.
\begin{lemma}\label{lem:bin_min_fullmoon}
    Let $N\in \BinLevelOneN$ be a binary level-$1$ network that minimizes the weighted total cophenetic index within the class $\BinLevelOneN$. 
    If $N$ contains a non-trivial block $B$, then $B$ is a full-moon.
\end{lemma}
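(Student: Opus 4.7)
The plan is to apply the locality theorem (Theorem \ref{thm:local-phi}) with $v=\rho_B$. I would construct a competitor network $N'$ by replacing $\partN(\rho_B)$ with a network $\tilde N$ in which the block $B$ is swapped out for a full-moon block $B'$ of the same size $m=|V(B)|$, while the pendant structure below $B$ is preserved. Concretely, I would pick any bijection between $V^-(B)$ and $V^-(B')$ and reattach each pendant subtree $N(c_x)$ (where $c_x$ is the unique child of $x\in V^-(B)$ lying outside $B$) to the image of $x$ in $B'$, and likewise attach the pendant subtree below $\eta_B$ to $\eta_{B'}$. Since $N$ is binary, both children of $\rho_B$ lie in $B$ and $B$ is the unique non-trivial block rooted at $\rho_B$. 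Moreover, each $c_x$ has in-degree one in $N$: otherwise, by Lemma \ref{lem:results-B}(2), its second parent would sit in the block containing $c_x$, which would then have to coincide with $B$ by vertex-disjointness of distinct blocks in binary networks, contradicting $c_x\notin V(B)$. Hence the reattachment is well-defined and yields a binary phylogenetic level-$1$ network.

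Next I would evaluate both terms on the right-hand side of Theorem \ref{thm:local-phi}. Because $B$ (resp.\ $B'$) is the only non-trivial block rooted at $\rho_B$ (resp.\ $\rho_{\tilde N}$), the weights satisfy $\phi_N(\rho_B)=\epsilon+\omega(B)$ and $\phi_{N'}(\rho_{\tilde N})=\epsilon+\omega(B')$, so their difference equals $\omega(B)-\omega(B')$. Since $B$ is not a full-moon we must have $m\geq 4$ (a block of size $3$ is automatically a full-moon by Observation \ref{obs:Bm3-isom}), and Proposition \ref{prop:B_min_binary} gives $\omega(B)>\omega(B')$ strictly. For the $\Phi^{*}$-term I would invoke Fact \ref{fact:binary_rel_vert} to conclude that in the binary setting the only relevant vertices of $\partN(\rho_B)$ other than $\rho_B$ lie in the pendant subtrees (all vertices of $B$ other than $\rho_B$ are irrelevant), and the same holds for $\tilde N$ with $B$ replaced by $B'$. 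Since the pendant subtrees—together with the corresponding values of $\phi$ and $|L|$ for every vertex in them—are transported verbatim (using Lemma \ref{lem:blocks-in-partN}(4) so that vertex weights do not depend on the surrounding network), we obtain $\Phi^*(\partN(\rho_B))=\Phi^*(\tilde N)$.

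Finally, observing that $|L_N(\rho_B)|\geq 2$ (at least one leaf descends from $\eta_B$ and at least one from some internal vertex of $B$ via its pendant subtree), Theorem \ref{thm:local-phi} yields
\[
\Phi^{**}(N)-\Phi^{**}(N')\;=\;\bigl(\omega(B)-\omega(B')\bigr)\binom{|L_N(\rho_B)|}{2}\;>\;0,
\]
contradicting the minimality of $\Phi^{**}(N)$ within $\BinLevelOneN$. The main obstacle is the careful bookkeeping that no relevant vertex, no $\phi$-value, and no $|L|$-value outside $B$ is affected by the swap; this rests on the binary-level-$1$ facts that distinct non-trivial blocks are vertex-disjoint and that internal vertices of $B$ and the hybrid $\eta_B$ are irrelevant, so that all information feeding into $\Phi^*$ is encapsulated in the pendant subtrees which are moved unchanged.
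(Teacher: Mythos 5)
Your proof is correct, and it follows the same overall skeleton as the paper's (build a competitor $N'$, apply Theorem~\ref{thm:local-phi} at $\rho_B$, argue that the $\Phi^*$-terms cancel because all relevant vertices of $\partN(\rho_B)$ other than $\rho_B$ sit in the pendant subnetworks, and reduce everything to the sign of $\omega(B)-\omega(B')$). The difference lies in the competitor construction and in how the strict inequality $\omega(B)>\omega(B')$ is obtained. You replace $B$ wholesale by a full-moon of the same size and then simply invoke the uniqueness part of Proposition~\ref{prop:B_min_binary}, which already says that among binary blocks of size $m$ the weight $\omega$ is minimized exactly by full-moons; this makes the argument shorter and reuses an established extremal result, at the cost of having to check that the wholesale replacement (via a bijection $V^-(B)\to V^-(B')$ carrying the pendant subnetworks, which is possible since every internal vertex and $\eta_B$ carries exactly one pendant child in the binary case) really produces a network in $\BinLevelOneN$ --- a check you carry out correctly, including the in-degree-one argument for the pendant roots. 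The paper instead performs a single elementary rebalancing move (relocating one internal vertex from the longer $\rho_B\eta_B$-path to the shorter one) and computes $\omega(B)-\omega(B')=\tfrac12(1+2k-m)m>0$ by hand; this is self-contained, keeps the perturbation minimal so that the bookkeeping of relevant vertices is almost trivial, and introduces the local-move technique that is reused immediately afterwards in Lemma~\ref{lem:bin_min_at_most_triangles}. Your observation that $m\geq 4$ via Observation~\ref{obs:Bm3-isom} and that $|L_N(\rho_B)|\geq 2$ closes the argument, so the contradiction with minimality is obtained correctly either way.
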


\begin{proof}
Let $N$ be a binary phylogenetic level-$1$ network that
minimizes the weighted total cophenetic index within the class $\BinLevelOneN$.
Assume that $N$ contains a non-trivial block $B$. 
Hence, $n\geq 2$ and the size of $B$ is $m
\geq 3$. If $m=3$, $B$ is a full-moon by Observation~\ref{obs:Bm3-isom} and
there is nothing to show. Thus, let $m \geq 4$ and assume, for contradiction,
that $B$ is not a full-moon. Since $N$ is binary and level-$1$, $B$ consists of
two internally vertex-disjoint $\rho_B \eta_B$-paths, say $P$ and $P'$.
Moreover, since $B$ is not a full-moon, one of these paths, say $P$, contains
strictly more than $\lceil \frac{m-2}{2} \rceil$ internal vertices, whereas $P'$
contains strictly less than $\lfloor \frac{m-2}{2} \rfloor$ internal vertices. 
Let $\rho_B, v_k, \ldots, v_1, \eta_B$  with $k \geq \lceil \frac{m-2}{2} \rceil
+ 1$ be the vertices in $P$, and let $\rho_B, v'_{m-k-2}, \ldots , v'_1, \eta_B$
be the vertices in $P'$.

We now construct a network $N'$ from $N$ by moving vertex $v_{1}$ (together with the
subnetwork $\partN(v_1)$) from $P$ to $P'$. More precisely, we delete $v_1$ and its
incident edges as well as the edge $(v'_{1},\eta_B)$ and insert the edges
$(v_{2},\eta_B)$, $(v'_1,v_1)$, and $(v_1,\eta_B)$ (see
Figure~\ref{fig:bin_min_only_fullmoons} for an illustration). Let $B'$ denote the resulting block in
$N'$.

\begin{SCfigure}[.25][ht]
\caption{Subnetworks $N$ and $N'$ with $\Phi^{**}(N)>\Phi^{**}(N')$
			used in the proof of Lemma \ref{lem:bin_min_fullmoon}. The
			subnetwork $\partN(v_1)$ that is relocated when constructing $N'$ is
			marked with bold edges and vertices. }
\includegraphics[width=.65\textwidth]{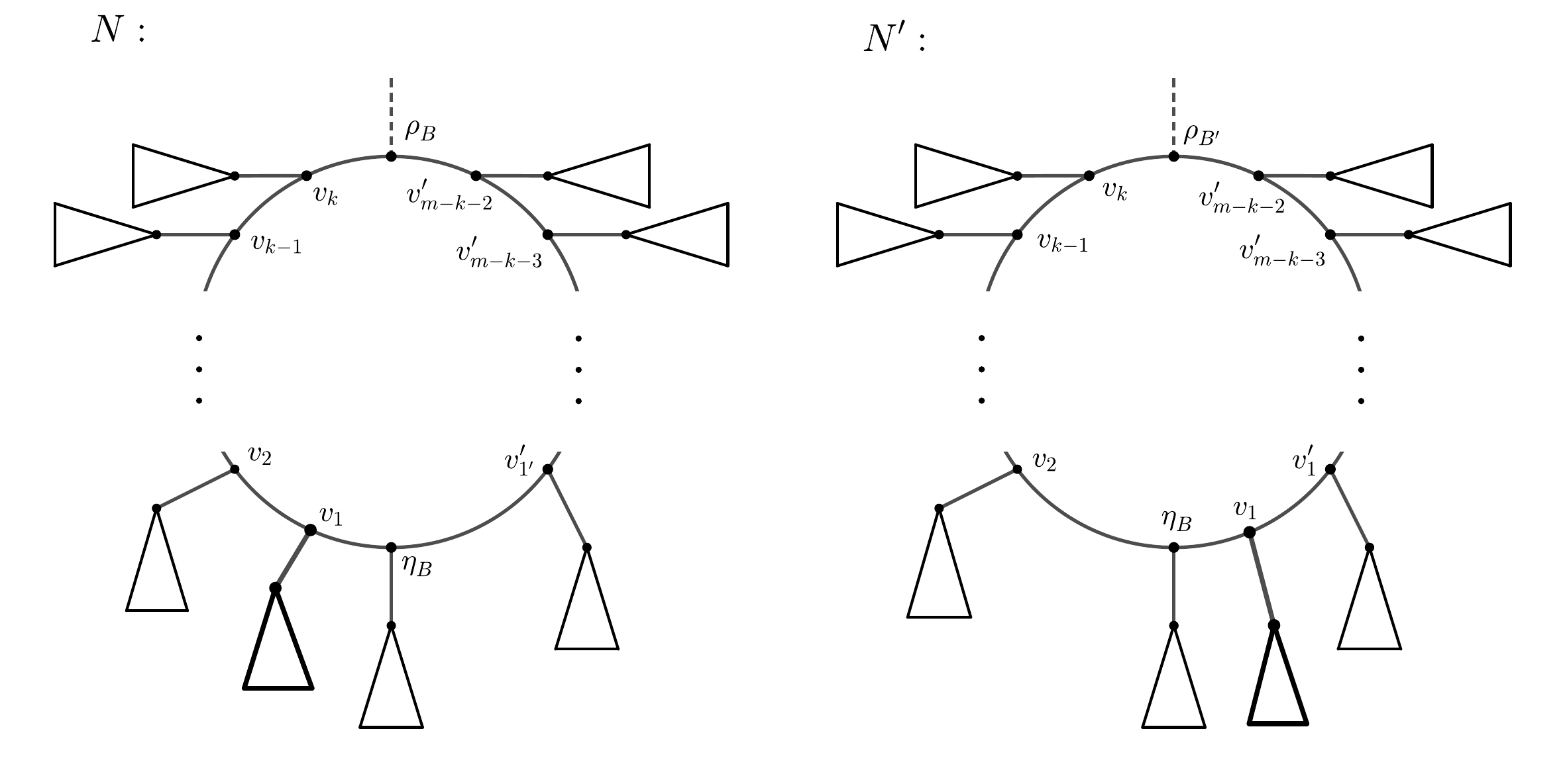}
\label{fig:bin_min_only_fullmoons}
\end{SCfigure}

By  Lemma \ref{lem:blocks-in-partN} and \ref{lem:relevant-partN}, $\partN(v_1)$ is a phylogenetic level-$1$ network and all relevant vertices in $N$ remain relevant in $N'$ and vice versa, i.e.,  $\relV_N=\relV_{N'}$. In particular, 
the values $\phi(w)$ of relevant vertices $w$ in $N$ and $N'$ do not change. 
One easily verifies that $N'$ is a binary level-$1$ network. Moreover, since $\rho_B$ has precisely two children, 
we have, in essence, just replaced $\partN(\rho_B)$ in $N$ by the binary level-$1$ network $\partN'(\rho_{B'})$ to obtain $N'$
and both subnetworks have the same number $\ell \coloneqq |L_N(\rho_B)|$ of leaves. It is easy to see that, by the latter arguments, 
$\Phi^*(\partN(\rho_B)) = \Phi^{*}(\partN'(\rho_{B'}))$. This and Theorem~\ref{thm:local-phi} implies
that $\Phi^{**}(N)-\Phi^{**}(N') = (\phi_N(\rho_B) - \phi_{N'}(\rho_{B'})) \binom{\ell}{2} = 
(\epsilon + \omega(B)) - (\epsilon + \omega(B')) \binom{\ell}{2}  =  ( \omega(B) - \omega(B')) \binom{\ell}{2} $.

We can directly compute $\omega(B)$ and $\omega(B')$ using the same approach as in the proof of Proposition~\ref{prop:B_min_binary}. More precisely, we obtain
\begin{align*}
    \omega(B) &= \binom{2+k}{3} + \binom{m-k}{3} \quad \text{and} \\
    \omega(B') &= \binom{1+k}{3} + \binom{m-k+1}{3}. 
\end{align*}
Now, taking the difference between $\omega(B)$ and $\omega(B')$ and simplifying, we have
\begin{align*}
    \omega(B) - \omega(B') &= \frac{1}{2} (1+2k-m) m > 0,
\end{align*}
where the strict inequality follows from the fact that $k \geq \lceil \frac{m-2}{2} \rceil + 1 = \lceil \frac{m}{2} \rceil$ and thus $1+2k-m > 0$. This directly implies $\Phi^{**}(N) > \Phi^{**}(N')$, thereby contradicting the minimality of $N$. In particular, all non-trivial blocks in a binary phylogenetic level-$1$ network minimizing the weighted total cophenetic index have to be full-moons, which completes the proof. 
\end{proof}

We proceed by showing that binary phylogenetic level-$1$ networks minimizing $\Phi^{**}$ cannot contain large blocks.

\begin{lemma}\label{lem:bin_min_at_most_triangles}
Let $N\in \BinLevelOneN$ be a binary level-$1$ network with $n\geq 1$ leaves that minimizes the weighted total cophenetic index within the class $\BinLevelOneN$.
Then, $N$ does not contain blocks of size $m\geq 4$.
\end{lemma}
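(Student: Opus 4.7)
The plan is to argue by contradiction: assume the minimizer $N\in\BinLevelOneN$ contains a non-trivial block $B$ of size $m\geq 4$, and then construct a rival $N'\in\BinLevelOneN$ on the same leaf set with $\Phi^{**}(N')<\Phi^{**}(N)$. By Lemma~\ref{lem:bin_min_fullmoon}, $B$ must be a full-moon, so it consists of two internally vertex-disjoint $\rho_B\eta_B$-paths $P_1$ and $P_2$ with $\lceil(m-2)/2\rceil$ and $\lfloor(m-2)/2\rfloor$ internal vertices respectively, and Proposition~\ref{prop:B_min_binary} supplies the explicit value $\omega(B)=\binom{\lceil m/2\rceil+1}{3}+\binom{\lfloor m/2\rfloor+1}{3}$.

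To produce $N'$, I would let $u'_1$ be the internal vertex of $P_2$ adjacent to $\eta_B$, delete the edge $(u'_1,\eta_B)$, and suppress the two resulting degree-$2$ vertices $u'_1$ and $\eta_B$ (each has in-degree and out-degree $1$ after the deletion, using that $N$ is binary and hence separated, so that $\eta_B$ already has out-degree $1$). Since distinct blocks of $N$ are edge-disjoint (Observation~\ref{obs:biConn-edge-disjoint}) and only edges of $B$ are touched, every other non-trivial block of $N$ is preserved; one checks routinely that $N'$ is a binary phylogenetic level-$1$ network on $L(N)$ leaves. In $N'$ the former block $B$ is replaced by a binary tree $T$ rooted at $\rho_B$ whose $m-1$ leaves are identified with the roots $w_1,\dots,w_{m-1}$ of the $m-1$ subnetworks $\mathcal{N}_1,\dots,\mathcal{N}_{m-1}$ that had been attached to the $m-1$ non-root vertices of $B$; in particular $T$ has exactly $m-3$ inner vertices other than $\rho_B$.

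To compare the indices I would invoke Theorem~\ref{thm:local-phi} with $v=\rho_B$, noting that $\rho_B$ is relevant in both networks. One has $\phi_N(\rho_B)=\epsilon+\omega(B)$ (root of the non-trivial block $B$) and $\phi_{N'}(\rho_B)=1$ (true tree vertex after surgery). By Fact~\ref{fact:binary_rel_vert} the only relevant vertex of $\partN(\rho_B)$ strictly above the subnetworks $\mathcal{N}_i$ is $\rho_B$ itself, so Corollary~\ref{fact:sum-up-part} yields $\Phi^{*}(\partN(\rho_B))=\sum_{i=1}^{m-1}\Phi^{**}(\mathcal{N}_i)$, while in $\tilde N:=\partial N'(\rho_B)$ each of the $m-3$ inner vertices of $T$ distinct from $\rho_B$ becomes a new relevant true tree vertex of weight $1$, giving $\Phi^{*}(\tilde N)=\sum_{i=1}^{m-1}\Phi^{**}(\mathcal{N}_i)+\Psi$, where
\[\Psi\;:=\;\sum_{v\in V(T)\setminus\{\rho_B,\,w_1,\dots,w_{m-1}\}}\binom{|L_{\tilde N}(v)|}{2}.\]
Theorem~\ref{thm:local-phi} therefore collapses to $\Phi^{**}(N)-\Phi^{**}(N')=(\epsilon+\omega(B)-1)\binom{L_\rho}{2}-\Psi$ with $L_\rho:=|L_N(\rho_B)|$.

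The final step is to show this quantity is strictly positive. Each of the $m-3$ summands of $\Psi$ has the form $\binom{s}{2}$ where $s$ is a subsum of $\ell_1,\dots,\ell_{m-1}\geq 1$ missing at least one $\ell_i$, hence $s\leq L_\rho-1$ and $\Psi\leq (m-3)\binom{L_\rho-1}{2}$. On the other hand, direct inspection of the closed form for $\omega(B)$ shows $\omega(B)-1\geq m-3$ for all $m\geq 4$, with equality precisely at $m=4$. Combining these with $\epsilon\geq 0$ and the strict inequality $\binom{L_\rho}{2}>\binom{L_\rho-1}{2}$ (valid since $L_\rho\geq m-1\geq 3$) one obtains
\[(\epsilon+\omega(B)-1)\binom{L_\rho}{2}\;\geq\;(m-3)\binom{L_\rho}{2}\;>\;(m-3)\binom{L_\rho-1}{2}\;\geq\;\Psi,\]
which contradicts the minimality of $N$. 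The main technical obstacle is the bookkeeping needed to pass from Theorem~\ref{thm:local-phi} to the clean closed form above, together with the delicate treatment of the borderline case $m=4$: there $\omega(B)-1=m-3$, so strictness cannot come from the block weight alone and must be extracted from $\binom{L_\rho}{2}>\binom{L_\rho-1}{2}$ instead.
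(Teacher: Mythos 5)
Your proposal is correct, but it takes a genuinely different route from the paper. The paper also reduces to a full-moon via Lemma~\ref{lem:bin_min_fullmoon}, but its surgery is a one-step \emph{shrink}: it detaches the internal vertex $l_1$ of the longer path that is adjacent to $\eta_B$ and re-inserts it as a true tree vertex \emph{above} $\rho_B$, so that the rival network still contains a block of size $m-1\geq 3$; the contradiction then comes from comparing the old and new root weights, with the decisive inequality $\binom{\lceil\frac{m-2}{2}\rceil+2}{3}-\binom{\lceil\frac{m-2}{2}\rceil+1}{3}-1\geq 0$ for $m\geq 4$, after some bookkeeping to relate $\Phi^*$ of the two restricted subnetworks. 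You instead \emph{dissolve} the block entirely in one move (delete the hybrid edge on the shorter path, suppress the two degree-two vertices), turning $B$ into a binary tree with $m-3$ new relevant true tree vertices; the cost of these new vertices is the sum $\Psi$, which you bound by $(m-3)\binom{L_\rho-1}{2}$, and the gain is $(\epsilon+\omega(B)-1)\binom{L_\rho}{2}$ with $\omega(B)-1\geq m-3$. Your accounting checks out (in particular $\partN(\rho_B)=N(\rho_B)$ since in a binary network $\rho_B$ cannot be a non-root vertex of another block, the $w_i$ are leaves or relevant so Corollary~\ref{fact:sum-up-part} gives exactly $\sum_i\Phi^{**}(\mathcal{N}_i)$, and $k'=\lfloor\frac{m-2}{2}\rfloor\geq 1$ guarantees the deleted edge exists and that every new tree vertex misses at least one attached subnetwork). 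What each approach buys: the paper's shrink-by-one keeps the perturbation minimal and reuses the full-moon weight formula directly, whereas your dissolution avoids reasoning about the residual block's root weight altogether and isolates the only delicate point, the borderline case $m=4$ where $\omega(B)-1=m-3$, resolving it cleanly via the strict inequality $\binom{L_\rho}{2}>\binom{L_\rho-1}{2}$. A minor remark: your argument never really needs the full-moon structure beyond the generic two-path decomposition of a binary block and the bound $\omega(B)\geq m-2$ from Proposition~\ref{prop:B_min_binary}, so it would in fact work without invoking Lemma~\ref{lem:bin_min_fullmoon} at all.
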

\begin{proof}
  Let $N\in \BinLevelOneN$, $n\geq1$, be a binary level-$1$ network that minimizes the weighted
  total cophenetic index within the class $\BinLevelOneN$. If $n=1$, the statement is vacuously
  true. Hence, we may assume in the following that $n\geq 2$. Assume, for contradiction, that $N$
  contains a non-trivial block $B$ of size $m\geq 4$. By Lemma~\ref{lem:bin_min_fullmoon}, $B$ has
  to be a full-moon. Hence, $B$ consist of two internal vertex disjoint paths
  $\rho_B,l_{\left\lceil\frac{m-2}{2}\right\rceil},
  l_{\left\lceil\frac{m-2}{2}\right\rceil-1},\ldots,l_2,l_1,\eta_B$ and $\rho_B,
  r_{\left\lfloor\frac{m-2}{2}\right\rfloor},
  r_{\left\lfloor\frac{m-2}{2}\right\rfloor-1},\ldots,r_2,r_1,\eta_B$. Since $m\geq 4$, both
  $\left\lceil\frac{m-2}{2}\right\rceil\geq 1$ and $\left\lfloor\frac{m-2}{2}\right\rfloor\geq 1$.
  We now obtain a network $N'$ from $N$ as follows: We delete edges $(l_2,l_1)$,
  $(l_1,\eta_B)$, and, if it exists, also $(\parent(\rho_B),\rho_B)$, and insert edges
  $(l_1,\rho_B)$, $(l_{2},\eta_B)$, and also $(\parent(\rho_B),l_1)$ if $\rho_B\neq
  \rho_{N}$ (see Figure \ref{fig:bin_min_only_triangles} for an illustration). In the following, we
  denote the resulting block of size $m-1$ in $N'$ obtained from $B$ by $B'$.

\begin{SCfigure}[.3][ht]
	\caption{Schematic view of the subnetworks $N_1$ and $N_2$ in  $N$ and $N'$, respectively,  
					  used in the proof of Lemma \ref{lem:bin_min_at_most_triangles}. The
	         subnetwork $\partN(l_{1})$ that is
	         relocated when constructing $N'$ is marked with bold edges and
	         vertices.}
   \includegraphics[width=0.65\textwidth]{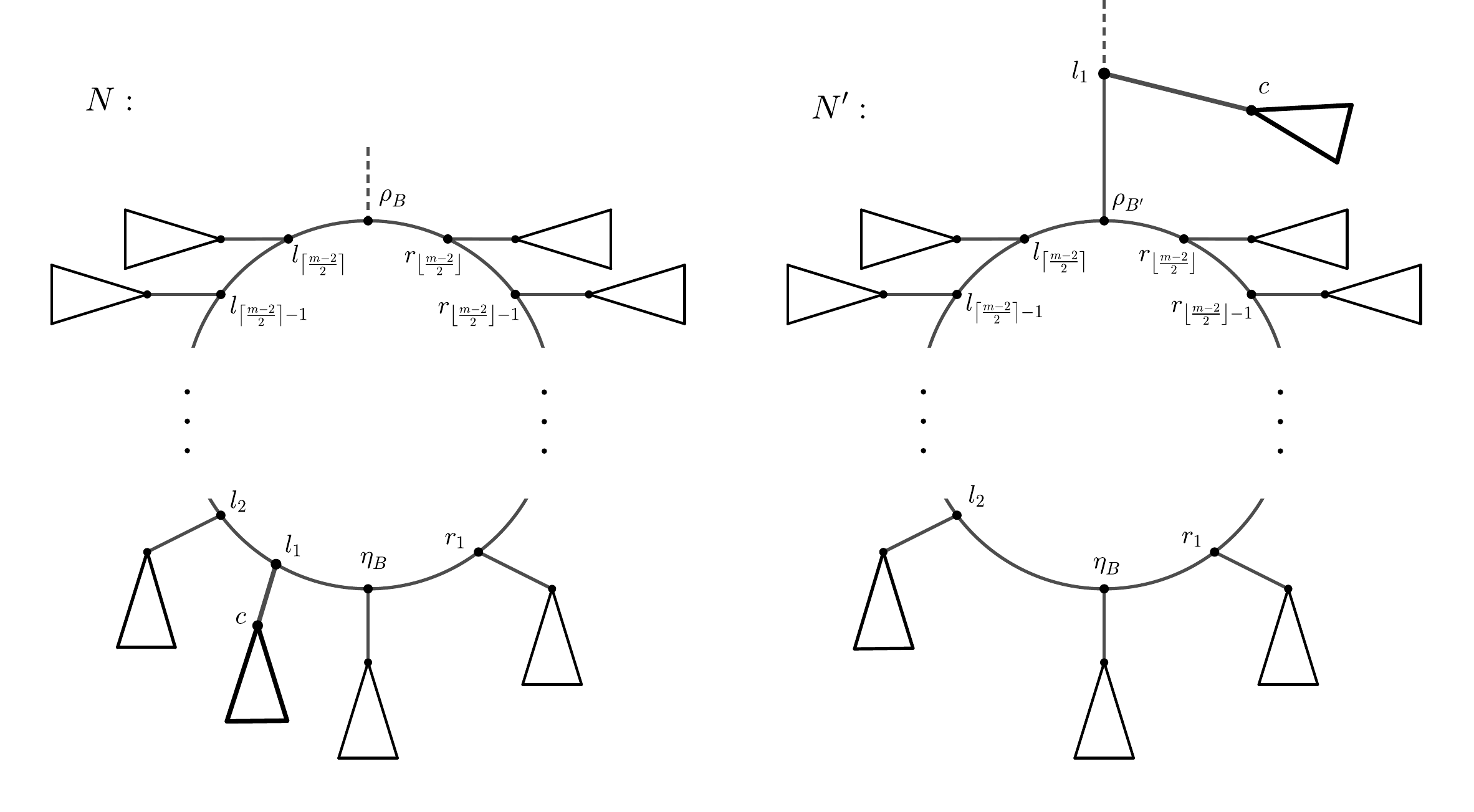}
   \label{fig:bin_min_only_triangles}
\end{SCfigure}
    
  We now show that $\Phi^{**}(N)-\Phi^{**}(N')>0$, contradicting the minimality of $N$. Since $N$ is
 binary, there is exactly one vertex in $\child_N^*(l_1)=:\{c\}$ and one easily observes that
 $N'$ is obtained from $N$ by replacing $N_1\coloneqq\partN(\rho_B)=N(\rho_B)$ by
 $N_2\coloneqq\partN'(l_{1})=N'(l_{1})$. In particular $|L(N_1)| = |L(N_2)|\eqqcolon \ell$. By
 Theorem~\ref{thm:local-phi},
    \[\Phi^{**}(N)-\Phi^{**}(N')=(\Phi^*(N_1)-\Phi^*(N_2))+(\phi_{N}(\rho_B)-\phi_{N'}(l_1)\binom{\ell}{2}.\]
    
  For a vertex $v$ in $B$ in $N$, let us denote with $v'$ the unique child of $v$ that is not
  located in $B$. Observe that none of the vertices $v\in V(B)\setminus \{\rho_B\}$ are relevant.
  Hence, all the relevant neighbors of $\rho_B$ in $N$ (and thus, by
  Corollary~\ref{cor:relevant-partN}, in $N_1$), must satisfy
  Definition~\ref{def:rel-neighbor}(2.b). 
     
 Hence, if $u$ is a relevant neighbor of $\rho_B$, then $u=v'$ must hold for some vertex $v$ in $B$.
 Let $u_1,\ldots,u_k$ be the relevant neighbors of $\rho_B$ in $N$. Note that $k=0$ is possible if
 the child $v'$ of each vertex $v$ in $B$ is a leaf. In $N'$, vertex $l_1$ is a true tree
 vertex and, therefore, it has at most two relevant neighbors, namely $\rho_B$ and $c$. While
 $\rho_B$ is always a relevant neighbor of $l_1$ in $N'$, we can observe that $c$ is a leaf in
 $N$ if and only if $c$ is a leaf in $N'$ and thus, $c$ is a relevant neighbor of $l_1$ in
 $N'$ if and only of $c$ is a relevant neighbor of $\rho_N$ in $N$. 
     
 This and the construction of $N'$, in particular, implies that each $u_i$, $1\leq i\leq k$ with
 $u_i\neq c$ is a relevant neighbor of $\rho_B$ in $N$ if and only if $u_i$ is a relevant neighbor
 of $\rho_B$ in $N'$. It is, therefore, easy to see that (I) $\Phi^*(N_1)-\Phi^*(N'(\rho_{B'}))=0$
 and (II) $\Phi^{*}(N_2)=\Phi^{**}(N'(\rho_{B'}))$ if $c$ is a leaf in $N$ as well as that (I')
 $\Phi^*(N_1)-\Phi^*(N'(\rho_{B'}))=\Phi^{**}(N(c)) = \Phi^{**}(N'(c))$ and (II')
 $\Phi^{*}(N_2)=\Phi^{**}(N'(\rho_{B'})) + \Phi^{**}(N'(c))$ if $c$ is not a leaf in $N$ and, thus,
 relevant. Hence, if $c$ is a leaf, we combine the two equations (I) and (II) and use the fact that
 $\Phi^{**}(N'(\rho_{B'})) = \Phi^{*}(N'(\rho_{B'})) + \phi_{N'}(\rho_{B'})
 \binom{|L_{N'}(\rho_{B'})|}{2}$ to obtain $\Phi^*(N_1) - \Phi^{*}(N_2) = -\phi_{N'}(\rho_{B'})
 \binom{|L_{N'}(\rho_{B'})|}{2}$. Similarly, if $c$ is relevant, the two equations (I') and (II')
 together with $\Phi^{**}(N'(\rho_{B'})) = \Phi^{*}(N'(\rho_{B'})) + \phi_{N'}(\rho_{B'})
 \binom{|L_{N'}(\rho_{B'})|}{2}$ yield $\Phi^*(N_1) - \Phi^{*}(N_2) = -\phi_{N'}(\rho_{B'})
 \binom{|L_{N'}(\rho_{B'})|}{2}$. Put $\ell'\coloneqq |L_{N'}(\rho_{B'})|$. The latter arguments
 together with Corollary~\ref{fact:sum-up-part} imply
    \begin{equation*}
        \begin{aligned}
             \Phi^{**}(N)-\Phi^{**}(N') &= (\Phi^*(N_1)-\Phi^*(N_2))+(\phi_{N}(\rho_B)-\phi_{N'}(l_1))\binom{\ell}{2}\\
             &=-\phi_{N'}(\rho_{B'})\binom{\ell'}{2}
            +(\phi_{N}(\rho_B)-\underbrace{\phi_{N'}(l_1)}_{=1})\binom{\ell}{2}\\
            &=\phi_{N}(\rho_B)\binom{\ell}{2}- \phi_{N'}(\rho_{B'})\binom{\ell'}{2}-\binom{\ell}{2}.
        \end{aligned}
    \end{equation*}
    By the arguments as used in the proof of Proposition\ \ref{prop:B_min_binary},
       \[\omega(B) = \binom{\left\lceil\frac{m-2}{2}\right\rceil+2}{3}+\binom{\left\lfloor\frac{m-2}{2}\right\rfloor+2}{3}
       \text{ and } \omega(B') = \binom{\left\lceil\frac{m-2}{2}\right\rceil+1}{3}+\binom{\left\lfloor\frac{m-2}{2}\right\rfloor+2}{3}.\]
       Now put $\alpha\coloneqq \binom{\left\lceil\frac{m-2}{2}\right\rceil+2}{3}$, $\beta\coloneqq \binom{\left\lfloor\frac{m-2}{2}\right\rfloor+2}{3}$, and $\gamma\coloneqq \binom{\left\lceil\frac{m-2}{2}\right\rceil+1}{3}$
       and observe that $\phi_{N}(\rho_B) = \alpha +\beta +\epsilon$ and $\phi_{N'}(\rho_{B'})=\gamma+\beta+\epsilon$. Moreover, we have $\ell>\ell'\geq 3$, $m\geq 4$ and $\beta>1$ which implies that 
       $(\gamma+\beta+\epsilon)\binom{\ell'}{2} < (\gamma+\beta+\epsilon)\binom{\ell}{2}$.
       Hence, 
    \begin{equation*}
        \begin{aligned}
            \Phi^{**}(N)-\Phi^{**}(N') &= \phi_{N}(\rho_B)\binom{\ell}{2}- \phi_{N'}(\rho_{B'})\binom{\ell'}{2}-\binom{\ell}{2}\\
                                       & = (\alpha +\beta +\epsilon)\binom{\ell}{2} - (\gamma+\beta+\epsilon)\binom{\ell'}{2} -\binom{\ell}{2}\\
                                       & > (\alpha +\beta +\epsilon)\binom{\ell}{2} - (\gamma+\beta+\epsilon)\binom{\ell}{2} -\binom{\ell}{2}\\
                                       & =  (\alpha-\gamma-1) \binom{\ell}{2}\\ 
                                       & =  \underbrace{\left(\binom{\left\lceil\frac{m-2}{2}\right\rceil+2}{3} - \binom{\left\lceil\frac{m-2}{2}\right\rceil+1}{3} -1\right)}_{\geq 0\text{ since } m\geq 4}\binom{\ell}{2} \geq 0.
        \end{aligned}
    \end{equation*}   
    Therefore, $\Phi^{**}(N)-\Phi^{**}(N')>0$ and thus, $\Phi^{**}(N)>\Phi^{**}(N')$; a contradiction to the assumption that 
    $N$ minimizes the weighted total cophenetic index within the class $\BinLevelOneN$. In summary, $N$ cannot contain blocks of size $m\geq 4$. 
\end{proof}

Binary phylogenetic level-$1$ networks with minimum weighted total cophenetic index are characterized as follows.
\begin{theorem}\label{thm:binary_min}
    Let $N\in \BinLevelOneN$ be a binary phylogenetic level-$1$ network with $n \geq 2$ leaves. Then, 
    \[\Phi^{**}(N)\geq \sum\limits_{k=1}^{n-1} a(k) + \binom{n}{2},\] where $a(k)$ is the highest power of 2 	that divides $k!$. 
    This bound is tight and is, in particular, achieved 
    precisely if $N\in \Nmb\subseteq  \BinLevelOneN$. If $N\in \Nmb$ and $\epsilon>0$, then $N$ is a maximally balanced tree.
\end{theorem}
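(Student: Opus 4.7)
The plan is to reduce Theorem \ref{thm:binary_min} to the already-established extremal result for binary phylogenetic trees (Proposition \ref{prop:cp-properties}(3)) by combining Lemma \ref{lem:bin_min_at_most_triangles} with Observation \ref{fact:triangle-1}. Concretely, I will show that any minimizer $N\in\BinLevelOneN$ can be collapsed to a binary tree $T$ on the same leaf set whose $\Phi^{**}$-value is at most that of $N$, invoke the known tree lower bound for $T$, and then argue that equality forces $N$ to lie in $\Nmb$.

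First I would pick a minimizer $N\in\BinLevelOneN$ (existence is clear since $\BinLevelOneN$ is finite by Lemma \ref{lem:vertices-bounded}) and apply Lemma \ref{lem:bin_min_at_most_triangles} to conclude that every non-trivial block of $N$ has size at most $3$, so each such block is a triangle. This makes Observation \ref{fact:triangle-1} directly applicable: letting $T$ be the binary phylogenetic tree obtained from $N$ by collapsing all triangles, $T$ has $n$ leaves and $\Phi^{**}(N)\geq\Phi^{**}(T)$, with equality whenever $\epsilon=0$ or $N$ is already triangle-free. Proposition \ref{prop:cp-properties}(3) then yields $\Phi^{**}(T)\geq\sum_{k=1}^{n-1}a(k)+\binom{n}{2}$, with equality iff $T$ is the unique maximally balanced tree on $n$ leaves. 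Chaining the two inequalities gives the claimed lower bound for any $N\in\BinLevelOneN$ (it applies to arbitrary $N$ since the minimizer is a lower bound for all).

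For the characterization of equality I would argue in both directions. If $N\in\Nmb$, then by Observation \ref{fact:triangle} $N$ is in $\BinLevelOneN$, its non-trivial blocks are pairwise vertex-disjoint triangles, and the tree obtained by collapsing them is the maximally balanced tree on $n$ leaves; hence for $\epsilon=0$, Observation \ref{fact:triangle-1} gives $\Phi^{**}(N)=\Phi^{**}(T)=\sum_{k=1}^{n-1}a(k)+\binom{n}{2}$, proving tightness for every element of $\Nmb$. When $\epsilon>0$, the strict inequality part of Observation \ref{fact:triangle-1} forces any $N\in\Nmb$ attaining the bound to contain no triangles, so $N$ must coincide with the maximally balanced tree itself; this settles the final clause of the theorem. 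Conversely, if $N\in\BinLevelOneN$ attains the lower bound, it is in particular a minimizer, so the first step applies and all its blocks are triangles, and the collapse $T$ attains the tree lower bound; by Proposition \ref{prop:cp-properties}(3), $T$ is maximally balanced, and Definition \ref{def:max_bal} then places $N$ in $\Nmb$.

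The hard technical work (ruling out blocks of size $\geq 4$ in a minimizer, and comparing $\Phi^{**}$ of a network with triangles to that of its collapsed tree) has already been carried out in Lemma \ref{lem:bin_min_at_most_triangles} and Observation \ref{fact:triangle-1}. The main care required at this stage is bookkeeping: chaining the two inequalities and carefully splitting the characterization of equality into the cases $\epsilon=0$ (arbitrary triangle-expansions permitted, giving all of $\Nmb$ as minimizers) and $\epsilon>0$ (triangles strictly increase $\Phi^{**}$, so only the maximally balanced tree remains).
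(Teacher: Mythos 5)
Your proposal is correct and follows essentially the same route as the paper: reduce to binary trees via Proposition \ref{prop:cp-properties}(3), using Lemma \ref{lem:bin_min_at_most_triangles} to rule out blocks of size at least $4$ and Observation \ref{fact:triangle-1} to compare a triangle-only network with its collapsed tree, and then split the equality discussion by whether $\epsilon=0$ or $\epsilon>0$. The only (harmless) organizational difference is that you argue on a fixed minimizer and invoke Lemma \ref{lem:bin_min_at_most_triangles} as a black box, whereas the paper runs a case analysis on an arbitrary $N$ and re-applies the block-shrinking construction from that lemma's proof to get the strict inequality for $N\notin\Nmb$ directly.
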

\begin{proof}
 Let $N$ be a binary level-$1$ network with $n \geq 2$ leaves. 
 Assume first that $N$ is a tree. Proposition \ref{prop:cp-properties} implies that 
 $\Phi^{**}(N) \geq \sum_{k=1}^{n-1} a(k) + \binom{n}{2}$, where $a(k)$ is the highest power of 2 that divides $k!$
 (equality holds precisely if $N$ is maximally balanced).
 
 Suppose now that $N$ is not a tree and, therefore, that $N$ contains a non-trivial block $B$. 
 Assume first that $N\in \Nmb$. Hence, $B$ is a triangle. Moreover, since $N$ is binary and level-$1$, 
 $B$ is vertex-disjoint from possibly other non-trivial blocks in $N$. Hence, $\phi_{N}(\rho_B) = 1+\epsilon$. 
 If we now collapse $B$, we obtain a network $\tilde N$ in which $\rho_{B}$ is not
 the root of any non-trivial block. Hence, $\phi_{\tilde N}(\rho_B)=1$. One easily verifies that, 
 therefore, $\Phi^{**}(N)\geq \Phi^{**}(\tilde N)$ (equality holds precisely if $\epsilon=0$). 
 Repeating the latter arguments implies that 
 $\Phi^{**}(N) =  \Phi^{**}(T)$ if $\epsilon=0$ and  
 $\Phi^{**}(N) >  \Phi^{**}(T)$ if $\epsilon>0$
 where $T$ is the tree obtained from  $N$ by collapsing all triangles. 
 By Observation\ \ref{fact:triangle}, $T$ is maximally balanced and thus, $T\in \Nmb$. By Proposition \ref{prop:cp-properties},  
 $\Phi^{**}(T) = \sum_{k=1}^{n-1} a(k) + \binom{n}{2}$. 

 Assume now that $N\notin \Nmb$. 
 Since $N$ is not a tree and binary level-$1$, 
 it contains a non-trivial block $B$ that is not a triangle (and thus, $B$ is of size of size $m>3$) or $N$ is not obtained from a maximally
 balanced tree by expanding vertices to triangles. 
 Assume first that $N$ contains a non-trivial block $B$ of size $m>3$. 
 As shown in the proof of Lemma \ref{lem:bin_min_at_most_triangles}, one can construct a binary level-$1$ network $N'$
 with a block of size $m-1\geq 3$ such that $\Phi^{**}(N)>\Phi^{**}(N')$. Repeating this process, eventually
 will lead to a binary level-$1$ network $N''$ such that $\Phi^{**}(N)>\Phi^{**}(N'')$
 and where all non-trivial blocks in $N''$ are triangles.  
 By Observation~\ref{fact:triangle-1}, one can collapse all of these triangles to obtain a binary tree $T$ such that $\Phi^{**}(N'')\geq\Phi^{**}(T)$. Proposition \ref{prop:cp-properties} implies that 
 $\Phi^{**}(T) \geq \sum_{k=1}^{n-1} a(k) + \binom{n}{2}$.
 By the latter arguments, $\Phi^{**}(N)>\Phi^{**}(N'') \geq  \Phi^{**}(T) \geq \sum_{k=1}^{n-1} a(k) + \binom{n}{2}$. 
  Assume now that all non-trivial blocks in $N$ (if there are any) are triangles.
 This and the fact that $N$ is binary allows us to apply Observation\ \ref{fact:triangle-1} and
 to conclude that one can obtain a binary tree $T$ by collapsing all triangles in $N$ such that 
 $\Phi^{**}(N) \geq \Phi^{**}(T)$. In particular, since $N\notin \Nmb$ and by Observation\ \ref{fact:triangle-1} again,  
 $T$ is not maximally balanced. Proposition \ref{prop:cp-properties} implies that 
 $\Phi^{**}(N) > \sum_{k=1}^{n-1} a(k) + \binom{n}{2}$. 
 
 Let $T$ be a maximally balanced tree and therefore, $\Phi^{**}(T) = \sum_{k=1}^{n-1} a(k) + \binom{n}{2}$
and $T\in \Nmb$. Based on the latter arguments, we obtain the following results.
If $N\in \Nmb$ and $N$ is a tree, we have $\Phi^{**}(N)=\Phi^{**}(T)$. 
If $N\in \Nmb$ and $N$ is not a tree, then $\Phi^{**}(N)\geq \Phi^{**}(T)=\sum_{k=1}^{n-1} a(k) + \binom{n}{2}$ (equality holds precisely if $\epsilon=0$). 
If $N\notin \Nmb$, then $\Phi^{**}(N)> \Phi^{**}(T)$.
In other words, the bound is tight and is, in particular, achieved if and only if $N\in \Nmb$. 
In the latter case, $\epsilon>0$ implies that $N$ is a maximally balanced tree.
\end{proof}

\subsection{Maximum networks}
\label{APPX:subsection:MaxN}
In the following, we will investigate the structure of phylogenetic level-$1$
networks with maximum weighted total cophenetic index. Observe first that, for
every $n\geq 2$, there are infinitely many phylogenetic level-$1$ networks with
$n$ leaves and different balance values (cf.\ Figure~\ref{fig:N_cres}(b)). In
particular, Figure~\ref{fig:N_cres}(b) provides a generic example of a class of
networks $N$ with $n$ leaves whose single non-trivial block $B$ can be made
arbitrarily large, showing that $\omega(B)$ and thus also the weighted total
cophenetic index is unbounded.
\begin{fact}\label{obs:max-unbounded}
    The weighted total cophenetic index is unbounded on phylogenetic level-$1$ networks with $n\geq 2$ leaves.
\end{fact}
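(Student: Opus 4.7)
The plan is to exhibit, for each fixed $n\ge 2$, an explicit sequence $(N_m)_{m\ge 3}$ of phylogenetic level-$1$ networks on exactly $n$ leaves such that $\Phi^{**}(N_m)\to\infty$ as $m\to\infty$; this will realise the schematic illustrated in Figure~\ref{fig:N_cres}(b). For each $m\ge 3$, I would define $N_m$ as follows: take a single non-trivial block $B$ with $\rho_B=\rho_{N_m}$, chosen to be a crescent of size $m$ in the sense of the paper, i.e., containing the Hamiltonian path $(\rho_B,v_1,\ldots,v_{m-2},\eta_B)$, the shortcut $(\rho_B,\eta_B)$, and every additional edge $(v_i,\eta_B)$ for $1\le i\le m-3$; then attach one pendant leaf to $v_{m-2}$ and $n-1$ further pendant leaves to $\eta_B$, so that $N_m$ has exactly $n$ leaves.

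Next I would verify that each $N_m\in\mscr{N}_n$ is a well-defined phylogenetic level-$1$ network. Condition~(N1) is immediate since $\rho_B$ is the unique vertex of in-degree $0$, and (N2) can be checked vertex-by-vertex: $\rho_B$ has out-degree $2$; each $v_i$ with $i<m-2$ has in-degree $1$ and out-degree $2$ (to $v_{i+1}$ and to $\eta_B$ directly); $v_{m-2}$ has in-degree $1$ and out-degree $2$ (to $\eta_B$ inside $B$ and to its pendant leaf); and $\eta_B$ has in-degree $m-1\ge 2$, so any out-degree it carries is permissible under (N2). Level-$1$-ness follows because $B$ is the unique non-trivial block and contains $\eta_B$ as its unique hybrid. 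Since $B$ is a crescent with $|V(B)|=m$, Proposition~\ref{prop:B_maxmin_arbitrary}(2) yields $\omega(B)=\binom{m}{3}$, and hence $\phi(\rho_{N_m})=\binom{m}{3}+\epsilon$.

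Finally, since $\rho_{N_m}$ is relevant by Observation~\ref{obs:relevant}, Observation~\ref{obs:root-rel} gives
\[
\Phi^{**}(N_m)\;\ge\;\phi(\rho_{N_m})\binom{n}{2}\;=\;\Bigl(\binom{m}{3}+\epsilon\Bigr)\binom{n}{2},
\]
which diverges as $m\to\infty$, proving the observation. The only subtle point is verifying (N2) at the interior block vertices $v_1,\ldots,v_{m-3}$ that carry no pendant leaves: that is precisely what the added in-block edges $(v_i,\eta_B)$ achieve, as each such $v_i$ thereby has in-block out-degree $2$. Beyond this routine bookkeeping there is no serious obstacle, since once $B$ is identified as a crescent, the value $\omega(B)$ is delivered directly by Proposition~\ref{prop:B_maxmin_arbitrary}(2).
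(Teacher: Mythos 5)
Your proposal is correct and follows essentially the same route as the paper: the paper's argument (immediately preceding Observation~\ref{obs:max-unbounded}) also fixes $n$ and grows a single non-trivial block rooted at $\rho_N$ so that $\omega(B)$, and hence $\phi(\rho_N)\binom{n}{2}\le\Phi^{**}(N)$, diverges, pointing to Figure~\ref{fig:N_cres}(b) for the construction. You merely instantiate that schematic explicitly as a crescent and carry out the (N2) bookkeeping, which is a sound and complete realisation of the same idea.
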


Observation\ \ref{obs:max-unbounded} motivates the question as whether more restrictive networks
may have a bounded weighted total cophenetic index and, if so, if one can characterize the structure
of networks achieving it. To shed some light into this question, we focus now on the  class $\BinLevelOneN$
of binary phylogenetic level-$1$ networks and the more general class $\HybdidDegTwoN$ of phylogenetic level-$1$ networks
where all hybrids have in-degree two. In particular, Lemma \ref{lem:vertices-bounded} implies
\begin{fact}\label{obs:bounded-indegree2}
	    The weighted total cophenetic index is bounded for all networks within the classes 
	     $\HybdidDegTwoN$ and $\BinLevelOneN$. 
\end{fact}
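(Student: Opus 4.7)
The plan is to first verify that $\NC_n$ attains the claimed value and then establish the upper bound by a structural analysis. In $\NC_n$ the defining condition $|\child^*_{\NC_n}(v)|=1$ for every $v\in V(B)\setminus\{\rho_N\}$ means that $\rho_N=\rho_B$ is the unique relevant vertex. Since the unique non-trivial block $B$ is a crescent of size $n+1$, Corollary~\ref{cor:crecent} gives $\omega(B)=\binom{n+1}{3}$, hence $\phi(\rho_N)=\omega(B)+\epsilon=\binom{n+1}{3}+\epsilon$ and, via Observation~\ref{obs:root-rel}, $\Phi^{**}(\NC_n)=\phi(\rho_N)\binom{n}{2}=\bigl(\binom{n+1}{3}+\epsilon\bigr)\binom{n}{2}$. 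In particular $\NC_n$ lies in both classes, so it remains to prove the same upper bound for every $N\in\HybdidDegTwoN$ and to characterize equality.

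For the upper bound I would proceed by induction on $n$, exploiting the decomposition $\Phi^{**}(N)=\sum_{i=1}^{k}\Phi^{**}(\partN_i)+\phi(\rho_N)\binom{n}{2}$ from Theorem~\ref{thm:sum-up-part}, where $v_1,\dots,v_k$ are the relevant neighbors of $\rho_N$ and $\partN_i=\partN(v_i)$. The argument rests on three structural facts for $N\in\HybdidDegTwoN$. First, every non-trivial block $B$ of $N$ contains at most $n+1$ vertices: by (N2) and the in-degree bound, every $v\in V^-(B)$ as well as $\eta_B$ possesses at least one child outside $B$, and distinct external children account for distinct leaves; hence $|V(B)|=n+1$ forces $\rho_B=\rho_N$ and every external child below $B$ to be a leaf. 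Second, by Proposition~\ref{prop:B_maxmin_arbitrary}(2), $\omega(B)\leq\binom{|V(B)|}{3}$, with equality precisely when $B$ is a crescent. Third, if several non-trivial blocks $B_1,\dots,B_r$ are rooted at $\rho_N$, they pairwise share only the vertex $\rho_N$ (Observation~\ref{obs:identical-block}), their external subtrees are leaf-disjoint, and $\sum_{i=1}^{r}(|V(B_i)|-1)\leq n$; convexity of $k\mapsto\binom{k+1}{3}$ on integers $\geq 2$ then yields $\sum_i\omega(B_i)\leq\binom{n+1}{3}$, with strict inequality whenever $r\geq 2$ or some $B_i$ fails to be a crescent.

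Combining these three facts gives $\phi(\rho_N)\leq\binom{n+1}{3}+\epsilon$, so the root term alone is bounded by $\bigl(\binom{n+1}{3}+\epsilon\bigr)\binom{n}{2}$. The main obstacle is the additive piece $\sum_i\Phi^{**}(\partN_i)$: each $\partN_i$ lies in the same class by Lemma~\ref{lem:blocks-in-partN} and has $n_i\coloneqq|L_N(v_i)|<n$ leaves, so the inductive hypothesis applies, but the leaf sets $L_N(v_i)$ may overlap with the leaves under the root block through $\eta_B$, so a naive summation of the inductive bounds overcounts. The resolution is a case analysis driven by the extremal analysis in the three facts above. On the one hand, equality in all three facts forces the block at $\rho_N$ to be a crescent of size $n+1$ all of whose external children are leaves, which consumes all $n$ leaves of $N$; this rules out any relevant neighbor, giving $k=0$ and $\Phi^{**}(N)=\phi(\rho_N)\binom{n}{2}=\bigl(\binom{n+1}{3}+\epsilon\bigr)\binom{n}{2}$. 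On the other hand, any strict inequality in the three facts leaves a polynomial-order gap in $\phi(\rho_N)\binom{n}{2}$ (quantified via the identity $\binom{n+1}{3}-\binom{n}{3}=\binom{n}{2}$ and the bound $n_i\leq n-1$) which suffices to absorb $\sum_i(\binom{n_i+1}{3}+\epsilon)\binom{n_i}{2}$. Tracing the equality case through the chain of inequalities then forces $N\simeq\NC_n$.
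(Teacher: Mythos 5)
You are proving far more than the statement asks. Observation~\ref{obs:bounded-indegree2} only asserts \emph{boundedness}, and the paper obtains it in one line: by Lemma~\ref{lem:vertices-bounded} (citing \cite[Prop.~1]{CRV07}), every network in $\HybdidDegTwoN$ has $O(n)$ vertices, so every block has bounded size, hence every $\omega(B)$ and every $\phi(v)$ is bounded, the number of relevant vertices is bounded, and each summand $\phi(v)\binom{|L_N(v)|}{2}$ is at most $\phi(v)\binom{n}{2}$; the finite sum is therefore bounded. Your first structural fact (each block of $N\in\HybdidDegTwoN$ has at most $n+1$ vertices, because each non-root block vertex has an external child and these feed pairwise disjoint leaf sets) is a perfectly good substitute for the block-size part of this, but on its own it does not bound the \emph{number} of relevant vertices, so even the cheap version of your argument needs the vertex-count bound (or an equivalent counting step) to close.

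Instead, you route everything through a proof of the sharp maximum $\Phi^{**}(N)\le\bigl(\binom{n+1}{3}+\epsilon\bigr)\binom{n}{2}$ with equality iff $N\simeq\NC_n$ --- that is, through Theorem~\ref{thm:binary_galledtree_maximum}, whose proof occupies Lemmas~\ref{lem:one_block}, \ref{lem:max_is_binary}, \ref{lem:bin_no_true_vert} and \ref{lem:exactly_one_block} in the paper. Two problems. First, the paper's own proofs of those lemmas \emph{presuppose} Observation~\ref{obs:bounded-indegree2} in order to assert that a maximizer exists, so one cannot simply import them here; your induction on $n$ via Theorem~\ref{thm:sum-up-part} avoids that circularity, which is a genuine merit of your design. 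But second, the crux of that induction --- the claim that any strict inequality in your three facts leaves a gap in $\phi(\rho_N)\binom{n}{2}$ large enough to ``absorb'' $\sum_i\bigl(\binom{n_i+1}{3}+\epsilon\bigr)\binom{n_i}{2}$ --- is asserted, not proved. This is exactly where the difficulty lives: one must trade off the size of the root block against the sizes $n_i$ of the hanging subnetworks (noting that $n_i=|L_N(v_i)|$ may exceed $|L(\partN_i)|$ and the $L_N(v_i)$ may overlap in $L_N(\eta_B)$, a complication you name and then set aside), and handle the case $\phi(\rho_N)=1$ where the ``gap'' argument takes a different form. In the paper this is done by explicit local replacement arguments, not by a one-line convexity estimate. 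As written, the proposal therefore neither matches the paper's (trivial) proof of the observation nor supplies a complete proof of the stronger theorem it substitutes for it.
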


Moreover, we obtain

\begin{fact}\label{fact:n2_max_bin}
$\BinLevelOneNvar{2}$ and $\HybdidDegTwoNvar{2}$ each consist of precisely two networks (up to isomorphism), namely the star tree $\TstarTwo$ and the triangle network $N_\Delta$.
For $\epsilon=0$, both  $\TstarTwo$ and  $N_\Delta$ have a weighted total cophenetic index of 1. For $\epsilon>0$, 
the triangle network $N_\Delta$ is the unique network within the classes $\BinLevelOneNvar{2}$ and $\HybdidDegTwoNvar{2}$ with maximum weighted total cophenetic index, namely $\Phi^{**}(N_\Delta) = 1+\epsilon$.
\end{fact}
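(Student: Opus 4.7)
The plan is to first enumerate the two classes and then compute $\Phi^{**}$ on the (at most) two resulting networks. Since $\BinLevelOneNvar{2}\subseteq \HybdidDegTwoNvar{2}$, it suffices to enumerate $\HybdidDegTwoNvar{2}$ and then observe that the two networks found lie in $\BinLevelOneNvar{2}$ as well. Let $n_T, n_H, n_L=2$ denote the number of tree, hybrid, and leaf vertices. In $\HybdidDegTwoN$ every hybrid has indegree exactly $2$, every tree vertex has outdegree $\geq 2$ (by (N2)), and every hybrid has outdegree $\geq 1$ (since it must have a leaf descendant, as justified by Lemma~\ref{lem:non_separated}). The handshake identity sum-of-outdegrees $=$ sum-of-indegrees becomes $\sum_T\outdeg(v)+\sum_H\outdeg(v)=n_T+2n_H+1$, which together with the minimum-outdegree bounds forces $n_T\leq n_H+1$, and Lemma~\ref{lem:vertices-bounded} gives $|V|\leq 4(n-1)+1=5$.

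These constraints only leave $n_H\in\{0,1\}$. If $n_H=0$, then $n_T=1$ and $N$ is the unique tree with one internal vertex and two leaves, i.e.\ $\TstarTwo$. If $n_H=1$, then $n_T\in\{1,2\}$; the value $n_T=1$ is ruled out because the unique hybrid needs two distinct non-leaf parents and only the root would be available. Hence $n_T=2$ and $|V|=5$, and the inequality chain becomes equality, so each tree vertex has outdegree exactly $2$ and the hybrid has outdegree exactly $1$. A short combinatorial check (the hybrid's two parents must be the root $r$ and the other tree vertex $t$; the root must be an ancestor of $t$, forcing $(r,t)$ to be an edge; the remaining outgoing slots of $r,t,\eta$ must be filled by the two leaves, each once) gives exactly $N_\Delta$. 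This simultaneously shows that both networks are binary, so $\BinLevelOneNvar{2}=\HybdidDegTwoNvar{2}=\{\TstarTwo,N_\Delta\}$.

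Finally, we compute $\Phi^{**}$. For $\TstarTwo$, the root is the only relevant vertex and $\mathcal B^{\rho}=\emptyset$, so $\phi(\rho)=1$ and $\Phi^{**}(\TstarTwo)=\Phi^*(\TstarTwo)+\phi(\rho)\binom{2}{2}=0+1=1$. For $N_\Delta$, its single non-trivial block $B$ is a triangle with $V^-(B)=\{v\}$ and $\kappa_v=2$, so $\omega(B)=\binom{2}{2}=1$; the two non-root vertices $v,\eta$ of $B$ each satisfy $|\child_N^*(\cdot)|=1$ and are therefore irrelevant, leaving $\rho=\rho_B$ as the only relevant vertex. Thus $\phi(\rho)=\epsilon+\omega(B)=1+\epsilon$ and $\Phi^{**}(N_\Delta)=0+(1+\epsilon)\binom{2}{2}=1+\epsilon$. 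The difference $\Phi^{**}(N_\Delta)-\Phi^{**}(\TstarTwo)=\epsilon$ directly yields the two cases: equality (both equal $1$) when $\epsilon=0$, and $N_\Delta$ as unique maximizer when $\epsilon>0$.

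The main obstacle is not the index computation (immediate from the definitions of $\omega$ and $\phi$) but the completeness of the enumeration—in particular ensuring that no additional pathological networks exist, such as a $5$-vertex configuration whose hybrid has outdegree $0$. This is handled by the implicit convention (consistent with Lemma~\ref{lem:non_separated} and with the treatment of separated networks) that every vertex has at least one leaf-descendant, which forces $\outdeg(\eta)\geq 1$ for every hybrid and closes off this case.
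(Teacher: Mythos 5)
The paper records this statement as an unproved Observation, so there is no in-house argument to compare against; your proposal supplies the missing verification, and its overall structure (enumerate $\HybdidDegTwoNvar{2}$ by degree counting, then evaluate $\Phi^{**}$) is sound. The index computations are exactly right: in $N_\Delta$ the two non-root block vertices each have $|\child^*_N(\cdot)|=1$ and are irrelevant, so $\Phi^{**}(N_\Delta)=(\omega(B)+\epsilon)\binom{2}{2}=1+\epsilon$, while $\Phi^{**}(\TstarTwo)=1$, which gives the dichotomy in $\epsilon$.

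Two points on the enumeration. First, the constraints you actually derive ($n_T+n_H\leq 3$ from Lemma~\ref{lem:vertices-bounded} and $n_T\leq n_H+1$ from the handshake bound) do not by themselves exclude $(n_T,n_H)=(1,2)$, so the assertion that they ``only leave $n_H\in\{0,1\}$'' is not literally true. That case dies by the same mechanism you use for $(n_T,n_H)=(1,1)$: each hybrid needs two distinct non-leaf, non-self parents, so with the root as the only tree vertex the two hybrids would have to parent each other, forcing a directed $2$-cycle. Say this explicitly rather than folding it into the counting. Second, your closing diagnosis is correct and worth emphasizing: Definition~\ref{def:N} does not forbid a vertex of in-degree $2$ and out-degree $0$ (it is a hybrid, not a leaf, so it does not raise the leaf count), and the five-vertex configuration $\rho\to t$, $\rho\to\eta$, $t\to\eta$, $t\to l_1$, $\rho\to l_2$ with $\outdeg(\eta)=0$ satisfies (N1), (N2), the level-$1$ condition, and the in-degree bound, hence lies in $\HybdidDegTwoNvar{2}$ under the literal definitions. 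It even evaluates to $\Phi^{**}=1+\epsilon$, so it would destroy the uniqueness claim for $\epsilon>0$, not just the count of networks. The convention that every hybrid has a leaf descendant is therefore genuinely load-bearing; it is consistent with how the paper uses its definitions elsewhere (e.g.\ the proof of Lemma~\ref{lem:non_separated} tacitly assumes non-separated hybrids have out-degree at least $2$), but it is an assumption you are importing, and flagging it as you did is the right call.
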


In the remainder of this section, we will prove the following characterization of networks that 
maximize the weighted total cophenetic index within the classes $\BinLevelOneN$ and $\HybdidDegTwoN$. 
\begin{theorem}\label{thm:binary_galledtree_maximum}
    Let $N\in \BinLevelOneN$ (resp.\ $N\in \HybdidDegTwoN$) be a  network with $n \geq 3$ leaves. Then, 
    \[\Phi^{**}(N)\leq \left(\binom{n+1}{3}+\epsilon\right)\binom{n}{2}.\] This bound is tight and is, in particular, achieved 
    for $N\in \BinLevelOneN$  (resp.\ $N\in \HybdidDegTwoN$) if and only if $N\simeq \NC_n$.
\end{theorem}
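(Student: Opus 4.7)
The plan is to prove the upper bound by strong induction on $n$, using the recursiveness of $\Phi^{**}$ (Theorem~\ref{thm:sum-up-part}) together with the sharp bound $\omega(B)\leq \binom{|B|}{3}$, with equality iff $B$ is a crescent (Proposition~\ref{prop:B_maxmin_arbitrary}(2)). A direct calculation first confirms that $\NC_n$ attains the claimed value: its unique non-trivial block $B$ is a crescent of size $n+1$ rooted at $\rho_N$, giving $\omega(B)=\binom{n+1}{3}$ and hence $\phi(\rho_N)=\epsilon+\binom{n+1}{3}$; moreover, by the very definition of $\NC_n$ every other vertex of $B$ has $|\child^*_{\NC_n}(\cdot)|=1$, so $\rho_N$ is the only relevant vertex of $\NC_n$ and therefore $\Phi^{**}(\NC_n)=(\binom{n+1}{3}+\epsilon)\binom{n}{2}$. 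The small base cases ($n=3$ in particular) are handled by direct inspection, and the bound is easily verified to remain valid for $n\in\{1,2\}$, which is needed when the inductive hypothesis is applied to subnetworks with few leaves.

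For the inductive step, let $N$ be a network in the class under consideration with $n\geq 4$ leaves, and let $v_1,\ldots,v_k$ be the relevant neighbors of $\rho_N$ with $n_i=|L_N(v_i)|$. By Theorem~\ref{thm:sum-up-part},
\[ \Phi^{**}(N)=\sum_{i=1}^k\Phi^{**}(\partN(v_i))+\phi(\rho_N)\binom{n}{2}, \]
and by Lemma~\ref{lem:blocks-in-partN} each $\partN(v_i)$ is a phylogenetic level-$1$ network in the same class with $n_i<n$ leaves, so the inductive hypothesis yields $\Phi^{**}(\partN(v_i))\leq (\binom{n_i+1}{3}+\epsilon)\binom{n_i}{2}$. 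If $\rho_N$ is not the root of a non-trivial block, then $\phi(\rho_N)=1$ and the $v_i$ are the non-leaf children of $\rho_N$, pairwise $\prec_N$-incomparable with $\sum n_i = n$; a direct algebraic check using $\binom{n+1}{3}=\binom{n}{3}+\binom{n}{2}$ together with $\binom{n}{2}>\binom{n_i}{2}$ shows the resulting expression is strictly less than $(\binom{n+1}{3}+\epsilon)\binom{n}{2}$ for $n\geq 3$, ruling out equality.

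The essential case is when $\rho_N=\rho_B$ for some non-trivial block $B$ of size $m$, so $\phi(\rho_N)=\epsilon+\omega(B)\leq \epsilon+\binom{m}{3}$. A vertex count using axiom (N2), the fact that each internal vertex of $B$ as well as $\eta_B$ must carry at least one child outside $B$, and the fact that every non-leaf such child becomes a relevant neighbor of $\rho_N$ via Definition~\ref{def:rel-neighbor}(2), combined with $n_i\geq 2$ for each relevant neighbor, yields the key inequality $m\leq n+1-k$, where $k$ is the number of non-leaf external children. When $k=0$ and the network is binary, $m=n+1$ and one obtains $\Phi^{**}(N)=(\epsilon+\omega(B))\binom{n}{2}\leq (\epsilon+\binom{n+1}{3})\binom{n}{2}$ with equality precisely when $B$ is a crescent, i.e., $N\simeq \NC_n$. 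Otherwise, plugging the inductive bounds into the recursion reduces the claim to
\[ \Bigl(\binom{n+1}{3}-\binom{n+1-k}{3}\Bigr)\binom{n}{2}\;>\;\sum_{i=1}^{k}\Bigl(\binom{n_i+1}{3}+\epsilon\Bigr)\binom{n_i}{2}, \]
and the left-hand side telescopes via $\binom{j+1}{3}-\binom{j}{3}=\binom{j}{2}$ to $\binom{n}{2}\sum_{j=n+1-k}^{n}\binom{j}{2}$, which---using $\binom{n}{2}\geq \binom{n_i}{2}$, $\sum n_i\leq n$, and $\epsilon\leq 2$---strictly dominates the right.

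The main obstacle will be making this telescoping comparison rigorous in the slightly larger class $\HybdidDegTwoN$, where internal vertices of $B$ may carry several external children and $\rho_N$ may have external children outside $B$. One must then handle separately the subcase $k=0$ in which the network is still non-binary because some internal vertex of $B$ has more than one leaf child: that extra branching forces $m<n+1$, so that $\omega(B)\leq \binom{m}{3}<\binom{n+1}{3}$ and the strict inequality again follows. In every such configuration the binomial comparison goes through, pinning the unique maximizer at the binary crescent network $\NC_n$ within both $\BinLevelOneN$ and $\HybdidDegTwoN$.
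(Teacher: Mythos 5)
Your overall strategy---induction on $n$ via the recursiveness of $\Phi^{**}$ (Theorem~\ref{thm:sum-up-part})---is genuinely different from the paper, which instead fixes a maximizer (whose existence follows from Observation~\ref{obs:bounded-indegree2}) and performs local exchange arguments showing that any deviation from the crescent structure can be strictly improved (Lemmas~\ref{lem:one_block}, \ref{lem:max_is_binary}, \ref{lem:bin_no_true_vert}, \ref{lem:exactly_one_block}). Unfortunately, the key inequality on which your inductive step rests is false. You replace $\binom{m}{3}$ by $\binom{n+1-k}{3}$ using $m\le n+1-k$ and then claim
\[
\Bigl(\tbinom{n+1}{3}-\tbinom{n+1-k}{3}\Bigr)\tbinom{n}{2}\;>\;\sum_{i=1}^{k}\Bigl(\tbinom{n_i+1}{3}+\epsilon\Bigr)\tbinom{n_i}{2}.
\]
Take $N\in\BinLevelOneNvar{10}$ whose root is the root of a \emph{triangle} $B$, where the internal vertex of $B$ carries a leaf and the hybrid $\eta_B$ carries the root $r$ of a crescent of size $10$ whose nine non-root vertices each carry a leaf. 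Then $n=10$, $m=3$, the only relevant neighbor of $\rho_N$ is $r$, so $k=1$ and $n_1=9$; your inequality reads $(165-120)\cdot 45=2025>(120+\epsilon)\cdot 36=4320+36\epsilon$, which is false. (The theorem itself holds here: $\Phi^{**}(N)=4365+81\epsilon\le 7425+45\epsilon$.) The failure is structural, not a miscalculation: when the root block is small but a large block hangs below it, the true slack $(\binom{n+1}{3}-\binom{m}{3})\binom{n}{2}$ is enormous, and discarding it by pretending $m$ could be as large as $n+1-k$ loses exactly the term that absorbs $\binom{n_1+1}{3}\binom{n_1}{2}$. Any repair must keep $m$ and the $n_i$ coupled, which is essentially the delicate computation the paper carries out in Lemmas~\ref{lem:max_is_binary} (Case~3) and \ref{lem:exactly_one_block}, where the inequality $\binom{m}{3}+\binom{m_2}{3}+\epsilon\le\binom{m+m_2-2}{3}$ is what actually does the work.

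Two further points need attention even if the main inequality were repaired. First, Theorem~\ref{thm:sum-up-part} uses $n_i=|L_N(v_i)|$, which may differ from $|L(\partN(v_i))|$ (the paper notes this explicitly), so the inductive hypothesis applies to the standalone network $\partN(v_i)$ and you must separately control the discrepancy $\phi(v_i)\bigl(\binom{n_i}{2}-\binom{|L(\partN(v_i))|}{2}\bigr)$. Second, in $\HybdidDegTwoN$ several relevant neighbors of $\rho_N$ may lie \emph{inside} a root block (Definition~\ref{def:rel-neighbor}(2.a)), in which case their leaf sets all contain $L_N(\eta_B)$ and $\sum_i n_i\le n$ fails; your telescoping comparison uses that bound. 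Your verification that $\Phi^{**}(\NC_n)=(\binom{n+1}{3}+\epsilon)\binom{n}{2}$ is correct, but the characterization of equality also inherits the gaps above.
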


In order to prove Theorem \ref{thm:binary_galledtree_maximum},
we first provide a necessary condition for the non-trivial blocks in networks that achieve a maximum weighted total cophenetic index
within the classes $\BinLevelOneN$ and $\HybdidDegTwoN$. 

\begin{lemma} \label{lem:one_block}
   Let $N\in \HybdidDegTwoN$ (resp.\ $N\in \BinLevelOneN$) be a phylogenetic level-$1$ network with $n \geq 3$ leaves  that 
   has maximum weighted total cophenetic index among all networks within the class $\HybdidDegTwoN$ (resp.\ $\BinLevelOneN$).
   Then, $N$ contains at least one non-trivial block and all non-trivial blocks of $N$ are crescents.
\end{lemma}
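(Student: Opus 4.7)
The plan is to prove this in two stages.

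\emph{Stage 1 — $N$ contains a non-trivial block.} I compare with $\NC_n\in\BinLevelOneN\subseteq\HybdidDegTwoN$. The unique non-trivial block $B$ of $\NC_n$ is a crescent of size $n+1$, so $\omega(B)=\binom{n+1}{3}$ by Corollary~\ref{cor:crecent}; together with the defining property that $\rho_{\NC_n}$ is the sole relevant vertex, this yields $\Phi^{**}(\NC_n)=\bigl(\binom{n+1}{3}+\epsilon\bigr)\binom{n}{2}$. Every tree in either class satisfies $\Phi^{**}\leq \binom{n+1}{3}$ by Proposition~\ref{prop:cp-properties}(1), and for $n\geq 3$ we have $\binom{n}{2}\geq 3>1$, so no tree can be maximal and $N$ must contain a non-trivial block.

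\emph{Stage 2 — every non-trivial block is a crescent.} Suppose for contradiction some $B\in\mathcal{B}(N)$ of size $m$ is not a crescent; set $v=\rho_B$. The plan is to apply Theorem~\ref{thm:local-phi} at $v$ after replacing $\partN(v)$ with a network $\tilde N$ built as follows: substitute a crescent $B'$ of size $m$ for $B$, fix a bijection $\sigma\colon V(B)\to V(B')$ with $\sigma(\rho_B)=\rho_{B'}$, and reattach, for every $u\in V(B)$, all external children of $u$ (together with the subnetworks rooted at them) to $\sigma(u)$. Because the crescent has $\eta_{B'}$ of indegree two (via the shortcut $(\rho_{B'},\eta_{B'})$ and the last Hamiltonian edge), $N'$ remains in $\BinLevelOneN$ (resp.\ $\HybdidDegTwoN$). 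By Theorem~\ref{thm:local-phi},
\[\Phi^{**}(N)-\Phi^{**}(N')=\bigl(\Phi^{*}(\partN(v))-\Phi^{*}(\tilde N)\bigr)+\bigl(\phi_N(v)-\phi_{N'}(\rho_{\tilde N})\bigr)\binom{|L(v)|}{2}.\]
Only the block rooted at $v$ changes, so $\phi_{N'}(\rho_{\tilde N})-\phi_N(v)=\omega(B')-\omega(B)$, and by Proposition~\ref{prop:B_maxmin_arbitrary}(2) this equals $\binom{m}{3}-\omega(B)>0$ since $B$ is not a crescent; thus the second summand is strictly negative. In the binary case, Fact~\ref{fact:binary_rel_vert} ensures that the internal vertices of $B$ and of $B'$ are all irrelevant (their unique in-block child is excluded from $\child^*$, leaving only the single external child), so the relevant vertices of $\partN(v)$ (resp.\ $\tilde N$) other than $v$ (resp.\ $\rho_{\tilde N}$) all lie in the preserved external subnetworks, with unchanged $\phi$- and $|L|$-values; therefore $\Phi^{*}(\partN(v))=\Phi^{*}(\tilde N)$ and $\Phi^{**}(N')>\Phi^{**}(N)$, contradicting maximality.

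The hard part is the $\HybdidDegTwoN$ case, where an internal vertex $u\in V^-(B)$ may be relevant (e.g.\ if it carries more than one external child or itself roots a further non-trivial block), and $|L_{\partN(v)}(u)|$ then depends on $u$'s position within $B$ through the block-internal descendants of $u$. The plan to overcome this is to choose $\sigma$ so that internal vertices with larger external-descendant leaf counts are placed closer to $\rho_{B'}$ along the Hamiltonian path of $B'$, and then to verify by a rearrangement-inequality style estimate that any residual loss in $\Phi^{*}$ is strictly dominated by the gain $(\omega(B')-\omega(B))\binom{|L(v)|}{2}$ at the root; an attractive alternative is first to establish, as an auxiliary lemma, that a $\Phi^{**}$-maximizer in $\HybdidDegTwoN$ must in fact be binary (consistent with the uniqueness claim of Theorem~\ref{thm:binary_galledtree_maximum}), thereby reducing the non-binary case to the already-settled binary one.
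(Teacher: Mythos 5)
Your Stage 1 is exactly the paper's argument and is fine, and your Stage 2 goes through for the binary class $\BinLevelOneN$: there Observation~\ref{fact:binary_rel_vert} does make all internal vertices of $B$ irrelevant, so $\Phi^{*}(\partN(v))=\Phi^{*}(\tilde N)$ and the root term $(\omega(B)-\omega(B'))\binom{|L(v)|}{2}<0$ settles it. The genuine gap is the $\HybdidDegTwoN$ case, which you correctly flag as ``the hard part'' but do not close. Your bijection-and-reattach construction creates exactly the problem you describe: a relevant internal vertex $u\in V^-(B)$ has $|L_{\partN(v)}(u)|$ determined by its block-descendants, and permuting vertices onto a Hamiltonian path can shrink these sets, so $\Phi^{*}$ may drop and the rearrangement-inequality estimate you would need is neither stated nor obviously true. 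Your fallback --- first prove that a maximizer in $\HybdidDegTwoN$ is binary --- is circular in the paper's logical order: the proof of Lemma~\ref{lem:max_is_binary} invokes Lemma~\ref{lem:one_block} to know that all blocks of the maximizer are crescents, so you cannot use binarity to prove crescent-ness unless you supply an independent proof of binarity, which you do not.

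The paper avoids the difficulty by choosing a different replacement that makes every quantity move the right way. Since $\eta_B$ has in-degree two, $B$ is the union of two internally vertex-disjoint paths $P=(\rho_B,v_1,\dots,v_k,\eta_B)$ and $P'=(\rho_B,w_1,\dots,w_{k'},\eta_B)$ with $k,k'\geq 1$ (as $B$ is not a crescent). Keeping every vertex and all its external children in place, one deletes the block edges and inserts the single path $(\rho_B,v_1,\dots,v_k,w_1,\dots,w_{k'},\eta_B)$ together with the shortcut $(\rho_B,\eta_B)$. Then $V(B)=V(B')$, relevance and the weights $\phi$ of all vertices other than $\rho_B$ are unchanged, $L_N(u)\subseteq L_{N'}(u)$ for every $u\in V(B)$ (the old $P'$-vertices become descendants of the old $P$-vertices, and nothing loses descendants), and $\omega(B)<\omega(B')=\binom{m}{3}$ by Proposition~\ref{prop:B_maxmin_arbitrary}(2). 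Hence every summand of $\Phi^{**}$ weakly increases and the root summand strictly increases, giving the contradiction without any rearrangement estimate. If you want to salvage your version, replace the arbitrary bijection $\sigma$ by this concatenation (equivalently, insist that $\sigma$ preserve the block-descendant partial order and extend it), and the monotonicity of the $|L|$-values becomes automatic.
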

\begin{proof}
		By Observation\ \ref{obs:bounded-indegree2}, networks in $\HybdidDegTwoN$ and $\BinLevelOneN$ have bounded maximum weighted total cophenetic index. 
		Hence, we can choose $N\in \HybdidDegTwoN$  (resp.\ $N\in \BinLevelOneN$) as 
		 one of the networks that maximizes the weighted total cophenetic index within the class $\HybdidDegTwoN$ (resp.\ $\BinLevelOneN$).
    By Proposition \ref{prop:cp-properties}, the weighted total cophenetic index on trees is maximized by the caterpillar tree with a value of $\binom{n+1}{3}$
    and the caterpillar tree is contained in $\BinLevelOneN\subset \HybdidDegTwoN$.
    However, there are phylogenetic level-$1$ networks containing non-trivial blocks that have a higher value, e.g., the network $\NC_n$ is contained in  $\BinLevelOneN\subset \HybdidDegTwoN$ which satisfies
    	$\Phi^{**}(\NC_n) = (\binom{n+1}{3}+\epsilon)\binom{n}{2}$.
     Thus, $N$ must contain at least one non-trivial block $B$. Suppose that $B$ has size $m$.
     Assume, for contradiction, that $B$ is not a crescent. By Observation\ \ref{obs:Bm3-isom}, $m>3$.
     Since $\eta_B$ has in-degree two, $B$ is composed of two internally vertex-disjoint $\rho_B \eta_B$-paths
     $P=(\rho_B,v_1,\dots, v_k, \eta_B)$ and $P'=(\rho_B,w_1,\dots, w_{k'}, \eta_B)$ (cf.\ \cite[Cor.\ 39]{HSS:22}).
     Note that  $k,k'\geq 1$ since $B$ is not a crescent.
     We now replace $B$ in $N$ by a crescent $B'$. To be more precise, 
     we remove  all edges within $B$ from $N$ and add 
     the two internally vertex-disjoint $\rho_B \eta_B$-paths $(\rho_B,\eta_B)$
     and $(\rho_B,v_1,\dots, v_k,w_1,\dots, w_{k'},\eta_B)$ to obtain the network $N'$ (cf.\ Figure\ \ref{fig:hybridtwo_one_block}).

\begin{SCfigure}[.4][ht]
	\caption{Construction of $B'$ from $B$ in the proof of Lemma~\ref{lem:one_block}.}
   \includegraphics[width=0.6\textwidth]{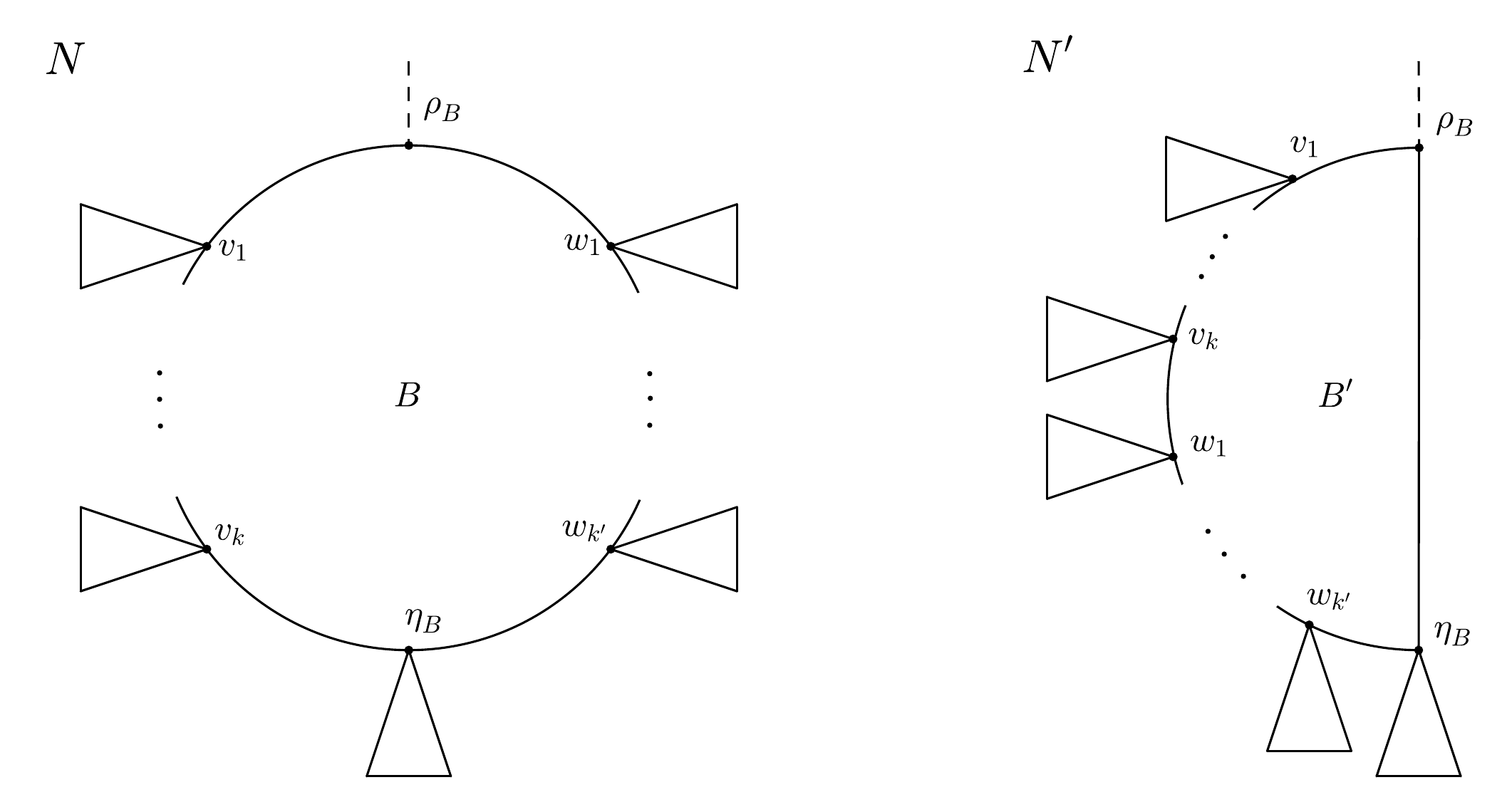}
    \label{fig:hybridtwo_one_block}
\end{SCfigure}

     By construction, $V\coloneqq V(N) = V(N')$ and $W \coloneqq V(B)=V(B')$ .
     One easily verifies that $N'\in \HybdidDegTwoN$ (resp.\ $N\in \BinLevelOneN$) and that
     $\rho_{B'}=\rho_B$, $\eta_{B'} = \eta_{B}$, and $B'$ has size $|V(B')|= |W|=m$. 
		 It is now a straightforward task for verify that 
		 $v\in V$ is relevant in $N$ if and only if $v$ is relevant in $N'$
		 and that $\phi_N(v)=\phi_{N'}(v)$ for all $v\neq \rho_B$. Put $R\coloneqq \relV(N)$. By the latter arguments,  $R = \relV(N')$. Moreover, we have $L_N(v) = L_{N'}(v)$ for all $v\in V\setminus W$
		 and $L_N(w) \subseteq  L_{N'}(w)$  and, thus, $|L_N(w)|\leq  |L_{N'}(w)|$
				 for all $w\in W$.   On easily observes that,  in particular, $\ell\coloneqq |L_{N}(\rho_{B})| = |L_{N'}(\rho_{B'})|$.
		By Proposition\ \ref{prop:B_maxmin_arbitrary}, $\omega(B)<\omega(B') = \binom{m}{3}$. 
		Hence, $\phi_N(\rho_B) = (\omega(B)+\epsilon)\binom{\ell}{2} <  (\omega(B')+\epsilon)\binom{\ell}{2} = \phi_{N'}(\rho_B)$.

		 Taken the latter arguments together, we obtain 
		 \[\alpha\coloneqq \sum_{v\in R\cap (V\setminus W)} \phi_N(v) \binom{|L_N(v)|}{2} = \sum_{v\in  R\cap (V\setminus W)} \phi_{N'}(v) \binom{|L_{N'}(v)|}{2}\]
		 and \[\beta\coloneqq \sum_{v\in R\cap W} \phi_N(v) \binom{|L_N(v)|}{2} < \sum_{v\in R\cap  W} \phi_{N'}(v) \binom{|L_{N'}(v)|}{2}\eqqcolon \gamma.\] 
		 Since $\Phi^{**}(N)=\alpha+\beta$ and $\Phi^{**}(N')=\alpha+\gamma$, we can conclude that $\Phi^{**}(N)<\Phi^{**}(N')$; a contradiction to the assumption that 
		$N$ is a network that maximizes the weighted total cophenetic index within the class $\HybdidDegTwoN$.
		In summary, $N$ contains at least one non-trivial block and all non-trivial blocks of $N$ are crescents. 
\end{proof}

For the proof of Theorem \ref{thm:binary_galledtree_maximum}, we require further additional results. 
We first characterize the structure of binary networks $N\in \BinLevelOneN$ that achieve maximum 
weighted total cophenetic index. Afterwards, we prove that  each network $N\in \HybdidDegTwoN$ that has 
maximum weighted total cophenetic index within the class $\HybdidDegTwoN$ must be binary. 
To establish the results for binary networks, we need a particular
\enquote{swapping operation} to exchange the positions of certain subnetworks. As we shall see below, 
the weighted total cophenetic index for binary phylogenetic level-$1$ networks is invariant under this operation.

\begin{definition}\label{def:swapping}
    Let $B$ be a non-trivial block in a binary phylogenetic level-$1$ network $N$ and suppose that there are two vertices $p_1,p_2\in V(B)$ 
    with children  $v_1$ and $v_2$, respectively, that are not contained in $V(B)$. 
    A \emph{swap} of the subnetworks $N(v_i)$ and $N(v_j)$ is achieved by
    replacing the edges $(p_1,v_1)$ and $(p_2,v_2)$ by 
    the edges $(p_1,v_2)$ and $(p_2,v_1)$. We denote the resulting network by $\mathrm{swap}(N,v_1,v_2)$.

\end{definition}

Definition\ \ref{def:swapping} is illustrated in Figure \ref{fig:swapping}.

\begin{figure}[htbp]
    \centering
    \includegraphics[width=0.8\textwidth]{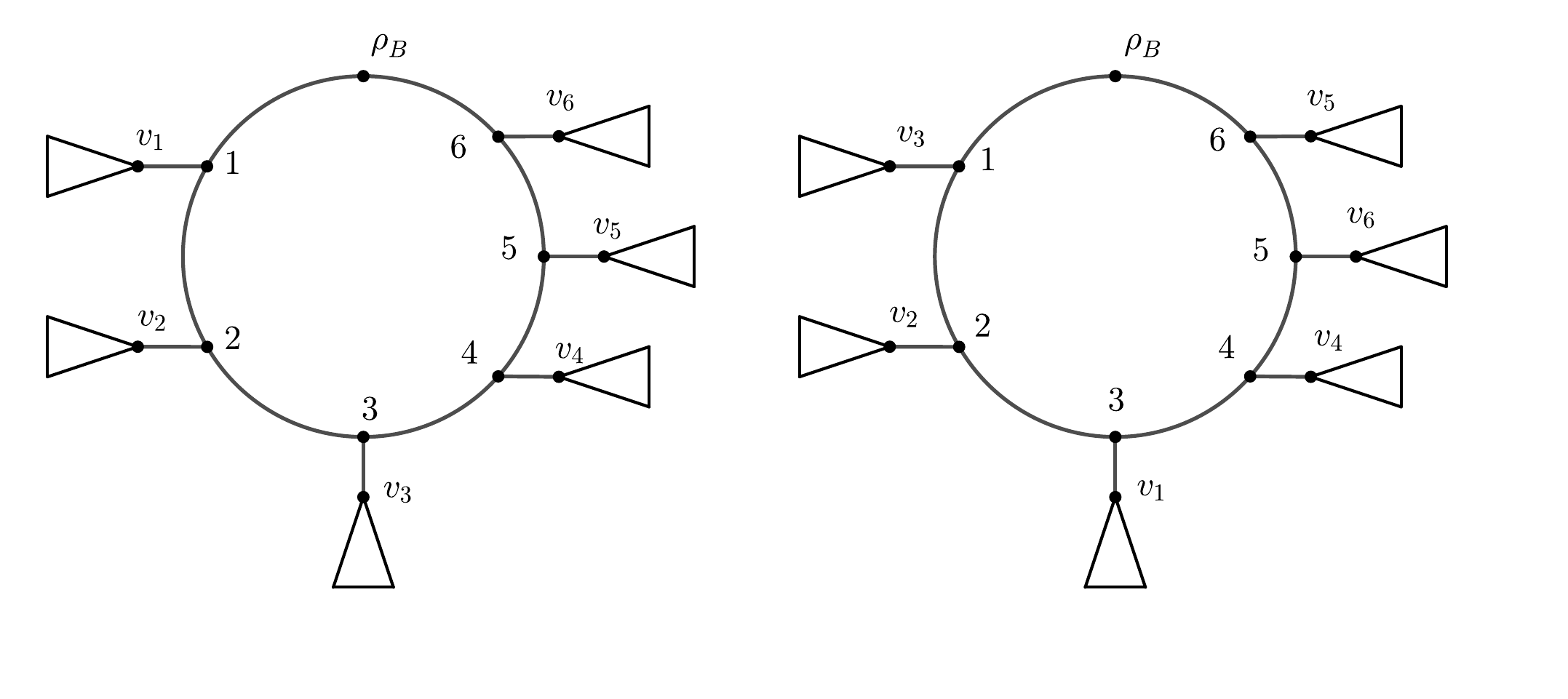}
    \caption{Example for Definition \ref{def:swapping}. Left: a binary phylogenetic level-$1$ network $N$ with a single non-trivial block $B$
    with \enquote{attached} subnetworks $N(v_1),\ldots,N(v_6)$. Right: the network $N'\coloneqq \mathrm{swap}(\mathrm{swap}(N,v_1,v_3),v_5,v_6)$
    that is obtained from $N$ by first swapping  $N(v_1)$ and $N(v_3)$ and then swapping  $N(v_5)$ and $N(v_6)$.
    }
    \label{fig:swapping}
\end{figure}

\begin{lemma}\label{lem:binary_swap}
    Let $N$ be a binary phylogenetic level-$1$ network containing a non-trivial
    block $B$ and suppose that there are vertices $v_1,v_2\in V(N)\setminus V(B)$
    that are children of vertices in $B$. Then, $N'=\mathrm{swap}(N,v_1,v_2)$
    is a phylogenetic level-$1$ network that satisfies 
    $\Phi^{**}(N)=\Phi^{**}(N')$.    
\end{lemma}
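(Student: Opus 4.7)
The plan is to show that the swap is a local modification that preserves every quantity feeding into $\Phi^{**}$.

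\textbf{Step 1 (network validity).} First I would verify that $N' = \mathrm{swap}(N,v_1,v_2)$ is again a binary phylogenetic level-$1$ network. In- and out-degrees at every vertex are preserved ($p_1, p_2$ each swap one child for another, and $v_1, v_2$ each keep a single parent). A short check using Lemma \ref{lem:results-B}(1)-(2) together with the binary structure shows that $p_1, p_2 \neq \rho_B$ (otherwise both children of $p_i$ would lie in $V(B)$), and neither $v_1$ nor $v_2$ can be a hybrid: the only non-trivial block containing $p_i$ is $B$ itself, so the edge $(p_i,v_i)$ is a cut edge and $v_i$ has $p_i$ as its unique parent. The block $B$ is unchanged and no new cycle can form, because the subnetworks $N(v_1)$ and $N(v_2)$ are just reattached at different vertices of $B$ without any alteration of their own edge sets. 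Hence $N'$ is a binary phylogenetic level-$1$ network.

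\textbf{Step 2 (block weights and relevance).} The non-trivial blocks of $N'$ coincide with those of $N$, each with identical internal structure, so $\omega(B')$ is unchanged for every non-trivial block $B'$. Consequently $\phi_N(u) = \phi_{N'}(u)$ for every $u \in V(N) = V(N')$. Since relevance is determined by the out-degrees and the block memberships of $u$'s children (both preserved), $\relV_N = \relV_{N'}$.

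\textbf{Step 3 (leaf-descendant sets at relevant vertices).} The main computation is to show $|L_N(u)| = |L_{N'}(u)|$ for every $u \in \relV_N$. I would split into cases:
\begin{itemize}
    \item If $u$ lies in $V(N(v_1)) \cup V(N(v_2))$, the subnetwork below $u$ is identical in $N$ and $N'$, so $L_N(u) = L_{N'}(u)$.
    \item If $u \in V(B)$ and is relevant, then by Observation \ref{fact:binary_rel_vert} we must have $u = \rho_B$, and $L_N(\rho_B) = L_{N'}(\rho_B)$ because both equal the union of the leaf sets of all subnetworks attached below $B$; reshuffling the attachment points of $N(v_1)$ and $N(v_2)$ between $p_1$ and $p_2$ does not change this union.
    \item If $u$ is a proper ancestor of $\rho_B$, then $L_N(u)$ is the union of $L_N(\rho_B)$ with the leaves of branches hanging off the (unique, since $\rho_B$ is a tree vertex) path from $u$ down to $\rho_B$. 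Both constituents are preserved by the swap.
    \item If $u$ is incomparable to $\rho_B$, then the subnetwork below $u$ is disjoint from the modified edges, so $L_N(u) = L_{N'}(u)$.
\end{itemize}

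\textbf{Step 4 (conclusion).} By Observation \ref{obs:root-rel},
\[
\Phi^{**}(N) \;=\; \sum_{u \in \relV_N} \phi_N(u)\binom{|L_N(u)|}{2} \;=\; \sum_{u \in \relV_{N'}} \phi_{N'}(u)\binom{|L_{N'}(u)|}{2} \;=\; \Phi^{**}(N').
\]

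The main technical obstacle is Step 1, specifically ruling out that $v_1$ or $v_2$ is a hybrid, since this is what makes the swap a clean cut-edge reattachment rather than a deformation that could destroy another block or alter the level. Once this is in hand, the remaining steps reduce to a bookkeeping argument verifying that each relevant vertex contributes the same term on both sides.
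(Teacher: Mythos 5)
Your proof is correct and follows essentially the same route as the paper's: verify that the swap preserves degrees, blocks, relevance, vertex weights, and the leaf-descendant counts at every relevant vertex, then compare the sums term by term. The only nit is that your Step 3 case split omits relevant vertices lying in attached subnetworks $N(w)$ for children $w\notin\{v_1,v_2\}$ of block vertices (these are descendants of $\rho_B$ but fall into none of your four cases); they are handled by exactly the same ``untouched subnetwork'' argument as your first case, and the paper covers all such vertices uniformly via Corollary~\ref{cor:relevant-partN} and Lemma~\ref{lem:replacement-partN-1} by viewing the swap as a replacement of $\partN(\rho_B)$.
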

\begin{proof}
Let $N$ be a binary phylogenetic level-$1$ network containing a non-trivial
    block $B$ and let $W$ be the set of vertices $w\in V(N)\setminus V(B)$ that
    are children of  vertices in $B$. Let $v_1,v_2\in W$ be chosen arbitrarily. Put $N'=\mathrm{swap}(N,v_1,v_2)$. 
    One easily verifies that swapping networks does not change the in- and out-degree
    of any vertex and thus, $N'$ remains a phylogenetic network.
    In particular, since $N$ is binary,  $v_1$ and $v_2$ must have distinct parents in $B$
    and hence, $N(v_1)$ and $N(v_2)$ are vertex-disjoint. 
    It is now an easy task to verify that $N'$ remains a level-$1$ network.
    By Observation \ref{fact:binary_rel_vert}, the only relevant vertices in $N$ are roots of non-trivial blocks and inner vertices that are not contained in a non-trivial block. 
    Hence, the only relevant vertex of $N$ that is contained in $B$ is the root $\rho_B$ of $B$. 
    One easily verifies that $|L_N(\rho_B)| = |L_{N'}(\rho_B)|$ and, in particular, $\phi_N({\rho_B})=\phi_{N'}(\rho_B)$.
    By Corollary\ \ref{cor:relevant-partN}, a vertex $v\in N(w)$ is relevant in $N$ if and only if $v$ is relevant in $N(w) $
    for all  $w\in W$. By construction, $N(w)=N'(w)$ for all  $w\in W$ and thus, 
    $R\coloneqq \relV_N \bigcap ( \cup_{w\in W} V(N(w)) ) = \relV_{N'} \cap ( \cup_{w\in W} V(N'(w)) )$.
    In particular,  $|L_N(v)|=|L_{N'}(v)|$ and $\phi_N(v)=\phi_{N'}(v)$ for all $v\in R$. 
    Note that $N'$ can also be considered as the network obtained from $N$ by
    replacing $N(\rho_B) = \partN(\rho_B)$ by the network $N'(\rho_B)$. Hence, we can apply
    Lemma \ref{lem:replacement-partN-1} and conclude that $v\in V(N)\setminus V(N(\rho_B))$ 
    is relevant in $N$ if and only if $v\in V(N')\setminus V(N'(\rho_B))$ is relevant in $N'$. 
    In particular, $|L_N(v)=|L_{N'}(v)|$ and $\phi_N(v)=\phi_{N'}(v)$ for all relevant vertices $v\in V(N)\setminus V(N(\rho_B))$. 
    In summary, $\relV_N = \relV_{N'}$ and $|L_N(v)=|L_{N'}(v)|$ as well as $\phi_N(v)=\phi_{N'}(v)$ for all
    $v\in  \relV_N$. This, in particular, implies that $\Phi^{**}(N) = \Phi^{**}(N')$. 
\end{proof}

In the following, we show that each binary phylogenetic level-$1$ network maximizing the weighted total cophenetic index cannot 
contain true tree vertices and contains precisely one non-trivial block. In addition, we show that each network $N\in \HybdidDegTwoN$ 
maximizing the  weighted total cophenetic index must be binary. These results are provided in Lemmas \ref{lem:max_is_binary}, \ref{lem:bin_no_true_vert},
and \ref{lem:exactly_one_block}. The proofs of these lemmas are rather lengthy and, although straightforward, 
involve tedious case-by-case analysis. Their proofs are, therefore, placed in the appendix. 

\begin{lemma}\label{lem:max_is_binary}
If $N\in \HybdidDegTwoN$, $n\geq 3$, is a network that maximizes the weighted total cophenetic index within the class $\HybdidDegTwoN$, then
 $N$ is a binary.
\end{lemma}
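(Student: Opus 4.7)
The plan is to argue by contradiction: suppose $N\in\HybdidDegTwoN$ with $n\geq 3$ maximizes $\Phi^{**}$ but is not binary, and construct an $N'\in\HybdidDegTwoN$ with $\Phi^{**}(N')>\Phi^{**}(N)$. By Lemma~\ref{lem:one_block}, every non-trivial block of $N$ is a crescent, and since every hybrid has in-degree exactly two in $\HybdidDegTwoN$, each such crescent consists only of the Hamiltonian $\rho_B\eta_B$-path together with the shortcut $(\rho_B,\eta_B)$. Non-binarity of $N$ then takes one of two forms: (i) some tree vertex $v$ has $\outdeg(v)\geq 3$, or (ii) some hybrid $\eta$ has $\outdeg(\eta)\neq 1$.

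For case (i), if $v$ is not contained in any non-trivial block, introduce a new tree vertex $u$ below $v$ taking over two of $v$'s children; by Theorem~\ref{thm:local-phi} the change is exactly $\binom{|L_{N'}(u)|}{2}\geq 1$, since $u$ becomes a new relevant vertex with $\phi(u)=1$. If $v$ lies in a crescent $B$ of size $m$ (as $\rho_B$ or as an internal vertex) and has at least one child $c$ not in any block rooted at $v$, insert a new vertex $u$ on the Hamiltonian path of $B$ immediately below $v$ and redirect $c$ (and further external children as phylogeneticity demands) from $v$ to $u$. The enlarged block $B'$ is again a crescent, now of size $m+1$, so by Proposition~\ref{prop:B_maxmin_arbitrary}(2), $\omega(B')-\omega(B)=\binom{m+1}{3}-\binom{m}{3}=\binom{m}{2}$, and Theorem~\ref{thm:local-phi} shows the net change in $\Phi^{**}$ equals $\binom{m}{2}\binom{|L(\rho_B)|}{2}$, plus any contribution from $u$ if it turns out to be relevant, minus any loss from $v$'s relevance changing; a short calculation using $m\geq 3$ and $|L(\rho_B)|\geq 2$ shows this total is strictly positive. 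Case (ii) is handled by an analogous construction: for $\eta=\eta_B$ with $\outdeg(\eta)\geq 2$ and a child $x_1$ in a crescent $B$ of size $m$, subdivide the edge $(v_1,\eta)$ with a new vertex $w$ and redirect $(\eta,x_1)$ to $(w,x_1)$, so that $B$ grows into a crescent of size $m+1$; the same estimate shows the gain $\binom{m}{2}\binom{|L(\rho_B)|}{2}$ dominates the loss from $\eta$'s relevance dropping, because $\binom{m}{2}\geq 3$ and $|L(\rho_B)|\geq |L_N(\eta)|$.

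The step I expect to be the main obstacle is the remaining sub-case of (i): a tree vertex $v$ with no external child that is the root of two or more crescents $B_1,\dots,B_l$ with $l\geq 2$. Here I would merge $B_1$ (of size $m_1$) and $B_2$ (of size $m_2$) into a single crescent of size $m_1+m_2-1$ rooted at $v$ by splicing the two Hamiltonian paths into one: keep $\eta_{B_1}$ as the unique hybrid of the merged block, demote $\eta_{B_2}$ to an internal tree vertex by deleting the shortcut $(v,\eta_{B_2})$, and reconnect the end of $B_1$'s Hamiltonian path through $\eta_{B_2}$ and the Hamiltonian path of $B_2$ into $\eta_{B_1}$. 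The technical burden is to carry out this splice so that (a) the combined block is again a crescent, (b) the resulting network remains phylogenetic (no in-degree-$1$ out-degree-$1$ vertex), and (c) membership in $\HybdidDegTwoN$ and the level-$1$ property are preserved. Once the construction is validated, the strict inequality $\binom{m_1+m_2-1}{3}>\binom{m_1}{3}+\binom{m_2}{3}$ (which holds for all $m_1,m_2\geq 3$) immediately gives $\phi_{N'}(v)>\phi_N(v)$, and hence $\Phi^{**}(N')>\Phi^{**}(N)$, providing the contradiction. The degenerate possibility $\outdeg(\eta)=0$ in case (ii) can be handled separately: such a ``dangling'' hybrid and its associated crescent can be reshaped into a smaller crescent on the same leaf set by deleting $\eta$ and suppressing the resulting degree-$2$ vertices, and this modification is easily seen to strictly increase $\Phi^{**}$.
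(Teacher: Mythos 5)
Your overall strategy is the paper's: use Lemma~\ref{lem:one_block} to reduce to crescents, then exhibit a local modification that strictly increases $\Phi^{**}$ via Theorem~\ref{thm:local-phi} and Proposition~\ref{prop:B_maxmin_arbitrary}. However, there is a genuine gap in your case analysis, and it is betrayed by the fact that you never use the hypothesis $\epsilon\leq 2$ --- which the paper explicitly introduces \emph{for the proof of this lemma}. The problematic configuration is a tree vertex $x$ of out-degree $3$ that is simultaneously a non-root internal vertex of a crescent $B$ (so one child of $x$ is its successor on $B$'s Hamiltonian path) and the root of a second crescent $B_2$ containing its other two children. This falls into neither of your sub-cases: it is not ``root of two or more crescents'' (your self-declared main obstacle, which is in fact the benign case, since there the two weights $\omega(B_1),\omega(B_2)$ merge under a \emph{single} $\epsilon$ and no relevant vertex is lost), and your ``insert $u$ below $v$ on the Hamiltonian path and redirect the external child $c$ to $u$'' construction cannot be applied: the only candidate for $c$ is $x$'s successor in $B$, and rerouting that edge into $B_2$ fuses $B$ and $B_2$ into a single block with two hybrids, destroying the level-$1$ property; rerouting $B_2$ instead leaves $x$ with out-degree $1$.

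The modification that does work here (the paper's Case~3) deletes $x$ and splices $B_2$'s Hamiltonian path into $B$'s, producing a crescent of size $m+m_2-2$. But this \emph{demotes} $x$ from being a relevant block root, so $\Phi^{**}$ loses the term $\phi(x)\binom{|L(x)|}{2}=\left(\binom{m_2}{3}+\epsilon\right)\binom{|L(x)|}{2}$, and the net comparison reduces to whether $\binom{m}{3}+\binom{m_2}{3}+\epsilon$ can reach $\binom{m+m_2-2}{3}$. For $m=m_2=3$ this is $2+\epsilon$ versus $4$, so the contradiction is obtained only because $0\leq\epsilon\leq 2$ (plus a strict inequality from $|L(x)|<|L(\rho_B)|$). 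Your blanket claim that ``a short calculation using $m\geq 3$ and $|L(\rho_B)|\geq 2$ shows this total is strictly positive'' is therefore false as stated for this configuration; without invoking the bound on $\epsilon$ no such argument can close. A secondary, smaller issue of the same flavour arises in your case~(ii) when the hybrid $\eta_B$ with out-degree $2$ is itself the root of another block $B_3$: separating it then merely relocates the contribution $\left(\epsilon+\omega(B_3)\right)\binom{|L(\eta_B)|}{2}$ to the new vertex and yields no strict increase, which is why the paper instead disposes of all out-degree issues at hybrids up front via Lemma~\ref{lem:non_separated}.
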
  
\begin{proof}	
By Observation\ \ref{obs:bounded-indegree2}, networks in $\HybdidDegTwoN$ have bounded maximum weighted total cophenetic index. 
			Thus, we can choose $N_1\in \HybdidDegTwoN$ as a network that has finite and maximum weighted total cophenetic index within the class $\HybdidDegTwoN$, $n\geq 3$.
			By Lemma \ref{lem:non_separated}, we can assume w.l.o.g.\ that each hybrid vertex has precisely one child. 
			Assume, for contradiction, that $N_1$ is not binary. Hence, there is a inner vertex $x$ in $N_1$ such that $\outdeg(x)=k\geq 3$.
			By assumption, $x$ is not a hybrid vertex. We consider now the following case: either $x$ is a true tree vertex
			or $x$ is contained in a non-trivial block. 
		
			Assume, first that $x$ is a true tree vertex in $N_1$ and let $x_1,\dots,x_k$ be the children of $x$ in $N_1$. 
			Consider now the network $N_2$ obtained from $N_1$ by removing edges $(x,x_i)$, $2\leq i\leq k$, and
			adding a new vertex $x'$ and the edges $(x,x')$ and $(x',x_i)$, $2\leq i\leq k$. 
			One easily verifies that $N_2\in \HybdidDegTwoN$. 
			Since $x$ is a true tree vertex in $N_1$, $x$ remains a true tree vertex in $N_2$. 
			Moreover, since $x$ is not contained in any non-trivial block of $N_1$,
			$x'$ is not contained in any non-trivial block of $N_2$. 
			Hence, $x$ and $x'$ are true tree vertices in $N_2$. 
			By Observation\ \ref{obs:relevant}, $x$ is relevant in $N_1$ and $x, x'$ are relevant in $N_2$. 
			In particular, $\phi_{N_1}(x) = \phi_{N_2}(x) = \phi_{N_2}(x')=1$.
			It is a straightforward task to verify that 
	    all relevant vertices  $v\in V(N_1)\setminus\{x\}$ remain relevant in $N_2$
	    and satisfy $L_{N_1}(v) = L_{N_2}(v)$ and $\phi_{N_1}(v) = \phi_{N_2}(v)$.
	    Note that $|L_{N_2}(x')|\geq 2$ and thus, $\binom{|L_{N_{2}}(x')|}{2}\geq 1$.	
	    Taken the latter arguments together, we obtain
	    \[\Phi^{**}(N_1) - \Phi^{**}(N_2) =  - \phi_{N_2}(x') \binom{|L_{N_{2}}(x')|}{2} < 0 \]
	    and thus, $\Phi^{**}(N_1) < \Phi^{**}(N_2)$; contradicting the fact
	    that $N_1$ has maximum weighted total cophenetic index within the class $\HybdidDegTwoN$. 
			As the latter arguments can be applied to each of the remaining true tree vertices
			in $N_2$, it can be \enquote{binarized} to obtain a further network with larger 
			maximum weighted total cophenetic index and we can conclude that no true tree vertex $x$ in $N_1$
			can have $\outdeg(x)\neq 2$.

In the following, we can therefore assume that $N_1$ does not contain any
    true vertices with out-degree distinct from $2$. 
    We consider now the case
    that $x$ is not a true tree vertex and, therefore, is contained in a
    non-trivial block in $N_1$. 
    In particular, we can choose w.l.o.g. $x$ as
    a $\preceq_{N_1}$-minimal vertex among all vertices with out-degree larger than 2 in $N_1$.  
    Note that, by Lemma~\ref{lem:one_block}, 
    all non-trivial blocks in $N_1$ are crescents. 
    
    There are now three cases that can apply for $x$ (cf.\ Figure~\ref{fig:max_is_binary}):
    \begin{itemize}
        \item[(1)] $\mathcal{B}^x\coloneqq \mathcal{B}^x(N_1)=\emptyset$.\vspace{-0.1in}
        \item[(2)] $\mathcal{B}^x\neq \emptyset$ and $x$ has out-degree at least three in $\partN_1(x)$. \vspace{-0.1in}
        \item[(3)] $\mathcal{B}^x\neq \emptyset$ and $x$ has out-degree precisely two in $\partN_1(x)$.
    \end{itemize}

\begin{figure}[ht!]
    \centering
    \includegraphics[width=0.9\textwidth]{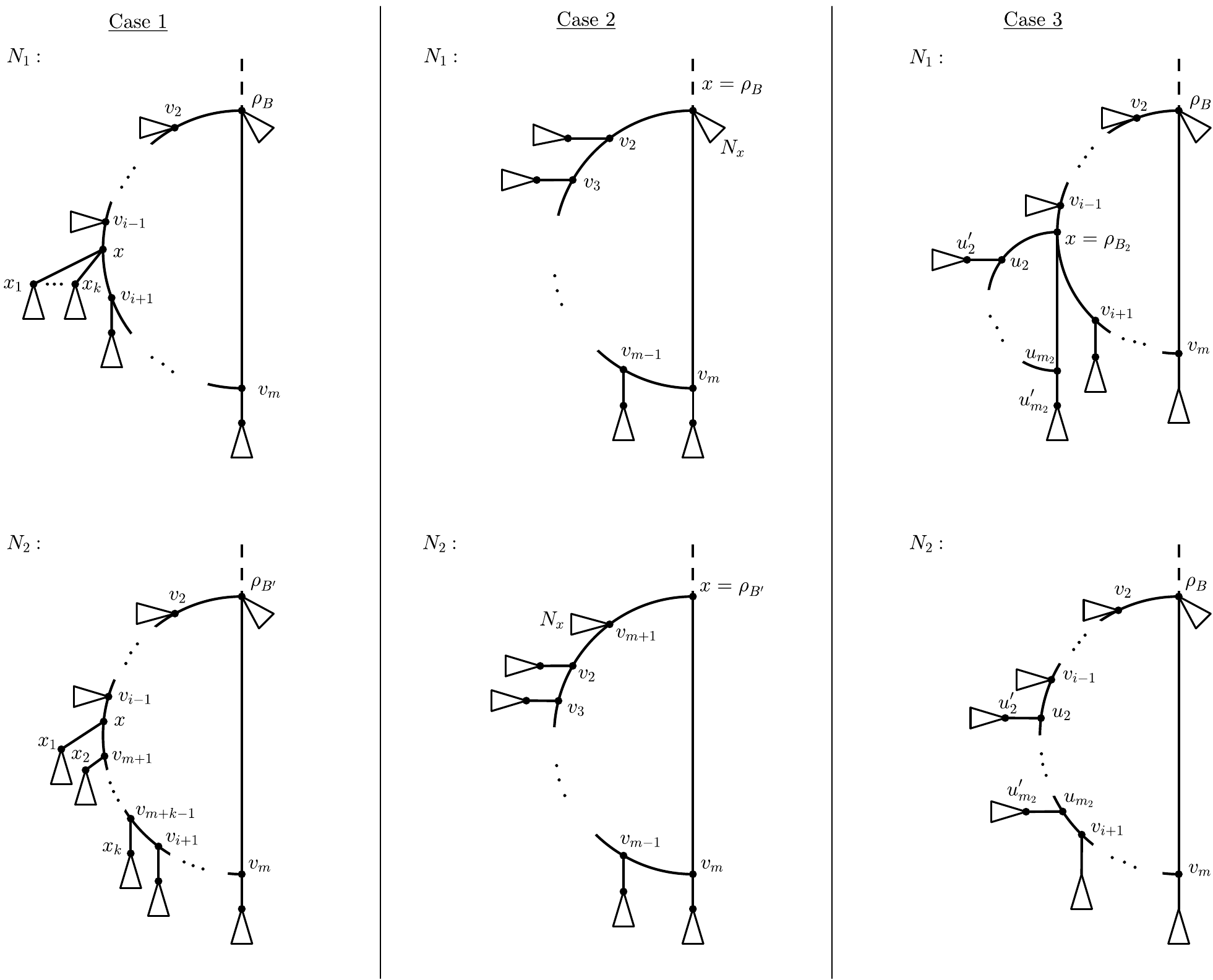}
    \caption{Shown are sketches of the subnetworks $\widetilde N_1$ and  $\widetilde N_2$ of the networks $N_1$ and $N_2$ showing that 
    						 $\Phi^{**}(N_1)<\Phi^{**}(N_2)$ used in the proof of Lemma~\ref{lem:max_is_binary}. Each column corresponds to one of the three cases distinguished in the proof.
  } 
    \label{fig:max_is_binary}
\end{figure}

Before we proceed,  we note that we will make use of the subnetwork $N^B\coloneqq
    \partN(\rho_B,\{B\},\emptyset)$  induced by vertices $\{\rho_B\}\cup\{u\mid
    u\in V(N) \text{ and } u\preceq_{N} v \text{ for all }
    v\in\child^*_B(\rho_B)\}$ where  $N\in \HybdidDegTwoN$ and $B\in \mathcal{B}(N)$. 

\smallskip \noindent
\emph{Case 1:} Suppose that $\mathcal{B}^x=\emptyset$. By Lemma~\ref{lem:results-B}\eqref{L3.19HSS}, $x$ is contained in exactly one non-trivial block $B$.
Since $B$ is a crescent, we can assume that $B$ consists of the two internally
vertex-disjoint paths $(v_1=\rho_B,v_2,\ldots,v_i=x,\ldots,v_m=\eta_B)$ and
$(\rho_B,\eta_B)$. Note that, by assumption, $x$ is not a hybrid vertex of any
block in $N_1$ and since $\mathcal{B}^x=\emptyset$, it holds that $x = v_i \in
\{v_2,\dots,v_{m-1}\}$. Let $x_1,\ldots,x_k$ be the children of $x$ that are not
contained in $B$ and note that $k+1=\outdeg(x)\geq 3$. We now construct a
phylogenetic level-$1$ network $N_2$ by deleting the edges $(x,v_{i+1})$ and
$(x,x_j)$ for $2\leq j\leq k$ and, afterwards, adding vertices
$v_{m+1},\ldots,v_{m+k-1}$ and edges $(x,v_{m+1})$, $(v_{m+k-1},v_{i+1})$, as
well as $(v_{m+j},x_{j+1})$ for $1\leq j\leq k-1$. We denote with $B'$ the
non-trivial block in $N_2$ induced by $v_1,\dots, v_m, v_{m+1}, \dots,
v_{m+k-1}$. One easily verifies that $B'$ is a crescent of size $m+k-1$. By
construction, $x$ and all newly created vertices $v_{m+1},\ldots,v_{m+k-1}$ have
in-degree 1 and out-degree 2 in $N_2$, while $\outdeg_{N_1}(v)=\outdeg_{N_2}(v)$
and $\indeg_{N_1}(v)=\indeg_{N_2}(v)$ for all $v\in
V(N_1)\setminus\{x\}=V(N_2)\setminus\{x,v_{m+1},\ldots,v_{m+k-1}\}$. The latter
implies, that $B'$ only contains one hybrid vertex and that there are no
vertices $v\in V(N_2)$ with $\indeg(v)\leq 1$ and $\outdeg(v)=1$ in $N_2$.
Moreover, since by assumption, $x$ is not a hybrid vertex, the latter
arguments also imply that all vertices in $N_2$ have at most in-degree 2. Hence,
$N_2 \in \HybdidDegTwoN$.

As we will show in the following, $\Phi^{**}(N_1)< \Phi^{**}(N_2)$ holds. To verify the latter, we  employ Theorem~\ref{thm:local-phi} 
and Corollary~\ref{fact:sum-up-part}. To this end, we must first set up the relevant neighbors of $\rho_B$ in $\partN_1(\rho_B)$
and of $\rho_{B'}$ in $\partN_2(\rho_{B'})$. Note that $B\subseteq \partN_1(\rho_B)$ and $B'\subseteq \partN_2(\rho_{B'})$. 
Hence, the respective relevant neighbors of $\rho_B$ in $\partN_1(\rho_B)$ and of $\rho_{B'}$ in $\partN_2(\rho_{B'})$
must satisfy one of the Conditions (a), (b), (c) in Definition~\ref{def:rel-neighbor}(2).  
Let $\mathfrak{R}_B$, resp.,  $\mathfrak{R}_{B'}$ be the relevant neighbors of $\rho_B$ in $\partN_1(\rho_{B})$ resp., $\rho_{B'}$ in $\partN_2(\rho_{B'})$.
We continue to partition $\mathfrak{R}_B$, resp.,  $\mathfrak{R}_{B'}$ into certain subsets that allow us eventually
to employ Theorem~\ref{thm:local-phi} and Corollary~\ref{fact:sum-up-part}.

We first define the subset $V_R$, resp., $V'_R$ of $\mathfrak{R}_B$, resp, $\mathfrak{R}_{B'}$ 
that consists of all those relevant neighbors that are not descendants of vertices in $V(B)\setminus {\rho_B}$, 
resp., $V(B')\setminus {\rho_{B'}}$. To be more precise, we put
           \[ V_R  \coloneqq   \{  v\in \mathfrak{R}_B  \mid v\in V(\partN_1(\rho_B) - N_1^B)\} \text{ and }
            V'_R  \coloneqq  \{ v\in \mathfrak{R}_{B'} \mid v\in V(\partN_2(\rho_{B'}) - N_2^{B'})\}.\]
The modifications necessary to obtain $N_2$ from $N_1$ were only applied to $N_1^B$. 
This and $\rho_B = \rho_{B'}\neq x$ implies that 
all vertices in $N_1 - N_1^B = N_2 - N_2^{B'}$
remain unaffected. Together with Lemma \ref{lem:relevant-partN} it is a straight-forward task
to verify that $V_R = V'_R$.
Note that all relevant  neighbors of $\rho_B$ in $\mathfrak{R}_B\setminus V_R$
must be descendants of vertices in $V(B)\setminus\{\rho_B\}$
and are, therefore, contained in $\partN_1(v_j)$ for some $j\in \{2,\dots,m\}$. 
Hence, none of the relevant  neighbors of $\rho_B$ in $\mathfrak{R}_B\setminus V_R$
can satisfy Definition~\ref{def:rel-neighbor}(2c). 
We collect now all relevant  neighbors of $\rho_B$ in $\mathfrak{R}_B\setminus V_R$
that satisfy Definition~\ref{def:rel-neighbor}(2a) and (2b) and are distinct from $x=v_i$
in  the set 
\[  V_B  \coloneqq \{ v\in \mathfrak{R}_B \mid v\in V(\partN_1(v_j)), 2\leq j\leq m, j\neq i\}.\]
Since $x$ is relevant in $N_1$ and contained in $B$ it is a relevant neighbor of $\rho_B$. 
By construction, $V_B$ and $V_R$ are disjoint and do not contain $x$. 
Hence, 
\[\mathfrak{R}_B = V_B\cup V_R \cup \{x\}.\]

Note that each of  $V_B= \emptyset$ or $V_R=\emptyset$ may be possible.
By putting $\widetilde	N_1\coloneqq \partN_1(\rho_B)$ and by application of 
Corollary~\ref{fact:sum-up-part}, we can write $\Phi^{*}(\partN_1(\rho_B))$
as
\begin{equation}
\Phi^{*}(\widetilde	N_1) =  
\sum\limits_{v\in V_R\cup V_B\cup\{x\}}\left(\Phi^*(\partial \widetilde N_1(v))+\phi_{\widetilde	N_1}(v)\binom{|L_{\widetilde	N_1}(v)|}{2}\right).
\label{eq:Phi*1}
\end{equation}

We continue with further partitioning $\mathfrak{R}_{B'}$. Again, observe that 
all relevant  neighbors of $\rho_{B'}$ in $\mathfrak{R}_{B'}\setminus V'_R$
must be descendants of vertices in $V({B'})\setminus\{\rho_{B'}\}$
and are  therefore, contained in $\partN_2(v_j)$ for some $j\in \{2,\dots,m+k-1\}$.
Similar to the previous part, 
none of those relevant  neighbors of $\rho_{B'}$ in $\mathfrak{R}_{B'}\setminus V'_R$
can satisfy Definition~\ref{def:rel-neighbor}(2c) and we collect 
 all those relevant  neighbors of $\rho_{B'}$
that satisfy Definition~\ref{def:rel-neighbor}(2a) and (2b) and are distinct from $v_i$ and $v_{m+1}, \dots, v_{m+k-1}$ 
in  the set 
\[  V'_B  \coloneqq \{ v\in \mathfrak{R}_{B'} \mid v\in V(\partN_2(v_j)), 2\leq j\leq m, j\neq i\}.\]
By construction, all vertices in $\{v_2,\dots,v_m\}\setminus \{x\}$ remain irrelevant in $N_2$. 
Moreover, we have $\partN_1(v_j)  = \partN_2(v_j)$ for $j\in \{2,\dots,m\}\setminus \{i\}$. 
Hence, IR-free paths from vertices $v_j$ to possible relevant vertices in $\partN_1(v_j)$
remain in $N_2$ for all $j\in \{2,\dots,m\}\setminus \{i\}$. One now easily verifies
that $V_B = V'_B$.
All other relevant vertices in $\mathfrak{R}_{B'}\setminus (V'_R\cup V'_B)$ are 
contained in $\partN_2(v_j)$ for some $j\in \{m+1,\dots, m+k-1\}$. 
By construction, $v_j$ with $j\in \{m+1,\dots, m+k-1\}$ is irrelevant in $\partN_2(\rho_{B'})$. 
Hence, relevant vertices in $\mathfrak{R}_{B'}\setminus V'_R\cup V'_B$ are 
contained in  $\partN_2(x_l)$ for some $l\in \{1,\dots,k\}$
and satisfy, in particular,  Definition~\ref{def:rel-neighbor}(2b). 
These relevant neighbors are collected in the set
\[V'_X\coloneqq \{v\in \mathfrak{R}_{B'}\mid v\in V(\partN_2(x_j)), 1\leq j\leq k\}.\]
By construction, $V'_B, V'_R$ and $V'_X$ are pairwise disjoint and we have
\[\mathfrak{R}_{B'} = V'_B\cup V'_R \cup V'_X.\]
Note that each of  $V'_B,V'_R, V'_X$ could be the empty set. 
By putting $\widetilde	N_2\coloneqq \partN_2(\rho_{B'})$ and by application of 
Corollary~\ref{fact:sum-up-part}, we can write $\Phi^{*}(\partN_2(\rho_{B'}))$
as
\begin{equation}
\Phi^{*}(\widetilde	N_2) =  
\sum\limits_{v\in V'_R\cup V'_B \cup V'_X}\left(\Phi^*(\partial \widetilde N_2(v))+\phi_{\widetilde	N_2}(v)\binom{|L_{\widetilde	N_2}(v)|}{2}\right).
\label{eq:Phi*2}
\end{equation}

In the following we let $\ell \coloneqq |L_{N_1}(\rho_B)| = |L_{N_2}(\rho_{B'})|$. 
Note that $N_2$ can be considered as the network obtained from $N_1$ by replacing
$\widetilde N_1$ by $\widetilde N_2$. In addition, recall that 
Lemma \ref{lem:blocks-in-partN} implies that $\phi_{\widetilde	N_i}(v) = \phi_{N_i}(v)$
for all relevant vertices of $N_i$ that are contained in $\widetilde	N_i$, $i\in \{1,2\}$. 
Taken the latter arguments  together with Theorem~\ref{thm:local-phi} 
and by combining Equation\ \eqref{eq:Phi*1} and \eqref{eq:Phi*2}, we obtain 
\begingroup
\allowdisplaybreaks
   \begin{align}
   &\Phi^{**}(N_1)- \Phi^{**}(N_2) \notag\\ 
    =& \Phi^{*}(\widetilde N_1)-\Phi^*(\widetilde N_2)+
							(\phi_{N_1}(\rho_B)-\phi_{N_2}(\rho_{B'}))\binom{\ell}{2}\notag\\
	  =& \sum\limits_{v\in V_R\cup V_B\cup\{x\}}\left(\Phi^*(\partial \widetilde N_1(v))+\phi_{\widetilde N_1}(v)\binom{|L_{\widetilde	N_1}(v)|}{2}\right) \notag \\
		&-				\sum\limits_{v\in V'_R\cup V'_B \cup V'_X}\left(\Phi^*(\partial \widetilde N_2(v))+\phi_{\widetilde	N_2}(v)\binom{|L_{\widetilde	N_2}(v)|}{2}\right) 
		+(\phi_{ N_1}(\rho_B)-\phi_{N_2}(\rho_{B'}))\binom{\ell}{2}. \label{eq:Case1-step1Q}
   \end{align}
\endgroup 

Moreover, since by construction $\partN_1(x_j) = \partN_2(x_j)$, $1\leq j\leq k$,  
all relevant neighbors of $x$ in $\partial \widetilde N_1(x)$
are precisely the vertices in $V'_X$. This and Corollary~\ref{fact:sum-up-part} implies that we can, therefore, write
\begin{equation}
\alpha\coloneqq \Phi^*(\partial \widetilde N_1(x)) = \sum_{v\in V'_X}\left(\Phi^*(\partial \widetilde N_1(v))+\phi_{\widetilde N_1}(v)\binom{|L_{\widetilde N_1}(v)|}{2}\right). 
\label{eq:alpha1}
\end{equation}
Observe that, by construction, $N_1(v)=N_2(v)$ for all $v\in V'_X$. Hence, we readily obtain
\begin{equation}
\alpha = \sum_{v\in V'_X}\left(\Phi^*(\partial \widetilde N_2(v))+\phi_{\widetilde N_2}(v)\binom{|L_{\widetilde N_2}(v)|}{2}\right).
\label{eq:alpha2}
\end{equation}

Note that by Lemma \ref{lem:idempotent}, we have $\partN'(v)=\partial \widetilde N_1(v)$ for $N' = \partial \widetilde N_1(v)$.
Putting  $N' \coloneqq \partial \widetilde N_1(v)$ and application 
of Lemma \ref{lem:blocks-in-partN} and
 Corollary~\ref{fact:sum-up-part}, therefore, implies that we can write
 $\Phi^*(\partial \widetilde N_1(x))$ as 
\begin{equation}
\Phi^*(\partial \widetilde N_1(x)) = \sum_{v\in V'_X}\left( \Phi^*(\partial N'(v)) + \phi_{\partial \widetilde N_1}(v) \binom{L_{\partial \widetilde N_1}(v)}{2} \right)=  
\sum_{v\in V'_X} \left( \Phi^*(\partial \widetilde N_1(v)) + \phi_{\widetilde N_1}(v) \binom{L_{\partial \widetilde N_1}(v)}{2} \right).
\label{eq:Phi*partialN1xQ}
\end{equation}

Combining Equation\ \eqref{eq:Case1-step1Q} and \eqref{eq:Phi*partialN1xQ}
yields
\begingroup
\allowdisplaybreaks
\ \\

$\Phi^{**}(N_1)- \Phi^{**}(N_2) = $\hfill
   \begin{align}
		&=& 	\sum\limits_{v\in V_R\cup V_B}\left(\Phi^*(\partial \widetilde N_1(v))+\phi_{\widetilde N_1}(v)\binom{|L_{\widetilde	N_1}(v)|}{2}\right) &+
				\Phi^*(\partial \widetilde N_1(x)) +
				\phi_{\widetilde	N_1}(x)\binom{|L_{\widetilde	N_1}(x)|}{2} \notag\\
		&&-		\sum\limits_{v\in V'_R\cup V'_B \cup V'_X}\left(\Phi^*(\partial \widetilde N_2(v))+\phi_{\widetilde	N_2}(v)\binom{|L_{\widetilde	N_2}(v)|}{2}\right) &+ 
				(\phi_{ N_1}(\rho_B) -
				\phi_{N_2}(\rho_{B'}))\binom{\ell}{2} \notag\\
	  & \overset{
			   \begin{subarray}{l}
             		Eq.\ \eqref{eq:alpha1}\\
             		Eq.\ \eqref{eq:alpha2}
        	    \end{subarray}
	  				}{=\hfill}& 
				\sum\limits_{v\in V_R\cup V_B}\left(\Phi^*(\partial \widetilde N_1(v)) +\phi_{\widetilde N_1}(v)\binom{|L_{\widetilde	N_1}(v)|}{2}\right) &+
	  			\alpha
	  			+ \phi_{\widetilde	N_1}(x)\binom{|L_{\widetilde	N_1}(x)|}{2} \notag\\
		&&-		\sum\limits_{v\in V'_R\cup V'_B}\left(\Phi^*(\partial \widetilde N_2(v)) +\phi_{\widetilde	N_2}(v)\binom{|L_{\widetilde	N_2}(v)|}{2}\right) &-
				\alpha +
				(\phi_{ N_1}(\rho_B) -\phi_{N_2}(\rho_{B'}))\binom{\ell}{2} \notag\\
	  &=& 		\sum\limits_{v\in V_R\cup V_B}\left(\Phi^*(\partial \widetilde N_1(v))+\phi_{\widetilde N_1}(v)\binom{|L_{\widetilde	N_1}(v)|}{2}\right) &+ 
	  			\phi_{\widetilde	N_1}(x)\binom{|L_{\widetilde	N_1}(x)|}{2} \notag\\
		&&-		\sum\limits_{v\in V'_R\cup V'_B}\left(\Phi^*(\partial \widetilde N_2(v))+\phi_{\widetilde	N_2}(v)\binom{|L_{\widetilde	N_2}(v)|}{2}\right) &+ 
				(\phi_{ N_1}(\rho_B)-\phi_{N_2}(\rho_{B'}))\binom{\ell}{2} \label{eq:Case1-step2Q}
   \end{align}
\endgroup 

By the latter arguments, we have $V_R\cup V_B =  V'_R\cup V'_B$. By construction, for all $v\in V_R\cup V_B$ we have $\partN_1(v)=\partN_2(v)$ and $L_{\widetilde	N_1}(v) = L_{\widetilde	N_2}(v)$.
Therefore, 
\[\sum\limits_{v\in V_R\cup V_B}\left(\Phi^*(\partial \widetilde N_1(v))+\phi_{\widetilde N_1}(v)\binom{|L_{\widetilde	N_1}(v)|}{2}\right)=
\sum\limits_{v\in V'_R\cup V'_B}\left(\Phi^*(\partial \widetilde N_2(v))+\phi_{\widetilde	N_2}(v)\binom{|L_{\widetilde	N_2}(v)|}{2}\right).\]
The latter arguments together with the fact that $\phi_{\widetilde N_1}(x)=1$  imply that Eq.\ \eqref{eq:Case1-step2Q} simplifies to
\begin{equation}	
		\Phi^{**}(N_1)- \Phi^{**}(N_2) = \binom{|L_{\widetilde	N_1}(x)|}{2} 
		 +(\phi_{ N_1}(\rho_B)-\phi_{N_2}(\rho_{B'}))\binom{\ell}{2}. \label{eq:Case1-step3}
\end{equation}

    Recall that $B$ and $B'$ are crescents of size $m$ and $m+k-1$, respectively.
 By Proposition\ \ref{prop:B_maxmin_arbitrary}, $\omega(B) = \binom{m}{3}$ and $\omega(B') = \binom{m+k-1}{3}$.
     Also note that $|L_{\widetilde N_1}(x)|< \ell$ 
    and that $\mathcal{B}^*\coloneqq \mathcal{B}^{\rho_B}(\widetilde N_1) = \mathcal{B}^{\rho_{B'}}(\widetilde N_2)$. 
    Therefore, we can conclude that
\begingroup  
\allowdisplaybreaks 
        \begin{align*}
        \Phi^{**}(N_1)-\Phi^{**}(N_2) &=  \binom{|L_{\widetilde N_1}(x)|}{2}+(\phi_{N_1}(\rho_B)-\phi_{N_2}(\rho_{B'}))\binom{\ell}{2}\\
           & = \binom{|L_{\widetilde N_1}|}{2}+\left(\omega(B)+\sum_{C\in\mathcal{B}^*}\omega(C)+\epsilon-\left(\omega(B')+\sum_{C\in\mathcal{B}^*}\omega(C)+\epsilon\right)\right)\binom{\ell}{2}\\
            & = \binom{|L_{\widetilde N_1}|}{2}+\left(\binom{m}{3}-\binom{m+k-1}{3}\right)\binom{\ell}{2}\\
            & < \binom{\ell}{2}+\left(\binom{m}{3}-\binom{m+k-1}{3}\right)\binom{\ell}{2}\\
            & \overset{2\leq k}{\leq} \underbrace{\left(1+\binom{m}{3}-\binom{m+1}{3}\right)}_{<0\text{ for $m\geq3$}}\binom{\ell}{2}< 0.
 \end{align*}
\endgroup     
Consequently, $\Phi^{**}(N_1)< \Phi^{**}(N_2)$; a contradiction to the assumption that 
    $N_1$ maximizes the weighted total cophenetic index within the class $\HybdidDegTwoN$. 
    Hence, Case (1) cannot occur. 
    
\smallskip \noindent    
\emph{Case 2:} 
Suppose that $\mathcal{B}^x\neq \emptyset$ and and $x$ has out-degree at least three in $\partN_1(x)$. 

    Let $B\in \mathcal{B}^x$. 
    Again, $B$ must be a crescent and we can assume that $B$ consists of two internally vertex-disjoint paths 
	$(x=\rho_B, v_2, \ldots, v_m=\rho_B)$ and $(\rho_B,\eta_B)$.
    Since $x$ is a $\prec_{N_1}$-minimal vertex of out-degree at least three and is the root of $B$, the subnetwork $N_1^B$
    is binary. Since $\outdeg_{\partN_1(x)}(x)\geq 3$, the subnetwork $N_x \coloneqq \partN_1(x) \setminus (N_1^B-x)$ contains at least one edge. We now construct a phylogenetic level-$1$ network $N_2$ by increasing $size(B)$ by one and attaching the subnetwork $N_x$ to the new vertex in $B$. More precisely, we delete the edges $(x,x_i)$ for all children $x_i$ of $x$ with $x_i\neq v_m$ and, afterwards,  add the new vertex $v_{m+1}$ and edges $(x,v_{m+1})$ and $(v_{m+1},x_i)$ for all previously deleted edges. By similar arguments as in Case (1), $N_2\in \HybdidDegTwoN$.
We denote with  $B'$ the extended block obtained from $B$ in $N_2$ with this construction. 
    As in Case (1), 
    we put $\widetilde N_1 = \partN_1(\rho_B)$ and $\widetilde N_2 = \partN_2(\rho_{B'})$
    and let $\mathfrak{R}_B$, resp.,  $\mathfrak{R}_{B'}$ be the relevant neighbors of $\rho_B = x$ in 
    $\widetilde N_1$, resp., of $\rho_{B'} = x$ in  $\widetilde N_2$. Note that $x\notin \mathfrak{R}_{B}\cup \mathfrak{R}_{B'}$.   
    Moreover, we put
    \begingroup
\allowdisplaybreaks
             \begin{align*}
             V_R  \coloneqq &   \{v\in  \mathfrak{R}_B   \mid v\in V(N_x)\setminus\{x\} \} \text{ and }\\
			 V_B  \coloneqq & \{v\in \mathfrak{R}_B \mid v\in V(\partN_1(v_j)), 2\leq j\leq m\} \text{ and }\\
			 V'_B  \coloneqq & \{v\in \mathfrak{R}_{B'} \mid v\in V(\partN_2(v_j)), 2\leq j\leq m\},
           \end{align*}
\endgroup
where we used in $V_R$ the fact that $N_x-x =  \partN_1(x) \setminus N_1^B$. 
The set $V'_R$ is, however, defined slightly differently from the previous case and covers all the relevant 
neighbors of $\rho_{B'}$ that are contained in $\partN_2(v_{m+1})$. Since $N_x\simeq \partN_2(v_{m+1})$, 
we define 
\[   V'_R  \coloneqq  \{v\in  \mathfrak{R}_{B'}   \mid v\in V(N_x) \}. \]
By similar arguments as in the previous case
$V_R$ and $V_B$ (resp. $V'_R$ and $V'_B$) are vertex-disjoint, do not contain $x$ and
\[\mathfrak{R}_B = V_R\cupdot V_B \text{ and } \mathfrak{R}_{B'} = V'_R\cupdot V'_B.\]
Again, observe that each of $V_R, V_B,V'_R$, and $V'_B$ might be the empty set.

  Note that $N_2$ can be considered as the network obtained from $N_1$ by replacing
$\widetilde N_1$ by $\widetilde N_2$. In addition, recall that 
Lemma \ref{lem:blocks-in-partN} implies that $\phi_{\widetilde	N_i}(v) = \phi_{N_i}(v)$
for all relevant vertices of $N_i$ that are contained in $\widetilde	N_i$, $i\in \{1,2\}$.
By construction, for all $v\in V_B$ it holds that 
 $|L_{\widetilde N_1}(v)|=|L_{\widetilde N_2}(v)|$, $\phi_{\widetilde N_1}(v)=\phi_{\widetilde N_2}(v)$, and $\partial \widetilde N_1(v)=\partial \widetilde N_2(v)$.
Therefore, 
\begin{equation}
    \sum\limits_{v\in V_B}\left(\Phi^*(\partial \widetilde N_1(v))+\phi_{\widetilde N_1}(v)\binom{|L_{\widetilde N_1}(v)|}{2}\right) =
\sum\limits_{v\in V_B}\left(\Phi^*(\partial \widetilde N_2(v))+\phi_{\widetilde N_2}(v)\binom{|L_{\widetilde N_2}(v)|}{2}\right).
\label{IneedABeer}
 \end{equation}

By similar arguments as in Case (1), we obtain $V_B= V'_B$. However, $V_R\neq V'_R$ might be possible.
 Taken the latter arguments  together with Theorem~\ref{thm:local-phi} and
 Corollary~\ref{fact:sum-up-part} and by putting  $\ell \coloneqq |L_{N_1}(\rho_B)| = |L_{N_2}(\rho_{B'})|$, we obtain
\begingroup
\allowdisplaybreaks
       \begin{align}
        \Phi^{**}(N_1)-\Phi^{**}(N_2)
        =&\Phi^*(\widetilde N_1)-\Phi^*(\widetilde N_2)+(\phi_{N_1}(\rho_B)-\phi_{N_2}(\rho_{B'}))\binom{\ell}{2}\notag\\
        =&\sum\limits_{v\in V_B\cup V_R}\left(\Phi^*(\partial \widetilde N_1(v))+\phi_{\widetilde N_1}(v)\binom{|L_{\widetilde N_1}(v)|}{2}\right)
         -\sum\limits_{v\in V_B\cup V'_R}\left(\Phi^*(\partial \widetilde N_2(v))+\phi_{\widetilde N_2}(v)\binom{|L_{\widetilde N_2}(v)|}{2}\right) \notag\\
        &+(\phi_{N_1}(\rho_B)-\phi_{N_2}(\rho_{B'}))\binom{\ell}{2} \notag\\
        \overset{Eq.\ \eqref{IneedABeer}}{=}&\underbrace{\sum\limits_{v\in V_R}\left(\Phi^*(\partial \widetilde N_1(v))+\phi_{\widetilde N_1}(v)\binom{|L_{\widetilde N_1}(v)|}{2}\right)}_{\eqqcolon \alpha}
        -\underbrace{\sum\limits_{v\in V'_R}\left(\Phi^*(\partial \widetilde N_2(v))+\phi_{\widetilde N_2}(v)\binom{|L_{\widetilde N_2}(v)|}{2}\right)}_{\eqqcolon \beta}\notag\\
        &+(\phi_{N_1}(\rho_B)-\phi_{N_2}(\rho_{B'}))\binom{\ell}{2}. \label{eq:case2-1}
 \end{align}
\endgroup 

It holds that $\phi_{N_1}(\rho_B)=\omega(B)+\sum_{C\in\mathcal{B}^{\rho_B}(N_1)\setminus\{ B\}} \omega(C)+\epsilon=\binom{m}{3}+\sum_{C\in\mathcal{B}^{\rho_B}(N_1)\setminus\{B\}} \omega(C)+\epsilon$ 
and $\phi_{N_2}(\rho_B')=\omega(B')+\epsilon=\binom{m+1}{3}+\epsilon$.
 
To recall, the sets $V_R$ and $V'_R$ depend on the structure of the subnetwork $N_x$. 
We distinguish between the two Cases (I) $\outdeg_{\widetilde N_1}(x)=3$
and (II) $\outdeg_{\widetilde N_1}(x)>3$ (in which case  $N_x$ contains at least two edges).

For Case (I), we assume that $\outdeg_{\widetilde N_1}(x)=3$. 
Note that $x=\rho_B$ in $N_1$. Moreover,  $x$ has a unique child $y$ in $N_x$
that becomes the unique child of $v_{m+1}$ that is not in $B'$.
Moreover, $N_x(y) = N_2(y)$. It is now an easy task to verify that 
$V_R=V'_R$. 
 By construction, for all $v\in V_R$ it holds that 
 $\ell_v\coloneqq|L_{\widetilde N_1}(v)|=|L_{\widetilde N_2}(v)|$, $\phi_{\widetilde N_1}(v)=\phi_{\widetilde N_2}(v)$, and $\partial \widetilde N_1(v)=\partial \widetilde N_2(v)$
and, therefore, $\alpha=\beta$. One easily verifies that  $\mathcal{B}^{\rho_B}(N_1)\setminus\{ B\}=\emptyset$.
Hence, we have,  $\phi_{N_1}(\rho_B) = \binom{m}{3}+\epsilon$. 
Taking the later arguments together with Equation\ \eqref{eq:case2-1} we obtain 
\begingroup
\allowdisplaybreaks
   \begin{align*}
           \Phi^{**}(N_1)-\Phi^{**}(N_2)
           &= (\phi_{N_1}(\rho_B)-\phi_{N_2}(\rho_{B'}))\binom{\ell}{2} \\
           &= \left(\binom{m}{3}+\epsilon - \binom{m+1}{3}-\epsilon\right)\binom{\ell}{2}\\
           &= \underbrace{\left(\binom{m}{3}-\binom{m+1}{3}\right)}_{<0\text{ for $m\geq 2$}}\binom{\ell}{2}< 0.
 \end{align*}
\endgroup 
Since $m\geq 3$, we have $\Phi^{**}(N_1)< \Phi^{**}(N_2)$; a contradiction to the assumption that 
    $N_1$ maximizes the weighted total cophenetic index within the class $\HybdidDegTwoN$. 
    Hence, Case (I) cannot occur.

Consider now the Case (II). 
	Since $\outdeg_{\widetilde N_1}(\rho_B)> 3$, it holds that $x=\rho_B$ has at least two children in $N_x$
that are not located in 
	$B$ and these children become the children of $v_{m+1}$ in $N_2$ and are not located in $B'$. 
	One easily verifies that, therefore, $v_{m+1}$ is
	relevant in $N_2$ and, by Lemma~\ref{lem:replacement-partN-1}, $v_{m+1}$ is relevant in $\widetilde N_2$. 
	Consequently, $V'_R = \{v_{m+1}\}$.
	It is an easy task to verify that, by construction, the relevant neighbors
	of $v_{m+1}$ in $N_2$ are precisely the vertices in $V_R$, that is, the relevant
	neighbors of $x$ in $N_x$. 
	By similar arguments as used to derive Equation\ \eqref{eq:Phi*partialN1xQ} and since $V'_R = \{v_{m+1}\}$
 we can write
	 $\Phi^*(\partial \widetilde N_2(v_{m+1}))$ as 
	\begingroup
	\allowdisplaybreaks
        \begin{align*}
		\beta - \phi_{\widetilde N_2}(v_{m+1})\binom{|L_{\widetilde N_2}(v_{m+1})|}{2} = \Phi^*(\partial \widetilde N_2(v_{m+1})) 
		&= \sum_{v\in V_R} \left( \Phi^*(\partial \widetilde N_2(v)) + \phi_{\widetilde N_2}(v) \binom{|L_{\partial \widetilde N_2}(v)|}{2} \right)   \notag\\
	 \end{align*}
	\endgroup 
Moreover, we have by construction and by Lemma \ref{lem:idempotent}, $\partial \widetilde N_1(v)=\partN_1(v) = 
   \partN_2(v) = \partial \widetilde N_2(v)$ for all $v\in V_R$. Hence, one easily verifies that 
   	\begingroup
	\allowdisplaybreaks
        \begin{align}
		\beta - \phi_{\widetilde N_2}(v_{m+1})\binom{|L_{\widetilde N_2}(v_{m+1})|}{2} = \alpha.    \label{eq:Phi*partialN2vm+1}
    \end{align}
	\endgroup 

Combining Equation\ \ref{eq:case2-1} and \ref{eq:Phi*partialN2vm+1} together with $V'_R = \{v_{m+1}\}$ yields
	\begingroup
	\allowdisplaybreaks
        \begin{align}
		   \Phi^{**}(N_1)-\Phi^{**}(N_2) =& \alpha - \beta   + (\phi_{N_1}(\rho_B)-\phi_{N_2}(\rho_{B'}))\binom{\ell}{2}\notag\\
              	=& -\phi_{\widetilde N_2}(v_{m+1})\binom{|L_{\widetilde N_2}(v_{m+1})|}{2}
        +(\phi_{N_1}(\rho_B)-\phi_{N_2}(\rho_{B'}))\binom{\ell}{2}. \label{eq:vm+1}
		\end{align}
	\endgroup
	
	Since $B\in \mathcal{B}^{\rho_B}(N_1)$, the following two cases may occur: $\mathcal{B}^{\rho_B}(N_1)=\{B\}$
	and $\{B\} \subsetneq \mathcal{B}^{\rho_B}(N_1)$.
	If  $\mathcal{B}^{\rho_B}(N_1)=\{B\}$, then 
	$\phi_{N_1}(\rho_B)=\binom{m}{3}+\epsilon$ and $\phi_{N_2}(\rho_{B'})=\binom{m+1}{3}+\epsilon$ and we can conclude
	that
\begingroup
\allowdisplaybreaks
             \begin{align*}
             	\Phi^{**}(N_1)-\Phi^{**}(N_2) = \underbrace{-\phi_{\widetilde N_2}(v_{m+1})\binom{|L_{\widetilde N_2}(v_{m+1})|}{2}}_{<0} 
		        +(\phi_{N_1}(\rho_B)-\phi_{N_2}(\rho_{B'}))\binom{\ell}{2} 
				&<\underbrace{\left(\binom{m}{3}-\binom{m+1}{3}\right)}_{<0\text{ for $m\geq 2$}}\binom{\ell}{2}< 0.
 \end{align*}
\endgroup 	
Again, since $m\geq 3$, we have $\Phi^{**}(N_1)< \Phi^{**}(N_2)$; a contradiction.

Assume now that $\{B\} \subsetneq \mathcal{B}^{\rho_B}(N_1)$ and thus,
$N_x$ contains non-trivial blocks that are rooted in $\rho_B$.
By construction, $\mathcal{B}^{\rho_B}(N_1)\setminus\{ B\} = \mathcal{B}^{v_{m+1}}(N_2)$. 
Therefore, 
$\phi_{\widetilde N_2}(v_{m+1}) = \sum_{C\in\mathcal{B}^{\rho_B}(N_1)\setminus\{ B\}} \omega(C)+\epsilon$. 
Moreover, we have,  by construction $\ell'\coloneqq |L_{\widetilde N_2}(v_{m+1})| = |L_{N_1}(\rho_B)|$. 
Taken the latter arguments together, we obtain

\begingroup
\allowdisplaybreaks
        \begin{align*}
        \Phi^{**}(N_1)-\Phi^{**}(N_2) =	&-\phi_{\widetilde N_2}(v_{m+1})\binom{\ell'}{2}+(\phi_{N_1}(\rho_B)-\phi_{N_2}(\rho_{B'}))\binom{\ell'}{2} \\
             = & -\left(\sum_{C\in\mathcal{B}^{\rho_B}(N_1)\setminus\{ B\}} \omega(C)+\epsilon\right)\binom{\ell'}{2}
            +\left(\omega(B)+\sum_{C\in\mathcal{B}^{\rho_B}(N_1)\setminus\{ B\}}\omega(C)+\epsilon-(\omega(B')+\epsilon)\right)\binom{\ell'}{2} \\
            = & \left(\omega(B)-\omega(B')-\epsilon\right)\binom{\ell'}{2}\\
            = &\underbrace{\left(\binom{m}{3}-\binom{m+1}{3}-\epsilon\right)}_{<0\text{ for $m\geq 2$ and $\epsilon\geq0$}}\binom{\ell'}{2}< 0.
 \end{align*}
\endgroup 
Again, since $m\geq 3$, we have $\Phi^{**}(N_1)< \Phi^{**}(N_2)$; a contradiction.
Hence, neither of the cases (I) and (II) can occur, which implies that Case (2) cannot occur.

\smallskip \noindent    
\emph{Case 3:} Suppose that $\mathcal{B}^x\neq \emptyset$ and $x$ has out-degree two in $\partN_1(x)$. 
Since $x$  has out-degree at least three in $N_1$,
it follows that $x$ has children in $N_1$ that are not contained in $\partN_1(x)$. 
The latter case can only occur if $x$ is contained in some non-trivial block $B$ that is not in $\mathcal{B}^x$. 
Hence, $x$ is not the root of $B$ and, by assumption, $x$ is not a hybrid vertex. 
Moreover, since $x$ has out-degree two in $\partN_1(x)$ and $\mathcal{B}^x\neq \emptyset$, it follows that 
the two children of $x$ that are in $\partN_1(x)$ must be contained in unique block $B_2\in  \mathcal{B}^x$. 
Note $\mathcal{B}^x = \{B_2\}$.
 One easily verifies that, 
therefore, $\partN_1(x) = N_1^{B_2}(\rho_{B_2})$.  
Note that $\eta_B$ has in-degree two in $N_1$ and $B$ contains  precisely one hybrid. 
This together with $x\neq \rho_B$  implies that 
$x$ can only have one child in $B$. 
Taken the latter arguments together, we obtain $\outdeg_{N_1}(x)=3$. 
Note that both $B$ and $B_2$ are crescents of  $size(B)=m$ and $size(B_2)=m_2$. 
Hence,  we can assume that $B$ consists of two internally vertex-disjoint paths 
$(v_1=\rho_B,v_2,\ldots,v_i=x,\ldots,v_m=\eta_B)$ and $(\rho_B,\eta_B)$ and that $B_2$
consists of two internally vertex-disjoint paths 
$(u_1,u_2,\ldots,u_{m_2}=\eta_{B_2})$ and $(u_1,\eta_{B_2})$. Note that $x=\rho_{B_2}=u_1$
and that $\phi_{N_1}(x)=\omega(B_2)+\epsilon$. Since $N_1$ is a level-$1$ network and by Observation\ \ref{obs:identical-block}, 
$B$ and $B_2$ have only vertex $x$ in common. 
Note that each descendant of $x$ that is not a leaf or a hybrid must have out-degree two. 
In particular all non-hybrid vertices in $B_2$ that are distinct from $u_1=x$ 
must have precisely one child that is not contained in $B_2$. 
Moreover, by assumption $u_{m_2}=\eta_{B_2}$ has precisely one child (which is, in particular
not located in $B_2$). 
Let us denote with $u'_i$ the unique child of $u_i$ that is not located in $B_2$ where $2\leq i\leq m_2$.
We construct a phylogenetic level-$1$ network $N_2$ by deleting $x$ and its incident edges from $N_1$
and adding the edges $(v_{i-1},u_2)$, $(u_{m_2},v_{i+1})$.  In this way the block $B_2$ was \enquote{removed}
 and the path $(u_2,\ldots,u_{m_2})$ was \enquote{inserted} into $B$, resulting in a crescent $B'$
 of size $m+m_2-1$. One easily observes that $N_2 \in \HybdidDegTwoN$.
 We put  $\widetilde N_1\coloneqq\partN_1(\rho_{B})$ and $\widetilde N_2 \coloneqq\partN_2(\rho_{B'})$. 
Note that $N_2$ can be considered as the network obtained from $N_1$ by replacing
$\widetilde N_1$ by $\widetilde N_2$. This together with Theorem~\ref{thm:local-phi} and the  fact 
that $v_1=\rho_{B} = \rho_{B'}$ and $\ell\coloneqq |L_{N_1}(\rho_{B})| = |L_{N_2}(\rho_{B'})|$
 implies 
\begin{equation*}
\Phi^{**}(N_1)-\Phi^{**}(N_2)
          = \Phi^*(\widetilde N_1)-\Phi^*(\widetilde N_2)+(\phi_{N_1}(v_1)-\phi_{N_2}(v_1))\binom{\ell}{2}.\\
\label{eq:Case3-1}
\end{equation*} 

Let $\mathfrak{R}_B$, resp.,  $\mathfrak{R}_{B'}$ be the relevant neighbors of $\rho_B$ in $\widetilde N_1$ resp., of $\rho_{B'}$ in $\widetilde N_2$.
Again, as in the previous cases, we will partition $\mathfrak{R}_B$ and  $\mathfrak{R}_{B'}$. 
We first define the subset $V_R$, resp., $V'_R$ of $\mathfrak{R}_B$, resp, $\mathfrak{R}_{B'}$ 
that consists all those relevant neighbors that are not descendants of vertices in $V(B)\setminus {\rho_B}$, 
resp., $V(B')\setminus {\rho_{B'}}$. To be more precise, we put
           \[ V_R  \coloneqq   \{  v\in \mathfrak{R}_B  \mid v\in V(\widetilde N_1 - N_1^B)\} \text{ and }
            V'_R  \coloneqq  \{ v\in \mathfrak{R}_{B'} \mid v\in V(\widetilde N_2 - N_2^{B'})\}.\]
By similar arguments as in Case (1), we have $V_R = V'_R$.
Moreover, we put 
\begingroup
\allowdisplaybreaks
        \begin{align*}
        V_B  \coloneqq& \{ v\in \mathfrak{R}_B \mid v\in V(\partN_1(v_j)), 2\leq j\leq m, j\neq i\}\\
        V'_B  \coloneqq& \{ v\in \mathfrak{R}_{B'} \mid v\in V(\partN_2(v_j)), 2\leq j\leq m, j\neq i\}.
 \end{align*}
\endgroup 
By similar arguments as in Case 1, we observe that $V_B =V'_B$ and that
\[\mathfrak{R}_B = V_B\cup V_R \cup \{x\}.\]

Note that $V_B= \emptyset$ or $V_R=\emptyset$ may be possible.
By application of 
Corollary~\ref{fact:sum-up-part}, we can write $\Phi^{*}(\widetilde N_1(\rho_B))$
as
\begin{equation*}
\Phi^{*}(\widetilde	N_1) =  
\sum\limits_{v\in V_R\cup V_B\cup\{x\}}\left(\Phi^*(\partial \widetilde N_1(v))+\phi_{\widetilde	N_1}(v)\binom{|L_{\widetilde	N_1}(v)|}{2}\right).
\label{eq:Phi*1X}
\end{equation*}

All other relevant vertices in $\mathfrak{R}_{B'}\setminus (V'_R\cup V'_B)$ are 
contained in $\partN_2(u_j)$ for some $j\in \{2,\dots, m_2-1\}$. 
By construction, $u_j$ with $j\in \{2,\dots, m_2\}$ is irrelevant in $\widetilde N_2$. 
Hence, relevant vertices in $\mathfrak{R}_{B'}\setminus V'_R\cup V'_B$ are 
contained in  $\widetilde N_2(u_j')$ for some $j\in \{2,\dots,m_2\}$. 
\[V'_X\coloneqq \{v\in \mathfrak{R}_{B'}\mid v\in V(\widetilde N_2(u_j')), 2\leq j\leq m_2\}.\]
By construction, $V'_B, V'_R$, and $V'_X$ are pairwise disjoint and we have
\[\mathfrak{R}_{B'} = V'_B\cup V'_R \cup V'_X.\]
Note that each of  $V'_B,V'_R, V'_X$ could be the empty set. 
By application of 
Corollary~\ref{fact:sum-up-part}, we can write $\Phi^{*}(\widetilde N_2)$ as
\begin{equation*}
\Phi^{*}(\widetilde	N_2) =  
\sum\limits_{v\in V'_R\cup V'_B \cup V'_X}\left(\Phi^*(\partial \widetilde N_2(v))+\phi_{\widetilde	N_2}(v)\binom{|L_{\widetilde	N_2}(v)|}{2}\right).
\label{eq:Phi*2X}
\end{equation*}

By similar arguments as in Case (1), we obtain 
\begin{equation}
\Phi^{**}(N_1)-\Phi^{**}(N_2)
          = \phi_{\widetilde N_1}(\rho_{B_2})\binom{|L_{\widetilde N_1}(\rho_{B_2)}|}{2}+(\phi_{N_1}(\rho_B)-\phi_{N_2}(\rho_{B'}))\binom{\ell}{2}.
\label{eq:Case3-x}
\end{equation}

Note that $v_1=\rho_{B} = \rho_{B'}$. Moreover, by construction, 
$\mathcal{B}^*\coloneqq \mathcal{B}^{v_1}(N_1)\setminus\{ B\} = \mathcal{B}^{v_1}(N_2)\setminus\{ B'\}$.
Put $\gamma\coloneqq \sum_{C\in\mathcal{B}^*}\omega(C)$. 
By construction, Lemma \ref{prop:B_maxmin_arbitrary}, and the previous arguments we obtain
\begingroup
\allowdisplaybreaks
 \begin{align*}
 &\phi_{\widetilde N_1}(\rho_{B_2})=\omega(B_2)+\epsilon=\binom{m_2}{3}+\epsilon, \\
 &\phi_{N_1}(v_1)=\omega(B)+ \sum_{C\in\mathcal{B}^*}\omega(C) + \epsilon=\binom{m}{3}+\gamma +\epsilon \text{ and } \\
 &\phi_{N_2}(v_1)=\omega(B')+\sum_{C\in\mathcal{B}^*}\omega(C) +\epsilon=\binom{m+m_2-2}{3}+\gamma+\epsilon.
 \end{align*}
\endgroup

This together with Equation~\eqref{eq:Case3-x} and $|L_{\widetilde N_1}(\rho_{B_2})|<|L_{N_1}(\rho_{B})|=\ell$ 
and $\Phi^{**}(N_1)-\Phi^{**}(N_2) \geq 0$ implies 
\begingroup
\allowdisplaybreaks
      \begin{align}
      \Phi^{**}(N_1)-\Phi^{**}(N_2)
          = &\ \phi_{\widetilde N_1}(\rho_{B_2})\binom{|L_{\widetilde N_1}(\rho_{B_2)}|}{2}+(\phi_{N_1}(v_1)-\phi_{N_2}(v_1))\binom{\ell}{2} &\geq &\ 0 \notag\\
     \newline 
         \implies &  \phi_{\widetilde N_1}(\rho_{B_2})\binom{|L_{N}(\rho_{B_2)}|}{2}+\phi_{N_1}(v_1)\binom{\ell}{2}&\geq &\ \phi_{N_2}(v_1)\binom{\ell}{2}\notag\\
        \implies & \phi_{\widetilde N_1}(\rho_{B_2})\binom{\ell}{2}+\phi_{N_1}(v_1)\binom{\ell}{2}& > &\ \phi_{N_2}(v_1)\binom{\ell}{2}\notag\\
         \implies & \phi_{\widetilde N_1}(\rho_{B_2})+\phi_{N_1}(v_1) &> & \ \phi_{N_2}(v_1)\notag\\
        \implies & \binom{m_2}{3}+\epsilon+\binom{m}{3}+\gamma +\epsilon &> &\ \binom{m+m_2-2}{3}+\gamma +\epsilon\notag\\
           \implies & \binom{m_2}{3}+\binom{m}{3}+\epsilon &>& \ \binom{m+m_2-2}{3}\label{eq:fromHereOn}\\
           \implies &\sum_{j=2}^{m_2-1}\binom{j}{2}+\sum_{j=2}^{m-1}\binom{j}{2}+\epsilon&>&\sum_{j=2}^{m+m_2-3}\binom{j}{2}.\notag
 \end{align}
\endgroup 
The latter equation allows us to assume that  $m\leq m_2$ and we obtain
      \begin{align*}
      &\Phi^{**}(N_1)-\Phi^{**}(N_2)
            &\geq &\ 0\\
           \implies &\sum_{j=2}^{m_2-1}\binom{j}{2}+\sum_{j=2}^{m-1}\binom{j}{2}+\epsilon&>&\sum_{j=2}^{m+m_2-3}\binom{j}{2}\\
           \iff &\sum_{j=2}^{m_2-1}\binom{j}{2}+\epsilon&>&\sum_{j=m}^{m+m_2-3}\binom{j}{2}.
 \end{align*}
 Note that $\sum_{j=2}^{m_2-1}\binom{j}{2}+\epsilon $ and $\sum_{j=m}^{m+m_2-3}\binom{j}{2}$ have the same number of 
 of summands, namely $m_2-2$. Thus, we obtain

\begingroup
\allowdisplaybreaks
           \begin{align*}
            &\binom{2}{2}+...+\binom{m_2-1}{2}+\epsilon&>&\binom{m}{2}+...+\binom{m+m_2-3}{2}\\
           \iff &\hspace{3.5cm}\epsilon &>& \binom{m}{2}-\binom{2}{2}+\binom{m+1}{2}-\binom{3}{2}+...+\binom{m+m_2-3}{2}-\binom{m_2-1}{2}\\
           \iff &\hspace{3.5cm}\epsilon &>&\sum_{j=0}^{m_2-3}\left(\binom{m+j}{2}-\binom{2+j}{2}\right)\\
           \overset{m\geq 3}{\implies} &\hspace{3.5cm}\epsilon &>&\sum_{j=0}^{m_2-3}\left(\binom{3+j}{2}-\binom{2+j}{2}\right)\\
              \overset{m_2\geq3}{\implies}   &\hspace{3.5cm}\epsilon&>&\sum_{j=0}^{m_2-3}\left(j+2\right)\geq 2\\
 \end{align*}
\endgroup 
 Hence, $\epsilon >2$; a contradiction to the choice of $\epsilon$ which satisfies by definition $0 \leq  \epsilon \leq 2$. 
 
In summary, we have shown that $N_1$ cannot have maximum weighted total cophenetic index if it is not binary. This completes the proof.
\end{proof}

\begin{lemma}\label{lem:bin_no_true_vert}
A binary phylogenetic level-$1$ network on $n$ leaves that achieves a maximum weighted total cophenetic index within the class $\BinLevelOneN$ cannot  contain true tree vertices.
\end{lemma}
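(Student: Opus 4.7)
The plan is to argue by contradiction. Suppose $N\in\BinLevelOneN$ attains the maximum of $\Phi^{**}$ but contains a true tree vertex $v$. Since $N$ is binary, $v$ has out-degree exactly two, with children $v_1,v_2$, and $|L_N(v)|\geq 2$. The goal is to produce a binary phylogenetic level-$1$ network $N'$ with $L(N')=L(N)$ and $\Phi^{**}(N')>\Phi^{**}(N)$, contradicting maximality.

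The core construction is a triangle expansion at $v$: introduce two fresh vertices $w_1,w_2$, delete the edges $(v,v_1)$ and $(v,v_2)$, and add $(v,w_1)$, $(v,w_2)$, $(w_1,w_2)$, $(w_1,v_1)$ and $(w_2,v_2)$, with $w_2$ serving as the new hybrid. The resulting $N'$ is binary and level-$1$; the set $B=\{v,w_1,w_2\}$ is a triangle block with $\rho_B=v$ and $\eta_B=w_2$. Observation \ref{fact:binary_rel_vert} implies that $w_1,w_2$ are irrelevant in $N'$, while every other vertex retains its relevance status, leaf-descendant count, and $\phi$-value from $N$. In particular, $\Phi^*(\partN(v))=\Phi^*(\partial N'(v))$, so by Theorem \ref{thm:local-phi},
\[
\Phi^{**}(N')-\Phi^{**}(N)=\bigl(\phi_{N'}(v)-\phi_N(v)\bigr)\binom{|L_N(v)|}{2}=\epsilon\binom{|L_N(v)|}{2},
\]
using $\phi_N(v)=1$ and $\phi_{N'}(v)=\omega(B)+\epsilon=1+\epsilon$. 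For $\epsilon>0$, the inequality $\binom{|L_N(v)|}{2}\geq 1$ yields $\Phi^{**}(N')>\Phi^{**}(N)$, the desired contradiction.

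For the boundary case $\epsilon=0$ (which requires $n\geq 3$, since for $n=2$ the star $\TstarTwo$ already ties with $N_\Delta$ by Observation \ref{fact:n2_max_bin}), a more substantial modification is needed, as the triangle expansion only yields equality. The plan here is to invoke Lemma \ref{lem:one_block}, which forces every non-trivial block of $N$ to be a crescent, and to exploit the much larger weight $\binom{m}{3}$ of crescents of size $m\geq 4$. Choosing $v$ to be, say, $\preceq_N$-maximal among true tree vertices, one constructs $N'$ either by extending an existing crescent adjacent to $v$ so as to make $v$ its new root, or by expanding $v$ directly into a crescent of size four via re-routing of a leaf-descendant; each case produces a strict $\omega$-gain at $v$ which, via Theorem \ref{thm:local-phi} and Corollary \ref{fact:sum-up-part}, strictly increases $\Phi^{**}$.

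The chief obstacle is this $\epsilon=0$ sub-case: one must verify that the modification preserves the binary level-$1$ property and the leaf set, and that the local gain strictly dominates any change induced in $\Phi^*(\partN(v))$. A case analysis on the configuration of $v$'s children (leaves, true tree vertices, or roots of crescents) in the spirit of the proofs of Lemmas \ref{lem:one_block} and \ref{lem:max_is_binary} handles each configuration, with the bookkeeping carried out exactly as in those arguments.
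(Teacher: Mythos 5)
Your $\epsilon>0$ argument is correct and is exactly the paper's: expand the true tree vertex to a triangle, note that only $\phi(v)$ changes (from $1$ to $1+\epsilon$), and apply Theorem~\ref{thm:local-phi}. The problem is the $\epsilon=0$ case, which is where essentially all of the work in this lemma lies, and your sketch for it has genuine gaps.

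First, your two constructions do not cover the actual case split. If $v$ is a $\preceq_N$-maximal true tree vertex with $v\neq\rho_N$, the crescent adjacent to $v$ lies \emph{above} $v$ (its internal vertex is the parent of $v$), so ``making $v$ its new root'' is not available; the correct move (the paper's) is to delete $v$ and absorb its two pendant subnetworks into an enlarged crescent, which is a different operation with different bookkeeping. Second, the fallback of ``expanding $v$ directly into a crescent of size four via re-routing of a leaf-descendant'' fails outright when $|L_N(v)|=2$ (a cherry): a binary size-$4$ crescent rooted at $v$ requires three pendant attachment points, hence at least three leaf-descendants, and the only size-compatible block is a triangle, which at $\epsilon=0$ yields zero gain. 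Even when $|L_N(v)|\geq 3$, the claim that the local $\omega$-gain ``strictly dominates any change induced in $\Phi^*(\partN(v))$'' is not established and is false for naive leaf re-routing: moving a leaf $\ell$ decreases $|L_N(u)|$ by one for every relevant $u$ with $\ell\prec_N u\prec_N v$, costing $\sum_u \phi(u)\bigl(|L_N(u)|-1\bigr)$; if the path from $\ell$ to $v$ passes the root $u$ of a crescent of size $m$ this single term is $\binom{m}{3}(|L_N(u)|-1)$, which for large $m$ exceeds the gain $3\binom{|L_N(v)|}{2}$. Third, your first option presupposes that some true tree vertex is adjacent to a crescent, which requires an argument (the paper supplies one in its case (b) by locating a true tree vertex whose child is a $\preceq$-maximal crescent root). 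The paper's proof avoids all of these issues by never trying to create a new heavy block at $v$; instead it always \emph{enlarges an existing crescent} from size $m$ to $m+1$ while deleting the true tree vertex, so that the change in $\Phi^{**}$ is controlled by $\binom{m}{3}-\binom{m+1}{3}$ together with a single term $\binom{|L(x)|}{2}<\binom{\ell}{2}$, and uses the swap operation of Lemma~\ref{lem:binary_swap} to normalize where the true tree vertex attaches. You would need to replace your $\epsilon=0$ sketch with an argument of this kind.
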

\begin{proof}
	By Observation\ \ref{obs:bounded-indegree2}, networks in $\BinLevelOneN$ have bounded maximum weighted total cophenetic index. 
		Hence, we can choose  $N_1\in \BinLevelOneN$ as one of 
		the networks that maximizes the weighted total cophenetic index within the class $\BinLevelOneN$.

Suppose first that $\epsilon>0$. Assume, for contradiction, that $N_1$ contains a true tree vertex $x$. 
Since $N_1$ is phylogenetic, $x$ must have two children $u_1$ and $u_2$. We construct now a network 
$N_2$ by expanding $x$ to a triangle along $u_1,u_2$. One easily verifies that $N_2$ remains in the class
$\BinLevelOneN$. By construction, we obtain a triangle $B$
induced by the vertices $x,v_1,v_2$ and where $v_1, v_2$ are the two unique children of $x$ and where
$v_1$ has as unique child $u_1$, while $v_2$ has as unique children $v_1$ and $u_2$. 
Consequently, neither $v_1$ nor $v_2$ is relevant in $N_2$. 
 By Observation\ \ref{fact:binary_rel_vert}, $x$ is relevant in $N_1$ and $N_2$. 
In addition, one easily verifies that $u_i$ is relevant in $N_1$ if and only if $u_i$ is relevant in $N_2$, $i\in \{1,2\}$.
Moreover, all remaining vertices $v\in W\coloneqq N_1\setminus \{x,u_1,u_2\} = N_2\setminus \{x,u_1,u_2,v_1,v_2\} $ have not been touched and thus
are relevant in $N_1$ precisely if they are relevant in $N_2$. In particular, all relevant vertices 
$v\in W$ satisfy $L_{N_1}(v)=L_{N_2}(v)$ and $\phi_{N_1}(v) = \phi_{N_2}(v)$. 
This implies $\Phi^*(N_1(x)) = \Phi^*(N_2(x))$. Moreover, $N_2$ can be considered as a network 
obtained from $N_1$ by replacing $N_1(x)$ by $N_2(x)$. The latter two arguments together with
Theorem~\ref{thm:local-phi} and the fact that $\ell\coloneqq |L_{N_1}(x)|= |L_{N_2}(x)|$ imply
    \begin{align*}
     \Phi^{**}(N_1)-\Phi^{**}(N_2) &= \left(\Phi^{*}(N_1(x))-\Phi^{*}(N_2(x))\right) + \left((\phi_{N_1}(x)-\phi_{N_2}(x)) \binom{\ell}{2}\right)\\
        & = (\phi_{N_1}(x)-\phi_{N_2}(x)) \binom{\ell}{2}. 
   \end{align*} 														  
Since $x$ is a true tree vertex, we have $\phi_{N_1}(x) = 1$. Moreover, $x$ is the root of precisely one non-trivial block in $N_2$, namely $B$. 
Since $\omega(B)=1$, we obtain $\phi_{N_2}(x) = 1+\epsilon$. This together with $\ell\geq 2$ and, thus, $\binom{\ell}{2} \geq 1$ implies 
$ \Phi^{**}(N_1)-\Phi^{**}(N_2)  = -\epsilon$ and, therefore,  
$ \Phi^{**}(N_1)<\Phi^{**}(N_2)$; a contradiction to the maximality of $N_1$. In summary, if $\epsilon>0$, then 
$N_1$ cannot contain a true tree vertex. 

Assume now that $\epsilon = 0$. Let $N_1$ be a binary level-$1$ network with
$n\geq 3$ leaves that maximizes the weighted total cophenetic index and assume,
for contradiction, that it contains at least one true tree vertex. Let $x$ with
children $x_1$ and $x_2$ be a $\preceq_{N_1}$-maximal true tree vertex in $N_1$.
Note that $x$ can either be the root $\rho_{N_1}$ of $N_1$ or a different inner
vertex. Moreover, note that by Lemma~\ref{lem:one_block}, $N_1$ contains at
least one non-trivial block.
    
We start with considering the case that $x\neq \rho_{N_1}$. In this case,
the unique parent $par_{N_1}(x)$ of $x$ has to be an inner vertex of some
non-trivial block $B$ of size $m\geq 3$ since $x$ is a $\preceq_{N_1}$-maximal
true tree vertex. By Lemma
\ref{lem:one_block}, $B$ has to be a crescent, consisting of the two internally
vertex-disjoint paths $(v_1=\rho_B,\ldots,v_m=\eta_B)$ and $(\rho_B,\eta_B)$.
Since by Lemma \ref{lem:binary_swap}, rearranging subnetworks attached to a
non-trivial block in a binary phylogenetic level-$1$ network does not change its
weighted total cophenetic index, we can without loss of generality assume that
$par_{N_1}(x)=v_2$. 
We now construct a phylogenetic level-$1$ network $N_2$ by
deleting $x$ (and all incident edges) as well as $(v_{m-1},v_m)$. Next, we add
vertex $v_{m+1}$ as well as the following edges: $(v_2,x_1)$, $(v_{m+1},x_2)$,
$(v_{m-1},v_{m+1})$, and $(v_{m+1},v_m)$ (see Figure
\ref{fig:no_true_tree_vertex}a) for an illustration). 
One easily verifies that $N_2\in \BinLevelOneN$.
Let $B'$ be the extended block
obtained from $B$ in $N_2$ in this manner. Furthermore, let 
$\widetilde N_1\coloneqq
\partN_1(\rho_B)=N_1(\rho_B)$ and
$\widetilde N_2\coloneqq\partN_2(\rho_{B'})=N_2(\rho_{B'})$. 

\begin{figure}[ht]
    \centering
    \includegraphics[width=0.95\textwidth]{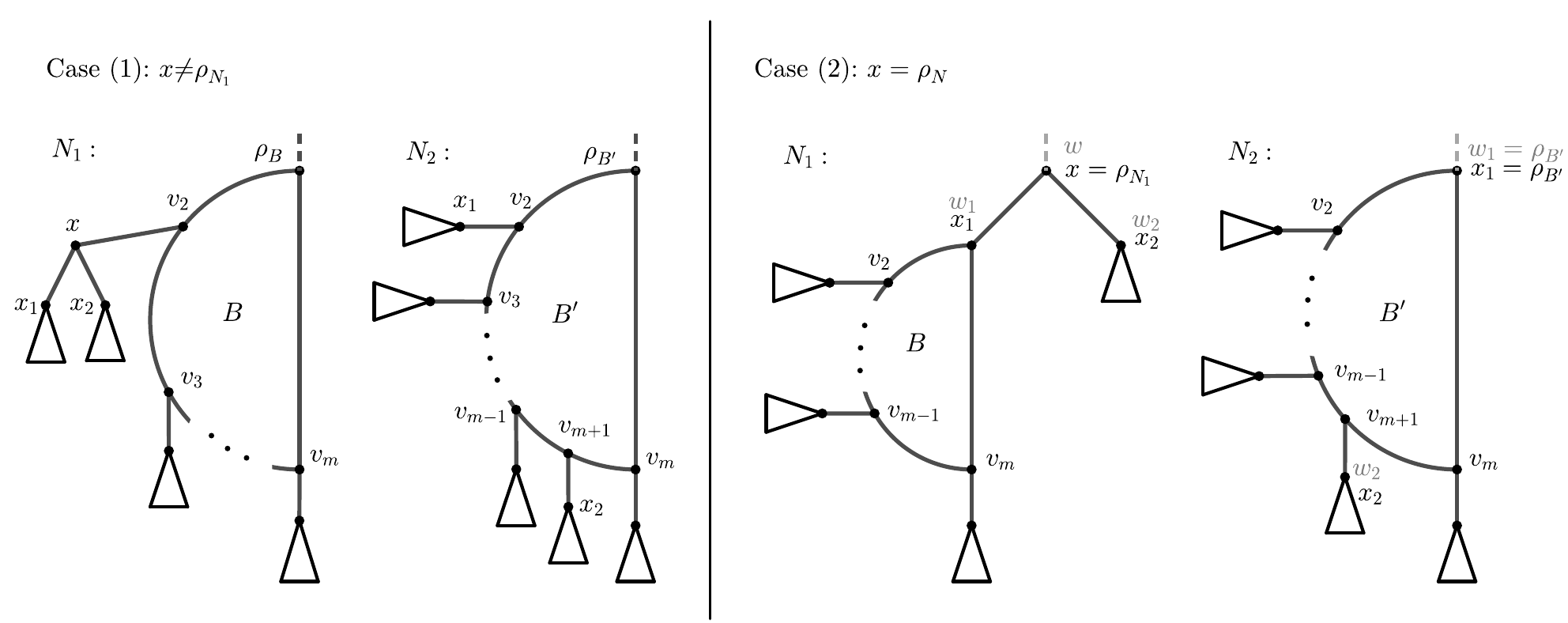}
    \caption{Visualization of the networks $N_1$ and $N_2$ in the proof of Lemma~\ref{lem:bin_no_true_vert} for the choice $\epsilon = 0$. The left panel represents the first case where the $\preceq_{N_1}$-maximal true tree vertex $x$ is different from the root $\rho_{N_1}$. The right panel represents the second case where $x=\rho_{N_1}$. To illustrate the likeness of subcases (a) and (b) of the second case, they are depicted in the same panel. The black labels correspond to (a), while gray labels represent the labels in (b) that differ from (a).}
    \label{fig:no_true_tree_vertex}
\end{figure}

Consider the relevant neighborhoods of $\rho_B$ in $N_1$ and of $\rho_{B'}$
in $N_2$. By Lemma \ref{lem:relevant-partN}, all these relevant
neighbors are contained in $\widetilde N_1$ and $\widetilde N_2$, respectively.
Since $N_1$ and $N_2$ are both binary, all vertices in $B$, resp., $B'$ that are
distinct from $\rho_B$, resp., $\rho_{B'}$ are irrelevant. In the following, let
$\mathfrak{R}_B$, resp., $\mathfrak{R}_{B'}$ be the relevant neighbors of
$\rho_B$ in $\widetilde N_1$ resp., $\rho_{B'}$ in $\widetilde N_2$. Hence, all
relevant neighbors of $\rho_B$, resp., $\rho_{B'}$ (if there are any) must
satisfy Definition~\ref{def:rel-neighbor}(2b). Note, that $x$ is always a
relevant neighbor of $\rho_B$. Moreover, let $V_B$ be the relevant neighbors of 
$\rho_B$ in $N_1$ that are different from $x$ and note that $V_B$ might be empty. By the latter arguments, we
can write \[\mathfrak{R}_B=\{x\} \cupdot V_B.\]  Since $N$ is binary, each $u\in
V_B$ has a unique parent $v\in V(B)\setminus \{\rho_B\}$ and is, therefore,
located in $\partN_1(v)$. Moreover, note that constructing $N_2$ from $N_1$ does
not destroy the IR-paths between $\rho_B$ and any $u\in V_B$. Thus, all $u\in
V_B$ remain relevant neighbors of $\rho_{B'}$. Moreover, while $x\in
\mathfrak{R}_B$, vertex $x$ does not exist in $N_2$. In particular, $v_2$ has
now as the unique child, which is not contained in $B'$, the vertex $x_1$ and we
have a new vertex $v_{m+1}$ that has now as the unique child, which is not
contained in $B'$, the vertex $x_2$. Note that, since $N_2$ is binary and none of
$x_1$ and $x_2$ are hybrids, $x_1$, resp., $x_2$ is a leaf or is relevant. In
particular, $x_i$ is relevant in $N_1$ if and only if $x_i$ is relevant in
$N_2$, $i\in \{1,2\}$. Hence, we can define $V_X$ as the set that contains $x_i$
whenever $x_i$ is relevant in $N_1$ (or equivalently in $N_2$), $i\in \{1,2\}$.
It is straightforward to verify that \[\mathfrak{R}_{B'}= V_X \cupdot V_B.\]
Note that $N_2$ can be considered as the network obtained from $N_1$ by
replacing $\widetilde N_1$ by $\widetilde N_2$. In addition, recall that Lemma
\ref{lem:blocks-in-partN} implies that $\phi_{\widetilde N_i}(v) =
\phi_{N_i}(v)$ for all relevant vertices of $N_i$ that are contained in
$\widetilde N_i$, $i\in \{1,2\}$. One easily observes that, by construction, 
$\partial \widetilde N_1 (v) = \widetilde N_1(v) = \widetilde N_2(v) = \partial \widetilde N_2(v)$ for all $v\in V_B$. 
Hence, we have $\Phi^*(\partial \widetilde N_1 (v) = \Phi^*(\partial \widetilde N_2 (v)$,
$\phi_{\widetilde N_1}(v)= \phi_{\widetilde N_2}(v)$ and $L_{\widetilde
N_1}(v)=L_{\widetilde N_2}(v)$ for all $v\in V_B$. The latter implies
\begin{equation}
            \sum_{v\in V_B} \left(\Phi^*(\partial \widetilde N_1 (v))+\phi_{\widetilde N_1}(v)\binom{|L_{\widetilde N_1}(v)|}{2}\right)
             	=\sum_{v\in V_B} \left(\Phi^*(\partial \widetilde N_2 (v))+\phi_{\widetilde N_2}(v)\binom{|L_{\widetilde N_2}(v)|}{2}\right).
	\label{eq:VBEQUAL}
\end{equation}
The latter arguments together with Theorem~\ref{thm:local-phi},
Corollary~\ref{fact:sum-up-part} and the fact that $\ell \coloneqq |L_{N_1}(\rho_B)| =  |L_{N_2}(\rho_{B'})| $ imply that

 \begingroup
\allowdisplaybreaks
\ \\

          $\Phi^{**}(N_1)-\Phi^{**}(N_2)$ = \hfill
             \begin{align}
           = & (\Phi^*(\widetilde N_1)-\Phi^*(\widetilde N_2))+(\phi_{N_1}(\rho_B)-\phi_{N_2}(\rho_{B'}))\binom{\ell}{2}   \notag\\
           = & \sum_{v\in \{x\} \cupdot V_B} \left(\Phi^*(\partial \widetilde N_1 (v))+\phi_{\widetilde N_1}(v)\binom{|L_{\widetilde N_1}(v)|}{2}\right)
             	-\sum_{v\in V_X   \cupdot V_B} \left(\Phi^*(\partial \widetilde N_2 (v))+\phi_{\widetilde N_2}(v)\binom{|L_{\widetilde N_2}(v)|}{2}\right) \notag\\
             &+	(\phi_{N_1}(\rho_B)-\phi_{N_2}(\rho_{B'}))\binom{\ell}{2} \notag\\
            \overset{Eq.\ \eqref{eq:VBEQUAL}}{=}&  \left(\Phi^*(\partial \widetilde N_1 (x))+\phi_{\widetilde N_1}(x)\binom{|L_{\widetilde N_1}(x)|}{2}\right)
             	-\sum_{v\in V_x} \left(\Phi^*(\partial \widetilde N_2 (v))+\phi_{\widetilde N_2}(v)\binom{|L_{\widetilde N_2}(v)|}{2}\right)\notag\\
            &	+	(\phi_{N_1}(\rho_B)-\phi_{N_2}(\rho_{B'}))\binom{\ell}{2}. \label{eq:RRR}
        \end{align}
\endgroup

Moreover, since by construction $\partN_1(x_j) = \partN_2(x_j)$, $i\in \{1,2\}$,  
all  the relevant neighbors of $x$ in $\partial \widetilde N_1(x)$
are precisely the vertices in $V_X$. This and Corollary~\ref{fact:sum-up-part} implies that we can, therefore, write
\begin{equation}
\alpha\coloneqq \Phi^*(\partial \widetilde N_1(x)) = \sum_{v\in V_X}\left(\Phi^*(\partial \widetilde N_1(v))+\phi_{\widetilde N_1}(v)\binom{|L_{\widetilde N_1}(v)|}{2}\right). 
\label{eq:alpha1XA}
\end{equation}
Furthermore, since by construction $N_1(v)=N_2(v)$ for all $v\in V_X$, we readily obtain
\begin{equation}
\alpha = \sum_{v\in V_X}\left(\Phi^*(\partial \widetilde N_2(v))+\phi_{\widetilde N_2}(v)\binom{|L_{\widetilde N_2}(v)|}{2}\right).
\label{eq:alpha2XA}
\end{equation}

Note that by Lemma \ref{lem:idempotent}, we have $\partN'(v)=\partial \widetilde N_1(v)$ for $N' = \partial \widetilde N_1(v)$.
Putting  $N' \coloneqq \partial \widetilde N_1(v)$ and application 
of Lemma \ref{lem:blocks-in-partN} and
 Corollary~\ref{fact:sum-up-part}, therefore, implies that we can write
 $\Phi^*(\partial \widetilde N_1(x))$ as 
\begin{equation}
\Phi^*(\partial \widetilde N_1(x)) = \sum_{v\in V_X}\left( \Phi^*(\partial N'(v))) + \phi_{\partial \widetilde N_1}(v) \binom{L_{\partial \widetilde N_1}(v)}{2} \right)=  
\sum_{v\in V_X} \left( \Phi^*(\partial \widetilde N_1(v)) + \phi_{\widetilde N_1}(v) \binom{L_{\partial \widetilde N_1}(v)}{2} \right).
\label{eq:Phi*partialN1xXA}
\end{equation}

This together with the fact that $\phi_{\widetilde N_1}(x)=1$ (since $x$ is true tree vertex) implies that 
 Equation\ \eqref{eq:RRR} reduces to
 \begin{equation}
          \Phi^{**}(N_1)-\Phi^{**}(N_2) =
            \binom{|L_{\widetilde N_1}(x)|}{2}	+	(\phi_{N_1}(\rho_B)-\phi_{N_2}(\rho_{B'}))\binom{\ell}{2}. 
            \label{EQ:XXX-1}
	\end{equation}

Both blocks $B$ and $B'$ are crescents of size $m$ and $m+1$ respectively and since $\epsilon=0$ and $\outdeg(\rho_B)=\outdeg(\rho_{B'})=2$, we obtain  
$\phi_{N_1}(\rho_B)=\omega(B)=\binom{m}{3}$ and $\phi_{N_2}(\rho_{B'})=\omega(B')=\binom{m+1}{3}$. 
This and $|L_{\widetilde N_1}(x)|<\ell$ together with Equation\ \eqref{EQ:XXX-1} implies that 
 \begin{align*}
          \Phi^{**}(N_1)-\Phi^{**}(N_2) &=
            \binom{|L_{\widetilde N_1}(x)|}{2}	+	\left(\binom{m}{3}-\binom{m+1}{3}\right)\binom{\ell}{2} \\
            &<\underbrace{\left(1+\binom{m}{3}-\binom{m+1}{3}\right)}_{<0\text{ for $m\geq3$}}\binom{\ell}{2}<0.
	\end{align*}
But then, $\Phi^{**}(N_1)<\Phi^{**}(N_2)$ must hold; a contradiction to the assumption that 
    $N_1$ maximizes the weighted total cophenetic index within the class $\BinLevelOneN$.

 Hence,    $x=\rho_{N_1}$ must hold, that is, $\rho_{N_1}$ is the $\preceq_{N_1}$-maximal true tree vertex. 
		  Note that $x$ has precisely two children $x_1$ and $x_2$ in $N_1$. 
		  There are now two cases: either  (a) at least one of $x_1$ and $x_2$ is the root of a non-trivial block 
		  of $N_1$
		  or (b) none of $x_1$ and $x_2$ are roots of a non-trivial block 
		  of $N_1$ (and thus, they are leaves or true tree vertices). 
		  
		  We start with considering Case (a) and assume w.l.o.g.\ that $x_1$ is the
		  root of a non-trivial block $B$ in $N_1$. By Lemma \ref{lem:one_block},
		  $B$ has to be a crescent consisting of the two internally vertex-disjoint
		  paths $(x_1=\rho_B,\ldots,v_m=\eta_B)$ with $m\geq 3$ and
		  $(\rho_B,\eta_B)$. 
 	   We now construct a network $N_2$, by deleting
 	   $x = \rho_{N_1}$ and its incident edges as well as the edge $(v_{m-1},v_m)$ and adding
 	   vertex $v_{m+1}$ and edges $(v_{m-1},v_{m+1}),(v_{m+1},v_m)$, and
 	   $(v_{m+1},x_2)$ (see Figure \ref{fig:no_true_tree_vertex} for an illustration).
 	   One easily verifies that $N_2\in \BinLevelOneN$.
 	   Let $B'$ be the extended block obtained from $B$ in $N_2$ in this manner. 
 	   Note that $N_1=N_1(x)=\partN_1(x)$ and $N_2=N_2(\rho_{B'})=\partN_2(\rho_{B'})$.

		The relevant neighbors of $x=\rho_{N_1}$ in $N_1$ are $x_1$ and  $x_2$ in case $x_2$ is not a leaf. 
		Thus, we define $V_X$ as the set that contains $x_1$ 
		as well as $x_2$ in case $x_2$ is relevant. 
		We collect all relevant neighbors of $\rho_B=x_1$ in
 	  $N_1$ in the set $V_B$ and note that $V_B$ might be empty. 
		By similar arguments as in the previous case, $v\in V_B$  has a unique parent in $B$
		and all $u\in	V_B$ remain relevant neighbors of	$x_1 = \rho_{N_2}$ in $N_2$. 
		Moreover, $V'_X\coloneqq V_X\setminus \{x_1\}$ contains $x_2$ if and only if $x_2$
		was a relevant neighbor of $x$ (in which case $x_2$ is also a relevant neighbor
		of  $x_1 = \rho_{N_2}$ in $N_2$).
		In summary, all relevant neighbors of $x=\rho_{N_1}$ in $N_1$ are contained in the set $V_X$
		and all relevant neighbors of $x_1 = \rho_{N_2}$ in $N_2$ are contained in the set 
		$V'_X\cup V_B$. Observe that $\ell \coloneqq |L_{N_1}(\rho_{N_1})| =  |L_{N_2}(\rho_{N_2})| $, 
		Theorem~\ref{thm:local-phi} and Corollary~\ref{fact:sum-up-part} imply 

 \begingroup
\allowdisplaybreaks
             \begin{align}
           \Phi^{**}(N_1)-\Phi^{**}(N_2)
           =& (\Phi^*( N_1)-\Phi^*( N_2))+(\phi_{N_1}(x)-\phi_{N_2}(\rho_{B'}))\binom{\ell}{2}   \notag\\
           = & \sum_{v\in V_X} \left(\Phi^*(\partial  N_1 (v))+\phi_{N_1}(v)\binom{|L_{ N_1}(v)|}{2}\right)
             	-\sum_{v\in V'_X \cupdot V_B} \left(\Phi^*(\partial  N_2 (v))+\phi_{ N_2}(v)\binom{|L_{N_2}(v)|}{2}\right) \notag\\
             &+	(\phi_{N_1}(x)-\phi_{N_2}(\rho_{B'}))\binom{\ell}{2}. \label{eq:asDf}\\
        \end{align}
\endgroup

By similar arguments as before, 
\begin{equation}
		\Phi^*(\partial N_1(x_1)) = \sum_{v\in V_B}\left(\Phi^*(\partial N_2(v))+\phi_{ N_2}(v)\binom{|L_{ N_2}(v)|}{2}\right).
\label{eq:alpha1XAB2}
\end{equation}

Taking Equations~\eqref{eq:asDf} and \eqref{eq:alpha1XAB2}   together with $|L_{N_1}(x_1)|<\ell$ and
$\phi_{N_1}(x)=1$, $\phi_{N_1}(x_1)=\omega(B)=\binom{m}{3}$ and $\phi_{N_2}(\rho_{B'})=\omega(B')=\binom{m+1}{3}$ implies
 \begingroup
\allowdisplaybreaks
             \begin{align*}
           \Phi^{**}(N_1)-\Phi^{**}(N_2)
           = & \left(\Phi^*(\partial  N_1 (x_{1}))+\phi_{N_1}(x_{1})\binom{|L_{ N_1}(x_{1})|}{2}\right)
             	-\sum_{v\in V_B} \left(\Phi^*(\partial N_2 (v))+\phi_{ N_2}(v)\binom{|L_{ N_2}(v)|}{2}\right) \\
             &+	(\phi_{N_1}(x)-\phi_{N_2}(\rho_{B'}))\binom{\ell}{2} \\
           = & \phi_{ N_1}(x_1)\binom{|L_{ N_1}(x_1)|}{2}
            	+	(\phi_{N_1}(x)-\phi_{N_2}(\rho_{B'}))\binom{\ell}{2} \\
           < & 	 (\phi_{ N_1}(x_1)
            	+	\phi_{N_1}(x)-\phi_{N_2}(\rho_{B'}))\binom{\ell}{2} \\
           = & \underbrace{\left(\binom{m}{3}+1-\binom{m+1}{3}\right)}_{<0\text{ for $m\geq3$}}\binom{\ell}{2}<0,
        \end{align*}
\endgroup
again contradicting the assumption that $N_1$ maximizes the weighted total cophenetic index within the class $\BinLevelOneN$.

For Case (b),  assume that neither child of $\rho_B$ is the root of a non-trivial block. Note that since $n\geq 3$, at least one child has to be a true tree vertex. Moreover, by Lemma~\ref{lem:one_block}, $N_1$ has to contain at least one crescent. 
Hence, there is a true tree vertex $w$ with children $w_1$ and $w_2$, such that at least one child, without loss of generality $w_1$, is the root of a block $B$ that is a crescent and is of size $m\geq 3$. We can now apply exactly the same arguments as in 
Case (a) and by interchanging the roles of $x$ and $w$ as well as of $w_1$ and $x_1$ and of $w_2$ and $x_2$ to obtain a contradiction. In summary, $N_1$ cannot contain any true tree vertex. 
\end{proof}

\begin{lemma}\label{lem:exactly_one_block}
Every binary phylogenetic level-$1$ network on $n$ leaves that achieves a maximum weighted total cophenetic index within the class $\BinLevelOneN$ contains precisely one non-trivial block.
\end{lemma}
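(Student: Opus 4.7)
The plan is a proof by contradiction. Combining Lemmas~\ref{lem:one_block} and \ref{lem:bin_no_true_vert}, any maximizer $N\in\BinLevelOneN$ contains at least one non-trivial block (every non-trivial block being a crescent) and has no true tree vertex, so every inner vertex of $N$ lies in some non-trivial block. Assuming $N$ has at least two non-trivial blocks, I will build $N'\in\BinLevelOneN$ with $\Phi^{**}(N')>\Phi^{**}(N)$ by fusing two adjacent blocks into a single larger crescent.

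Among all pairs of non-trivial blocks, first pick $B$ and $B'$ such that $\rho_{B'}\prec_N\rho_B$ with no block root strictly between them on a $\rho_B\rho_{B'}$-path; set $m_1:=\mathrm{size}(B)$ and $m_2:=\mathrm{size}(B')$. Because $N$ has no true tree vertex, $\rho_{B'}$ must be the unique child-outside-$B$ of some $v\in V(B)\setminus\{\rho_B\}$. Two subcases arise: (A) $v\in V^-(B)$, and (B) $v=\eta_B$. In both subcases the plan is to construct $N'$ by deleting the shortcut of $B'$, rerouting the long paths so that the internal vertices of $B'$ are spliced into the long path of $B$, and keeping a single shortcut of the resulting block from $\rho_B$ to its new hybrid. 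Vertices that end up with in-degree and out-degree both equal to $1$ (namely $\rho_{B'}$ together with either $v$ in Case~(A) or $\eta_B$ in Case~(B)) are suppressed. A direct check of degrees shows $N'\in\BinLevelOneN$, that the fused block $B''$ is a crescent of size $m_1+m_2-2$, and that every outside subtree attached to a non-root, non-hybrid vertex of $B''$ coincides with the corresponding subtree in $N$.

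Setting $\ell:=|L_N(\rho_B)|=|L_{N'}(\rho_B)|$ and $\ell':=|L_N(\rho_{B'})|$, the only relevant vertices whose $\phi$-weight or leaf-count change between $N$ and $N'$ inside $\partN(\rho_B)$ are $\rho_B$ and $\rho_{B'}$, the latter being destroyed. Theorem~\ref{thm:local-phi} applied at $\rho_B$, together with Proposition~\ref{prop:B_maxmin_arbitrary} (which gives $\omega(B)=\binom{m_1}{3}$, $\omega(B')=\binom{m_2}{3}$, $\omega(B'')=\binom{m_1+m_2-2}{3}$), then yields
\[
\Phi^{**}(N)-\Phi^{**}(N')=\Bigl(\tbinom{m_2}{3}+\epsilon\Bigr)\tbinom{\ell'}{2}+\Bigl(\tbinom{m_1}{3}-\tbinom{m_1+m_2-2}{3}\Bigr)\tbinom{\ell}{2}.
\]
To finish, it suffices to show that this right-hand side is strictly negative for all $m_1,m_2\geq 3$ and all $\epsilon\in[0,2]$. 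A Vandermonde expansion simplifies
\[
\binom{m_1+m_2-2}{3}-\binom{m_1}{3}-\binom{m_2}{3}=\binom{m_1}{2}(m_2-2)+(m_1-1)\binom{m_2-2}{2}-\binom{m_2-1}{2},
\]
which is nonnegative for $m_1,m_2\geq 3$. The extra slack needed to absorb $\epsilon$ and dominate the $\binom{\ell'}{2}$-term is provided by $\ell\geq\ell'+(m_1-2)\geq\ell'+1$: each of the $m_1-1\geq 2$ outside subtrees hanging off the non-root vertices of $B$ contributes at least one leaf to $L_N(\rho_B)\setminus L_N(\rho_{B'})$.

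The hardest part will be the structural bookkeeping of the merge in each of the two subcases: verifying that $N'$ really lies in $\BinLevelOneN$ (in particular that no vertex exceeds in-degree $2$ and that exactly one hybrid survives in the fused block) and that no new true tree vertex is inadvertently introduced in a way that would distort the locality computation. The binomial arithmetic is routine, but it becomes tight in the regime $m_1=m_2=3$, $\ell-\ell'=1$, $\epsilon=2$, which forces us to track the $\ell-\ell'\geq m_1-2$ surplus carefully rather than to rely on the coarse bound $\binom{\ell}{2}\geq\binom{\ell'}{2}$.
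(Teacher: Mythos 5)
Your overall strategy is the same as the paper's: after invoking Lemmas~\ref{lem:one_block} and \ref{lem:bin_no_true_vert}, you merge two adjacent crescents $B$ and $B'$ into a single crescent of size $m_1+m_2-2$, apply locality to isolate the change to the two block roots, and derive a contradiction from a binomial inequality. The only structural differences are cosmetic: the paper selects a $\preceq$-lowest block (so that its root has no relevant neighbours and $\ell'=m_2-1$) and uses the swap operation of Lemma~\ref{lem:binary_swap} to normalise the attachment point, whereas you take an arbitrary adjacent pair and split into two attachment subcases; both routes give the same difference formula, which you state correctly.

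There is, however, one step in your arithmetic that, as written, does not close. You claim that the Vandermonde expression $\binom{m_1+m_2-2}{3}-\binom{m_1}{3}-\binom{m_2}{3}$ being \emph{nonnegative}, together with the surplus $\ell\geq\ell'+(m_1-2)$, suffices to absorb $\epsilon$. It does not: writing $c=\binom{m_2}{3}$ and $D=\binom{m_1+m_2-2}{3}-\binom{m_1}{3}$, the bound $D\geq c$ combined with $\binom{\ell}{2}-\binom{\ell'}{2}\geq\ell'$ only yields $(c+\epsilon)\binom{\ell'}{2}<D\binom{\ell}{2}$ when $\epsilon(\ell'-1)<2c$, which already fails for $m_2=3$, $\epsilon=2$, $\ell'\geq 2$ --- and $\ell'$ can be arbitrarily large in your setup since $B'$ need not be a bottom-most block. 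What actually absorbs $\epsilon$ is that the Vandermonde expression is at least $2$: its minimum over $m_1,m_2\geq 3$ is exactly $2$, attained at $m_1=m_2=3$ (for $m_1=3$ it equals $(m_2-2)(m_2+1)/2$, and it is increasing in $m_1$). Hence $D\geq c+2\geq c+\epsilon$, and therefore
\[
\left(\tbinom{m_2}{3}+\epsilon\right)\tbinom{\ell'}{2}\;\leq\;D\tbinom{\ell'}{2}\;<\;D\tbinom{\ell}{2},
\]
where the leaf surplus $\ell>\ell'$ is needed only for the final strict inequality, not to compensate for $\epsilon$. This is precisely the content of the paper's reduction to the contradiction $\epsilon>2$ via Equation~\eqref{eq:fromHereOn} in the proof of Lemma~\ref{lem:max_is_binary}; once you replace ``nonnegative'' by ``at least $2$'', your argument is complete.
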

\begin{proof}
By Observation\ \ref{obs:bounded-indegree2}, networks in $\BinLevelOneN$
	have bounded maximum weighted total cophenetic index. Thus, let $N_1\in
	\BinLevelOneN$ be a binary phylogenetic level-$1$ network, $n\geq 3$, that
	maximizes the weighted total cophenetic index within the class
	$\BinLevelOneN$. By Lemma \ref{lem:one_block}, $N_1$ has to contain at least
	one non-trivial block. Assume, for contradiction,
	that $N_1$ contains more than one non-trivial block.
	Lemma~\ref{lem:bin_no_true_vert} implies that $N_1$ cannot contain true
	tree vertices. This together with the fact that $N$ is binary implies that
	any two non-trivial blocks
	in $N_1$ are separated by exactly one edge. In $N_1$, there is in particular a block
	$B_2$ such that its root $\rho_{B_2}$ has no descendants that are also roots of any
	non-trivial block in $N_1$. By the latter arguments and, in particular, since $N_1$ has no
	true tree vertices, there is a non-trivial block $B_1$ in $N_1$ such that
	there are two vertices $v\in V(B_1)$ and $w\in V(B_2)$ with $(v,w)\in E(N_1)$.
	Since $w\prec_{N_1} v$, Lemma \ref{lem:unique-max-B} implies that $w=
	\rho_{B_2}$. Furthermore, since $N_1$ is binary, $v\neq \rho_{B_1}$. By Lemma
	\ref{lem:one_block}, $B_1$ and $B_2$ are crescents. Hence, $B_1$ consists of
	the two internally vertex-disjoint paths
	$(v_1=\rho_{B_1},\ldots,v_{m_1}=\eta_{B_1})$ with $m_1\geq 3$ and
	$(\rho_{B_1},\eta_{B_1})$. Analogously, $B_2$ consists of the two internally
	vertex-disjoint paths $(w_1=\rho_{B_2},\ldots,w_{m_2}=\eta_{B_2})$ with
	$m_2\geq 3$ and $(\rho_{B_2},\eta_{B_2})$. 
	 By  Lemma~\ref{lem:binary_swap}, we can assume that
    $v = par_{N_1}(w)=v_{2}$.
	Moreover, since there is no
	non-trivial block $B'$ in $N_1$ such that $\rho_{B'}\prec_{N_1} \rho_{B_2}$
	and since $N_1$ has no true tree vertices, all vertices $x\in V(N_1)\setminus V(B_2)$, 
	whose (unique) parent $par_{N_1}(x)$ is in $V(B_2)\setminus \{\rho_{B_2}\}$
  must be leaves. Since $N_1$ is binary, one
	easily observes that $\rho_{B_2}$ has no relevant neighbors in $N_1$ at all
	and that each leaf in $L_{N_1}(\rho_{B_2})$ has a unique parent in
	$V(B_2)\setminus \{\rho_{B_2}\}$, i.e., $|L_{N_1}(\rho_{B_2})|=m_2-1$. Let
	$L_{N_1}(\rho_{B_2}) = \{l_1,\dots, l_{m_2-1}\}$.
	We now construct a network $N_2$ from $N_1$ by
	removing $v_2$ and $w_1$ and all their incident edges  and adding edges $(v_1,w_2)$ and $(w_{m_2},v_3)$
   (cf.  Figure~\ref{fig:exactly_one_block}).
		  
\begin{SCfigure}[.4][ht]
	\caption{Sketches of Networks $N_1$ and $N_2$ with $\Phi^{**}(N_2)>\Phi^{**}(N_1)$ used in the proof of Lemma~\ref{lem:exactly_one_block}. }
   \includegraphics[width=0.7\textwidth]{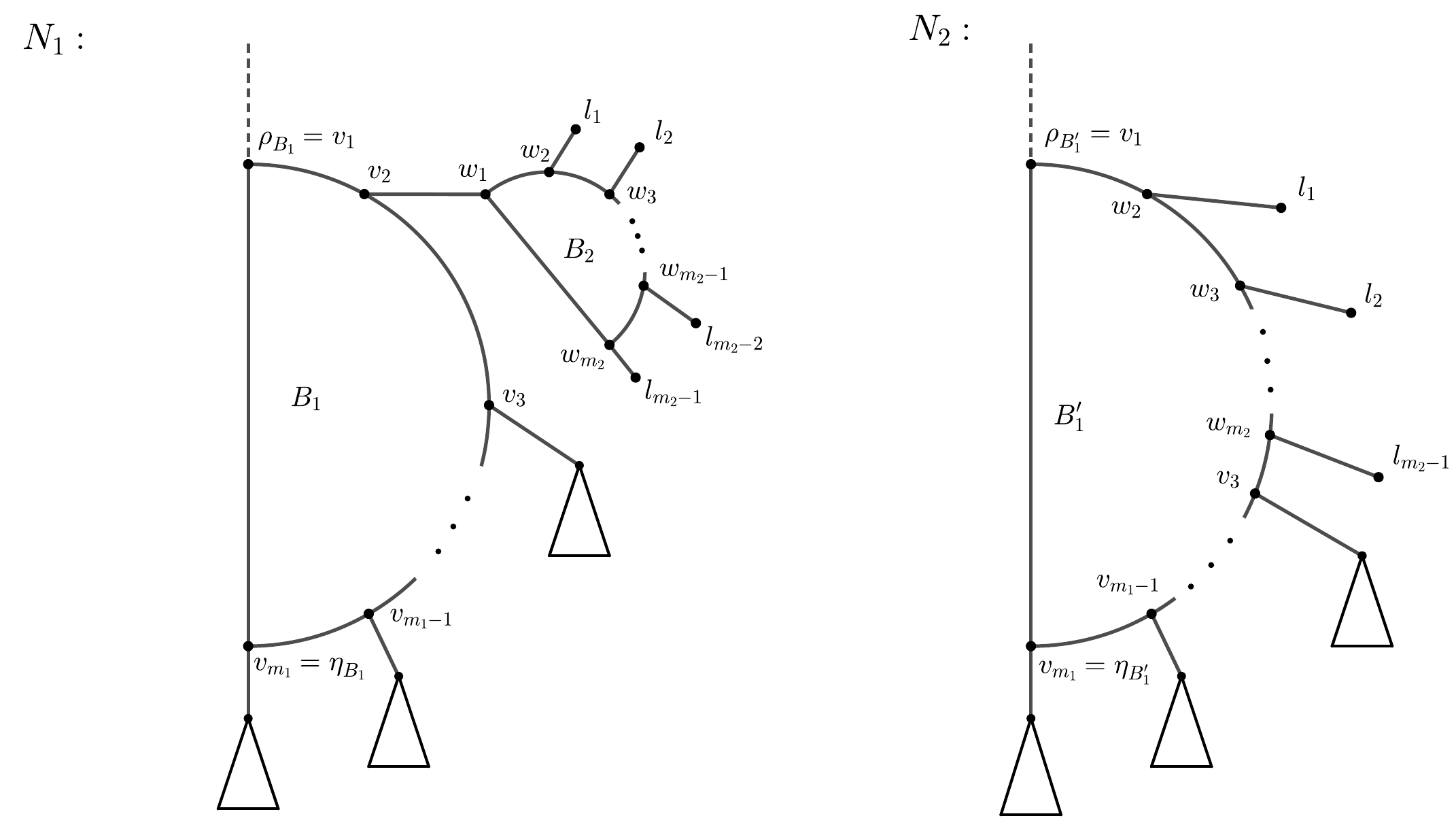}
    \label{fig:exactly_one_block}
\end{SCfigure}

   Let $B_1'$ be the extended block  obtained from $B_1$ in $N_2$, 
   $\widetilde N_1\coloneqq\partN_1(\rho_{B_1})=N_1(\rho_{B_1})$, and
 	 $\widetilde N_2\coloneqq\partN_2(\rho_{B_1'})=N_2(\rho_{B'_1})$.
   By definition and since none of the vertices in $V(B_1)\setminus \{\rho_{B_1}\}$
		are relevant, $\rho_{B_2}$ is a relevant neighbor of $\rho_{B_1}$ in $N_1$
		and we collect all other relevant neighbors of $\rho_{B_1}$ in $N_1$ (if there are any)
		in the set $V_{B}$. 
    It is a straightforward task to verify that $v\in V_{B}$ is a relevant neighbor of 
	  $\rho_{B_1}$ in $N_1$ if and only if $v$ is a relevant neighbor of 
	  $\rho_{B'_1}$ in $N_2$. 
    
    By similar arguments as in the proof of Case (3) in Lemma \ref{lem:max_is_binary} and by Theorem~\ref{thm:local-phi} and
    Corollary~\ref{fact:sum-up-part} as well as the facts that $\ell = |L_{N_1}(\rho_{B_1})| = |L_{N_2}(\rho_{B'_1})|$, 
    $\phi_{N_1}(\rho_{B_1})=\omega(B_1)+\epsilon=\binom{m_1}{3}+\epsilon$, 
    $\phi_{N_1}(\rho_{B_2})=\omega(B_2)+\epsilon=\binom{m_2}{3}+\epsilon$, and 
    $\phi_{N_2}(\rho_{B'_1})=\omega(B'_1)+\epsilon=\binom{m_1+m_2-2}{3}+\epsilon$ 
    we obtain

\begingroup
\allowdisplaybreaks
             \begin{align*}
             &\Phi^{**}(N_1)-\Phi^{**}(N_2)\\
            =&\Phi^*(\widetilde N_1)-\Phi^*(\widetilde N_2)+(\phi_{N_1}(\rho_{B_1})-\phi_{N_2}(\rho_{B_1'}))\binom{\ell}{2}\\
            =&\sum\limits_{v\in V_B}\left(\Phi^*(\partial \widetilde N_1(v)+\phi_{\widetilde N_1}(v)\binom{|L_{\widetilde N_1}(v)|}{2}\right)+\underbrace{\Phi^*(\partial \widetilde N_1(\rho_{B_2}))}_{=0,\text{ since }relV_{\partial \widetilde N_1(\rho_{B_2})}=\emptyset}+\phi_{\partial \widetilde N_1}(\rho_{B_2})\binom{|L_{\widetilde N_1}(\rho_{B_2})|}{2}\\
            &-\sum\limits_{v\in V_B}\left(\Phi^*(\partial \widetilde N_2(v))+\phi_{\partial \widetilde N_2}(v)\binom{|L_{\partial \widetilde N_2}(v)|}{2}\right)+\phi_{N_1}(\rho_{B_1})\binom{\ell}{2}-\phi_{N_2}(\rho_{B_1'})\binom{\ell}{2}\\
            =&\phi_{\widetilde N_1}(\rho_{B_2})\binom{|L_{\widetilde N_1}(\rho_{B_2})|}{2}+\phi_{N_1}(\rho_{B_1})\binom{\ell}{2}-\phi_{N_2}(\rho_{B_1'})\binom{\ell}{2}\geq 0.
            \end{align*}
\endgroup 
            Since $ \ell >|L_{\widetilde N_1}(\rho_{B_2})| $, we thus obtain
            $\phi_{\widetilde N_1}(\rho_{B_2})\binom{\ell}{2}+\phi_{N_1}(\rho_{B_1})\binom{\ell}{2}>\phi_{N_2}(\rho_{B_1'})\binom{\ell}{2}$  and, therefore,
            $\binom{m_2}{3}+\binom{m_1}{3}+\epsilon > \ \binom{m_1+m_2-2}{3}$. 
From here one we can reuse the exactly same arguments as in the proof of Lemma~\ref{lem:max_is_binary} starting at Equation\ \eqref{eq:fromHereOn} and by interchanging the role of $m_1$ and $m$
to obtain a contradiction. Hence, a binary phylogenetic level-$1$ network with maximum weighted total cophenetic index cannot contain more than one non-trivial block. 
\end{proof}

We are now in the position to prove Theorem \ref{thm:binary_galledtree_maximum}
\begin{proof}[Proof of Theorem \ref{thm:binary_galledtree_maximum}]
One easily verifies that the binary phylogenetic level-$1$ network $\NC_n$ ($n\geq 3$) is the only network within the class $\BinLevelOneN$
without true tree vertices and that contains exactly one block that is a crescent. This together with 
Lemmas~\ref{lem:one_block}, \ref{lem:bin_no_true_vert}, and \ref{lem:exactly_one_block} implies that 
$\NC_n$ is the unique binary phylogenetic level-$1$  network that maximizes the weighted total cophenetic index within the class $\BinLevelOneN$, $n\geq 3$.
Moreover, it is an easy task to verify that
 $\Phi^{**}(\NC_n)=\phi_{\NC_n}(\rho_{\NC_n})\binom{n}{2}=\left(\binom{n+1}{3}+\epsilon\right)\binom{n}{2}$.
  The latter arguments establish  Theorem \ref{thm:binary_galledtree_maximum} for the class $\BinLevelOneN$. 
  Moreover, by Lemma \ref{lem:max_is_binary}, each network $N\in \HybdidDegTwoN$, $n\geq 3$, that maximizes the 
  weighted total cophenetic index within the class $\HybdidDegTwoN$ must be binary. This together with 
  the latter arguments establishes 
  Theorem \ref{thm:binary_galledtree_maximum} for  the class $\HybdidDegTwoN$, $n\geq 3$. 
\end{proof}

\section*{Acknowledgements} 
The authors wish to thank Lina Herbst, Sophie Kersting, David Schaller and Mirko Wilde for
helpful discussions concerning an earlier version of this manuscript. Moreover,
MF wishes to thank the joint research project \textbf{\textit{DIG-IT!}}
supported by the European Social Fund (ESF), reference: ESF/14-BM-A55-0017/19,
and the Ministry of Education, Science and Culture of Mecklenburg-Vorpommerania,
Germany, under which parts of this project have been conducted. 
This work was supported by the Data-driven Life Science (DDLS) program funded by
the Knut and Alice Wallenberg Foundation, Sweden.

\bibliographystyle{plaindin}
\bibliography{NetB}

\end{document}